\documentclass[aps,pra,10pt,reprint,superscriptaddress,showpacs]{revtex4-2}

\usepackage{xcolor}
\usepackage{physics}
\usepackage{amssymb}
\usepackage{graphicx}
\usepackage{mathtools}
\usepackage{amsmath}
\usepackage{hyperref}
\usepackage{tikz}
\usetikzlibrary{calc}
\usetikzlibrary{arrows.meta, decorations.pathreplacing}
\usepackage{xifthen}
\usepackage{xcolor} 
\usepackage{xparse}

\usepackage{amsfonts}
\usepackage{centernot}
\usepackage{dsfont}
\usepackage[linesnumbered,ruled,vlined]{algorithm2e}

\usepackage{amsthm}
\newtheorem{theorem}{Theorem}
\newtheorem{definition}{Definition}
\newtheorem{lemma}{Lemma}
\newtheorem{corollary}{Corollary}
\newtheorem{proposition}{Proposition}
\newtheorem{observation}{Observation}

\DeclarePairedDelimiter\floor{\lfloor}{\rfloor}
\DeclarePairedDelimiter\ceil{\lceil}{\rceil}

\newcommand\equalhat{
\let\savearraystretch\arraystretch
\renewcommand\arraystretch{0.3}
\begin{array}{c}
\stretchto{
    \scalerel*[\widthof{=}]{\wedge}
    {\rule{1ex}{3ex}}
}{0.5ex}\\ 
=
\end{array}
\let\arraystretch\savearraystretch
}

\hypersetup{
colorlinks = true,
urlcolor = blue,
linkcolor = blue,
citecolor = blue
}

\newcommand\circuitResetGlobalParameters{

        \definecolor{circuitcolorbg}{RGB}{255,255,255}
        \definecolor{circuitcolorfg}{RGB}{0,0,0}

        \def\circuitlinewidth{1pt}
        \def\circuitCNOTControl{2.2pt}
        \def\circuitCNOTTarget{3.7pt}
        \def\circuitRoundedCorners{2pt}
        \def\circuitMultiLabelMargin{0.1}
        \def\circuitMultiLabelLineWidth{0.5pt}
        \def\circuitMeasureInnerLineWidth{0.7pt}
        
        \def\circuitlinespacing{0.5}
        
        \def\circuitGateStandardWidth{\circuitlinespacing*1.2} 
        \def\circuitGateHeight{\circuitlinespacing*0.8}
        \def\circuitMultiGateStandardWidth{\circuitlinespacing*2.2}
        \def\circuitMultiGateHeight{\circuitlinespacing*0.4}
        \def\circuitStandardGateDistance{\circuitlinespacing*0.4}
        \def\circuitMeasureSize{0.4}
        \def\circuitMeasureArcOffset{0.85} 
        \def\circuitMeasureArrowLength{1.05} 
        \def\circuitMeasureArrowAngle{37}
        \def\circuitMeasureArrowHeadLength{4pt}
        \def\circuitMeasureArrowHeadWidth{3pt}
}
\circuitResetGlobalParameters

\newcommand\circuitInitXAt[1]{\coordinate (circuitTheX) at (#1,0);}

\NewDocumentCommand\circuitAdvanceXBy{O{0} m}{
	\coordinate (circuitTheX) at ($(circuitTheX) + (#2, 0)$);
	\ifthenelse{\isempty{#1} \OR #1=0}{}{\coordinate (circuitTheX) at ($(circuitTheX) + (\circuitStandardGateDistance, 0)$);}
}

\NewDocumentCommand\circuitCNOT{O{circuitcolorfg} O{myCenterNode} m m m}{ 
	\ifthenelse{\isempty{#1}}{\def\mycolorfg{circuitcolorfg}}{\def\mycolorfg{#1}}
	\ifthenelse{\isempty{#3}}{\coordinate (myXCoord) at (circuitTheX);}{\coordinate (myXCoord) at (#3, 0);}

	\ifthenelse{\isempty{#2}}{\def\mynodename{myCenterCoord}}{\def\mynodename{#2}}
	
	\coordinate (myControlCoord) at ($(myXCoord) + (0, #4*\circuitlinespacing)$);
	\coordinate (myTargetCoord) at ($(myXCoord) + (0, #5*\circuitlinespacing)$);

	\coordinate (\mynodename) at ($0.5*(myControlCoord) + 0.5*(myTargetCoord)$);
	
	\draw[draw=\mycolorfg, fill=\mycolorfg] (myControlCoord) circle (\circuitCNOTControl); 
	\draw[line width=\circuitlinewidth, draw=\mycolorfg] (myTargetCoord) circle (\circuitCNOTTarget); 
	\draw[line width=\circuitlinewidth, draw=\mycolorfg] (myControlCoord) -- (myTargetCoord); 
	\draw[line width=\circuitlinewidth, draw=\mycolorfg] ($(myTargetCoord) + (0,-\circuitCNOTTarget)$) -- ($(myTargetCoord) + (0,\circuitCNOTTarget)$); 
	\draw[line width=\circuitlinewidth, draw=\mycolorfg] ($(myTargetCoord) + (-\circuitCNOTTarget,0)$) -- ($(myTargetCoord) + (\circuitCNOTTarget,0)$);
}

\NewDocumentCommand\circuitCZ{O{circuitcolorfg} O{myCenterNode} m m m}{ 
	\ifthenelse{\isempty{#1}}{\def\mycolorfg{circuitcolorfg}}{\def\mycolorfg{#1}}
	\ifthenelse{\isempty{#3}}{\coordinate (myXCoord) at (circuitTheX);}{\coordinate (myXCoord) at (#3, 0);}

	\ifthenelse{\isempty{#2}}{\def\mynodename{myCenterCoord}}{\def\mynodename{#2}}
	
	\coordinate (myControlCoord) at ($(myXCoord) + (0, #4*\circuitlinespacing)$);
	\coordinate (myTargetCoord) at ($(myXCoord) + (0, #5*\circuitlinespacing)$);

	\coordinate (\mynodename) at ($0.5*(myControlCoord) + 0.5*(myTargetCoord)$);
	
	\draw[draw=\mycolorfg, fill=\mycolorfg] (myControlCoord) circle (\circuitCNOTControl);
	\draw[draw=\mycolorfg, fill=\mycolorfg] (myTargetCoord) circle (\circuitCNOTControl); 
	\draw[line width=\circuitlinewidth, draw=\mycolorfg] (myControlCoord) -- (myTargetCoord); 
}

\NewDocumentCommand\circuitSingleGate{O{\circuitGateStandardWidth} O{circuitcolorbg} O{circuitcolorfg} O{myCenterCoord} m m m}{
	\ifthenelse{\isempty{#1}}{\def\mywidth{\circuitGateStandardWidth}}{\def\mywidth{#1}}
	\ifthenelse{\isempty{#2}}{\def\mycolorbg{circuitcolorbg}}{\def\mycolorbg{#2}}
	\ifthenelse{\isempty{#3}}{\def\mycolorfg{circuitcolorfg}}{\def\mycolorfg{#3}}
	\ifthenelse{\isempty{#5}}{\coordinate (myXCoord) at (circuitTheX);}{\coordinate (myXCoord) at (#5, 0);}
	\ifthenelse{\isempty{#4}}{\def\mynodename{myCenterCoord}}{\def\mynodename{#4}}
	
	\coordinate (\mynodename) at ($(myXCoord) + (0,#6*\circuitlinespacing)$);

	\draw[line width=\circuitlinewidth, fill=\mycolorbg, draw=\mycolorfg, rounded corners=\circuitRoundedCorners]
		($(\mynodename) - 0.5*(\mywidth, \circuitGateHeight)$) 
		rectangle ($(\mynodename) + 0.5*(\mywidth, \circuitGateHeight)$) node[midway, \mycolorfg] {#7};
}

\NewDocumentCommand\circuitMultiGate{O{\circuitMultiGateStandardWidth} O{circuitcolorbg} O{circuitcolorfg} O{myCenterCoord} m m m m}{
	\ifthenelse{\isempty{#1}}{\def\mywidth{\circuitMultiGateStandardWidth}}{\def\mywidth{#1}}
	\ifthenelse{\isempty{#2}}{\def\mycolorbg{circuitcolorbg}}{\def\mycolorbg{#2}}
	\ifthenelse{\isempty{#3}}{\def\mycolorfg{circuitcolorfg}}{\def\mycolorfg{#3}}
	\ifthenelse{\isempty{#4}}{\def\mynodename{myCenterCoord}}{\def\mynodename{#4}}
	\ifthenelse{\isempty{#5}}{\coordinate (myXCoord) at (circuitTheX);}{\coordinate (myXCoord) at (#5, 0);}
	
	\ifthenelse{#6 > #7}{\def \yUpper{#6*\circuitlinespacing} \def\yLower{#7*\circuitlinespacing}}{\def \yUpper{#7*\circuitlinespacing} \def\yLower{#6*\circuitlinespacing}}
	\coordinate (myLowerCoord) at ($(myXCoord) + (0,\yLower)$);
	\coordinate (myUpperCoord) at ($(myXCoord) + (0,\yUpper)$);
	
	\coordinate (\mynodename) at ($0.5*(myLowerCoord) + 0.5*(myUpperCoord)$);
	
	\draw[line width=\circuitlinewidth, fill=\mycolorbg, draw=\mycolorfg, rounded corners=\circuitRoundedCorners]
		($(myLowerCoord) - 0.5*(\mywidth, \circuitMultiGateHeight)$) 
		rectangle ($(myUpperCoord) + 0.5*(\mywidth, \circuitMultiGateHeight)$) node[midway, \mycolorfg] {#8};
	
}

\NewDocumentCommand\circuitMeasure{O{circuitcolorbg} O{circuitcolorfg} O{1,0} O{myCenterCoord} m m m}{
	\ifthenelse{\isempty{#1}}{\def\mycolorbg{circuitcolorbg}}{\def\mycolorbg{#1}}
	\ifthenelse{\isempty{#2}}{\def\mycolorfg{circuitcolorfg}}{\def\mycolorfg{#2}}
	\ifthenelse{\isempty{#3}}{\coordinate (myLabelOffset) at (\circuitMeasureSize,0);}{\coordinate (myLabelOffset) at ($\circuitMeasureSize*(#3)$);}
	\ifthenelse{\isempty{#5}}{\coordinate (myXCoord) at (circuitTheX);}{\coordinate (myXCoord) at (#5, 0);}
	\ifthenelse{\isempty{#4}}{\def\mynodename{myCenterCoord}}{\def\mynodename{#4}}
	
	\coordinate (\mynodename) at ($(myXCoord) + (0,#6*\circuitlinespacing)$);
	
	\coordinate (myLowerCoord) at ($(\mynodename) - 0.5*(\circuitMeasureSize,\circuitMeasureSize)$);
	
	\draw[line width=\circuitlinewidth, fill=\mycolorbg, draw=\mycolorfg] (myLowerCoord)
		rectangle ($(\mynodename) + 0.5*(\circuitMeasureSize,\circuitMeasureSize)$);
	
	\draw[line width=\circuitMeasureInnerLineWidth, draw=\mycolorfg] 
		($(myLowerCoord) + (0, \circuitMeasureSize*\circuitMeasureArcOffset)$) to[out=0, in=90]
		($(myLowerCoord) + (\circuitMeasureSize*\circuitMeasureArcOffset,0)$);
	
	\draw[-{Latex[length=\circuitMeasureArrowHeadLength, width=\circuitMeasureArrowHeadWidth]}, line width=\circuitMeasureInnerLineWidth, draw=\mycolorfg] 
		(myLowerCoord) -- +(\circuitMeasureArrowAngle:\circuitMeasureArrowLength*\circuitMeasureSize);
		
	\ifthenelse{\isempty{#7}}{}{\node[\mycolorfg] at ($(\mynodename) + (myLabelOffset)$) {#7};}

}

\NewDocumentCommand\circuitQubitLine{O{} O{} O{circuitcolorfg} m m m}{
	\ifthenelse{\isempty{#1}}{\def\myincommand{}}{\def\myincommand{node[left] {#1}}}
	\ifthenelse{\isempty{#2}}{\def\myoutcommand{}}{\def\myoutcommand{node[right] {#2}}}
	\ifthenelse{\isempty{#3}}{\def\mycolorfg{circuitcolorfg}}{\def\mycolorfg{#3}}
	
	\draw[line width=\circuitlinewidth, \mycolorfg] (#4, #6*\circuitlinespacing) \myincommand -- (#5,#6*\circuitlinespacing) \myoutcommand;
}

\NewDocumentCommand\circuitMultiInput{O{circuitcolorfg} m m m m}{
	\ifthenelse{\isempty{#1}}{\def\mycolorfg{circuitcolorfg}}{\def\mycolorfg{#1}}
	
	\ifthenelse{#3 > #4}{\def \yUpper{#3*\circuitlinespacing} \def\yLower{#4*\circuitlinespacing}}{\def \yUpper{#4*\circuitlinespacing} \def\yLower{#3*\circuitlinespacing}}
	
	\draw[line width=\circuitMultiLabelLineWidth, \mycolorfg] ($(#2, \yUpper) + (\circuitMultiLabelMargin, \circuitMultiLabelMargin)$)
		 -- ($(#2, \yUpper) + (- \circuitMultiLabelMargin, \circuitMultiLabelMargin)$)
		  -- node[left] {#5} ($(#2, \yLower) + (-\circuitMultiLabelMargin, -\circuitMultiLabelMargin)$)
		 -- ($(#2, \yLower) + (\circuitMultiLabelMargin, -\circuitMultiLabelMargin)$);
}

\NewDocumentCommand\circuitMultiOutput{O{circuitcolorfg} m m m m}{
	\ifthenelse{\isempty{#1}}{\def\mycolorfg{circuitcolorfg}}{\def\mycolorfg{#1}}
	
	\ifthenelse{#3 > #4}{\def \yUpper{#3*\circuitlinespacing} \def\yLower{#4*\circuitlinespacing}}{\def \yUpper{#4*\circuitlinespacing} \def\yLower{#3*\circuitlinespacing}}
	
	\draw[line width=\circuitMultiLabelLineWidth, \mycolorfg] ($(#2, \yUpper) + (-\circuitMultiLabelMargin, \circuitMultiLabelMargin)$)
		 -- ($(#2, \yUpper) + ( \circuitMultiLabelMargin, \circuitMultiLabelMargin)$)
		  -- node[right] {#5} ($(#2, \yLower) + (\circuitMultiLabelMargin, -\circuitMultiLabelMargin)$)
		 -- ($(#2, \yLower) + (-\circuitMultiLabelMargin, -\circuitMultiLabelMargin)$);
}

\colorlet{mgcolorfg}{blue!100}
\colorlet{mgcolorbg}{blue!20}

\colorlet{othergatecolorfg}{red!80!black}
\colorlet{othergatecolorbg}{red!80!black!10}

\colorlet{entangledstatecolor}{cyan!80!black}

\colorlet{entcolfg}{blue!100}
\colorlet{entcolbg}{blue!20}

\colorlet{discolfg}{red!80!black}
\colorlet{discolbg}{red!10}

\colorlet{gencolfg}{cyan!75!black}
\colorlet{gencolbg}{cyan!10}

\def\flatgatesize{0.15}
\def\flatgateadvance{0.3}

\def\circuitRoundedCorners{1.5pt}
\def\qubitBallRadius{1.6pt}

\newcommand{\introfiguretextsize}{\footnotesize}

\newcommand{\helpdrawEntangled}[3]{ 
\foreach\p in {#1,...,#2}{
    \draw[line width  =4pt, line cap = round] ($0.5*(#3,-\p)$) -- ++(0,-0.5);
    }
}

\newcommand{\helpdrawCircuitLines}[4]{
    \foreach\p in {#1,...,#2}{\circuitQubitLine{#3}{#4}{-\p}\draw[fill] ($0.5*(2*#3, -\p)$)
    circle[radius=\qubitBallRadius];}
}

\newcommand{\helpdrawCircuitLinesWithoutQubits}[4]{
    \foreach\p in {#1,...,#2}{\circuitQubitLine{#3}{#4}{-\p}\draw[fill] ($0.5*(2*#3, -\p)$);}
}

\newcommand{\helpdrawRSFdiagonal}[2]{
\coordinate (rsfStartX) at (circuitTheX);
    \foreach[count=\i, evaluate=\i as \pp using int(-\p-1)] \p in {#1,...,\number\numexpr #1+#2-1} {
	   \circuitMultiGate[\flatgatesize][mgcolorbg][mgcolorfg][rsfnode\i]{}{-\p}{\pp}{}
        \circuitAdvanceXBy{\flatgateadvance} }
\coordinate (rsfEndX) at (circuitTheX); 
\coordinate (circuitTheX) at (rsfStartX);
}

\newcommand{\helpdrawMGColorToFaint}{
        \colorlet{t1}{mgcolorfg}
        \colorlet{t2}{mgcolorbg}
        \colorlet{mgcolorfg}{lightmgcolorfg}
        \colorlet{mgcolorbg}{lightmgcolorbg}
}

\newcommand{\helpdrawMGColorReset}{
        \colorlet{mgcolorfg}{t1}
        \colorlet{mgcolorbg}{t2}
}

\newcommand{\helpdrawInnerProdSevenlines}[1]{
        \helpdrawCircuitLines{0}{6}{0}{#1}
        \foreach\p in {0,...,6}{\draw[fill] ($0.5*(2*#1, -\p)$) circle[radius=\qubitBallRadius];}
}

\newcommand{\helpdrawInnerProduct}{
\node at (4.8,2.9/2) {$=$};
\node at (9.3,2.9/2) {$=$};
\node at (13.8,2.9/2) {$=$};
\node at (18.2,2.9/2) {$=$};
\node at (4.8,3.8/2) {(i)};
\node at (9.3,3.8/2) {(ii)};
\node at (13.8,3.8/2) {(iii)};
\node at (18.2,3.8/2) {(iv)};
\node at (-4.5,2.9) {(a)};
\node at (0.5,2.9) {(b)};

\begin{scope}[xshift=-4.0cm]
    \begin{scope}[rotate = 90]
        \circuitInitXAt{0.25}
        \def\flatgateadvance{0.375} 

        \helpdrawInnerProdSevenlines{2.9}

        \helpdrawMGColorToFaint
        \helpdrawRSFdiagonal{0}{6}
        \coordinate (line1node1) at (rsfnode3);
        \coordinate (line1node1e) at (rsfnode4);
        \coordinate (line2node1) at (rsfnode5);;
        \coordinate (line2node1e) at (rsfnode6);
        \helpdrawRSFdiagonal{2}{4}
        \coordinate (line1node2) at (rsfnode2);
        \coordinate (line1node2e) at (rsfnode3);
        \coordinate (line2node2) at (rsfnode4);
        \helpdrawRSFdiagonal{4}{2}
        \coordinate (line1node3) at (rsfnode1);
        \coordinate (line1node3e) at (rsfnode2);
        \helpdrawMGColorReset
        \circuitAdvanceXBy{0.5}
        \circuitAdvanceXBy{\flatgateadvance}
        \circuitAdvanceXBy{\flatgateadvance}
        \circuitAdvanceXBy{\flatgateadvance}        \circuitAdvanceXBy{\flatgateadvance}

        \circuitMultiGate[\flatgatesize][mgcolorbg][mgcolorfg][addgate1]{}{-2}{-3}{}
        
        \draw[absorbalgarrow] 
        ($(addgate1) - (0.15,0)$) -- 
            ($(line1node1) + (\flatgatesize+0.2*\flatgateadvance, 0)$) --
            ($(line1node1e) - (\flatgatesize+0.2*\flatgateadvance, 0)$) -- 
            ($(line1node2) + (\flatgatesize+0.2*\flatgateadvance, 0)$) --
            ($(line1node2e) - (\flatgatesize+0.2*\flatgateadvance, 0)$) -- 
            ($(line1node3) + (\flatgatesize+0.2*\flatgateadvance, 0)$) --
            ($(line1node3e) - (\flatgatesize+0.2*\flatgateadvance, -0.07)$) -- (0,-2.68) -- (0,-2.85) -- ($(line1node3e) - (0.12,0.1)$);

        \circuitAdvanceXBy{\flatgateadvance}
        
        \circuitMultiGate[\flatgatesize][mgcolorbg][mgcolorfg][addgate2]{}{-4}{-5}{}

        \draw[absorbalgarrow] 
        ($(addgate2) - (0.15,0)$) -- 
            ($(line2node1) + (\flatgatesize+0.2*\flatgateadvance, 0)$) --
            ($(line2node1e) - (\flatgatesize+0.2*\flatgateadvance, 0)$) -- 
            ($(line2node2) + (0.15, 0)$);

    \end{scope}
\end{scope}

\begin{scope}[xshift=1cm]
    \begin{scope}[rotate = 90]
        \circuitInitXAt{0.25}

        \helpdrawCircuitLines{0}{6}{0}{2.9}
        \foreach\p in {0,...,6}{\draw[fill] ($0.5*(2*2.9, -\p)$) circle[radius=\qubitBallRadius];}

        \helpdrawMGColorToFaint
        \helpdrawRSFdiagonal{2}{4}
        \helpdrawRSFdiagonal{4}{2}
        \helpdrawRSFdiagonal{0}{1}
        \helpdrawMGColorReset

        \circuitAdvanceXBy{\flatgateadvance}
        \helpdrawRSFdiagonal{1}{5}

        \draw[absorbalgarrow] ($(rsfnode4) +(0.15,0)$) -- ++ (0.5,0);
        \draw[absorbalgarrow] ($(rsfnode3) +(0.15,0)$) --
            ++ ($(0.5,0) + (\flatgateadvance+\flatgatesize,0)$);
        \draw[absorbalgarrow] ($(rsfnode2) +(0.15,0)$) --
            ++ ($(0.5,0) + 2*(\flatgateadvance+\flatgatesize,0)$);
        \draw[absorbalgarrow] ($(rsfnode1) +(0.15,0)$) --
            ++ ($(0.5,0) + 3*(\flatgateadvance+\flatgatesize,0)$);
        
    \end{scope}
\end{scope}

\begin{scope}[xshift=5.5cm]
    \begin{scope}[rotate = 90]

        \circuitInitXAt{0.25}
        \helpdrawInnerProdSevenlines{2.9}

        \helpdrawMGColorToFaint
        \helpdrawRSFdiagonal{0}{1}
        \helpdrawRSFdiagonal{4}{2}
        \helpdrawRSFdiagonal{2}{4}
        \helpdrawMGColorReset
        \coordinate (circuitTheX) at (rsfEndX);
        \helpdrawRSFdiagonal{5}{-2}
        
        \draw[absorbalgarrow] ($(rsfnode2) -(0.15,0)$) --
            ++ (-0.5,0);
         \draw[absorbalgarrow] ($(rsfnode3) -(0.15,0)$) --
            ++ ($(-0.5,0) - (\flatgateadvance+2*\flatgatesize,0)$);
         \draw[absorbalgarrow] ($(rsfnode4) -(0.15,0)$) --
            ++ ($(-0.5,0) - (2*\flatgateadvance+4*\flatgatesize,0)$);

        \coordinate (circuitTheX) at (rsfEndX);
        \helpdrawMGColorToFaint
        \helpdrawRSFdiagonal{1}{1}
        \helpdrawMGColorReset
        
    \end{scope}
\end{scope}

\begin{scope}[xshift=10cm, yshift=0.45cm]
    \begin{scope}[rotate = 90]
        \circuitInitXAt{0.25}
        \helpdrawInnerProdSevenlines{2.0}

        \helpdrawMGColorToFaint
        \helpdrawRSFdiagonal{4}{2}
        \helpdrawRSFdiagonal{0}{1}
        \helpdrawRSFdiagonal{2}{1}
        \coordinate (cxforpainteddiag) at (rsfEndX);
        \foreach \x in {1,...,5} {\circuitAdvanceXBy{\flatgateadvance}}
        \helpdrawRSFdiagonal{1}{1}
        \helpdrawMGColorReset

        \coordinate (circuitTheX) at (cxforpainteddiag);
        \helpdrawRSFdiagonal{3}{3}
        \draw[absorbalgarrow] ($(rsfnode2) +(0.15,0)$) -- ++ (0.5,0);
        \draw[absorbalgarrow] ($(rsfnode1) +(0.15,0)$) --
            ++ ($(0.5,0) + (\flatgateadvance+\flatgatesize,0)$);
       
    \end{scope}
\end{scope}

\begin{scope}[xshift=14.5cm, yshift=0.45cm]
    \begin{scope}[rotate = 90]
        \circuitInitXAt{0.25}
        \helpdrawInnerProdSevenlines{2.0}

        \helpdrawMGColorToFaint
        \helpdrawRSFdiagonal{4}{2}
        \coordinate (cxforpainteddiag) at (rsfEndX);
        \helpdrawRSFdiagonal{0}{1}
        \helpdrawRSFdiagonal{2}{1}
        \foreach \x in {1,...,5} {\circuitAdvanceXBy{\flatgateadvance}}
        \helpdrawRSFdiagonal{1}{1}
        \helpdrawRSFdiagonal{3}{1}
        \helpdrawMGColorReset

        \coordinate (circuitTheX) at (cxforpainteddiag);
        \helpdrawRSFdiagonal{5}{0}
        \draw[absorbalgarrow] ($(rsfnode2) -(0.15,0)$) --
            ++ (-0.5,0);
    
    \end{scope}
\end{scope}

\begin{scope}[xshift=19cm, yshift=1.05cm]
    \begin{scope}[rotate = 90]
        \circuitInitXAt{0.25}
        
        \helpdrawInnerProdSevenlines{0.8}

        \helpdrawRSFdiagonal{0}{2}
        \helpdrawRSFdiagonal{2}{2}
        \helpdrawRSFdiagonal{4}{2}
        
    \end{scope}
\end{scope}
}

\newcommand{\helpdrawtdopedMGC}{

\begin{scope}[xshift=1cm]
    \begin{scope}[rotate = 90]
        \circuitInitXAt{0.25}
        \helpdrawCircuitLinesWithoutQubits{0}{5}{0}{5.3}

        \helpdrawRSFdiagonal{4}{1}
        \circuitAdvanceXBy{\flatgateadvance}
        \helpdrawRSFdiagonal{3}{2}
        \circuitAdvanceXBy{\flatgateadvance}
        \helpdrawRSFdiagonal{2}{3}
        \circuitAdvanceXBy{\flatgateadvance}
        \helpdrawRSFdiagonal{1}{4}
        \circuitAdvanceXBy{\flatgateadvance}
        \helpdrawRSFdiagonal{0}{5}
        \circuitAdvanceXBy{2*\flatgateadvance}
        \circuitMultiGate[\flatgatesize][othergatecolorbg][othergatecolorfg]{}{0}{-1}{}
        \circuitAdvanceXBy{\flatgateadvance}
        \helpdrawRSFdiagonal{1}{4}
        \circuitAdvanceXBy{\flatgateadvance}
        \helpdrawRSFdiagonal{0}{5}
        \circuitAdvanceXBy{2*\flatgateadvance}
        \circuitMultiGate[\flatgatesize][othergatecolorbg][othergatecolorfg]{}{0}{-1}{}
        \circuitAdvanceXBy{\flatgateadvance}
        \helpdrawRSFdiagonal{1}{4}
        \circuitAdvanceXBy{\flatgateadvance}
        \helpdrawRSFdiagonal{0}{5}
        
    \end{scope}
\end{scope}

}

\newcommand{\helpdrawtdopedFGS}{

\begin{scope}[xshift=1cm]
    \begin{scope}[rotate = 90]
        \circuitInitXAt{0.25}
        \helpdrawCircuitLines{0}{5}{0}{4.1}

        \helpdrawRSFdiagonal{4}{1}
        \helpdrawRSFdiagonal{2}{3}
        \helpdrawRSFdiagonal{0}{5}
        \circuitAdvanceXBy{2*\flatgateadvance}
        \circuitMultiGate[\flatgatesize][othergatecolorbg][othergatecolorfg]{}{0}{-1}{}
        \circuitAdvanceXBy{\flatgateadvance}
        \helpdrawRSFdiagonal{1}{4}
        \circuitAdvanceXBy{\flatgateadvance}
        \helpdrawRSFdiagonal{0}{5}
        \circuitAdvanceXBy{2*\flatgateadvance}
        \circuitMultiGate[\flatgatesize][othergatecolorbg][othergatecolorfg]{}{0}{-1}{}
        \circuitAdvanceXBy{\flatgateadvance}
        \helpdrawRSFdiagonal{1}{4}
        \circuitAdvanceXBy{\flatgateadvance}
        \helpdrawRSFdiagonal{0}{5}
        
    \end{scope}
\end{scope}

}

\newcommand{\helpdrawtdopedInner}{

\begin{scope}[xshift=1cm]
    \begin{scope}[rotate = 90]
        \circuitInitXAt{0.25}
        \helpdrawCircuitLines{0}{5}{0}{2.9}
        \foreach\p in {0,...,5}{\draw[fill] ($0.5*(2*2.9, -\p)$) circle[radius=\qubitBallRadius];}
        \helpdrawRSFdiagonal{4}{1}
        \helpdrawRSFdiagonal{2}{3}
        \helpdrawRSFdiagonal{0}{5}
        \circuitAdvanceXBy{2*\flatgateadvance}
        \circuitMultiGate[\flatgatesize][othergatecolorbg][othergatecolorfg]{}{0}{-1}{}
        \circuitAdvanceXBy{\flatgateadvance}
        \helpdrawRSFdiagonal{1}{4}
        \circuitAdvanceXBy{\flatgateadvance}
        \helpdrawRSFdiagonal{0}{5}
        \circuitAdvanceXBy{2*\flatgateadvance}
        \circuitMultiGate[\flatgatesize][othergatecolorbg][othergatecolorfg]{}{0}{-1}{}
        \circuitAdvanceXBy{\flatgateadvance}
        \helpdrawRSFdiagonal{1}{2}
        \circuitAdvanceXBy{\flatgateadvance}
        \helpdrawRSFdiagonal{0}{1}
        
    \end{scope}
\end{scope}

}

\newcommand{\helpdrawtdoped}{

\node at (0.3,8) {(a)};
    \node at (5.2,8) {(b)};
    \node at (5.2,3) {(c)};

    \node[anchor=west] at (2.,1.45) {Matchgate};
    \node[anchor=west] at (2.,0.7) {Resourceful};
    \node[anchor=west] at (2.,0.25) {gate};
    
    \begin{scope}[xshift = 0.9cm]
    \begin{scope}[rotate = 90]
    \circuitInitXAt{1.5}
    \helpdrawCircuitLinesWithoutQubits{0}{1}{1.25}{1.75}
    \circuitMultiGate[\flatgatesize][mgcolorbg][mgcolorfg]{}{0}{-1}{}
    \end{scope}
    \end{scope}
    
    \begin{scope}[xshift = 0.9cm]
    \begin{scope}[rotate = 90]
    \circuitInitXAt{0.5}
    \helpdrawCircuitLinesWithoutQubits{0}{1}{0.25}{0.75}
    \circuitMultiGate[\flatgatesize][othergatecolorbg][othergatecolorfg]{}{0}{-1}{}
    \end{scope}
    \end{scope}
    
        \begin{scope}[xshift=0cm, yshift=2.65cm]
        \helpdrawtdopedMGC
        \end{scope}
        \begin{scope}[xshift=5cm, yshift=3.8cm]
        \helpdrawtdopedFGS
        \end{scope}
        \begin{scope}[xshift=5cm]
        \helpdrawtdopedInner
        \end{scope}

}

\newcommand{\helpdrawMainfigureSED}{
\node at (3,0.8) {$=$};
\node at (6.5,0.8) {$=$};
\node at (3,-0.8) {$=$};
\node at (6.5,-0.8) {$=$};
\begin{scope}
    \node[anchor=south] at (1.25, 1.7) {\introfiguretextsize$\ket\psi$};
    \begin{scope}[rotate = 90]
        \circuitInitXAt{0.25}
        \helpdrawEntangled{0}{4}{0}
        \helpdrawCircuitLines{0}{5}{0}{1.7}
    \end{scope}
    \begin{scope}[yshift=-0.8cm, xshift=1.25cm, scale=0.5]
        \draw[draw = none, fill=black!40] (-1,-1) rectangle (1,1);
        \draw[draw = black, line width=0.6pt] (-1.04,-1.04) to[out=100,in=260] node[left] {\introfiguretextsize $\Gamma = $}(-1.04,1.04);
        \draw[draw = black, line width=0.6pt] (1.04,-1.04) to[out=80,in=280] (1.04,1.04);
    \end{scope}
\end{scope}
\begin{scope}[xshift=3.5cm]
\node[anchor=south] at (1.25, 1.7) {\introfiguretextsize$D_1 \ket{00}\ket{\psi'}$};
    \begin{scope}[rotate=90]
        \circuitInitXAt{0.25}
        \helpdrawEntangled{2}{4}{0}
        \helpdrawCircuitLines{0}{5}{0}{1.7}
        \helpdrawRSFdiagonal{0}{5}
    \end{scope}
    \begin{scope}[yshift=-0.8cm, xshift=1.25cm, scale=0.5]
        \draw[draw = none, fill=black!40] (-1,1) rectangle (-0.666,0.666);
        \draw[draw = none, fill=black!40] (-0.666,0.666) rectangle (-0.333,0.333);
        \draw[draw = none, fill=black!40] (-0.333,0.333) rectangle (1,-1);
        \draw[draw = black, line width=0.6pt] (-1.04,-1.04) to[out=100,in=260] node[left] {\introfiguretextsize $R_1$}(-1.04,1.04);
        \draw[draw = black, line width=0.6pt] (1.04,-1.04) to[out=80,in=280] node[right] {\introfiguretextsize $R_1^\transpose$}(1.04,1.04);
    \end{scope}
\end{scope}

\begin{scope}[xshift=7cm]
\node[anchor=south] at (1.25, 1.7) {\introfiguretextsize$\prod_i D_i\ket{0^n}$};
    \begin{scope}[rotate=90]
        \circuitInitXAt{0.25}
        \helpdrawCircuitLines{0}{5}{0}{1.7}
        \helpdrawRSFdiagonal{0}{5}
        \helpdrawRSFdiagonal{2}{3}
        \helpdrawRSFdiagonal{4}{1}
    \end{scope}
    \begin{scope}[yshift=-0.8cm, xshift=1.25cm, scale=0.5]
        \draw[draw = none, fill=black!40] (-1,1) rectangle (-0.666,0.666);
        \draw[draw = none, fill=black!40] (-0.666,0.666) rectangle (-0.333,0.333);
        \draw[draw = none, fill=black!40] (-0.333,0.333) rectangle (0,0);
        \draw[draw = none, fill=black!40] (0,0) rectangle (0.333,-0.333);
        \draw[draw = none, fill=black!40] (0.333,-0.333) rectangle (0.666,-0.666);
        \draw[draw = none, fill=black!40] (0.666,-0.666) rectangle (1,-1);
        \draw[draw = black, line width=0.6pt] (-1.04,-1.04) to[out=100,in=260] node[left] {\introfiguretextsize $\prod_i\! R_i\! \!\:\!$}(-1.04,1.04);
        \draw[draw = black, line width=0.6pt] (1.04,-1.04) to[out=80,in=280] node[right] {\introfiguretextsize $\!\:\!\!\:\!\!\Big(\!\!\prod_i\! R_i\!\Big)^{\!\:\!\!\!\transpose}$} (1.04,1.04);
    \end{scope}
\end{scope}
}

\newcommand{\helpdrawMainfigureCircuitDepth}{
\node at (4.5,0.8) {$\Leftrightarrow$};
\begin{scope}[yshift=1.5cm, xshift=0.25cm]
    \node[anchor=south] at (1.75, 0.2) {\introfiguretextsize$\ket\psi$};
    \begin{scope}[rotate = 90]
        \circuitInitXAt{0.25}
        \helpdrawEntangled{0}{6}{0}
        \helpdrawCircuitLines{0}{7}{0}{0.2}
    \end{scope}
    \begin{scope}[yshift=-0.9cm, xshift=2.25cm, scale=0.7]
        \draw[draw = none, fill=black!40] (-1,1) -- (-0.2,1) -- (1,-0.2)-- (1,-1) -- (0.2,-1) -- (-1,0.2) -- cycle;
        \draw[dotted, black] (-1,1)--(1,-1);
        
        \draw[line width=0.6pt, decorate,decoration={brace,amplitude=3pt,mirror,raise=0.5pt}] (0,0) -- node[midway, anchor = north west, yshift=2pt, xshift=-1pt] {\introfiguretextsize $d$} (0.4,0.4);
        
        \draw[draw = black, line width=0.6pt] (-1.04,-1.04) to[out=100,in=260] node[left] {\introfiguretextsize $\Gamma = $}(-1.04,1.04);
        \draw[draw = black, line width=0.6pt] (1.04,-1.04) to[out=80,in=280] (1.04,1.04);
    \end{scope}
\end{scope}
\begin{scope}[xshift=5.25cm, yshift=0.05cm]
    \node[anchor=south] at (1.75, 1.65) {\introfiguretextsize$\ket\psi = U_\text{MG} \ket{0^n}$};
    \begin{scope}[rotate=90]
        \circuitInitXAt{0.25}
        \helpdrawCircuitLines{0}{7}{0}{0.8}
            \foreach\p in {0,...,7}{
            \draw[dotted, line width=1pt] ($(0.74, -0.5*\p)$) -- ++(0.5,0);
            \circuitQubitLine{1.25}{1.65}{-\p}}
        \helpdrawRSFdiagonal{0}{2} 
        \helpdrawRSFdiagonal{2}{2} 
        \helpdrawRSFdiagonal{4}{2}
        \helpdrawRSFdiagonal{6}{1}
        \circuitAdvanceXBy{\flatgateadvance}
        \circuitAdvanceXBy{\flatgateadvance}
        \circuitAdvanceXBy{\flatgateadvance}
        \circuitAdvanceXBy{\flatgateadvance}
        \helpdrawRSFdiagonal{0}{1}
        \helpdrawRSFdiagonal{2}{1} 
        \helpdrawRSFdiagonal{4}{1}
        \helpdrawRSFdiagonal{6}{1}
    \end{scope}
    \draw[line width=0.6pt, decorate,decoration={brace,amplitude=4pt,mirror,raise=0.2cm}] (3.45,0.1) -- node[midway, anchor = west, xshift=8pt] {\introfiguretextsize $\mathcal{O}(d)$} (3.45,1.6);
\end{scope}
}

\newcommand{\helpdrawMainfigureEntanglementCutting}{
\node at (4.5,1.2) {$=$};
\begin{scope}[yshift=1.cm, xshift=0.25cm]
    \node[anchor=south] at (1.75, 0.5) {\introfiguretextsize$\ket\psi$};
    \begin{scope}[rotate = 90]
        \circuitInitXAt{0.25}
        \helpdrawEntangled{0}{6}{0}
        \helpdrawCircuitLines{0}{7}{0}{0.5}
    \end{scope}
\end{scope}
\begin{scope}[xshift=5.25cm, yshift=0.6cm]
    \node[anchor=south] at (1.75, 1.1) {\introfiguretextsize$\ket\psi = U_\text{MG} \ket{\psi_\partyA}\ket{\psi_\partyB}$};
    \begin{scope}[rotate=90]
        \circuitInitXAt{0.55}
        \helpdrawEntangled{0}{2}{0}
        \helpdrawEntangled{4}{6}{0}
        \helpdrawCircuitLines{0}{7}{0}{1.1}
        \circuitMultiGate[0.6][mgcolorlightbg][mgcolorfg]{}{-2}{-5}{\introfiguretextsize \color{black} $U_\text{MG}$}
    \end{scope}
    \draw[line width=0.6pt, decorate,decoration={brace,amplitude=4pt,mirror,raise=0.2cm}] (0.9,0.1) -- node[midway, anchor = north, yshift=-0.3cm] {\introfiguretextsize $\mathcal{O}(d)$} (2.6,0.1);
\end{scope}
}

\newcommand{\helpdrawMainfigureImportedTools}{
\begin{scope}[xshift=0.5cm, yshift=1cm]

    \begin{scope}[yshift=0cm]
    \node[align=left, anchor = north west] (N1) at (-0.2,1.9){\introfiguretextsize GYB relation${}^\text{\tiny\cite{CaKo22}}$};
    \begin{scope}[yshift=0.2cm]
    \node at (1.5,0.45){$=$};
    
        \begin{scope}[rotate=90]
            \circuitInitXAt{0.2}
            \foreach\p in {0,...,2}{\circuitQubitLine{0}{1}{-\p}}
            \helpdrawRSFdiagonal{0}{2}
            \coordinate (circuitTheX) at (rsfEndX);
            \helpdrawRSFdiagonal{0}{1}
        \end{scope}
    
        \begin{scope}[xshift=2cm, rotate=90]
            \circuitInitXAt{0.2}
            \foreach\p in {0,...,2}{\circuitQubitLine{0}{1}{-\p}}
            \helpdrawRSFdiagonal{1}{0}
            \coordinate (circuitTheX) at (rsfEndX);
            \helpdrawRSFdiagonal{1}{1}
        \end{scope}
    \end{scope}
    \end{scope}

    \begin{scope}[yshift=-1.75cm]
    \node[align=left, anchor = north west] (N1) at (-0.2,1.6){\introfiguretextsize LR relation${}^\text{\tiny\cite{MoLa25}}$};
    \begin{scope}[yshift=0.15cm]
    \node at (1.5,0.35){$=$};
    
        \begin{scope}[rotate=90]
            \circuitInitXAt{0.25}
            \helpdrawCircuitLines{0}{2}{0}{0.75}
            \helpdrawRSFdiagonal{0}{2}
        \end{scope}
    
        \begin{scope}[xshift=2cm, rotate=90]
            \circuitInitXAt{0.25}
            \helpdrawCircuitLines{0}{2}{0}{0.75}
            \helpdrawRSFdiagonal{1}{0}
        \end{scope}
    \end{scope}
    \end{scope}

    \begin{scope}[yshift=-4.2cm]
    \node[anchor = north west] (N1) at (-0.2,2.2) {\introfiguretextsize Absorption alg.${}^\text{\tiny\cite{MoLa25}}$};
        \begin{scope}[xshift=-5.3cm, yshift=0.1cm]
            \node[anchor=west] (N1) at (6.75,1.1){\introfiguretextsize $U_\textbf{MG}\ket{0^n}$};
            \node[anchor=west] (N2) at (6.75,-1){\introfiguretextsize $U_\textbf{RSF}\!\ket{0^n}$};
            \draw[-latex, line width=1pt] (N1.south) -- (N2.north);
            
            \begin{scope}[xshift=5.3cm, rotate=90]
                \helpdrawCircuitLines{0}{3}{0}{1.4}
                \circuitInitXAt{0.25}
                \helpdrawRSFdiagonal{0}{3}
                \helpdrawRSFdiagonal{2}{1}
                \circuitAdvanceXBy{\flatgateadvance}
                \circuitAdvanceXBy{\flatgateadvance}
                \helpdrawRSFdiagonal{0}{2}
            \end{scope}
        
            \begin{scope}[xshift=5.3cm, yshift=-1.5cm, rotate=90]
                \helpdrawCircuitLines{0}{3}{0}{1.1}
                \circuitInitXAt{0.25}
                \helpdrawRSFdiagonal{0}{3}
                \helpdrawRSFdiagonal{2}{1}
            \end{scope}
        \end{scope}
    \end{scope}

\end{scope}
}

\newcommand{\helpdrawMainfigureOptimalVtwo}{
\node at (3.1,0.8) {$\neq^{*}$};

\begin{scope}[yshift=0.3cm]
\node[anchor=south] at (1.25, 1.05) {\introfiguretextsize $U_\text{RSF} \ket{0^n}$};
    \begin{scope}[rotate=90]
        \circuitInitXAt{0.25}
        \helpdrawCircuitLines{0}{5}{0}{1.1}
        \helpdrawRSFdiagonal{0}{3}
        \helpdrawRSFdiagonal{2}{3}
        \helpdrawRSFdiagonal{4}{1}
    \end{scope}
\node[anchor=north] at (1.25,0) {\introfiguretextsize $k$ matchgates};
\end{scope}

\begin{scope}[xshift=3.5cm, yshift=0.3cm]
\node[anchor=south] at (1.25, 1.05) {\introfiguretextsize any $U'_\text{MG} \ket{0^n}$};
    \begin{scope}[rotate = 90]
        \circuitInitXAt{0.25}
        \helpdrawCircuitLines{0}{5}{0}{1.1}
        \helpdrawRSFdiagonal{1}{2}
        \helpdrawRSFdiagonal{3}{2}
        \circuitAdvanceXBy{\flatgateadvance}
        \helpdrawRSFdiagonal{0}{2}
    \end{scope}
\node[anchor=north] at (1.25,0) {\introfiguretextsize $<k$ matchgates};
\end{scope}
}

\newcommand{\helpdrawMainfigureInnerProduct}{
\node at (3.5,0.9) {$=$};
\begin{scope}[xshift=1cm]
    \begin{scope}[rotate = 90]
        \circuitInitXAt{0.25}
        \helpdrawCircuitLines{0}{4}{0}{1.9}
        \foreach\p in {0,...,4}{\draw[fill] ($0.5*(2*1.9, -\p)$) circle[radius=\qubitBallRadius];}
        \helpdrawRSFdiagonal{0}{3}
        \helpdrawRSFdiagonal{2}{2}
        \circuitAdvanceXBy{\flatgateadvance}
        \circuitAdvanceXBy{\flatgateadvance}
        \helpdrawRSFdiagonal{0}{1}
        \circuitAdvanceXBy{\flatgateadvance}
        \circuitAdvanceXBy{0.2cm}
        \helpdrawRSFdiagonal{1}{0}
        \helpdrawRSFdiagonal{3}{0}
        \circuitAdvanceXBy{\flatgateadvance}
    \end{scope}
    \draw[line width=0.6pt, decorate,decoration={brace,amplitude=4pt,mirror,raise=0.2cm}] (0,1) -- node[midway, anchor = east, align=right, xshift=-8pt] {\introfiguretextsize $\ket{\psi}$} (0,-0.1);
    \draw[line width=0.6pt, decorate,decoration={brace,amplitude=4pt,mirror,raise=0.2cm}] (0,2.0) -- node[midway, anchor = east, align=right, xshift=-8pt] {\introfiguretextsize $\bra{\phi}$} (0,1.2);
\end{scope}

\begin{scope}[xshift=4cm, yshift=0.2]
    \node[] at (1, 1.7) {\introfiguretextsize $\braket{\phi}{\psi}$};
    \begin{scope}[rotate=90]
        \circuitInitXAt{0.7}
        \helpdrawCircuitLines{0}{4}{0.4}{1.3}
        \foreach\p in {0,...,4}{\draw[fill] ($0.5*(2*1.3, -\p)$) circle[radius=\qubitBallRadius];}
        \helpdrawRSFdiagonal{0}{2}
        \helpdrawRSFdiagonal{2}{2}
    \end{scope}
\end{scope}
}

\newcommand{\helpdrawMainfigureSWAPStandardForm}{

\begin{scope}[scale = 0.6, yshift=3.1cm]
    \def\circuitlinewidth{0.7pt}
    \def\circuitRoundedCorners{0.9pt}

    \node[align=center] (N1) at (2.5,0.25) {\introfiguretextsize $U_{t\text{-doped}}$};
    \node[align=center] (N2) at (9.05,0.25) {\introfiguretextsize $U_{t\text{-doped}}\ket{0^n}$};
    \node[align=center] (N3) at (14.25,0.25) {\introfiguretextsize $\bra{0^n}U_{t\text{-doped}}\ket{0^n}$};

    \draw[-latex, line width=1pt] ($(N1.south) + (0,0.15)$) -- ++(0,-0.65);
    \draw[-latex, line width=1pt] ($(N2.south) + (0,0.15)$) -- ++(0,-0.65);
    \draw[-latex, line width=1pt] ($(N3.south) + (0,0.15)$) -- ++(0,-0.65);

    \draw[line width=0.6pt, decorate,decoration={brace,amplitude=4pt,mirror,raise=0.2cm}] (0.1,-3.2) -- node[midway, anchor = north, align=center, yshift=-0.25cm] {\introfiguretextsize $\mathcal{O}(n+t)$} (5.2,-3.2);
    
    \draw[line width=0.6pt, decorate,decoration={brace,amplitude=4pt,mirror,raise=0.2cm}] (7,-3.2) -- node[midway, anchor = north, align=center, yshift=-0.25cm] {\introfiguretextsize $\mathcal{O}(n+t)$} (10.9,-3.2);

    \draw[line width=0.6pt, decorate,decoration={brace,amplitude=4pt,mirror,raise=0.2cm}] (12.9,-3.2) -- node[midway, anchor = north, align=center, yshift=-0.25cm] {\introfiguretextsize $\mathcal{O}(t)$} (15.6,-3.2);
    
    \begin{scope}[xshift=0cm, yshift=0cm, rotate=-90]
    \helpdrawtdopedMGC
    \end{scope}
    \begin{scope}[xshift=6.9cm, yshift=0cm, rotate=-90]
    \helpdrawtdopedFGS
    \end{scope}
    \begin{scope}[xshift=12.8cm, yshift=0cm, rotate=-90]
    \helpdrawtdopedInner
    \end{scope}
    
\end{scope}
}

\newcommand\helpdrawBox[5]{
    \draw[draw=#3, line width=1.5pt, fill=#4, rounded corners=8pt,
            path picture = 
                {\node[anchor=north west, align=left, xshift=0pt, yshift=-1pt] at (path picture bounding box.north west)
                    {#5};} ]
        #1 rectangle #2;
}

\newcommand{\helpdrawSEDfigureMerging}{
    \def\circuitGateHeight{\circuitlinespacing*0.4}
    \def\flatgateadvance{0.225}
    \def\biggeradvance{0.4}
    \colorlet{xxcolfg}{mgcolorfg}
    \colorlet{xxcolbg}{mgcolorbg}
    \colorlet{zcolbg}{othergatecolorbg}
    \colorlet{zcolfg}{othergatecolorfg}
    \begin{scope}[rotate=90]
        \helpdrawCircuitLines{0}{3}{0}{2.35}
        \circuitInitXAt{0.25}
        \circuitMultiGate[\flatgatesize][xxcolbg][xxcolfg]{}{-1}{-2}{}
        \circuitAdvanceXBy{\flatgateadvance}
        \circuitSingleGate[\flatgatesize][zcolbg][zcolfg]{}{-2}{}
        \circuitAdvanceXBy{\flatgateadvance}
        \circuitMultiGate[\flatgatesize][xxcolbg][xxcolfg]{}{-2}{-3}{}
        \circuitAdvanceXBy{\flatgateadvance}
        \circuitSingleGate[\flatgatesize][zcolbg][zcolfg]{}{-3}{}
        \circuitInitXAt{0.25}
        \circuitAdvanceXBy{\flatgateadvance}
        \circuitAdvanceXBy{\biggeradvance}
    
        \circuitMultiGate[\flatgatesize][xxcolbg][xxcolfg]{}{-0}{-1}{}
        \circuitAdvanceXBy{\flatgateadvance}
        \circuitSingleGate[\flatgatesize][zcolbg][zcolfg]{}{-1}{}
        \circuitAdvanceXBy{\flatgateadvance}
        \circuitMultiGate[\flatgatesize][xxcolbg][xxcolfg]{}{-1}{-2}{}
        \circuitAdvanceXBy{\flatgateadvance}
        \circuitSingleGate[\flatgatesize][zcolbg][zcolfg]{}{-2}{}
        \circuitAdvanceXBy{\flatgateadvance}
        \circuitMultiGate[\flatgatesize][xxcolbg][xxcolfg]{}{-2}{-3}{}
        \circuitAdvanceXBy{\flatgateadvance}
        \circuitSingleGate[\flatgatesize][zcolbg][zcolfg]{}{-3}{}
    \end{scope}
    
    \node at (2.25,1.15) {$=$};
    
    \begin{scope}[xshift=3cm, rotate=90]
        \circuitInitXAt{0.25}
        \helpdrawCircuitLines{0}{3}{0}{2.35}
        \circuitMultiGate[\flatgatesize][xxcolbg][xxcolfg]{}{-0}{-1}{}
        \circuitAdvanceXBy{\biggeradvance}
        \circuitMultiGate[\flatgatesize][xxcolbg][xxcolfg]{}{-1}{-2}{}
        \circuitAdvanceXBy{\flatgateadvance}
        \circuitSingleGate[\flatgatesize][zcolbg][zcolfg]{}{-2}{}
        \circuitSingleGate[\flatgatesize][zcolbg][zcolfg]{}{-1}{}
        \circuitAdvanceXBy{\flatgateadvance}
        \circuitMultiGate[\flatgatesize][xxcolbg][xxcolfg]{}{-1}{-2}{}
        \circuitAdvanceXBy{\biggeradvance}
        \circuitMultiGate[\flatgatesize][xxcolbg][xxcolfg]{}{-2}{-3}{}
        \circuitAdvanceXBy{\flatgateadvance}
        \circuitSingleGate[\flatgatesize][zcolbg][zcolfg]{}{-3}{}
        \circuitSingleGate[\flatgatesize][zcolbg][zcolfg]{}{-2}{}
        \circuitAdvanceXBy{\flatgateadvance}
        \circuitMultiGate[\flatgatesize][xxcolbg][xxcolfg]{}{-2}{-3}{}
        \circuitAdvanceXBy{\flatgateadvance}
        \circuitSingleGate[\flatgatesize][zcolbg][zcolfg]{}{-3}{}
    \end{scope}
    
    \begin{scope}[xshift=5.5cm, yshift=1.3cm, rotate=90]
        \circuitInitXAt{0}
        \circuitQubitLine{-0.2}{0.2}{0}
        \circuitQubitLine{-0.2}{0.2}{-1}
        \circuitMultiGate[\flatgatesize][xxcolbg][xxcolfg]{}{-0}{-1}{}
        \circuitAdvanceXBy{0.7}
        \circuitQubitLine{0.5}{0.9}{-1}
        \circuitSingleGate[\flatgatesize][zcolbg][zcolfg]{}{-1}{}
        \node[anchor=west] at (0.7,-0.6) {\footnotesize $\exp(\ii \alpha_i Z)$};
        \node[anchor=west] at (0,-0.6) {\footnotesize $\exp(\ii \beta_i X\!\otimes\! X)$};
    \end{scope}
}

\newcommand{\helpdrawSEDfigureAlgAction}{

    \newcommand\tsize\footnotesize

    \node (N1) at (1.25, -2.5) {$\cdots$};
    \draw[-latex, line width=1pt] (0.5, -2.5) -- (N1.west);
    \draw[-latex, line width=1pt] (N1.east) -- (2, -2.5);

    \draw[-latex, line width=1pt] (4.1,-2.5) -- node[above]{\tsize $e^{\ii \beta\:\!\! X_{\:\!\!1} \:\!\! X_{\:\!\!2}}$} (4.9,-2.5);

    \newcommand{\helpdrawCM}{
        \foreach \c in {0,...,7}{\draw[xshift=-1cm, yshift=-1cm, scale = 0.25, draw = none, fill=white] ($(\c - 0.03,7-\c - 0.03)$) rectangle ($(\c+1 + 0.03 ,7-\c+1 + 0.03)$);}
    
        \draw[draw = black, line width=0.6pt] (-1.04,-1.04) to[out=100,in=260] (-1.04,1.04);
        \draw[draw = black, line width=0.6pt] (1.04,-1.04) to[out=80,in=280] (1.04,1.04);
    }

    \begin{scope}[yshift=-0.5cm, xshift=1.25cm]
        \node[anchor = east, align = right] at (-1,0) {\tsize $\Gamma = $};
        \draw[-latex, line width=1pt] (1.1,0) -- node[above]{\tsize $e^{\ii \alpha \:\!\! Z_{\:\!\!4}}$} (1.9,0);
        \draw[-latex, line width=1pt] (4.1,0) -- node[above]{\tsize $e^{\ii \beta\:\!\! X_{\:\!\!3} \:\!\! X_{\:\!\!4}}$} (4.9,0);

        \begin{scope}[yshift=0cm, xshift=0cm, scale=0.8]
            \draw[xshift=-1cm, yshift=-1cm, scale = 0.25, draw = none, fill=black!40] (0,0) rectangle (8,8);
            \helpdrawCM
        \end{scope}
        
        \begin{scope}[yshift=0cm, xshift=3cm, scale=0.8]
            \draw[xshift=-1cm, yshift=-1cm, scale = 0.25, draw = none, fill=black!40] (0,8) --  (7,8) -- (7, 7) -- (8,7) -- (8,0) -- (1,0) -- (1,1) -- (0,1) -- cycle;
            \helpdrawCM
        \end{scope}

    \end{scope}
    
    \begin{scope}[yshift=-2.5cm, xshift=-0.5cm, scale=0.8]
        \draw[xshift=-1cm, yshift=-1cm, scale = 0.25, draw = none, fill=black!40] (0,8) --  (6,8) -- (6, 7) -- (8,7) -- (8,0) -- (1,0) -- (1,2) -- (0,2) -- cycle;
        \helpdrawCM
    \end{scope}

    \begin{scope}[yshift=-2.5cm, xshift=3cm, scale=0.8]
        \draw[xshift=-1cm, yshift=-1cm, scale = 0.25, draw = none, fill=black!40] (0,8) --  (3,8) -- (3, 7) -- (8,7) -- (8,0) -- (1,0) -- (1,5) -- (0,5) -- cycle;
        \helpdrawCM
    \end{scope}
    \begin{scope}[yshift=-2.5cm, xshift=6cm, scale=0.8]
        \draw[xshift=-1cm, yshift=-1cm, scale = 0.25, draw = none, fill=black!40] (0,8) --  (2,8) -- (2, 6) -- (8,6) -- (8,0) -- (2,0) -- (2,6) -- (0,6) -- cycle;
        \helpdrawCM
    \end{scope}
}

\newcommand{\helpdrawCuttingFigureTheorem}{
\node at (6.5,1.2) {$=$};
\begin{scope}[yshift=1.cm, xshift=0.25cm]
    \node[anchor=south] at (2.75, 0.5) {\introfiguretextsize$\ket\psi$ with $\bdns$-banded CM};
    \begin{scope}[rotate = 90]
        \circuitInitXAt{0.25}
        \helpdrawEntangled{0}{10}{0}
        \helpdrawCircuitLines{0}{11}{0}{0.5}
    \end{scope}
\end{scope}
\begin{scope}[xshift=7.25cm, yshift=0.6cm]
    
    \draw[line width=\circuitlinewidth] (1, 0) to[out=-30,in=210] (2.5,0);
    \draw[line width=\circuitlinewidth] (1.5, 0) to[out=-30,in=210] (2,0);
    \draw[line width=\circuitlinewidth] (3.5, 0) to[out=-30,in=210] (5,0);
    \draw[line width=\circuitlinewidth] (4, 0) to[out=-30,in=210] (4.5,0);
    \begin{scope}[rotate=90]
        \circuitInitXAt{0.55}
        \helpdrawCircuitLines{0}{11}{0}{1.1}
        \circuitMultiGate[0.6][mgcolorlightbg][mgcolorfg]{}{-0}{-3}{\introfiguretextsize \color{black} $U_\partyA$}
        \circuitMultiGate[0.6][mgcolorlightbg][mgcolorfg]{}{-4}{-8}{\introfiguretextsize \color{black} $U_\partyB$}
        \circuitMultiGate[0.6][mgcolorlightbg][mgcolorfg]{}{-9}{-11}{\introfiguretextsize \color{black} $U_\partyC$}
    \end{scope}
    \draw[line width=0.6pt, decorate,decoration={brace,amplitude=4pt,raise=0.2cm}] (1.9,0.8) -- node[midway, anchor = south, yshift=0.3cm] {\introfiguretextsize $\mathcal{O}(\bdns)$} (4.1,0.8);
\end{scope}
}

\newcommand{\helpdrawCuttingFigureAlg}{
\node at (4.75,1.0) {$=$};
\node at (9.75,1.0) {$=$};
\begin{scope}[yshift=0.8cm, xshift=0.25cm]
    \draw[line width=1.5*\circuitlinewidth, dotted] (0,0) -- (0.5,0);
    \draw[line width=1.5*\circuitlinewidth, dotted] (3.5,0) -- (4,0);
    \begin{scope}[rotate = 90]
        \circuitInitXAt{0.25}
        \helpdrawEntangled{1}{6}{0}
        \helpdrawCircuitLines{1}{7}{0}{0.5}
    \end{scope}
\end{scope}
\begin{scope}[yshift=.5cm, xshift=5.25cm]
    \draw[line width=1.5*\circuitlinewidth, dotted] (0,0) -- (0.5,0);
    \draw[line width=1.5*\circuitlinewidth, dotted] (3.5,0) -- (4,0);
    \draw[line width=\circuitlinewidth] (0.7, -0.5) to[out=0,in=240] (1,0);
    \draw[line width=\circuitlinewidth] (0.7, -0.25) to[out=0,in=210] (2.5,0);
    \draw[line width=\circuitlinewidth] (1.5, 0) to[out=-30,in=180] (3.3,-0.5);
    \draw[line width=\circuitlinewidth] (2, 0) to[out=-40,in=180] (3.3,-0.25);
    \draw[line width = 4pt, line cap = round] (0.5,0) -- (0.3,0);
    \draw[line width = 4pt, line cap = round] (3.5,0) -- (3.7,0);
    
    \begin{scope}[rotate = 90]
        \circuitInitXAt{0.55}
        \helpdrawCircuitLines{1}{7}{0}{1.1}
        \circuitMultiGate[0.6][mgcolorlightbg][mgcolorfg]{}{-2}{-6}{\introfiguretextsize \color{black} $U_\text{diag}$}
    \end{scope}
\end{scope}
\begin{scope}[yshift=.cm, xshift=10.25cm]
    \draw[line width=1.5*\circuitlinewidth, dotted] (0,0) -- (0.5,0);
    \draw[line width=1.5*\circuitlinewidth, dotted] (3.5,0) -- (4,0);
    \draw[line width=\circuitlinewidth] (0.72, -0.3) to[out=10,in=220] (1.5,0);
    \draw[line width=\circuitlinewidth] (0.72, -0.15) to[out=0,in=210] (1,0);
    \draw[line width=\circuitlinewidth] (2.5, 0) to[out=-30,in=180] (3.28,-0.3);
    \draw[line width=\circuitlinewidth] (3, 0) to[out=-40,in=170] (3.28,-0.15);
    \draw[line width = 4pt, line cap = round] (0.5,0) -- (0.3,0);
    \draw[line width = 4pt, line cap = round] (3.5,0) -- (3.7,0);
    
    \begin{scope}[rotate = 90]
        \circuitInitXAt{0.55}
        \helpdrawCircuitLines{1}{7}{0}{1.9}
        \circuitMultiGate[0.6][mgcolorlightbg][mgcolorfg]{}{-2}{-6}{\introfiguretextsize \color{black} $U_\text{perm}$}
        \circuitAdvanceXBy{0.8}
        \circuitMultiGate[0.6][mgcolorlightbg][mgcolorfg]{}{-2}{-6}{\introfiguretextsize \color{black} $U_\text{diag}$}
    \end{scope}
\end{scope}
}

\ExplSyntaxOn
\NewDocumentCommand{\concat}{m}
 {
  \concat_process:n { #1 }
 }

\cs_new_protected:Nn \concat_process:n
 {
  \clist_use:nn { #1 } { \Vert }
 }
\ExplSyntaxOff

\definecolor{mgcolorfg}{HTML}{120ACF}
\definecolor{mgcolorbg}{HTML}{322AFF}
\colorlet{mgcolorlightbg}{mgcolorfg!35}

\colorlet{lightmgcolorfg}{mgcolorfg!30}
\colorlet{lightmgcolorbg}{mgcolorbg!30}

\definecolor{othergatecolorfg}{HTML}{EA4044}
\definecolor{othergatecolorbg}{HTML}{FF7036}

\colorlet{entangledstatecolor}{cyan!80!black}

\definecolor{mainfigurealgcolor}{HTML}{743834}
\definecolor{mainfigureresultcolor}{HTML}{388697}

\colorlet{mainfigurealgcolorbg}{mainfigurealgcolor!18}
\colorlet{mainfigurealgcolorfg}{mainfigurealgcolor!11}

\colorlet{mainfigureresultcolorbg}{mainfigureresultcolor!15}
\colorlet{mainfigureresultcolorfg}{mainfigureresultcolor!10}

\colorlet{mainfigureimportedcolorbg}{black!7}
\colorlet{mainfigureimportedcolorfg}{black!3}

\tikzset{absorbalgarrow/.style={-{Stealth[length=5pt]}, line width=1.2pt, rounded corners = 1.5pt}}

\def\flatgatesize{0.15}
\def\flatgateadvance{0.3}

\def\circuitRoundedCorners{1.5pt}

\usepackage[normalem]{ulem}

\newcommand{\TUM}{\affiliation{Technical University of Munich, TUM School of Natural Sciences, Physics Department, 85748 Garching, Germany}}
\newcommand{\MCQST}{\affiliation{Munich Center for Quantum Science and Technology (MCQST), Schellingstr. 4, 80799 M{\"u}nchen, Germany}}
\newcommand{\Nottingham}{\affiliation{School of Physics and Astronomy, University of Nottingham, Nottingham, NG7 2RD, UK}}
\newcommand{\CQNE}{\affiliation{Centre for the Mathematics and Theoretical Physics of Quantum Non-Equilibrium Systems, University of Nottingham, Nottingham, NG7 2RD, UK}}

\newcommand{\majo}[1]{\gamma_{#1}}
\newcommand{\cbs}{b} 
\newcommand{\depth}{d}
\newcommand{\bdns}{\beta}
\def\ii{\mathrm{i}}
\def\ee{e}

\newcommand{\partyA}{\mathbf{A}}
\newcommand{\partyB}{\mathbf{B}}
\newcommand{\partyC}{\mathbf{C}}
\newcommand{\lsr}{\operatorname{LSR}}
\newcommand{\sr}{\operatorname{SR}}
\newcommand{\covblock}{J_2}

\makeatletter
\newcommand{\raisemath}[1]{\mathpalette{\raisem@th{#1}}}
\newcommand{\raisem@th}[3]{\raisebox{#1}{$#2#3$}}
\makeatother
\newcommand{\transpose}{ {\raisemath{2pt}{\intercal}} }

\newcommand{\id}{\mathds{1}}
\newcommand\ID\id

\begin{document}

\title{Matchgate circuit representation of fermionic Gaussian states: optimal preparation, approximation, and classical simulation}

\author{Marc Langer} \TUM \MCQST
\author{Ra{\'u}l Morral-Yepes} \TUM \MCQST \author{Adam Gammon-Smith} \Nottingham \CQNE
\author{Frank Pollmann} \TUM \MCQST
\author{Barbara Kraus} \TUM \MCQST

\date{\today}

\begin{abstract}
Fermionic Gaussian states (FGSs) and the associated matchgate circuits play a central role in quantum information theory and condensed matter physics. Despite being possibly highly entangled, they can still be efficiently simulated on classical computers. We address the question of how to optimally create such states when using matchgate circuits acting on product states. To this end, we derive lower bounds on the number of gates required to prepare an arbitrary pure FGS: We establish both an asymptotic bound on the minimal gate count over general nearest-neighbor gate sets and an exact bound for circuits composed solely of matchgates. We present explicit algorithms whose constructions saturate these bounds, thereby proving their optimality. We furthermore determine when an FGS can be prepared with a circuit of any given depth, and derive an algorithm that constructs such a circuit whenever this condition is satisfied, either exactly or approximately. Our results have direct applications to (approximate) state preparation and to disentangling procedures. Moreover, we introduce a new classical simulation algorithm for matchgate circuits, based entirely on manipulating the generating circuits of the FGSs. Finally, we briefly study an extension of our framework for $t$-doped Gaussian states and circuits.
\end{abstract}

\maketitle

\section{Introduction}

Fermionic Gaussian states (FGSs) serve as central analytical and numerical tools in the study of free fermion systems. They have widespread applications in condensed matter physics and quantum chemistry~\cite{McEn20}, for example in the Hartree-Fock approximation~\cite{EcAl07} and in the BCS theory of superconductivity~\cite{BCS57}, as well as in quantum information and quantum simulation. On qubit systems, they correspond to those states which can be generated by acting on computational basis states with matchgate circuits (MGCs) -- a class of quantum circuits that can be efficiently simulated on a classical computer in a variety of settings~\cite{Va01,TeDi02,Kn01,JoMi08,Br05Grassmann,BrKo12}. On a quantum computer, they can be efficiently learned~\cite{GlKl18,Br22,AaGe23,MeHe25,LeMe25} and protocols based on their simulability for verifying quantum computations have been proposed~\cite{CaLa25}. 
Furthermore, supplementing MGCs with non-Gaussian resourceful gates~\cite{JoMi08,BrGa11} or states~\cite{bravyi,HeJo19} is sufficient for realizing universal quantum computations.

One standard approach of representing FGSs classically is via their covariance matrix (CM), which is composed of two-point correlation functions. Given the CM, one can compute higher order correlation functions via Wick's theorem~\cite{Wi50, BaLi94}, entanglement entropies~\cite{PeCh99, Pe03, BoRe04}, and the absolute value of the inner product between FGSs~\cite{Loe55,TeDi02,BrGo17}. Algorithms have been developed to update the CM when acting with a matchgate circuit, or performing a computational basis measurement~\cite{TeDi02,Br05Grassmann,BrKo12}. Combining these tools leads to efficient algorithms for simulating MGCs. Furthermore, classical simulation of MGCs supplemented with non-Gaussian resources has been studied, with simulation costs typically scaling exponential in the number of added resources~\cite{HeJo20,HaTa22,MoLa24,MiFa25,DiKo24,ReOs24,GuHa24}.

It is known that any FGS on $n$ qubits can be prepared by some MGC acting on a product state, and that any MGC can be decomposed into $\mathcal{O}(n^2)$ gates arranged into $\mathcal{O}(n)$ layers~\cite{JiSu18,KiMc18,DaDe19,OsDa22}. Beyond this generic approach, there exist algorithms for preparing some special classes of FGSs~\cite{VeCi09}. Alternative descriptions based on matrix product state (MPS) representations yield matchgate circuits of linear depth, which contain far fewer than $\mathcal{O}(n^2)$ gates for some FGSs~\cite{FiWh15, ThLe23,WuKl25,WoPo25}. Related tensor network techniques such as the Gaussian multiscale entanglement renormalization ansatz (GMERA) can produce very shallow circuits that require long-range gates or mid-circuit insertion of qubits in product states~\cite{FiWh15}. For generating arbitrary states with low entanglement using circuits of, e.g., logarithmic depth composed only of nearest-neighbor gates, methods based on MPS representations are known~\cite{MaSt24}.

Recently, we introduced a special representation of pure FGSs via matchgate circuits in the so-called \emph{right standard form} (RSF)~\cite{MoLa25}, which serves as a useful tool for studying entanglement phase transitions in unitary circuit games~\cite{MoSm24}. In particular, we used the RSF both for simulating the evolution of the states, and to derive a disentangling strategy for FGSs. Central to this approach are some algebraic relations that matchgates satisfy, among them a generalized Yang-Baxter relation~\cite{CaKo22,KoCa22}, which enable systematic simplifications of MGCs. However, relevant quantities in the simulation were still calculated using the CM. Furthermore, to substantiate our claim that the disentangling strategy presented in Ref.~\cite{MoLa25} is optimal, it is necessary to study the circuit complexity of FGSs, i.e., the minimum number of gates required to generate a given state~\cite{Aa16}. Some results in this direction are known: In the context of field theories, some measures of complexity of FGSs have been defined and analyzed~\cite{HaMy18}. Moreover, the complexity of random MGCs has been studied~\cite{BrDi25}.

In contrast, when focusing on preparing a particular given FGS, it its not known what is the exact minimal number of matchgates necessary. Likewise, the problem of minimal required circuit depth has not been studied beyond the known $\mathcal{O}(n)$ bound for generic MGCs. A related problem is then how to efficiently generate circuits that attain such theoretical lower bounds and hence -- potentially -- undercut the upper bounds known from the generic case. Regarding the classical simulation of matchgate circuits, it is crucial to know how the simulation can be performed in the RSF framework. Given that the circuit description of FGSs can be updated efficiently when acting with matchgates, an important question is whether a classical simulator for MGCs can be entirely based on this framework without needing to resort to the CM or other tools. These are precisely the problems we address in this work.

To begin with, we study the problem of efficiently constructing a circuit representation of an FGS, i.e., a circuit that prepares it when acting on a product state. We put a particular emphasis on two classes of such circuit representations. The first class are MGCs in RSF, which can represent arbitrary pure FGSs. We derive simple and efficient algorithms to generate such an MGC from the CM. We then prove that our constructions have minimal gate count in an asymptotic sense when comparing to arbitrary circuits comprised of nearest neighbor gates and exactly when comparing to other MGCs. Second, we focus on a subset of FGSs that can be (approximately) prepared by a circuit of bounded depth, which can be an arbitrary function of the system size. We determine which depth is necessary to prepare a given state and show that our methods achieve such depths.

Next, we derive a classical simulation method based entirely on the circuit representation of FGSs that does not make use of the CM. In particular, we introduce an algorithm for computing the inner product between two FGSs that are generated via two known MGCs. Finally, we extend the circuit representation framework to states that are prepared by MGCs and some non-Gaussian resources. We introduce here standard forms and simulation algorithms for MGCs interleaved with $t$ resourceful gates and show how any circuit can be transformed to such a form.  Circuits in this form have a depth of $\mathcal{O}(n+t)$ as opposed to circuits of depth $\mathcal{O}(nt)$ which one may obtain with standard methods.

The remainder of this paper is organized as follows. We start presenting a summary of our results in section~\ref{sec:summary}. In section~\ref{sec:preliminaries}, we explain the necessary concepts, and review both the algebraic identities matchgates satisfy, as well as the RSF of FGS. In Section~\ref{sec:algorithms}, we review the absorption algorithm for deriving RSF matchgate circuits~\cite{MoLa25}, and present an algorithm for determining the circuit when given the CM. In section~\ref{sec:application_lr_yb} we derive the following applications of the matchgate identities: (i) we show that matchgate circuits in RSF generically have optimal gate count among all other matchgate circuits, (ii) we present the algorithm for evaluating the inner product between FGSs, and (iii), we generalize our results to $t$-doped matchgate circuits. Finally, we turn our attention towards matchgate circuits of bounded depth in section~\ref{sec:shallow}, where we first investigate the exact case, and then show some numerical results for physically relevant states. The appendices contain detailed proofs of our statements, as well as further numerical examples and details.

\begin{figure*}[htb!]
    \centering
    \begin{tikzpicture}[scale=0.8]
    \begin{scope}[yshift=-0.5cm]
        \helpdrawBox{(-0.8,-1.5)}{(10,3.5)}{mainfigurealgcolorfg}{mainfigurealgcolorbg}{\textbf{(a)} Symmetric Euler Decomposition (Sec.~\ref{subsec:sed})}
        \begin{scope}[yshift=0.25cm, xshift=-0.15cm]
            \helpdrawMainfigureSED
        \end{scope}
    \end{scope}

    \begin{scope}[yshift=-5.5cm]
        \helpdrawBox{(-0.8,-4.0)}{(10,3)}{mainfigurealgcolorfg}{mainfigurealgcolorbg}{\textbf{(b)} Depth-$d$ matchgate circuits (Sec.~\ref{sec:shallow})}
        \begin{scope}[yshift=-0.1cm, xshift=-0.15cm]
            \helpdrawMainfigureCircuitDepth
        \end{scope}
        \draw[draw = mainfigurealgcolorfg, line width=1pt] (-0.8,-0.5)
        node[black, anchor=north west, align=left, xshift=0.2cm, yshift=-1pt] {Entanglement cutting alg. (exact \& approximate)} -- (10,-0.5);
        \begin{scope}[yshift=-3.65cm, xshift=-0.15cm]
        \helpdrawMainfigureEntanglementCutting
        \end{scope}
    \end{scope}

    \begin{scope}[xshift=11cm]
        \helpdrawBox{(-0.5,-1.2)}{(6.3,3)}{mainfigureresultcolorfg}{mainfigureresultcolorbg}{\textbf{(c)} Optimal gate count (Sec.~\ref{subsec:rsf_optimal})}
        \begin{scope}[yshift=0.2cm, xshift=0.15cm]
            \begin{scope}[xshift=-0.25cm]
                \helpdrawMainfigureOptimalVtwo
            \end{scope}
            \node[anchor=north west] (N1) at (-0.55,-0.4) {\introfiguretextsize ${}^*$also impossible with $<k/2$ arb.};
            \node[anchor=north west, xshift=0.4em, yshift=-0.3cm] at (N1.north west) {\introfiguretextsize gates in 1d (Sec.~\ref{subsec:sed})};
        \end{scope}
    \end{scope}

    \begin{scope}[xshift=11cm, yshift=-4.7cm]
        \helpdrawBox{(-0.5,-0.3)}{(6.3,3)}{mainfigureresultcolorfg}{mainfigureresultcolorbg}{\textbf{(d)} Inner product alg. (Sec.~\ref{subsec:inner_product_alg})}
        \begin{scope}[yshift=0.1cm, xshift=-0.25cm]
        \helpdrawMainfigureInnerProduct
        \end{scope}
    \end{scope}

    \begin{scope}[xshift=18.3cm]
        \helpdrawBox{(-0.5,-5.0)}{(3,3)}{mainfigureimportedcolorfg}{mainfigureimportedcolorbg}{}
        \begin{scope}[xshift=-0.75cm]
            \helpdrawMainfigureImportedTools
        \end{scope}
    \end{scope}

    \begin{scope}[yshift=-8.5cm, xshift=11cm]
        \helpdrawBox{(-0.5,-1.)}{(10.3,3)}{mainfigureresultcolorfg}{mainfigureresultcolorbg}{\textbf{(e)} Generalized standard circuit forms (Sec.~\ref{subsec:generalization_doped})}
        \helpdrawMainfigureSWAPStandardForm
    \end{scope}
    \end{tikzpicture}
    
    \caption{
    Results and algorithms. Panels \textbf{(a)} and \textbf{(b)} (lower panel) summarize the algorithms to determine matchgate circuits (MGCs) that generate a given fermionic Gaussian state (FGS): \textbf{(a)} In the symmetric Euler decomposition, matchgates are selected to eliminate entries in the covariance matrix (CM). Pairs of qubits are successively removed and the circuit produced is in right standard form (RSF). \textbf{(b)} FGSs whose CM is $\depth$-banded can be prepared with MGCs of depth $\mathcal{O}(\depth)$. For determining low-depth circuit for states with (approximately) banded CM, the entanglement cutting algorithm produces small MGCs that cut the state into two subsystems. Panels \textbf{(c)}, \textbf{(d)} and \textbf{(e)} show applications of RSF circuits and the algebraic matchgate identities (GYB and LR relation in the top right panel): \textbf{(c)} Since any FGS admits a representation as an RSF circuit, comparing RSF circuits for the same FGS allows one to identify the circuit with minimal gate count. \textbf{(d)} Efficient evaluation of the inner product of two FGS (including phase information) by rewriting the inner product as a simple tensor network contraction. \textbf{(e)} Standard forms similar to the RSF can be derived for circuits, states and inner product evaluations, where the initial circuit is an MCG interleaved with $t$ resourceful gates (resourceful gates are depicted in red, whereas matchgates are blue).
    }
    \label{fig:summary_figure}
\end{figure*}
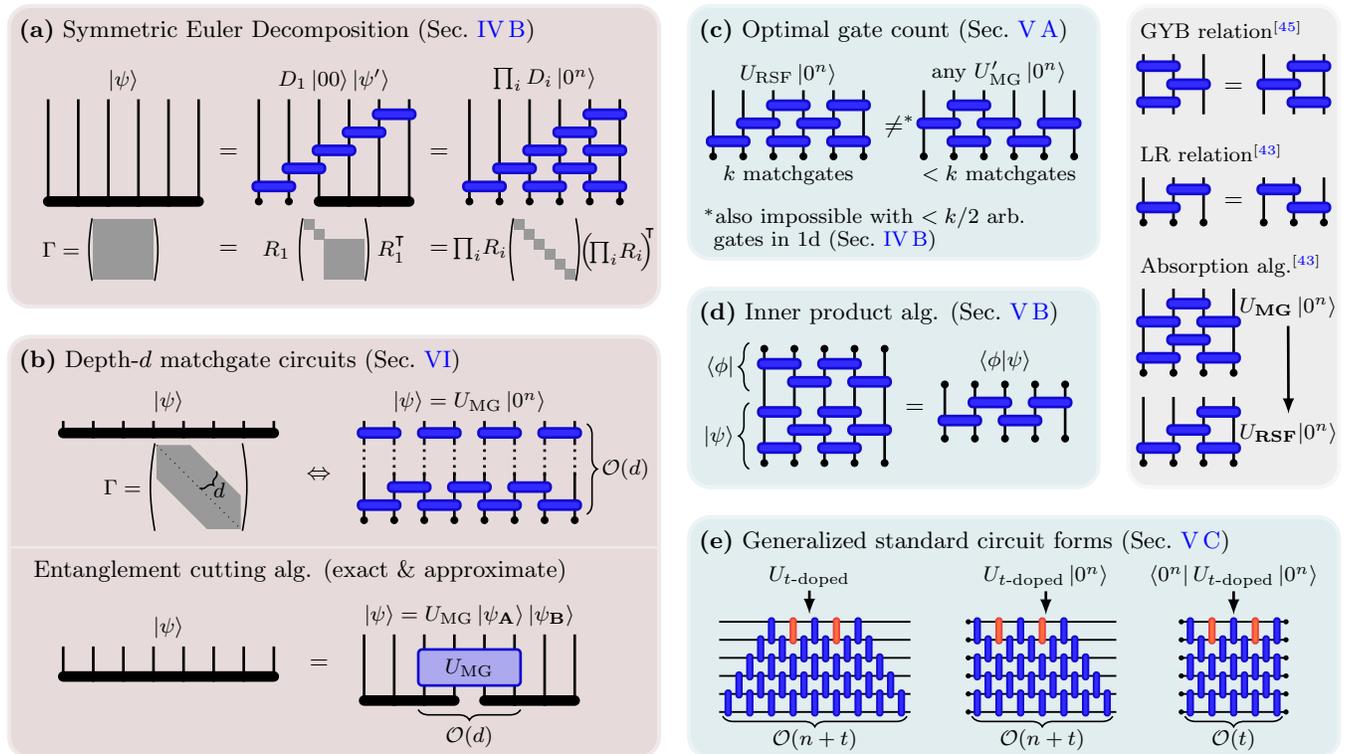

\section{Summary of results}
\label{sec:summary}

In this section, we summarize the main results of our work (see  Fig.~\ref{fig:summary_figure} for an illustration). Broadly speaking, we pursue two different directions. First, we introduce two classes of efficient algorithms for determining (RSF) circuits that prepare a given FGS when provided its CM (Fig.~\ref{fig:summary_figure}a and~b) and study their applications. Second, we explore further applications of representing FGSs via (RSF) circuits and the algebraic matchgate identities. The statement on exact minimal matchgate count can be obtained by comparing MGCs in RSF (Fig.~\ref{fig:summary_figure}c). Beyond that, we study new classical simulation algorithms and further circuit standard forms (Fig.~\ref{fig:summary_figure}d and~e). Below, we state the main contributions of this paper and explain how different aspects of our work are related.

The first class of algorithms we introduce aims to decompose the CM of a given FGS directly into a series of Givens rotations which correspond to elementary matchgates (Fig.~\ref{fig:summary_figure}a), without the need to diagonalize the CM first. The rotations are selected such that when acting on the CM, entries are eliminated in a column by column fashion. The idea is very similar to a decomposition of a rotation matrix into Euler angles. However, we decompose the covariance matrix directly by acting with Givens rotations and their transpose from the left and right respectively. Owing to this structure, we term this class of algorithms \emph{symmetric Euler decompositions}. The algorithm obtained eliminating each column with a sequence of Givens rotations is particularly simple to implement (see Sec.~\ref{subsec:sed}). However, the elementary matchgates corresponding to Givens rotations utilize only one of two possible nonlocal parameters. Instead, a modified version (in Appendix~\ref{app:sed_two_columns_version}) outputs matchgates in which both nonlocal parameters are utilized. This is done by eliminating two columns in the CM at once instead of a single one. The output circuit is then directly in RSF. Finally, a more careful selection of which columns to eliminate in the CM leads to a third version of the symmetric Euler decomposition (see Appendix~\ref{app:enhaced_decomposition}). In this case, when viewed as a disentangling algorithm, we show that every gate maximally reduces the Schmidt rank in the bipartition where it is applied.

Building on this, a first connection to circuit complexity can be drawn: Denote by $\lsr_k(\ket{\psi})$ the logarithm of the Schmidt rank of a state $\ket\psi$ in a bipartition $1,\ldots,k\vert k+1,\ldots,n$. By applying the algorithm presented in Appendix~\ref{app:enhaced_decomposition}, we show that a given FGS $\ket\psi$ can be prepared by an MGC containing at most
\[
    K = \sum_{k=1}^{n-1} \lsr_k( \ket{\psi})
\]
matchgates. Note that generating $\ket{\psi}$ with any gate set under the nearest neighbor restriction requires at least $K/2$ gates~\footnote{To see this, consider the application of a gate $U$ to qubits $i$ and $i+1$ of some state $\ket\varphi$. The difference $\lsr_i(U \ket{\phi}) - \lsr_i(\ket\phi)$ can be at most two, since $|\lsr_{i}(\ket\chi) - \lsr_{i\pm 1}(\ket\chi)| \leq 1$ due to subadditivity for both $\ket\chi = \ket\varphi$ and $\ket\chi =U \ket{\varphi}$. Hence, even if all gates in a circuit generating $\ket{\psi}$ increase the Schmidt rank maximally, this circuit needs to contain at least $K/2$ gates.}. Thus, MGCs achieve asymptotically optimal gate count to generate arbitrary pure FGSs.

Next, we restrict our attention to the exact gate count comparison between matchgate circuits. Specifically, we ask under which conditions a matchgate circuit uses the minimal number of gates among all possible matchgate circuits. We show the following: If an FGS is generated by an MGC in RSF, under some mild conditions on the gates, the state cannot be generated by any other matchgate circuit with even a single gate fewer (Fig.~\ref{fig:summary_figure}c). The conditions on the gates rule out that, e.g., the gates factorize. Gates which do not satisfy these conditions form a set of measure zero. Our argument relies on the gate absorption algorithm~\cite{MoLa25}, which transforms an arbitrary MGC acting on a computational basis state into an RSF circuit. This enables us to reduce the problem to the much more tractable comparison of MGCs in RSF. We remark that the circuits produced by the symmetric Euler decomposition algorithms typically satisfy the demanded conditions.

While the previous results focus on gate count, they do not address bounds on the required circuit depth. Such bounds can be obtained using a different approach: Given the covariance matrix $\Gamma$ of $\ket\psi$, define $\depth$ as the largest integer such that matrix entries $\Gamma_{i,i+\depth}$ are nonzero (we say that $\Gamma$ is $\depth$-banded). Note that when $\ket\psi$ is prepared by a circuit of depth $\depth$, its CM is $\mathcal{O}(\depth)$-banded. Here, we show that the reverse also holds: Any FGS whose CM is $\depth$-banded can be prepared by an MGC of depth $\mathcal{O}(\depth)$ (Fig.~\ref{fig:summary_figure}b). For proving this, it is sufficient to consider a particular algorithm for computing a symmetric Euler decomposition and show that in each step, the intermediate CMs remain banded. It then follows that the circuit produced by this algorithm has the desired depth.

The existence of a circuit of depth $\depth$ clearly implies that for any bipartition $1,\ldots,k\vert k+1,\ldots,n$, there exists a circuit acting on $\mathcal{O}(\depth)$ qubits neighboring qubit $k$ that, when applied, completely disentangles the state in this bipartition. A question one might now have is whether this circuit can be determined efficiently from local properties of the state, i.e., without knowledge of the global circuit. We show that this is indeed the case and present an algorithm that determines such a circuit (lower panel of Fig.~\ref{fig:summary_figure}b). Applying this $\mathcal{O}(n/k)$ times in different regions cuts the state into $\mathcal{O}(n/k)$ states of size $\mathcal{O}(d)$ that are not entangled to each other. These can be independently disentangled with small MGCs. In total, this gives a different algorithm for determining MCGs, which we term \emph{entanglement cutting algorithm}. Since this algorithm only uses local properties of the state, it is numerically much more stable compared to the symmetric Euler decomposition algorithms, where numerical errors could propagate through the entire system. This renders such algorithms potentially useful in preparing states with approximately banded CM, such as ground states of local Hamiltonians. To demonstrate this, we study the ground state of the one-dimensional Ising model in the disordered phase (Sec.~\ref{sec:numerical_examples}), as well as a model with algebraically decaying correlations and a two-dimensional model (Appendix~\ref{app:further_numerical_examples}). In all examples, we find a tradeoff between circuit depth and achieved fidelity when using our algorithm.

We study furthermore the problem of classically simulating FGSs that are represented by some MGCs. Our focus is on methods that are entirely based on the circuit representation framework and do not make use of, e.g., the CM formalism. We present an algorithm that makes use of the algebraic matchgate identities to compute the inner product $\braket{\phi}{\psi}$ between two FGSs $\ket{\psi}=U\ket{\cbs_1}$ and $\ket{\psi} = V\ket{\cbs_2}$ that are generated by known MGCs $U$ and $V$ acting on computational basis states $\ket{\cbs_1}$ and $\ket{\cbs_2}$ respectively (Fig.~\ref{fig:summary_figure}d). Similar to the absorption algorithm~\cite{MoLa25}, the idea is to repeatedly apply the GYB and LR identities to decrease the number of gates. This way, the problem of computing an inner product can be reduced to the contraction of a depth-2 matchgate circuit. Let us point out that this approach is not only limited to matchgates, but to any group of gates satisfying similar algebraic identities. To show that there exist non-matchgate solutions, in Appendix~\ref{app:other_GYB_LR} we give two trivial examples of gate sets for which this is the case. Additionally, given that there is some similarity between free fermionic and bosonic systems, we also briefly study the latter in Appendix~\ref{app:bosonic_gaussian_systems}. We show that almost all bosonic Gaussian unitaries satisfy the algebraic identities. However, unlike in the fermionic case there is a set of measure zero for which the relations do not hold.

Since our algorithm computes not only the absolute value of inner products, but also includes phase information, it can be used to simulate circuits in which a small number $t$ of resourceful gates are inserted. This has been studied before in Refs.~\cite{DiKo24,ReOs24} using different algorithms to evaluate inner products between FGSs. Consider a so-called $t$-doped circuit of the form
\[
U = G_{t+1} C_t \ldots C_1 G_1,
\]
where $G_i$ are MGCs and $C_i$ are non-Gaussian gates, such as the SWAP gate. Using gadgetization techniques~\cite{bravyi,HeJo19,ReOs24} or suitable decompositions of $C_i = \sum_{j} M_{ij}$ into sums of MGCs,  $\bra{x} U \ket{0^n}$ can be computed by summing over exponentially in $t$ many inner products between FGSs. When aiming to evaluate $\bra{x} U \ket{0^n}$ only approximately, better algorithms are known that scale with the so-called fermionic extent~\cite{DiKo24,ReOs24,CuSt25}.

Let us remark that the ability to efficiently compute inner products already implies a range of other simulation capabilities. In the literature, one distinguishes between strong and weak simulation~\cite{JoVa14}. A strong simulator computes arbitrary output probabilities, including marginals on subsets of qubits, whereas a weak simulator efficiently samples from the output distribution $p(x) = \vert\bra{x}U\ket{0^n}\vert^2$. By employing the sequential sampling procedure of~\cite{BrGo22} and combining it with an efficient inner product evaluation algorithm, one obtains an efficient weak simulator for MGCs in the computational basis~\footnote{This approach requires that the local Hilbert space dimension is finite.}. Another commonly encountered simulation task is the calculation of expectation values. In the case of MGCs, expectation values of a Pauli string $P$ can be computed by interpreting $P$ itself as a small MGC and then evaluating the expression $\bra{0} U^\dagger P U \ket{0}$.

Finally, we report here also on an alternative way to evaluate inner products of the form $\bra{x} U \ket{0^n}$, where $U$ is a $t$-doped MCG. Instead of decomposing each resourceful gate into a linear combination of matchgates, we show that one can use the algebraic matchgate identities to reduce the calculation of the inner product to calculating the contraction of a brickwall circuit with depth $\mathcal{O}(t)$ (Fig.~\ref{fig:summary_figure}e). The idea here is to use the fact that the resourceful gates can first be moved to act only on the first qubits lines~\cite{JoMi08,MeHe25}. After doing so, the circuit depth can be reduced using the algebraic matchgate identities. However, since those identities do not apply to the resourceful gates, and thus one obtains a total depths of $\mathcal{O}(t)$ instead of $2$ as is the case with undoped MGCs. More generally, we also consider the problem of defining circuit standard forms for $t$-doped MGCs and FGSs. We show that any $t$-doped MGC can be transformed into a $t$-doped MGC of depth $O(n+t)$. A similar result can, of course, be obtained for $t$-doped FGS.

\section{Preliminaries and definitions} \label{sec:preliminaries}

In this section we review matchgates and fermionic Gaussian states. We begin by establishing notation and recalling basic results on matchgates that will be used throughout the paper. We then review key algebraic properties satisfied by matchgates. Finally, we summarize the right standard form (RSF) for matchgate circuits introduced in~\cite{MoLa25}, which provides a representation of pure FGSs as circuits of matchgates with a characteristic structure.

\subsection{Matchgates and FGS}

We denote the standard Pauli operators as $X_i$, $Y_i$, and $Z_i$, with the optional subscript indicating on which qubit the operators act. With $\id_k$, we denote the $k\times k$ identity matrix, where we may drop the subscript if the context is clear.

A \emph{matchgate}~\cite{Va01} is a two-qubit gate of the form $G(A,B) = A\oplus B$, with the unitary matrix $A$ ($B$) acting on the subspace spanned by $\ket{00},\ket{11}$ ($\ket{01},\ket{10}$) and satisfying the constraint $\det A = \det B$. Up to a global phase, any matchgate $G$ can be decomposed into local and non-local parts
\begin{equation} \label{eq:parametrization_of_mg_local_nonlocal}
    G = U_1 \!\otimes\! U_2 \ \exp(\ii \alpha \, X \!\otimes\! X + \ii \beta \, Y \!\otimes\! Y) \ U_3 \!\otimes\! U_4,
\end{equation}
with single qubit rotations $U_k = \exp(\ii \phi_k Z)$~\cite{BrGa11, SoAk14}. We refer to $\alpha$ and $\beta$ as the \emph{nonlocal parameters} of~$G$.

Matchgate circuits (MGCs) are circuits composed out of matchgates acting on neighboring qubits arranged in a line. Any MGC $U$ preserves the span of the $2n$ Majorana operators
\[
    \majo{2k-1} = Z_1 \ldots Z_{k-1} X_k, \quad
    \majo{2k} = Z_1 \ldots Z_{k-1} Y_k,
\]
which are used to map fermionic to qubit systems~\cite{JoWi28}. In particular, the action of a MGC  $U$ on the Majorana operators is given by
\begin{equation}
    \label{eq:mg_conj_majo}
    U^\dagger \majo{k} U = \sum_{l=1}^{2n} R_{kl} \majo{l},
\end{equation}
where $R$ is a special orthogonal matrix~\cite{TeDi02}. Allowing instead for arbitrary orthogonal matrices in Eq.~\eqref{eq:mg_conj_majo} corresponds to circuits of gates of the form $G(A,B) (P \otimes \ID_2)$, where $G(A,B)$ is a matchgate and $P\in\{\ID_2, X\}$. Our results can easily be generalized to this gate set.

Given an orthogonal matrix $R$, one can conversely find a MGC $U$ composed of $\mathcal{O}(n^2)$ many gates such that Eq.~\eqref{eq:mg_conj_majo} holds~\cite{JiSu18,KiMc18,DaDe19,OsDa22}. This can be achieved with a \emph{generalized Euler decomposition} (see also Refs.~\cite{RaRu69, HoRa72}) of $R$ into \emph{Givens rotations}, i.e., rotations of the form \[E_{k,k+1}(\phi) = \id_{k-1} \oplus \begin{pmatrix}
    \cos\phi & \sin\phi \\ -\sin\phi & \cos\phi
\end{pmatrix} \oplus \id_{2n-k-1}, \]
which act non-trivially only on the $k$ and $k+1$-th standard basis vector. For completeness, we provide an explicit construction thereof in Appendix~\ref{appendix:euler_decomp_orth}. Alternatively to Eq.~\eqref{eq:parametrization_of_mg_local_nonlocal}, one can express a single matchgate $G$ as
\[
    \id\!\otimes\! U_1  \exp(\ii \varphi_2 \,X \!\otimes\! X) U_3\!\otimes \!U_4   \exp(\ii \varphi_5 \,X \!\otimes\! X) \id\!\otimes\!U_6,
\]
with some local gates $U_k = \exp(\ii \phi_k Z)$, using the Euler decomposition.

Mixed \emph{fermionic Gaussian states} (FGSs) are defined to be the thermal states $\exp(-\beta H)/\Tr(\exp(-\beta H))$ of quadratic Hamiltonians $H=\sum_{kl}h_{kl}\majo{k}\majo{l}$. Pure FGS are limits of such expressions. For instance they can be obtained by taking the limit $\beta\to\infty$. Alternatively, any FGS $\rho$ can be written as 
\[
    \rho = U \bigotimes_{k=1}^n \begin{pmatrix}
        \chi_k & 0 \\ 0 & 1-\chi_k
    \end{pmatrix} U^\dagger
\]
where $U$ is an MGC, and $0\leq \chi_k\leq 1$. Pure FGS can correspondingly be written as $U\ket{\cbs}$, where here and in the following $\ket{\cbs} = \ket{\cbs_1,\ldots,\cbs_n}$ denotes an arbitrary computational basis state.

The \emph{covariance matrix} (CM) $\Gamma$ of any state $\rho$ is defined via 
\[
    \Gamma_{kl} = \frac{\ii}{2} \Tr(\rho [\majo{k},\majo{l}]),
\]
where $[A,B] = AB - BA$ denotes the commutator. The CM is always antisymmetric, i.e., $\Gamma = - \Gamma^\transpose$. Furthermore, it holds that $\Gamma\Gamma^\transpose \leq \id$ (meaning that $\ID - \Gamma\Gamma^\transpose$ is positive semidefinite), with equality if and only if the corresponding state is a pure FGS.

In case $\rho$ is an FGS, then it is completely characterized by its CM via Wick's theorem~\cite{Wi50, BaLi94}, which states that the expectation value of a Majorana string operator $\majo{k_1} \ldots \majo{k_L}$ is given by
\[
    i^L\Tr(\rho \majo{k_1} \ldots \majo{k_L}) = \operatorname{Pf} \left(\Gamma_{\{k_1,\ldots, k_L\},\{k_1,\ldots, k_L\}}\right),
\]
where $\Gamma_{\{a_1,\ldots a_M\}, \{b_1,\ldots,b_N\}}$ is the matrix obtained by selecting rows and columns of $\Gamma$ with indices $\{a_1,\ldots a_M\}$ and $\{b_1,\ldots,b_N\}$ respectively, and $\operatorname{Pf}$ denotes the \emph{Pfaffian} of a matrix.

When applying a MGC $U$ to a state $\rho$ with CM $\Gamma$, the CM of the updated state $U\rho U^\dagger$ is given by $R \Gamma R^\transpose$. Likewise, update rules for the CM can be derived when measuring in the computational basis~\cite{BrKo12}, and lead to efficient simulation algorithms for MGC~\cite{TeDi02}. For any CM $\Gamma$, there exists a rotation $R$ s.t. 
\begin{equation}
R \Gamma R^\transpose = \bigoplus_{i=1}^{n} \lambda_i \covblock, \label{eq:fermionic_williamson_form}
\end{equation}
with $\vert\lambda_i\vert \leq 1$ and
\[
\covblock = \begin{pmatrix} 0 & -1 \\ 1 & 0 \end{pmatrix}.
\]
Such a decomposition is called the (fermionic) Williamson normal form, and $\lambda_i$ are called the Williamson eigenvalues of $\Gamma$~\cite{BoRe04, SuTa20}. Computing an Euler decomposition of $R$ in Eq.~\eqref{eq:fermionic_williamson_form} is part of a straightforward way for constructing an MGC that generates an FGS $\ket{\psi}$.

A particularly useful result in our context on entanglement of pure FGS is due to Botero and Reznik~\cite{BoRe04}: Given two parties $\partyA$ and $\partyB$, comprising qubits $1,\ldots,k$ and $k+1,\ldots,n$ respectively, and a pure FGS $\ket{\psi}$ shared between the parties, there exist matchgates circuits $U_\partyA$ and $U_\partyB$, acting on $\partyA$ and $\partyB$, such that $U_\partyA \otimes U_\partyB \ket{\psi}$ factorizes into three constituents: $n_\partyA$ and $n_\partyB$ computational basis state contained in parties~$\partyA$ and~$\partyB$ respectively, and $n_\text{e}$ entangled pairs of the form
\[
    \ket{\psi_i} = \cos{\chi_i}\ket{00} - \sin{\chi_i}\ket{11}
\]
shared between both parties. The values $\chi_i$ are related to those Williamson eigenvalues $\lambda_i$ of the reduced CM on any party with $|\lambda_i| \neq 1$ via $\cos(2\chi_i)=\lambda_i$. The entanglement shared between parties $\partyA$ and $\partyB$ is thus fully described by the numbers $\chi_i$. In particular, the Schmidt rank $\sr(\ket{\psi}, \partyA, \partyB)$ in the given splitting is always a power of $2$.

\subsection{Algebraic matchgate circuit identities}

Several of our results rely on two algebraic identities that small MGCs satisfy. The first identity can be understood as a generalization of the Yang-Baxter relation \cite{Ya67,Ba72}, which has originally been introduced in the context of statistical physics and is strongly linked to integrability~\cite{Sm92, KoBo93}. This \emph{generalized Yang-Baxter} (GYB) relation for matchgates has been identified and used to compress the depth of MGCs in Refs.~\cite{CaKo22,KoCa22}. Together with the second identity, the \emph{left-right} (LR) identity, matchgate circuits to represent pure FGS can be compressed even further~\cite{MoLa25}. Below, we review both these identities and comment briefly on how to prove them. Both identities are represented graphically in the top right panel of Fig.~\ref{fig:summary_figure}.

We start with the GYB relation~\cite{CaKo22,KoCa22}: Given three matchgates $U_{1},$ $U_{2}$, and $U_{3}$, one can find another three matchgates, $V_{1}$, $ V_{2}$, and $ V_{3}$, s.t.
\begin{equation}
    \label{eq:yb}
    (U_{1}\otimes \id) (\id\otimes U_{2}) (U_{3}\otimes\id ) =
    (\id\otimes V_{1}) (V_{2}\otimes\id) (\id\otimes V_{3}).
\end{equation}
The converse direction, i.e., when given the $V$ matrices, holds analogously. Equation~\eqref{eq:yb} can be proven for matchgates by computing an Euler decomposition of an arbitrary $6\times6$ orthogonal matrix, which corresponds to an MGC on three qubits (see Appendix~\ref{appendix:euler_decomp_orth} and Appendix~A of Ref.~\cite{MoLa25}). This gives Eq.~\eqref{eq:yb} up to a complex phase $\ee^{\ii\varphi}$, which can be easily determined since the left and right-hand-side are matrices of size $8\times 8$.

The second identity we review here, the LR identity, can be applied when representing FGS on three qubits via MGCs. Given a computational basis state on three qubits, $\ket{\cbs}$, and two matchgates, $U_{1}$ and $ U_{2}$, there exist another two matchgates, $ V_{1}$ and $V_{2}$, s.t.
\begin{equation}
    \label{eq:lr}
    (U_{1} \otimes \id) (\id\otimes U_{2}) \ket{\cbs} = 
    (\id\otimes V_{1}) (V_{2}\otimes\id ) \ket{\cbs}.
\end{equation}
In fact, any FGS $\ket{\psi}$ on three qubits can be decomposed into the left or right-hand-side of Eq.~\eqref{eq:lr} for suitable matchgates and~$\ket{\cbs}$. Up to a phase, this can be seen by an application of Algorithm~\ref{alg:symmetric_euler_decomposition}, which we present later. Alternatively, one can parametrize an arbitrary FGS on three qubits and easily derive both decompositions (see Appendix~A of Ref.~\cite{MoLa25})

A natural question to ask is whether these algebraic relations are only a property of matchgates, or other gate sets exist that satisfy them. We answer this question in Appendix~\ref{app:other_GYB_LR}, where we discuss two trivial non-matchgate examples for which this is the case. Moreover, we also examine the bosonic case and show that all bosonic Gaussian unitaries except a set of measure zero satisfy a GYB and LR relation.

\subsection{RSF circuits}

In Ref.~\cite{MoLa25}, we introduced the \emph{right standard form} (RSF) as a representation of an arbitrary FGS via an MGC acting on a product state. We also provided an algorithm to efficiently modify such representation upon the application of a matchgate, and used it to propose a disentangling method for pure FGS. Here, we briefly review the definition of RSF circuits, as this will be crucial for showing that RSF circuits are optimal with respect to gate count under certain mild conditions. Later, in Sec.~\ref{sec:algorithms}, we present algorithms to obtain the RSF circuit representation both from a generic MGC and from a covariance matrix.

\def\numdiag{\ensuremath{{m}}}

As defined in Ref.~\cite{MoLa25}, an RSF is a special layout a circuit can have. Formally, it is defined by a list of $0 \leq \numdiag \leq \floor{n/2}$ pairs of integers $((k_i, l_i))_{i=1}^\numdiag$, satisfying $1\leq k_i \leq k_{i+1} -2 \leq n-1$, and $1\leq l_i \leq n-k_i$, called the \emph{parameters} of the RSF. A circuit $C$ is in RSF, if it is a product
\[
    C = D_1 \ldots D_\numdiag
\]
of so-called \emph{diagonals}. Each of those is given as a sequence of gates
\[
    D_i = U_{il_i} \ldots U_{i 1},
\]
where the gate $U_{ij}$ acts on the qubits with indices $k_i + j -1$ and $k_i + j$. We call $k_i$ the \emph{position} of the diagonal $D_i$, as it is the index of the first qubit on which the diagonal acts, and $l_i$ is the \emph{length} of the diagonal. An illustration of a particular RSF layout is depicted, e.g., in the rightmost circuit of Fig.~\ref{fig:summary_figure}a.

Typically, we rearrange RSF circuits into layers such that all gates $U_{i1}$ are in the first layer, $U_{i2}$ are in the second one, and so on. This is possible since the conditions on RSF circuits ensure that, e.g., the first gates of each diagonal act on distinct qubits and can thus act in parallel within a layer. In this way, the depth, i.e., the number of layers, of an RSF circuit is determined by its longest diagonal.

Even though (as we explain next) RSF circuits can generate any pure FGS, the layout of such circuits is quite restrictive. For instance, no gate can act `after' the first diagonal. This implies that the qubit with index $1$ is acted on by at most a single gate. More generally, any gate $G$ in an RSF circuit will either act directly on the input state (in which case it is in the first layer), or is part of a local structure $(\id_2 \otimes G)(G' \otimes \id_2)$ on three qubits with another gate $G'$. For a given $n$, the number of different RSF circuit layouts is thus finite. In fact, it is given by the so-called telephone number $T(n)$~\cite{MoLa25, origin_of_telephone}. Furthermore, there is a \emph{maximal} RSF with $\numdiag= \floor{n/2}$ and parameters $((2i, n-2i + 1))_{i=1}^\numdiag$, having the largest number of gate among all RSF circuits on $n$ qubits, $\floor{n^2/4}$. As we will argue below, the maximal RSF can represent any pure FGS.

\section{Algorithms to derive RSF circuits}
\label{sec:algorithms}

As established in Ref.~\cite{MoLa25}, RSF matchgate circuits acting on a product state can represent any arbitrary pure FGS $\ket{\psi}$. The argument for this is as follows: Firstly, one can clearly represent $\ket{\psi}$ as some MGC acting on $\ket{0^n}$. One can then systematically use the GYB and LR properties to rewrite the circuit into RSF. We call this procedure the \emph{absorption algorithm}. In this section, we first review this algorithm as it plays a crucial role in subsequent results. We then present another algorithm, which takes as an input the covariance matrix of a pure FGS $\ket{\psi}$, and produces as an output an MGC  in RSF which generates $\ket{\psi}$. This algorithm is very different to, and simpler than, computing an Euler decomposition of the orthogonal matrix which brings the CM to the Williamson normal form, and then applying the absorption algorithm to each gate resulting from this procedure. Instead, the RSF can be computed directly from the CM by virtue of a `symmetric Euler decomposition', i.e., an Euler decomposition in which each of the constituent rotations is applied from the left and right instead of only from one side.

\subsection{Absorption algorithm}
\label{subsec:absorption}
The absorption algorithm takes as input an FGS $\ket{\psi} = U \ket \cbs$, represented by a matchgate circuit $U$ in RSF, and a matchgate $A$ to be applied on two neighboring qubits.
As an output, it provides another MGC representing the state $A\ket{\psi}$. The output circuit is again in RSF and contains at most one gate more than the initial circuit. If it contains the same number of gates, the output circuit has the same RSF layout as the initial one. However, several gates that compose it might be modified. This algorithm can be applied iteratively to turn any MGC acting on a product state into RSF. To do so, we start from a product state, which by definition is in RSF, and then we apply gates sequentially, each time using the absorption algorithm.

We refer the reader to Ref.~\cite{MoLa25}, in particular Section~III~D and Appendix~B for a comprehensive description of the absorption algorithm. Here, we summarize the main concept. The starting point of the algorithm is a circuit in RSF and a gate $A$ applied to qubits $i$ and $i+1$. The following is now repeatedly performed until a termination condition is reached. We check whether $A$ together with some gates in its vicinity matches one of four patterns (e.g., $(A\otimes \id) (\id\otimes U) (U'\otimes\id)$ for some gates $U$ and $U'$). If one of these patterns is encountered, a corresponding GYB or LR relation is applied, and a specific one of the output gates is relabeled to $A$. For technical reasons, at most one of these patterns can be encountered in each step (see Appendix~B of Ref.~\cite{MoLa25}). If none of them are encountered, the algorithm terminates. This happens after at most $\mathcal{O}(n)$ steps. The gate $A$ can then either be multiplied with another gate, or be appended to one diagonal in the RSF. The absorption algorithm can be formulated as follows.

\begin{algorithm}[H]
\caption{Absoption algorithm}
\label{alg:absorption}
\KwIn{ 
    A matchgate $A$ acting on qubits $i$, $i+1$, an RSF circuit $C$ and a product state $\ket{\cbs}$ describing the FGS $C\ket\cbs$.
}
\KwOut{
    An RSF circuit describing $AC\ket\cbs$.
}

\While{true}{
    $G \gets $ gates in vicinity of $A$;
    
    \If{$G$ matches $(A\otimes \id) (\id\otimes U) (U'\otimes\id)$}{
        $(\id\otimes V) (V'\otimes\id) (\id\otimes A) \gets G$; $\quad\,\,\,$\texttt{(GYB)}}
    \ElseIf{$G$ matches $(A \otimes \id) (\id\otimes U) \ket{\cbs}$}{
        $(\id\otimes V) (A\otimes\id) \ket{\cbs} \gets G$;  $\quad\quad\quad\quad$\texttt{(LR)}}
    \ElseIf{$G$ matches $(U\otimes \id) (\id\otimes A) \ket{\cbs}$}{
        $(\id\otimes A) (V\otimes \id) \ket{\cbs} \gets G$; $\quad\quad\quad\quad$\texttt{(LR)}
    }
    \ElseIf{$G$ matches $(U\!\otimes\! \id) (\id\!\otimes\! U') (A\!\otimes \!\id)$}{
        $(\id\otimes A) (V\otimes \id) (\id\otimes V') \gets G$; $\quad\,\,\,$\texttt{(GYB)}
    }
    \Else{ Terminate loop.}
    
    Append $A$ to RSF or combine with other gate.
}
\end{algorithm}

A proof that the output circuit of this procedure is again in RSF is contained in Appendix~B of Ref.~\cite{MoLa25}. An illustration of this algorithm with a maximal RSF is provided in Fig.~\ref{fig: inner product}a.

\subsection{Symmetric Euler Decomposition}\label{subsec:sed}

The absorption algorithm is used to transform an arbitrary known circuit that generates an FGS $\ket\psi$ into RSF. In contrast to that, we present now an algorithm that produces the RSF circuit given the covariance matrix $\Gamma$ of $\ket{\psi}$ as an input~\footnote{Since the covariance matrix does not contain phase information, the circuit prepares $\ket{\psi}$ up to a global phase. The phase can be fixed by, e.g., knowing the inner product with a reference state~\cite{DiKo24,ReOs24}.}. Recall that $\Gamma$ is both orthogonal and antisymmetric in this case. Similar to an Euler decomposition, the main idea is to compute a sequence of Givens rotations that eliminate specific entries when applied to the CM by setting them to zero. One key difference is, however, that these rotations have to act from both sides of the CM in order to correspond to matchgates. Another distinction is that only every second column of $\Gamma$ needs to be eliminated, whereas the relevant entries in the remaining columns become zero automatically, as we will show later.

In the following, we present an algorithm for computing a symmetric Euler decomposition. Before doing so, we define the subroutine \texttt{elim\_second} which takes as an input a vector $v = \begin{pmatrix} v_1 & v_2 \end{pmatrix}^\transpose$ and outputs a $2 \times 2 $ rotation matrix $S$. This matrix is constructed to give $Sv = \begin{pmatrix}\vert v \vert & 0\end{pmatrix}^\transpose$, where $|v|^2 = v_1^2 + v_2^2$. The idea is now to repeatedly use this subroutine and apply its output to the CM in order to eliminate each second column (see Fig.~\ref{fig: sed}a for an illustration). We present now the algorithm based on this idea. As we will see, the MGC corresponding to the output of this algorithm is not yet in RSF. We explain how to obtain the RSF circuit below.

\begin{algorithm}[H]
\caption{Symmetric Euler decomposition}
\label{alg:symmetric_euler_decomposition}
\KwIn{
    CM $\Gamma$ of a pure FGS.
}
\KwOut{
    A sequence of Givens rotations $R_i$ and the CM $\Gamma'$ of a computational basis state, s.t. $\prod_i R_i \Gamma (\prod_i R_i)^\transpose = \Gamma'$.
}

$\Gamma^{(0,0)} \gets \Gamma$

\For{$q = 1,\ldots, n - 1$}{ \label{algline:sed:outer_loop}

 $r_q \gets \max\{r \vert \Gamma^{(q-1, 2q-2)}_{r, 2q-1} \neq 0\}$ \label{algline:sed:choice_rq}

 $\Gamma^{(q,r_q)} \gets \Gamma^{(q-1, 2q-2)}$

 \If{$r_q \geq 2q+1$}{

     \For{$k = r_q, r_q -1 , \ldots, 2q+1$}{
     
        $\tilde R^{(q,k)} \gets$ \texttt{elim\_second($\Gamma^{(q,k)}_{\{k-1,k\},2q-1}$)} \label{algline:sed:def_rotation}
        
        $R^{(q,k)} \gets \id_{k-2} \oplus \tilde R^{(q,k)} \oplus \id_{2n-k}$ \label{algline:sed:def_full_rotation}
    
        $\Gamma^{(q,k-1)} \gets R^{(q,k)} \Gamma^{(q,k)} R^{(q,k)\transpose}$ \label{algline:sed:apply_rotation}
     }

 }

}

\Return{} inverse of sequence $(R^{(q,k)})$, $\Gamma^{(n-1,2n-2)}$ 

\end{algorithm}

Let us now explain why it suffices to eliminate only every second column. To this end, we show that the intermediate CMs $\Gamma^{(q,2q)}$ are of the form $\Gamma^{(q,2q)} = (\bigoplus_{i=1}^{q-1} \covblock) \oplus\tilde \Gamma^{(q)}$. Due to the choice of $r_q$ in line~\ref{algline:sed:choice_rq} and the action of the rotations $R^{(q,k)}$ in lines~\ref{algline:sed:def_rotation}, \ref{algline:sed:def_full_rotation} and \ref{algline:sed:apply_rotation}, for any $q$ and $k$, all entries with a row index greater than $k$ in the $2q-1$-th column of $\Gamma^{(q,k)}$ are zero. If the above claim is true for $q-1$, the CM for $k=2q$ reads
\[
    \Gamma^{(q,2q)} = (\bigoplus_{i=1}^{q-1} J_2) \oplus \begin{pmatrix}
        0 & -x & 0 \\
        x & 0 & -v^\transpose \\
        0 & v & \tilde \Gamma^{(q)}
    \end{pmatrix}
\]
with a suitable number $x$ and vector $v$. Since all intermediate CMs correspond to some pure FGS, we get $|x|=1$, $v=0$ and that $\tilde \Gamma^{(q)}$ is the CM of a pure FGS from orthogonality of $\Gamma^{(q,2q)}$. From how the rotations $R^{(q,k)}$ are constructed, we get $x=1$, leaving the corresponding qubit in state $\ket{0}$. In the last step, i.e., for $q=n-1$, the $2\times 2$ CM $\tilde \Gamma^{(n-1)}$ necessarily corresponds to that of a computational basis state $\ket p$. Altogether, this shows the MGC corresponding to the output of Algorithm~\ref{alg:symmetric_euler_decomposition} generates the input state $\ket{\psi}$ when acting on $\ket{0^{n-1}p}$.

Finally, we explain how the output of Algorithm.~\ref{alg:symmetric_euler_decomposition} can be transformed into an RSF circuit. For each qubit~$q$, the algorithm outputs a (possibly empty) sequence of $r_q - 2q$ Givens rotations, corresponding to $l_q = \lceil r_q/2 - q\rceil$ matchgates $G^{(q,j)} = \exp(\ii \alpha_{q,j} Z_{q+j})\exp(\ii \beta_{q,j} X_{q+j-1}X_{q+j})$ arranged in a diagonal. Each of those matchgates have only one nonzero nonlocal parameter and one diagonal of such gates disentangles one qubit from the rest. One can, however, combine two such diagonals starting on qubits $(q, q+1)$ and $(q+1,q+2)$ into a single matchgate diagonal of length $\max\{1+l_q, l_{q+1}\}$. The gates in this diagonal are defined by combining the $Z$-rotation of $G^{(q, j-1)}$ and the $X\!\otimes\!X$-rotation of $G^{(q,j)}$ with the gate $G^{(q+1,j)}$, and by appending the remaining gates that have not been combined this way. This procedure is illustrated in Fig.~\ref{fig: sed}b. The circuit obtained by combining diagonals this way is in RSF.

In Appendix~\ref{app:sed_two_columns_version}, we present another symmetric Euler decomposition algorithm that does not require this last step for producing an RSF circuit. Instead, in each step matchgates are chosen whose action eliminates two columns of the CM at once. Finally, a more careful selection of which columns of the CM are eliminated gives yet another algorithm, where the nonlocal parameters of each matchgate are optimally utilized in non-generic cases. Using this third algorithm, one can prove that state preparation of FGS with matchgates gives asymptotically optimal gate count compared to arbitrary nearest neighbor gate sets (see Appendix~\ref{app:enhaced_decomposition}).

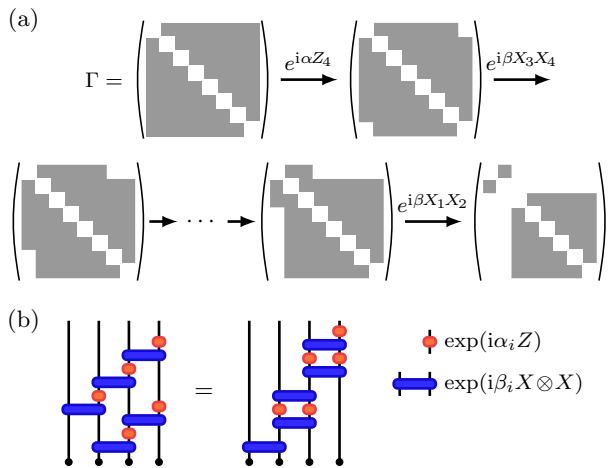
\begin{figure}[]
    \centering
    \begin{tikzpicture}[scale=0.8]
    \node at (0,5) {(a)};
    \begin{scope}[xshift=1.5cm, yshift=4.6cm, scale=1.18]
        \helpdrawSEDfigureAlgAction
    \end{scope}
    \node at (0,0) {(b)};
    \begin{scope}[xshift=0.75cm, yshift=-2.35cm]
        \helpdrawSEDfigureMerging
    \end{scope}
    \end{tikzpicture}
    \caption{(a) Action of the symmetric Euler decomposition algorithm for $q=1$ on the covariance matrix $\Gamma$ of a pure FGS on four qubits. Blank fields indicate zeros whereas grey fields mark nonzero entries. The applied Givens rotations correspond to the matchgates shown above the arrows.
    (b) The procedure to combine two diagonals of gates outputted by the symmetric Euler decomposition algorithm into a single diagonal of matchgates.}
    \label{fig: sed}
\end{figure}

\section{Applications of algebraic matchgates identities} \label{sec:application_lr_yb}

The formalism of representing FGS via circuits, and especially circuits in RSF, has several applications. In this section, we use the techniques and algorithms presented so far to prove that generic MGC in RSF have the optimal gate count for generating states. Secondly, using the GYB and LR relations, we present an algorithm that allows one to compute the phase sensitive inner product between two FGS. Finally, we present a generalization of the RSF and the inner product algorithm for states created by MGs and a finite number of non-gaussian operations.

\subsection{Optimality proof for MGCs in RSF} \label{subsec:rsf_optimal}

Both the absorption algorithm and the symmetric Euler decomposition (Algorithms~\ref{alg:absorption} and~\ref{alg:symmetric_euler_decomposition} respectively) independently show that any pure FGS can be represented by an MGC in RSF acting on a computational basis state. Moreover, if an FGS can be represented by a non-maximal RSF circuit, it can be represented by other RSF circuits as well, which can be seen, e.g., by adding identity gates. This raises the question of what is the minimal RSF circuit to represent a given state. Below, we answer this question, and in fact show that for any RSF layout, generic FGS generated by an MGC $U$ in the given layout cannot be represented with by any other MGC that has fewer gates than $U$.

Before stating the result, we explain what we mean by `generic FGS' in this context. Consider a state generated by an MGC in RSF. In order to claim that this circuit has minimal gate count, we clearly need to exclude some edge cases. First, we need all gates in the first layer to generate entangled states~\footnote{Due to parity preservation, if a matchgate does not generate entanglement when acting on a computational basis state, it simply applies a global phase and possibly flips the bits.}. Second, it must not be possible to combine gates in different diagonals (as can for instance be done if all gates in the circuit are of the form $\exp(\ii \alpha X\otimes X)$). Although this is a property of the full circuit rather than the individual gates, we show that it is sufficient to exclude gates that have only a single nonzero nonlocal parameter, e.g., gates of the form $U_1\otimes U_2 \exp(\ii \alpha X\otimes X) U_3 \otimes U_4$. Finally, we allow circuits to act on arbitrary computational basis states~\footnote{One can generate any computational basis state from any other one by acting with $\mathcal{O}(n)$ matchgates of the form $G(X,X) = X\otimes X$ arranged in two layers (possibly using an auxiliary qubit for preserving parity). Such circuits, however, do not satisfy the conditions of the theorem.}. Our statement is summarized in the following theorem.

\begin{theorem}
    \label{thm:RSF_optimal}
    Suppose the FGS $\ket{\psi} = U \ket{\cbs}$ is generated by an MGC $U$ in RSF, 
    \[
    U = D_1 \ldots D_m, \quad D_i = U_{il_i} \ldots U_{i 1},
    \]
    with parameters $((k_i, l_i))_{i=1}^m$, and a total of $N = \sum_{i} l_i$ gates. Suppose furthermore that each gate $U_{i1}$ in the first layer generate an entangled state when acting  on  $\ket{\cbs_{k_i}, \cbs_{k_i+1}}$, and the non-local parameters of the remaining gates are not integer multiples of $\pi/2$. Then, any other n.n. MGC which generates $\ket{\psi}$ by acting on any other computational basis state contains at least $N$ gates.
\end{theorem}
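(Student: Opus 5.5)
Suppose $\ket{\psi} = V\ket{\cbs'}$ with $V$ an n.n.\ MGC of $N' < N$ gates. The plan is to reduce the claim to a comparison of RSF circuits, as the summary in Fig.~\ref{fig:summary_figure}c suggests. Start from the product state $\ket{\cbs'}$, which is trivially in RSF, and apply the gates of $V$ one by one using the absorption algorithm (Algorithm~\ref{alg:absorption}). Each application adds at most one gate to the RSF, so we obtain an RSF circuit $U'$ with at most $N' < N$ gates and $U'\ket{\cbs'} \propto \ket{\psi}$. It therefore suffices to show that, under the genericity assumptions, any RSF circuit generating $\ket{\psi}$ from some computational basis state has at least $N$ gates.

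To compare two RSF circuits, I would extract a layout invariant directly from the state. The natural candidate is the family of Schmidt ranks $\lsr_k(\ket{\psi})$ across all cuts $1,\ldots,k\vert k+1,\ldots,n$, together with finer data such as the Williamson spectra of the reduced CMs of contiguous blocks. For an RSF circuit, the number of diagonals crossing cut $k$ (gates acting on qubits $k,k+1$) is an upper bound on $\lsr_k$, since every gate raises $\lsr$ at its cut by at most one. This holds because of the Botero--Reznik structure: each matchgate acts on at most one new entangled pair per cut in the parity sector. The key claim is then that for a generic $U$ this bound is saturated at every cut, i.e.\ $\lsr_k(\ket{\psi}) = $ number of gates of $U$ on qubits $(k,k+1)$. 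Summing over $k$ gives $N = \sum_k \lsr_k(\ket{\psi})$, while any RSF circuit $U'$ must satisfy $\#\text{gates}(U') \ge \sum_k \lsr_k(\ket{\psi}) = N$. This is a contradiction.

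The main obstacle is proving saturation from exactly the stated hypotheses: first-layer gates are entangling, and the nonlocal parameters of all other gates are not multiples of $\pi/2$. I would argue inductively over the diagonals, peeling them off in reverse order using the structure $(\id_2\otimes G)(G'\otimes\id_2)$. A gate $G$ on cut $k$ that follows $G'$ on cut $k-1$ meets an entangled pair across cut $k-1$ and a fresh qubit. If both $\alpha$ and $\beta$ avoid multiples of $\pi/2$, then $G$ maps the local $4$-dimensional support onto one with a strictly larger number of Williamson eigenvalues of modulus $<1$ across cut $k$. Concretely, I would compute the $2\times 2$ correlation block $\Gamma_{\{2k-1,2k\},\{2k+1,2k+2\}}$ generated by $\exp(\ii\alpha XX+\ii\beta YY)$ acting on a Majorana mode that is already maximally mixed, and check that its rank increases by one precisely when $\sin 2\alpha$ and $\sin 2\beta$ are nonzero (the single-parameter case being the degenerate one in which two diagonals merge).

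Making this local rank-increase statement rigorous inside the whole circuit is the delicate step. One must ensure that later gates in other diagonals cannot cancel the correlation, which should follow from the RSF ordering: later diagonals start strictly to the right. If the pure Schmidt-rank count proves too weak, the fallback is to show that a generic RSF state fixes its RSF layout uniquely. This would be done by running Algorithm~\ref{alg:symmetric_euler_decomposition} and observing that the indices $r_q$ are determined by $\ket{\psi}$ and match $U$'s parameters. One would then argue that any RSF representation produced from $V$ by absorption has gate count at least that of this canonical layout.
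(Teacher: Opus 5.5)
\textbf{There is a genuine gap: the Schmidt-rank counting is false.} Your reduction to an RSF circuit via the absorption algorithm is fine. The counting that is supposed to carry the argument fails in both directions.

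\emph{A single matchgate can raise $\lsr$ at its cut by two, not one.} Take $\ket{\phi_{12}}\ket{\phi_{34}}$, two entangled pairs. The fermionic swap $G(Z,X)$ on qubits $2,3$ has nonlocal parameters $\alpha=\beta=\pi/4$. It produces pairs $(1,3)$ and $(2,4)$, so $\lsr_2$ jumps from $0$ to $2$. Subadditivity only caps the jump at two, which is why Schmidt ranks yield no more than the $K/2$ bound stated in the paper.

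\emph{This circuit is a counterexample to both of your claims.} It is in RSF with parameters $((1,2),(3,1))$ and satisfies the hypotheses of the theorem. It has $N=3$ gates, one per cut, while $\lsr=(1,2,1)$ and $\sum_k\lsr_k=4$. So the saturation claim fails. The inequality ``number of gates $\ge\sum_k\lsr_k$'' is violated by $U$ itself, although the theorem says $3$ is optimal. Even the maximal RSF on four qubits has $(1,1,2)$ gates on the three cuts, against $\lsr=(1,2,1)$ for a generic state; the totals happen to agree there, but not cut by cut. Schmidt ranks do not determine the exact matchgate count. Separately, the nearest-neighbour block $\Gamma_{\{2k-1,2k\},\{2k+1,2k+2\}}$ you propose to compute does not measure $\lsr_k$ at all.

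\textbf{Your fallback is the right direction but leaves the whole theorem unargued.} It is the route the paper takes, but the sentence ``one would then argue that any RSF representation \dots\ has gate count at least that of this canonical layout'' is the entire content. Reading off the $r_q$ of Algorithm~\ref{alg:symmetric_euler_decomposition} does not supply it. After stripping the first diagonal of $V$ and that of $U$, you are left with different states, unless you have shown that the two diagonals agree up to local gates. The missing steps are these.
\begin{enumerate}
\item Take $V$ of minimal gate count, hence w.l.o.g.\ in RSF, with first diagonal at position $h_1$ and of length $r_1$.
\item Use the first two columns of the CM as the invariant, not Schmidt ranks. For $U$, the $2\times2$ block $\Gamma_{(k,1)}$ correlating qubits $1$ and $k$ vanishes for $k>l_1+1$. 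The block $\Gamma_{(l_1+1,1)}$ has full rank, by induction along the diagonal: every later gate's rotation has off-diagonal $2\times2$ blocks with $|\det|=|\sin2\alpha\sin2\beta|\neq0$. This forces $h_1=k_1$ and $r_1\ge l_1$.
\item If $r_1>l_1$, consider the rotation of the $(l_1+1)$-th gate of $V$. It maps two linearly independent vectors supported on its first two coordinates to vectors again supported there, because qubit $1$ is uncorrelated with qubits beyond $l_1+1$ both before and after. It is therefore a direct sum $S\oplus T$, i.e.\ a product of local phase and $X$ gates. These can be absorbed into the remaining circuit and the input state, contradicting minimality.
\item The same argument, run gate by gate from the end of the diagonal, shows the first diagonals coincide up to such local gates. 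Pushing these into the remaining diagonals of $V$ and the input state does not change the gate count. Minimality of the remainder is inherited, since otherwise $V$ itself would not be minimal. One then recurses on the remaining $m-1$ diagonals.
\end{enumerate}
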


Our proof of Theorem~\ref{thm:RSF_optimal} is based on contradiction, and is contained in full detail in Appendix~\ref{app:proof_rsf_circuits_optimal}. Here, we briefly summarize the main idea. Suppose that $\ket{\psi}$ is generated by an MGC $V$ which has the fewest number of gates possible, i.e., is optimal. W.l.o.g., $V$ is in RSF with parameters $((h_1, r_1), \ldots, (h_p, r_p))$. If this were not the case, we can use the absorption algorithm to bring $V$ into RSF. Crucially, the number gates cannot decrease in this process, as otherwise we would contradict the assumption that $V$ has optimal gate count. To complete the proof, we will show that the RSF parameters of $U$ and $V$ have to coincide. At first, we only compare the locations $k_1$ and $h_1$, and lengths $l_1$ and $r_l$ of the first diagonals of $U$ and $V$. To this end, we make use of the restrictive nature of RSF layouts. In particular, the only gates affecting the first two columns of the CM are contained in the first diagonal. If they do not coincide, one can either show that $U$ creates some entanglement that $V$ cannot produce, or one can find another circuit $V'$ with even fewer gates, thereby contradiction the assumption. Thus $h_1 = k_1$ and $r_1 = l_1$. Next, we compare the action of each gate in the respective first diagonals. In order for the first columns of the CM to match, the gates in the first diagonal of $V$ have to coincide with those in $U$ apart from some local gates. The latter can be absorbed into the remaining circuit. When now undoing the first diagonals, one is presented with a similar instance of the problem with one fewer diagonal. Repeating this argument $m$ times then shows that $U$ and $V$ have the same gate count, and accordingly, that no other circuit can have less gates than $U$.

Note that Theorem~\ref{thm:RSF_optimal} is not an `if and only if' statement. In particular, there exist circuits that do not satisfy the second condition of Theorem~\ref{thm:RSF_optimal}, but still have optimal gate count for generating a state. Consider, e.g., the state $\ket{\psi}$ generated by a diagonal of $n-1$ gates of the form $\exp{\ii\alpha X\otimes X}$ acting on $\ket{0^n}$. While this gates do not satisfy the second condition, using fewer than $n-1$ gates necessarily implies that the resulting state is a product in some bipartition, and hence cannot be $\ket\psi$. For generic FGS, after the combining the diagonals outputted by Algorithm~\ref{alg:symmetric_euler_decomposition}, the resulting RSF circuit satisfy the conditions of Theorem~\ref{thm:RSF_optimal}. In the case of non-generic FGS, the enhanced version of the algorithm presented in Appendix~\ref{app:enhaced_decomposition} is specifically designed to utilize all the nonlocal degrees of freedom in each matchgate. As we show also in Appendix~\ref{app:enhaced_decomposition}, a construction based on this algorithm shows that matchgates circuits can have asymptotically minimal gate count w.r.t. to all other circuits composed of arbitrary nearest neighbor gates.

\subsection{Inner product algorithm} \label{subsec:inner_product_alg}

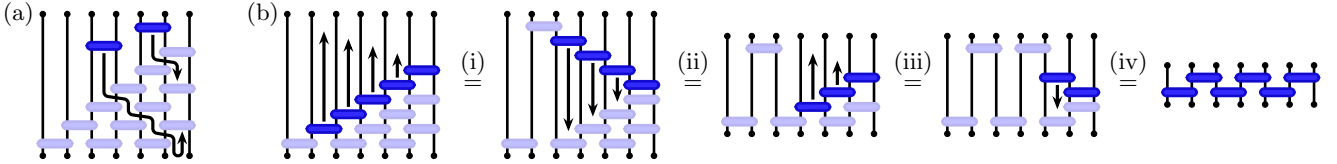
\begin{figure*}
    \begin{tikzpicture}[scale=0.65]
        \helpdrawInnerProduct

    \end{tikzpicture}
    
    \caption{(a) Absorption algorithm: The GYB and LR relations can be used to absorb gates into a matchgate circuit~\cite{MoLa25}. In this example, the circuit is and remains in the maximal RSF. (b) Inner product algorithm: We start with a matchgate circuit in RSF. In step (i), the routine \texttt{invert\_diagonal}, which sequentially applies the LR move from left to right (see main text), is applied to the highlighted gates, effectively flipping the corresponding structure. For step (ii), each but the last gate is absorbed into the resulting RSF circuit on two fewer qubits using the routine \texttt{to\_rsf}. Steps (iii) and (iv) are repetitions of the first two steps for smaller circuits. The inner product finally is evaluated using tensor network techniques. }
    \label{fig: inner product}
\end{figure*}

In the context of fermionic linear optics and matchgate circuits, efficient classical algorithms are known for computing the absolute value $\vert\braket{\psi}{\varphi}\vert$ of the inner product between two FGSs $\ket{\psi}$ and $\ket{\varphi}$~\cite{Loe55,TeDi02,BrGo17}. This can be expressed in a closed formula using the CMs of the two FGS, with a computational cost of $\mathcal{O}(n^3)$.  Extensions to phase-sensitive algorithms have also been developed~\cite{BrGo17,DiKo24,ReOs24,GuHa24}, which additionally require knowledge of the overlaps of the states with a reference FGS, since the CM alone does not encode the global phase~\footnote{Note that a representation of (unnormalized) fermionic Gaussian states via a Gaussian operator $\exp(\sum_{kl} c_k^\dagger h_{kl} c_l^\dagger)$, where $c_k^\dagger = (\majo{2k-1}-i\majo{2k})/2$ are the creation operators, acting on the vacuum state $\ket{0\ldots 0}$ allows to compute amplitudes in the computational basis~\cite{BeSo17} or other product bases~\cite{BaRe24,RaMi25}. One needs to take care, however, to preserve the global phase when transforming any other representation of FGSs to the above one.}.

Here, we present a general procedure to compute the overlap of two arbitrary FGSs without relying on their CMs. Instead, we assume to have knowledge of the matchgate circuits generating the two states. Exploiting this, we describe an efficient procedure using the GYB and LR properties that reduces the computation of the inner product to the contraction of a circuit of depth two, which can be efficiently computed.

Consider two FGSs and their generating MGC, $\ket{\psi} = U \ket{\cbs_1}$ and $\ket{\varphi}=V\ket{\cbs_2}$. The overlap is given by
\begin{equation}
    \braket{\varphi}{\psi} = \bra{\cbs_2}V^\dagger U\ket{\cbs_1}.
\end{equation}
The goal of the algorithm is to simplify the circuit $U^\dagger V$ by using a sequence of GYB and LR moves. Note that in this setting, in contrast with the absorption algorithm, the LR move can be applied both with the original circuit and with its adjoint, since the circuit acts on a computational basis state on both sides. In the following we first define several subroutines and then provide a description of the algorithm.

The first subroutine, \texttt{invert\_diagonal}, takes as an input a diagonal of gates $D=U_l \ldots U_1$ and the computational basis state $\bra{\cbs}$, and output an `inverted' diagonal $D' = V_1 \ldots V_l$, where the first gate $V_l$ acts on the rightmost qubits and $V_1$ acts on the leftmost ones, s.t. $\bra\cbs D = \bra \cbs D'$ (see Fig.~\ref{fig: inner product}b step~(i)). It is obtained by repeatedly using the LR move. To be concrete, the gates are computed sequentially via $U'_l = U_l$,  $\bra\cbs U'_{k} U_{k-1} =\bra\cbs U'_{k-1} V_{k}$ (LR move) for $k = l, \ldots,2$, and $V_1 = U'_1$. Another subroutine we define here is called \texttt{to\_rsf}. It takes as an input a MGC $U$ and a computational basis state $\ket\cbs$ and produces as an output a MGC $U_\text{RSF}$ in right standard form s.t. $U\ket\cbs = U_\text{RSF}\ket\cbs$. This is done by one application of the absorption algorithm (Alg.~\ref{alg:absorption}) for each gate in $U$. W.l.o.g., we assume the output of \texttt{to\_rsf} to be in the maximal RSF. Furthermore, we define three simple routines for decomposing RSF circuits and diagonals. The first one, \texttt{split\_rsf}, takes as an input an MGC $U = D_1 \ldots D_m$ in the maximal RSF, and outputs the first diagonal $D_1$ and $V = D_2\ldots D_m$. The second subroutine, \texttt{split\_first}, takes as an input a diagonal $D$, and outputs its first gate $A$ and the remaining gates $\tilde D$ s.t. $D = \tilde D A$. The last one, \texttt{split\_last}, similarly outputs the last gate $B$, and the remaining gates $\tilde F$, of an inverted diagonal $F$, i.e., $F = B \tilde F$. The inner product algorithm now repeatedly uses these subroutines.

\begin{algorithm}[H]
\caption{Inner product algorithm}
\label{alg:inner_product}
\KwIn{
    Computational basis states $\ket{\cbs_1}$ and $\ket{\cbs_2}$ and MGCs $U$ and $V$.
}
\KwOut{
    The matrix element $\bra{\cbs_2}V^\dagger U\ket{\cbs_1}$.
}

$U_1 \gets \mathtt{to\_rsf}(V^\dagger U)$

\For{$k =1\ldots \ceil{n/2}$}{
    $D_k, V_k \gets \mathtt{split\_rsf}(U_k)$

    $A_k, \tilde D_k \gets \mathtt{split\_first}(D_k)$
    
    $F_k \gets \mathtt{invert\_diagonal}(D_k, \ket{\cbs_2})$

    $B_k, \tilde F_k \gets \mathtt{split\_last}(F_k)$
    
    $U_{k+1} \gets \mathtt{to\_rsf}(\tilde F_k V_k, \ket{\cbs_1})$

}

$v \gets \bra{\cbs_2} \prod_i B_i \prod_i A_i \ket{\cbs_1}$ \label{algline:ipa:contraction}

\Return{$v$}  

\end{algorithm}
Fig.~\ref{fig: inner product} shows a diagrammatic representation of the application of this algorithm. The expression in line~\ref{algline:ipa:contraction} coincides with the desired matrix element $\bra{\cbs_2}V^\dagger U\ket{\cbs_1}$. Note that all the gates $A_k$ act on different qubits (the same is true for the $B_k$ gates). The computation of $v$ can therefore be viewed as an MPS contraction~\cite{FaNa92, OsSt95, RoOs97, HaPo18} between two MPSs of bond dimension at most~2. Alternatively, the following procedure can be used: We consider each $A_i$ and $B_i$ as a rank-4 tensor, with two input and two output indices. We start from the left boundary by contracting the leftmost open leg of the output state with the first $A$ gate and the corresponding input state for the first two qubits, obtaining a rank-1 tensor. This tensor is then contracted with the following $B$ gate and the corresponding output leg, producing again a rank-1 tensor. The procedure is repeated recursively, alternating $A$ and $B$ contractions, until reaching the right boundary, where the final contraction yields the scalar value of the overlap. Each intermediate tensor has at most one or two qubit indices, so the computational cost of each contraction is constant, resulting in an overall runtime that scales linearly with $n$. The value obtained is therefore exact up to numerical precision and requires only constant memory.

The complexity of the inner product algorithm is $\mathcal{O}(kn + n^3)$, where $k$ is the number of gates in the initial circuit $V^\dagger U$. The first contribution arises from transforming the initial circuit to RSF, and the second comes from the repeated applications of the absorption algorithm in order to keep the standard form in step 3.

The method presented in this section can also be applied to compute expectation values of Pauli operators. For any $P_1,P_2 \in\{\id, X,Y,Z\}$, the tensor product $P_1\otimes P_2$ is a matchgate (or matchgate times $X$, see Sec.~\ref{sec:preliminaries}). Therefore, an arbitrary Pauli operator $P$ can be decomposed as a single layer of matchgates. One can then use the algorithm to express the expectation value $\bra{\cbs} U^\dagger P U \ket{\cbs}$ as a simple depth-2 circuit, computable via a tensor contraction. Furthermore, using the techniques developed in Ref.~\cite{BrGo22} gives an algorithm to perform weak simulation, i.e., to sample from the probability distribution $p(x) = \vert \bra{x} U \ket{0^n} \vert^2$. Finally, note that the inner product algorithm, as well as other circuit rewriting methods, are not restricted to matchgates, but can be applied to all gate groups satisfying the GYB and LR properties (see Appendix~\ref{app:other_GYB_LR}).

\subsection{Generalization to $t$-doped matchgate circuits} \label{subsec:generalization_doped}

In this section we use the algebraic matchgate identities in order to define a standard form for circuits containing matchgates and a few parity-preserving resourceful gates, such as SWAP or controlled phase gates~\cite{JoMi08, BrGa11}. We will show that any $t$-doped MGC can be compressed into a generalized triangle layout, similar to that in Refs.~\cite{CaKo22,KoCa22}, where the circuit is composed by at most $n(n-1)/2+t(2n-3)$ MGs and $t$ resourceful gates as depicted in Fig.~\ref{fig:tdoped}a. This generic form can be derived from the GYB relation only. Then, using also the LR move, we present a generalized RSF for a $t$-doped FGS. In both cases, we obtain a circuit depth of $\mathcal{O}(n+t)$. Finally, we generalize the inner product algorithm from the previous section to also compute the inner product of non-Gaussian states, specifically those generated by matchgates and a total of $t$ resourceful gates. This approach reduces the computation to a tensor contraction that evaluates to the inner product. The cost of performing such a contraction grows exponentially with the number of resourceful gates. While alternative approaches based on decomposing the resourceful gates into linear combinations of matchgates exhibit a more favorable exponential scaling~\cite{DiKo24,ReOs24}, our algorithm provides a complementary framework in which approximate contractions using standard MPS techniques may offer practical advantages. 

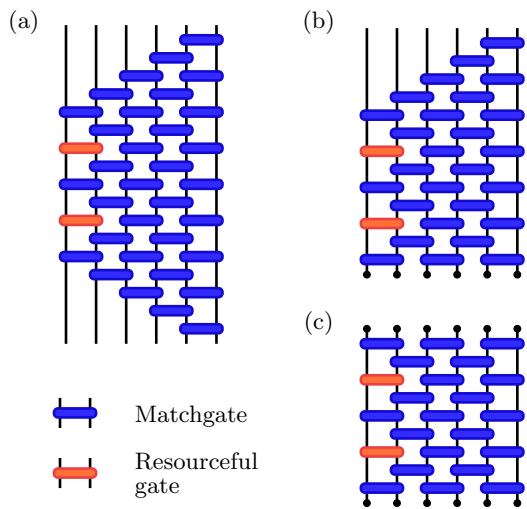
\begin{figure}
    \begin{tikzpicture}[scale=0.8]
    \helpdrawtdoped
    \end{tikzpicture}
    \caption{
    Standard forms for $t$-doped (a) matchgate circuits, (b) states, and (c) inner products. The standard form for the circuits are obtained by applying the GYB relation to general $t$-doped matchgate circuits (i.e., matchgate circuits interleaved with $t$ resourceful non-matchgates). For states, the depth is further reduced by the application of the LR move in a similar way as for the absoption algorithm. Finally, the inner product circuit is found by also using the inverted LR move with the other product state.
    }
    \label{fig:tdoped}
\end{figure}

We start considering $t$-doped matchgate circuits~\cite{MeHe25}, i.e., circuits composed by matchgates and $t$ parity preserving resourceful gates, such as SWAP or controlled phase gates \footnote{More generally, we can consider unitaries generated by an exponential of $\kappa = 4$ Majorana operators. Further generalization of our constructions to arbitary even $\kappa = \mathcal{O}(1)$ is likewise possible.}. A generic $t$-doped matchgate circuit can be written as \[G_t W_t G_{t-1}\dots W_{1} G_0,\] where $G_i$ are generic MGCs and $W_i$ are the resourceful gates. It is known that via a sequence of fermionic swap gates $G(Z,X)$, such parity preserving gates can be moved to act on any other position in the circuit~\cite{JoMi08,MeHe25}. W.l.o.g., we can assume the $W$ gates to act exclusively on the first two qubits of the chain. 

The most general MGC can be written as a series of diagonals~\cite{CaKo22,KoCa22,OsDa22}: $G_i=D^{(i)}_{1}\dots D^{(i)}_{n-1}$, where $D^{(i)}_k = \prod_{j=k}^{n-1} U_{j,j+1}$. Consider the case with a single $W$ gate, described in general by the circuit $G_1W_1G_0$. Note that the gates of $G_1$ comprising the diagonals $D^{(1)}_{3},\dots,D^{(1)}_{n-1}$ do not act on qubits 1 and 2, which are the ones affected by the $W_1$ gate. Thus, they can be commuted and acted directly in $G_0$. But $G_0$ is already the most general MGC, so all gates can be absorbed. We can thus write $D^{(1)}_{1}D^{(1)}_{2}W_1G'_0$ as the most general circuit with a single resourceful gate. The same reasoning can be applied recursively: Each $W_i$ gate is conjugated by two diagonals of matchgates, $D^{(i)}_{1}$ and $D^{(i)}_{2}$. A graphical representation of this structure is shown in Fig.~\ref{fig:tdoped}a. This construction implies that at most $n(n-1)/2 + t(2n-3)$ matchgates (plus the $t$ resourceful gates) are sufficient to construct any $t$-doped matchgate circuit.

A $t$-doped FGS can be similarly written in circuit form. In particular, it can be created by applying a generic $t$-doped MGC onto a computational basis state. However, now the LR identity can be applied to the MGs acting on the product state. This immediately implies, by virtue of the absorption algorithm in Sec.~\ref{subsec:absorption}, that the first generic MGC acting on the state can be brought to RSF. Therefore, a $t$-doped FGS can be represented as a circuit containing $\floor{n^2/4} + t(2n-3)$ matchgates. An example of such construction is shown in Fig.~\ref{fig:tdoped}b.

Following similar arguments, one can compute the overlap between $t_1$-doped FGS $\ket{\psi}=U\ket{0^n}$ and a $t_2$-doped FGS $\ket\phi = V\ket{0^n}$, with $t_1+t_2=t$. The idea is to rewrite the state $U^\dagger V\ket{0^n}$ into the previously described general form for a $t$-doped FGS. Then, a straightforward generalization of the inner product algorithm leads to a simplified brickwall circuit with depth $4t+1$. An example of such construction is shown in Fig.~\ref{fig:tdoped}c. This overlap can be computed as a tensor contraction, with a contraction complexity scaling as $2^{\min(n,\, 4t+1)}$, corresponding to the maximum number of open qubit indices. Observe that this is the complexity for the exact contraction. However, TEBD-like algorithms~\cite{Vi04} may further improve the performance by truncating small singular values.

\section{Shallow matchgate circuits} \label{sec:shallow}

So far, we have described algorithms that are applicable to arbitrary pure FGS. These algorithms serve as a simpler complement to existing methods to represent generic FGS with circuits of depth~$\mathcal{O}(n)$~\cite{JiSu18,KiMc18,DaDe19,OsDa22}. Moreover, we have demonstrated that the circuits we consider attain optimal gate count in generic and specialized settings. However, besides being bounded by $n$, we do not get any guarantees on the depth of the circuits produced.

In this section, we restrict the class of FGS to those which can be represented with circuits of low depth. More specifically, we consider MGCs of depth $\depth$, where $\depth$ can be an arbitrary function of $n$. First, we give a characterization of those states in terms of their CM. Second, we describe another algorithm that produces a generating MGC when given the CM. It is tailored to perform well on states with a low~$\depth$, and is also applicable in an approximate setting, as we show numerically in section~\ref{sec:numerical_examples}. To derive this algorithm, we prove a tripartite version of the result of Botero and Reznik~\cite{BoRe04}, which is applicable in the cases of interest.

\subsection{Characterization of bounded depth circuits}

Here, we show that pure FGS that have short-range correlations can be described exactly with a short-depth MGC. A matrix $M$ is said to be \emph{$\bdns$-banded}, if $M_{kl} = 0$ for $|k-l| > \bdns$, i.e., all non-zero entries are among the first $\bdns$ minor diagonals. Our result is summarized in the following theorem.

\begin{theorem} \label{thm:banded_cm_shallow_mgc}
Suppose $\ket{\psi}$ is a pure FGS with a $\bdns$-banded CM. Then, $\ket{\psi}$ can be generated by a matchgate circuit of depth $\mathcal{O}(\bdns)$. Conversely, the CM of any FGS generated by a circuit of depth $\depth$ is clearly at most $\mathcal{O}(\depth)$-banded.
\end{theorem}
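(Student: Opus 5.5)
The plan is to run Algorithm~\ref{alg:symmetric_euler_decomposition} on the $\bdns$-banded CM $\Gamma$ and to show two things: the bandwidth of every intermediate CM stays $\mathcal{O}(\bdns)$, and each diagonal of Givens rotations produced has length $\mathcal{O}(\bdns)$. The resulting circuit will then consist of $n-1$ short diagonals whose starting positions increase by one qubit per step. Such a staircase of diagonals, each of length $L=\mathcal{O}(\bdns)$ and shifted by one qubit relative to the previous one, can be scheduled into depth $\mathcal{O}(L)$. The reason is that diagonal $q$ is supported on qubits $q,\ldots,q+L$, so diagonals $q$ and $q'$ with $|q-q'|>L+1$ act on disjoint qubits. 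Concretely, I would group the diagonals into $\mathcal{O}(L)$ classes by $q \bmod (L+2)$. Inside each class all gates commute and run in parallel in depth $L$. A further constant-factor refinement interleaves the classes in a pipelined (brickwall-like) fashion. This scheduling must respect the ordering constraints: a gate in diagonal $q$ has to come after the gates of earlier diagonals that overlap it. Pipelining handles this, because the gate at offset $j$ of diagonal $q$ can be placed in layer $2q'+j$ for a suitable local time index.

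The core of the argument is an inductive invariant. Suppose that at the start of step $q$ the CM $\Gamma^{(q-1,2q-2)}$ equals $(\bigoplus_{i<q}\covblock)\oplus\tilde\Gamma^{(q-1)}$, where $\tilde\Gamma^{(q-1)}$ is $b$-banded with $b=\mathcal{O}(\bdns)$. Then $r_q\le 2q-1+b$, so step $q$ uses at most $b$ Givens rotations on rows and columns $2q-1,\ldots,2q-1+b$. Each rotation $R^{(q,k)}$ mixes rows and columns $k-1$ and $k$. Conjugating a banded matrix with such a rotation enlarges the band by at most one, and only locally around index $k$. Once column $2q-1$ has been cleared below row $2q$, the rows and columns $2q-1,2q$ split off as a $\covblock$ block, by the orthogonality argument already given after Algorithm~\ref{alg:symmetric_euler_decomposition}.

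The main obstacle is showing that the bandwidth does not keep growing from one step $q$ to the next. A naive count lets each step add one to the band, which would give $\mathcal{O}(n)$ growth. I would handle this by tracking a finer profile rather than a uniform band. The rotations of step $q$ sweep upward from $r_q$ down to $2q+1$. After the sweep, the bulge created at row $k$ is cleared in the same sweep, since the next rotation acts on $(k-2,k-1)$ and restores zeros in column $2q-1$. I expect a direct computation to show that the only fill-in is confined to rows $\le r_q+1$. The new leading column $2q+1$ therefore still has its last nonzero entry at most at $2q+1+b$. The invariant to aim for is thus that $\Gamma_{rc}=0$ whenever $r>\max(c,2q)+b$ for columns $c\ge 2q+1$, together with antisymmetry. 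Pure-state orthogonality, $\Gamma\Gamma^\transpose=\id$, should supply the control over the remaining entries if direct bookkeeping fails. If this exact invariant turns out to be false, I would fall back to allowing the band to grow to $2\bdns$ once and then prove that it stabilizes there.

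The converse direction is a light-cone argument. A circuit of depth $\depth$ maps $\majo{k}$ to a combination of Majoranas on qubits within distance $\depth$, so $R$ is $\mathcal{O}(\depth)$-banded. The CM of a computational basis state is block-diagonal, so $R\Gamma_0R^\transpose$ is $\mathcal{O}(\depth)$-banded.
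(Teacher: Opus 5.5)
You take the paper's route: run Algorithm~\ref{alg:symmetric_euler_decomposition}, show that the intermediate CMs stay banded, read off short diagonals, and prove the converse by a light cone. However, the step you flag as the main obstacle is left open, and the mechanism you expect is not what happens. The rotation on $(k-2,k-1)$ is chosen to kill an entry of column $2q-1$. It would not remove a fill-in at $(k+\bdns,k-1)$ created by the rotation on $(k-1,k)$; it would spread it into column $k-2$, even farther from the diagonal. For a generic, non-orthogonal antisymmetric banded matrix this fill-in really occurs, so no bookkeeping that avoids $\Gamma\Gamma^\transpose=\id$ can succeed. The missing idea, which is the paper's argument, is that purity makes the fill-in vanish identically. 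Suppose $\Gamma^{(q,k)}_{i,k}\neq 0$ with $i-k>\bdns$. Because $\Gamma^{(q,m)}_{i,m-1}=0$ at every later stage, each subsequent rotation transports this entry one column down, $\Gamma^{(q,m-1)}_{i,m-1}=\Gamma^{(q,m)}_{i,m}R^{(q,m)}_{m-1,m}$, so it would reappear in column $2q$ of $\Gamma^{(q,2q)}$. That column vanishes below row $2q-1$, because the pure-state CM splits off a $\covblock$ block once column $2q-1$ has been cleared. Hence every intermediate CM is exactly $\bdns$-banded. Your fallback of letting the band grow to $2\bdns$ is therefore unnecessary, and nothing in the proposal supports it.

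Your scheduling step is also wrong as stated. A staircase of $n-1$ diagonals shifted by \emph{one} qubit cannot be compressed to depth $\mathcal{O}(L)$ using qubit supports alone. In the generating circuit, diagonal $q+1$ precedes diagonal $q$. Generically, $G^{(q,1)}$ on $(q,q+1)$ does not commute with $G^{(q+1,1)}$ on $(q+1,q+2)$, since the $Z_{q+1}$ rotation in the former does not commute with the $X_{q+1}X_{q+2}$ rotation in the latter. These $n-1$ gates therefore form a dependency chain of length $n-1$. Your mod-$(L+2)$ grouping also fails. The required order between classes is cyclic (class $c$ must precede class $c+1 \bmod (L+2)$ in the disentangling direction), so the classes cannot be run one after another, and even if they could, the bound would be $\mathcal{O}(L^2)$. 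Placing gate $j$ of diagonal $q$ in layer $2q'+j$ either lets $q'$ grow with $q$, giving depth $\Theta(n)$, or violates that chain. The paper avoids this by merging the diagonals of qubits $q$ and $q+1$ into a single matchgate diagonal of length at most $\lceil(\bdns+1)/2\rceil$ (the combination step following Algorithm~\ref{alg:symmetric_euler_decomposition}). The result is in RSF, whose diagonals start at least two qubits apart. Putting the $j$-th gate of every diagonal into layer $j$ then respects all orderings, and the depth equals the longest diagonal. Scheduling the individual Givens rotations would also work, since rotations on disjoint Majorana pairs commute and consecutive sweeps are offset by two Majorana indices; but that relies on this commutation structure, not on qubit supports. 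Your light-cone converse is fine and matches the paper's.
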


Proving the forward direction of Theorem~\ref{thm:banded_cm_shallow_mgc} is straightforward if one follows the steps taken by the symmetric Euler decomposition (Algorithm~\ref{alg:symmetric_euler_decomposition}). Since the CM is $\bdns$-banded, a diagonal of length no more that $\bdns/2$ is sufficient to disentangle the first qubit. A careful examination then shows that the action of this diagonal leaves the CM $\bdns$-banded, thus allowing the argument to be applied recursively to the remaining qubits. The converse direction of Theorem~\ref{thm:banded_cm_shallow_mgc} follows from an argument based on the properties of lightcones. All details are presented in Appendix~\ref{app:proof_shallow_banded}. We remark that the latter argument also shows that MGCs attain optimal depth to generate FGS. That is, no other family of gates could generate FGS with asymptotically lower circuit depth.

Let us emphasize again that we do not require any particular dependence between the circuit depth $\depth$, and the system size $n$. Thus, our result holds in the generic case of a linear circuit depth~\cite{JiSu18,KiMc18,DaDe19,OsDa22}, as well as commonly encountered cases of a constant $d$, or $d= \mathcal{O}(\log n)$~\cite{MaSt24}. However, as we explain here, the generic bound on the circuit depths can be undercut in some more exotic cases as well. For instance, consider a CM corresponding to a state on a two-dimensional lattice of fermionic modes of size $L^2$, such that correlations vanish after a distance $k$. Suppose that the first row of modes is labeled by $1,\ldots,L$, the second one by $L+1,\ldots,2L$, and so on. When reinterpreted as a one-dimensional system with qubits labeled $1,\ldots, L^2$, the CM is $\mathcal{O}(kL)$-banded, and we get the existence of a circuit of depth $\mathcal{O}(kL)$ as opposed to a circuit of depth $\mathcal{O}(L^2)$ when using the standard approach.

\subsection{Entanglement cutting algorithm for disentangling states with banded CM}
\label{subsec:cutting_alg}

\begin{figure}[]
    \centering
    \begin{tikzpicture}[scale=0.8]
    \node at (0,1.7) {(a)};
    \begin{scope}[xshift=0.5cm, yshift=0.cm, scale=0.7]
        \helpdrawCuttingFigureTheorem
    \end{scope}
    \node at (0,-0.3) {(b)};
    \begin{scope}[xshift=0cm, yshift=-1.75cm, scale=0.7]
        \helpdrawCuttingFigureAlg
    \end{scope}
    \end{tikzpicture}
    \caption{(a) Illustration of Observation~\ref{observation:threepartite_bore}. In case party~$\partyB$ comprises sufficiently many qubits, any FGS with $\bdns$-banded CM is equivalent to some entangled pairs shared between parties $\partyA$ and $\partyB$, and $\partyB$ and $\partyC$ up to matchgate circuits acting on the parties.
    (b) Illustration of the algorithm to determine the circuit $U_\partyB$. The matchgate circuit $U_\text{diag}$ is chosen to diagonalize the reduced CM on party $\partyB$. The second circuit, $U_\text{perm}$, permutes the entangled qubits such each qubit is adjacent to the party to which it is entangled}
    \label{fig:ent_cutting}
\end{figure}
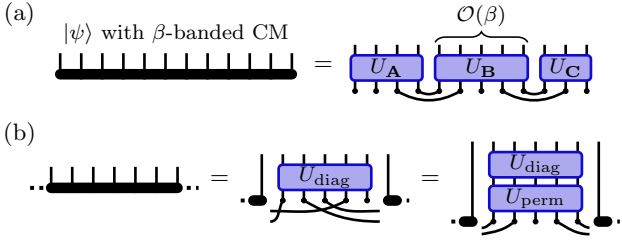

In the symmetric Euler decomposition (Algorithm~\ref{alg:symmetric_euler_decomposition}) and its variants, matrix entries become zero, both due to the explicit construction of the Givens rotations and due to implicit effects. With finite numerical precision, however, such zeros are not always detected reliably. In particular, checks within the algorithms that test whether a matrix element is zero may instead register a small non-zero value even when the entry should theoretically vanish. This typically results in the placement of more gates than necessary~\footnote{This effect can be observed, for instance, in states generated by applying a few layers of random matchgates to a product state.}, thereby preventing the algorithm from correctly identifying circuits of minimal depth. Furthermore, these algorithms do not perform well on CMs that are only approximately banded (see Section~\ref{sec:approximately_banded}).

To deal with this issue, we introduce here another algorithm that performs better for states with banded CMs. The idea is to find a small matchgate circuit acting on a few consecutive qubits for a given qubit index $k$, which splits the state into a product of qubits $1,\ldots k,$ and qubits $k+1,\ldots, n$.

If the CM of the state is $\bdns$-banded, a circuit of depth $\depth=\lceil\frac{\bdns+1}{2}\rceil$ exists to generate the state when acting on some product state. Thus, taking the inverse of the gates around qubit $k$ proves the existence of such a circuit. The problem that remains is determining this local circuit if one does not have knowledge of the depth-$\depth$ circuit that generates $\ket \psi$. From the following observation and its proof, we can derive an algorithm to solve this problem.

\begin{observation} \label{observation:threepartite_bore}
    Consider a pure FGS $\ket{\psi}$ on $n$ qubits with a $\bdns$-banded CM shared between three parties $\partyA, \partyB,\partyC$, each holding a contiguous block of qubits.
    Then, if the size of the party in the center, $\partyB$, satisfies $\vert\partyB\vert \geq \bdns + 2$, there exist MGCs $U_\partyA$, $U_\partyB$, and $U_\partyC$ acting on $\partyA, \partyB, \partyC$, such that 
    \[
        U_\partyA\otimes U_\partyB \otimes U_\partyC \ket{\psi} = \ket{\cbs_\partyA}\ket{\psi_{\partyA\partyB}}\ket{\cbs_\partyB}\ket{\psi_{\partyB\partyC}}\ket{\cbs_\partyC},
    \] where $\ket{\cbs_i}$ are computational basis states on a subset of qubits of the respective systems, and the state of the remaining qubits is given by $\ket{\psi_{ij}}$, which are products of entangled pairs shared between the respective systems.
\end{observation}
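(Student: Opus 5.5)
The plan is to work entirely at the level of covariance matrices. Write the CM of $\ket\psi$ in block form with respect to the Majorana modes of $\partyA$, $\partyB$, $\partyC$ (blocks $\Gamma_{\partyA}, \Gamma_{\partyA\partyB}, \ldots$). The goal is to find an orthogonal $O_\partyB$ acting on the $2|\partyB|$ modes of $\partyB$ that splits these modes into three groups: modes correlated only with $\partyA$, modes correlated only with $\partyC$, and modes in a pure (computational basis) state. $O_\partyB$ would then be realized by an MGC $U_\partyB$ via the generalized Euler decomposition of Appendix~\ref{appendix:euler_decomp_orth}, composed with fermionic swaps $G(Z,X)$ that move the first group to the left end of $\partyB$ and the second group to the right end. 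Finally, the Botero--Reznik result would be applied twice, to $\partyA$ versus the left group and to the right group versus $\partyC$, to produce $U_\partyA$, $U_\partyC$ and the entangled pairs.

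\emph{Step 1: vanishing and low-rank blocks.} Since $\Gamma$ is $\bdns$-banded and $\partyB$ contains $2|\partyB|\geq 2\bdns+4$ Majorana modes, we have $\Gamma_{\partyA\partyC}=0$. Moreover, the rows of $\Gamma_{\partyB\partyA}$ are supported on the first $\bdns$ modes of $\partyB$, and the rows of $\Gamma_{\partyB\partyC}$ on the last $\bdns$ modes. Consequently $V_\partyA := \operatorname{range}\Gamma_{\partyB\partyA}$ and $V_\partyC := \operatorname{range}\Gamma_{\partyB\partyC}$ are orthogonal subspaces, each of dimension at most $\bdns$.

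\emph{Step 2: invariance under $\Gamma_\partyB$.} Purity gives $\Gamma\Gamma^\transpose=\id$. Its $(\partyB,\partyA)$ block reads $\Gamma_{\partyB\partyA}\Gamma_\partyA^\transpose+\Gamma_\partyB\Gamma_{\partyB\partyA}^\transpose+\Gamma_{\partyB\partyC}\Gamma_{\partyA\partyC}^\transpose=0$, and the last term vanishes by Step~1. Hence
\[
\Gamma_\partyB\Gamma_{\partyA\partyB}=-\Gamma_{\partyB\partyA}\Gamma_\partyA^\transpose,
\]
so $\Gamma_\partyB$ maps $V_\partyA$ into itself, and analogously $V_\partyC$. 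Since $\Gamma_\partyB$ is antisymmetric, it also preserves the orthogonal complement $V_0$ of $V_\partyA\oplus V_\partyC$. The $(\partyB,\partyB)$ block of purity gives
\[
\id-\Gamma_\partyB\Gamma_\partyB^\transpose=\Gamma_{\partyB\partyA}\Gamma_{\partyB\partyA}^\transpose+\Gamma_{\partyB\partyC}\Gamma_{\partyB\partyC}^\transpose,
\]
which vanishes on $V_0$. Thus $\Gamma_\partyB|_{V_0}$ is the CM of a pure state. The same identity shows that $\Gamma_\partyB$ has no Williamson eigenvalue $\pm1$ on $V_\partyA$ or $V_\partyC$. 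In particular, $\Gamma_\partyB$ is invertible on these antisymmetric invariant subspaces, so their dimensions are even.

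\emph{Step 3: construction of $U_\partyB$ and the final decomposition.} Bring each restriction of $\Gamma_\partyB$ to Williamson normal form~\eqref{eq:fermionic_williamson_form}, ordering the blocks from $V_\partyA$ first and those from $V_\partyC$ last. This gives $O_\partyB$. If $\det O_\partyB=-1$, I will either fix it by a single $X$ gate or use the extended gate set mentioned after Eq.~\eqref{eq:mg_conj_majo}. After applying $U_\partyB$, the left qubits of $\partyB$ are correlated only with $\partyA$, the right ones only with $\partyC$, and the middle ones are in a pure Gaussian state. The middle qubits can be brought to a computational basis state inside $U_\partyB$ by a further MGC (e.g.\ Algorithm~\ref{alg:symmetric_euler_decomposition}). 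At this point the global state factorizes as a pure FGS on $\partyA\cup\partyB_{\mathrm{left}}$, times $\ket{\cbs_\partyB}$, times a pure FGS on $\partyB_{\mathrm{right}}\cup\partyC$. The factorization follows because the CM is block diagonal, and a pure Gaussian state with such a CM is a product. Applying the Botero--Reznik result to each factor produces $U_\partyA$ and $U_\partyC$. The matchgates Botero--Reznik places on the $\partyB$ sides are absorbed into $U_\partyB$; no further action on the $\partyB$ sides is needed, since those modes are already diagonal with eigenvalues $\cos 2\chi_i$.

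I expect Step~2 to be the main obstacle. The crux is justifying $\Gamma_{\partyA\partyC}=0$ with exactly the hypothesis $|\partyB|\geq\bdns+2$, and checking that $V_\partyA\perp V_\partyC$ really needs disjoint row supports. That condition only requires $2|\partyB|\geq 2\bdns$, so the slack in the hypothesis should comfortably cover index conventions for the band.
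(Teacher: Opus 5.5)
Your argument is correct in substance, but it takes a different route from the paper. The paper never analyzes the CM directly. It invokes Theorem~\ref{thm:banded_cm_shallow_mgc} to obtain a generating MGC $U$ of depth $\lceil(\bdns+1)/2\rceil$ and writes $U=(V_\partyA\otimes V_\partyB\otimes V_\partyC)W$, where $W$ collects the backward lightcones of the two cuts. These lightcones occupy disjoint regions, and this is exactly where $|\partyB|\geq\bdns+2$ enters. Hence $W\ket{\cbs}$ is a product of basis states and two bipartite FGSs, and Botero--Reznik is applied to each. You replace Theorem~\ref{thm:banded_cm_shallow_mgc} and the lightcone bookkeeping by one linear-algebra fact: $V_\partyA=\operatorname{range}\Gamma_{\partyB\partyA}$ and $V_\partyC=\operatorname{range}\Gamma_{\partyB\partyC}$ are orthogonal $\Gamma_\partyB$-invariant subspaces. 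Your route has three advantages:
\begin{itemize}
\item it is self-contained;
\item it needs only $|\partyB|\geq\bdns$, i.e.\ disjointness of the first and last $\bdns$ Majorana modes of $\partyB$;
\item it is constructive, defining $U_\partyB$ intrinsically. This is what the paper's cutting algorithm and its QR-based treatment of degenerate Williamson eigenvalues (Appendix~\ref{app:cutting_alg_degenerate_exact}) end up computing, and degeneracies pose no difficulty in your formulation.
\end{itemize}
The paper's route instead ties the statement to circuit depth and reuses an already proved theorem. A minor point: as written, your $(\partyB,\partyA)$ purity identity has a transposed block. The correct form is $\Gamma_\partyB\Gamma_{\partyB\partyA}=\Gamma_{\partyB\partyA}\Gamma_\partyA^\transpose$, which gives the same invariance.

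One step needs a different justification. The inference ``no Williamson eigenvalue $\pm1$ on $V_\partyA$, hence $\Gamma_\partyB$ is invertible there'' is false. A maximally entangled pair across the $\partyA|\partyB$ cut has Williamson eigenvalue $\cos(2\chi)=0$, so $\Gamma_\partyB$ restricted to $V_\partyA$ can be singular, even zero. You do need $\dim V_\partyA$ and $\dim V_\partyC$ to be even, since otherwise the invariant subspaces do not align with qubit boundaries. Evenness follows differently. Take the $(\partyA,\partyA)$ block of $\Gamma\Gamma^\transpose=\id$ together with $\Gamma_{\partyA\partyC}=0$: this gives $\Gamma_{\partyB\partyA}^\transpose\Gamma_{\partyB\partyA}=\id-\Gamma_\partyA\Gamma_\partyA^\transpose$. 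The spectrum of the right-hand side consists of the values $1-\lambda_i^2$, each with multiplicity two, so $\operatorname{rank}\Gamma_{\partyB\partyA}$ is even. Equivalently, apply Observation~\ref{observation:lsr_pure_fgs} to the cut $\partyA|\partyB\partyC$, and argue the same way for $V_\partyC$. With this repair the proof goes through. The determinant issue also needs no extra gates: flip the orientation of one Williamson block, which turns $\lambda J_2$ into $-\lambda J_2$, and keep $O_\partyB$ special orthogonal.
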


We illustrate Observation~\ref{observation:threepartite_bore} in Fig.~\ref{fig:ent_cutting}a.

\begin{proof}
    Due to Theorem~\ref{thm:banded_cm_shallow_mgc}, a MGC $U$ of depth $\depth =\lceil\frac{\bdns+1}{2}\rceil$ with $\ket{\psi} = U \ket{\cbs}$ exists. Now decompose $U = VW$ into two MGCs $V$ and $W$ such that $W$ contains all the gates in the backward lightcones \footnote{The backward lightcone for a splitting $1,\ldots,k\vert k+1,\ldots,n$ can be defined by two conditions: (i) The last gate acting on qubits $k$ and $k+1$ is in the lightcone, and (ii) whenever for two two-qubit gates $G$, $G'$, one of the structures $G G'$, $(G\otimes \id_2)(\id_2 \otimes G')$, or $(\id_2\otimes G)(G' \otimes \id_2)$ is present in the circuit, if $G$ is in the lightcone, then $G'$ is also in the lightcone.} at the boundary between $\partyA$ and $\partyB$, and $\partyB$ and $\partyC$, and $V$ contains the remaining gates outside the lightcones. Tracing the backwards lightcones in the bipartitions $\partyA\vert\partyB\partyC$ and $\partyA\partyB\vert\partyC$ reveales that the gate in $W$ act on two non-overlapping regions.
    For the same reason, we can decompose $V =V_\partyA\otimes V_\partyB\otimes V_\partyC$ into circuits acting only on the three respective parties. When acting only with $W$, one gets
\[
W\ket{\cbs} = \ket{\cbs_\partyA} \ket{\phi_{\partyA\partyB}}\ket{\cbs_\partyB}\ket{\phi_{\partyB\partyC}}\ket{\cbs_\partyC} = V_\partyA^\dagger \otimes V_\partyB^\dagger \otimes V_\partyC^\dagger \ket{\psi}
\]
where $\ket{\phi_{\partyA\partyB}}$ and $\ket{\phi_{\partyB\partyC}}$ are some FGS shared between the respective systems. Using the result of Botero and Reznik~\cite{BoRe04} for both these states,  one can find MGCs $V'_\partyA$ and $V'_\partyB$ ($V''_\partyB$ and $V''_\partyC$), s.t. $V'_\partyA \otimes V'_\partyB \ket{\phi_{\partyA\partyB}} = \ket{\psi_{\partyA\partyB}}$ ($V''_\partyB \otimes V''_\partyC \ket{\phi_{\partyB\partyC}} = \ket{\psi_{\partyB\partyC}}$), with $\ket{\psi_{\partyA\partyB}}$ ($\ket{\psi_{\partyB\partyC}}$) being products of entangled pairs. The circuits proving the observation are thus given by $U_\partyA = (\id \otimes V'_\partyA) V_\partyA^\dagger$, $U_\partyB = (V'_\partyB \otimes\id \otimes V''_\partyB) V_\partyB^\dagger$, and $U_\partyC = ( V''_\partyC\otimes \id) V_\partyC^\dagger$.
\end{proof}

Let us now describe how to determine the circuit $U_\partyB$ as described in this observation. Clearly, the circuits $U_\partyA,U_\partyB$ and $U_\partyC$ diagonalize the respective CMs. For simplicity, assume first that the Williamson eigenvalues of the reduced CM on $\partyB$ are non-degenerate (we deal with the degenerate case in Appendix~\ref{app:cutting_alg_degenerate_exact}). Let now $\tilde U_\partyB$ be any MGC whose action transforms $\Gamma_\partyB$ to its Williamson normal form. Since the Williamson normal form is unique up to permutations and the symmetry of $\covblock$~\cite{HoJo13}, $\tilde U_\partyB$  must coincide with $U_\partyB$ up to a circuit $S_\partyB$ composed solely of fermionic swap gates and local phase gates. Thus, the only remaining task is to find $S_\partyB$. After applying $\tilde U_\partyB$, the CM is given by $\tilde \Gamma$, and the reduced CM on $\partyB$ by $\bigoplus_i \lambda_i \covblock$. Each Williamson eigenvalue $\lambda_i$ with $|\lambda_i|<1$ corresponds to an entangled pair of the form $\cos\alpha_i\, \ket{00} - \sin\alpha_i\, \ket{11}$, which is shared either with $\partyA$ or with $\partyC$. Let us now focus on one eigenvalue, say $\lambda_i$. To determine to which party the corresponding qubit $i$ in $\partyB$ is entangled to, one inspects the off-diagonal block $\tilde \Gamma_{\partyA\partyB}$ of $\tilde \Gamma$ with row indices corresponding to $\partyB$ and column indices corresponding to $\partyA$, as this block contains all correlations between $\partyA$ and $\partyB$ (one could equally well use the block corresponding to $\partyC$). If the rows of $\tilde \Gamma_{\partyA\partyB}$ with indices $2i-1$ and $2i$ are zero, the entangled pair is shared between $\partyB$ and $\partyC$; otherwise, it is shared between $\partyB$ and  $\partyA$. Qubit $i$ in $\partyB$ then needs to be moved adjacent to the party to which it is entangled. The same is repeated for all other entangled qubits in $\partyB$. An illustration of the whole procedure is given in Fig.~\ref{fig:ent_cutting}b. 

We briefly comment on the case of degenerate eigenvalues, where the above procedure breaks down. One way to deal with this is to lift the degeneracy with high probability by first applying random MGCs on the boundaries between the systems. However, degeneracies can also be directly addressed with some additional steps, as we explain in Appendix~\ref{app:cutting_alg_degenerate_exact}.

Let us finish this section by condensing the results into an algorithm that disentangles the state $\ket{\psi}$, so that its inverse can be used to generate $\ket{\psi}$. Firstly, we determine $\bdns$ such that the CM is $\bdns$-banded. Let now $s = \bdns+2$. Next, we group the $n$ qubits into $N = \floor{n/s}$ consecutive blocks $\partyB_i$ of consecutive qubits with sizes $s_i = s$ (or $s_i  = s+1$, with the aim of distributing the remaining $n - Ns$ qubits). For each block $\partyB_i$, we compute a matchgate unitary $U_i$ that diagonalizes the reduced CM and rearranges the qubits within $\partyB_i$ so that the leftmost qubits are precisely those entangled with $\partyB_1\ldots\partyB_{i-1}$, and the rightmost qubits are those entangled with $\partyB_{i+1}\ldots\partyB_N$. In the second step, we determine and apply the local MGCs $V_j$ that diagonalize the CM restricted to the entangled qubits lying across consecutive blocks. After this procedure, the global state is fully disentangled. The depth required to implement any of the circuit $U_i$ is given by $s_i\leq s+1$ (see, e.g., Refs.~\cite{OsDa22,CaKo22} or Appendix~\ref{appendix:euler_decomp_orth}). Note that at most $s/2 + 1$ entangled pairs can be shared between consecutive blocks. Hence, a worst case estimate for the circuit depth of any of the $V_j$ is given by $s+2$. Hence, an upper bound on the total circuit depth obtained this way is given by $2s+3$.

\subsection{Approximate disentangling algorithm} \label{sec:approximately_banded}

Many physically relevant FGS, such as some ground states of local Hamiltonians, have a CM that is approximately banded. In this context, approximately banded means that the largest matrix elements are concentrated near the diagonal, while the entries $\Gamma_{ij}$ sufficiently far away from the diagonal decay, e.g., exponentially with the distance $|i-j|$. For such states, a modified version of the cutting algorithm continues to perform well, producing circuits that approximately generate the desired state, as we demonstrate numerically in the next section. In the following, we describe the required modifications to the algorithm.

An important conceptual difference is that the parameter governing the circuit depth, $\bdns$, is not determined anymore by the CM, but freely selectable. One can hence try different values of $s$ (as before, the qubits are partitioned into blocks with sizes $s$ and $s+1$), and then choose the $s$ that gives the best tradeoff between circuit depth and achieved fidelity. As before, for a given $s$ one identifies small MGCs acting on each block that disentangle, as much as possible, the qubits preceding the block from those following it. Applying those approximately factorizes the state over several regions. In a second step, circuits are determined that disentangle the states in those regions. Contrary to the algorithm in the exact case, for a given $s$ we do not get a guarantee that two consecutive blocks share at most $s/2+1$ entangled pairs~\footnote{Here, there is generally no well defined notion of entangled pairs anymore.}. One of the circuits in the second step may hence act on two entire consecutive blocks. The total circuit depth can hence be upper bounded by $3s+5$.

Note that the algorithm to find the circuit acting on a block, say $\partyB$, that disentangles the state between neighboring regions $\partyA$ and $\partyC$ as described in the previous section would not work in case the CM is not exactly banded. Another algorithm is thus needed that approximately performs the same task and remains robust when the CM deviates from an exactly banded form. There are, of course, many distinct approaches one could take to obtain an approximate cutting algorithm. As an example, one may choose to find via numerical optimization a circuit that minimizes the entanglement entropy in the corresponding bipartition. Our approach instead is to devise an algorithm that determines the exact circuit in the case the CM is exactly banded. This can be achieved by slightly modifying the cutting algorithm described in Sec.~\ref{subsec:cutting_alg}. Once the CM of $\partyB$ is diagonalized, one can encounter two different situations. In the case where all Williamson eigenvalues are non-degenerate, a straightforward modification gives a good result. The steps are as follows. First, apply the MGC $\tilde U_\partyB$ that brings the reduced CM of $\partyB$ to Williamson normal form. Then, for each eigenvalue $\lambda_i$ with $|\lambda_i - 1| > \varepsilon_\lambda$ (where $\varepsilon_\lambda$ is a freely selectable cutoff parameter) investigate the blocks $\tilde \Gamma_{\partyA\partyB}$ and $\tilde \Gamma_{\partyB\partyC}$ as described in the previous section. For both blocks, compute the norm of the block obtained when selecting only the rows with indices $2i-1,2i$. Swap the corresponding qubit next to the subsystem with greater corresponding norm. The case of degenerate eigenvalues can for instance be addressed by first applying a random MGC $U'$ of depth $d'$, and then applying this algorithm for the resulting state (note that $U'$ then needs to be concatenated to the output circuit). We describe in Appendix~\ref{app:cutting_alg_degenerate_approx} an alternative procedure that does not require such an additional circuit.

\subsection{Numerical examples} \label{sec:numerical_examples}

Now, we look at some example FGSs where the CM is approximately banded. We apply our algorithm and show how the fidelity improves with the allowed depth.

We consider a Majorana chain model with Hamiltonian
\begin{equation}
    H_1 = -\sum_{i=1}^{n-1}\majo{2i}\majo{2i+1}-g\sum_{i=1}^{n} \majo{2i-1}\majo{2i},
\end{equation}
which corresponds to the Ising model on qubits and has a phase transition between an ordered and a disordered phase, with a critical point at $g_c=1$~\cite{Pf70}. In what follows, we focus on the regime $g > 1$, where the ground state is non-degenerate and its CM is approximately banded, making it a suitable test case for our method~\footnote{For $g<1$, the CM formalism yields as a ground state an FGS with long-range correlations, for which our algorithm is not well-suited.}. 

\begin{figure}[t!]
    \centering
    \includegraphics[width=1\columnwidth]{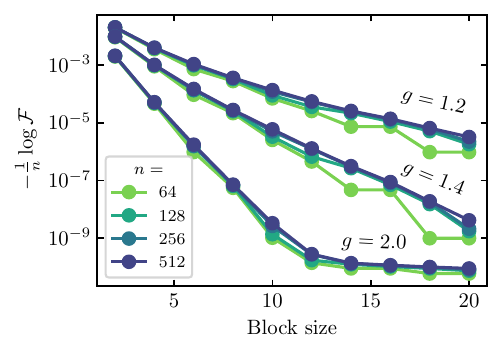}
    \caption{Normalized logarithm of the fidelity of the reconstructed state for the Ising model for different system sizes $n$ and magnetic fields $g$ in the disordered phase ($g>1$, with the critical point located at $g_c=1$), as a function of the block size selected in the entanglement cutting algorithm (proportional to the depth of the MGC required to create the state).}
    \label{fig: approximate algorithm}
\end{figure}

In our simulations, using standard techniques~\cite{SuTa20}, we compute the CM of the ground state $\ket{\psi_0}$ of the Ising model for different parameters of the field in the disordered phase $g>1$. Then, we use the approximate version of the entanglement cutting algorithm with various choices of the block size $s$ in order to find a circuit of depth at most $3s +5$ that creates an state approximating the ground state, $\ket{\psi^{\text{app}}_0}$. We then compute the fidelity between the original state and the approximate one, defined as
\begin{equation}
    \mathcal{F} = |\braket{\psi_0}{\psi_0^{\text{app}}}|^2.
\end{equation}

As discussed in more detail in Sec.~\ref{sec:approximately_banded}, there are several distinct approaches to implement an approximate cutting algorithm. Here, we give the numerical results for the algorithm yielding the best fidelities with the exact ground state, which is described in detail in Appendix~\ref{app:cutting_alg_degenerate_approx}. In this implementation, we need to set a cutoff $\varepsilon_\lambda$ of the Williamson eigenvalues such that any eigenvalue with $|1-\lambda|<\varepsilon_\lambda$ is considered to be numerically 1. In our simulations we use the value $\varepsilon_\lambda=10^{-8}$. Furthermore, in order to handle the (approximate) degeneracies of the Williamson eigenvalues, we choose a parameter $\varepsilon_{\text{deg}}$ such that if two Williamson eigenvalues satisfy $|\lambda_1-\lambda_2|<\varepsilon_\text{deg}$ then they are considered to be degenerate. We set this value to $\varepsilon_{\text{deg}}=10^{-2}$. In Appendix~\ref{app:cutting_alg_degenerate_approx} we describe also a method that does not require such a parameter. However, for identical block size choices, in this example the first method typically yields lower fidelities.

In Fig.~\ref{fig: approximate algorithm} we show the results obtained for the normalized logarithm of the fidelity $\mathcal{F}$ as a function of the block size (note that $-\log\mathcal{F} \approx \mathcal{I}$ for small values of the infidelity $\mathcal{I} = 1- \mathcal{F}$). We observe that, as expected, increasing the block size (and therefore the depth) yields a better fidelity. Notice that for $g=2$ and a large enough block size, the fidelity reaches a plateau, meaning that increasing the block size does not improve the results anymore. This effect corresponds to reaching the numerical precision of the computer. We also observe that when approaching the critical point at $g=1$, where there are long range correlations, the block size required to get a good approximation increases. The characterization of finite block-size scaling at the critical point, analogous to the finite-depth scaling considered in Ref.~\cite{JoSm22}, is left for future work.

Finally, let us note that in Fig.~\ref{fig: approximate algorithm} we find approximately a linear decrease (in a log scale) of the log-fidelity versus the block size. This behavior implies that, in order to achieve a fixed target fidelity $\mathcal F$, the required block size must scale proportionally to $\log n$, with a proportionality constant determined by the inverse of the slope.

The Ising model in the disordered phase has exponentially decaying correlations. However, this is not a necessary condition for our algorithm to be useful. In Appendix~\ref{app:further_numerical_examples}, we present some further numerical examples. First, we study a long-range Kitaev model~\cite{VoLe14,VoLe16}, whose ground state exhibits algebraically decaying correlations. Secondly, we also consider a model defined on a two-dimensional fermionic lattice, and prepare its ground state on a one-dimensional chain. We numerically find that also in these examples the necessary circuit depths for achieving a given fidelity scale sublinearly with the system size.

\section{Conclusion}

In this paper we have studied the representation of fermionic Gaussian states via the matchgate circuits generating them. We introduced two algorithmic approaches for constructing matchgate circuits that prepare a target FGS from its covariance matrix. For the first type, the symmetric Euler decompositions, when viewed as a disentangling algorithm, we showed that in each step, the maximal possible amount of entanglement is removed. Hence, those circuits have optimal gate count scaling compared to any other (even non-matchgate) circuit. The entanglement cutting algorithm, on the other hand, is better suited for states with short-range correlation, and can be shown to work due to the special entanglement structure of these FGS (Observation~\ref{observation:threepartite_bore}). Moreover, it still can be applied in an approximate setting. We showcased the performance of the algorithm by finding circuits that approximate well the ground states of some physical models.

We have also considered the problem of the circuit complexity of FGSs, i.e., the minimal number of matchgates required to prepare a given state. Under mild assumptions on the gates, we proved that matchgate circuits in RSF achieve optimal gate count.

Finally, we used the algebraic matchgate identities in an algorithm for computing the inner product, including phase information, between two FGSs represented by matchgate circuits. Our method reduces the computation to a tensor contraction involving a depth-2 quantum circuit, providing an efficient alternative to covariance-matrix–based techniques. We also defined a standard form for $t$-doped matchgate circuits and the corresponding doped FGSs. Doing so, we showed such unitaries can implemented with a circuit of depth $\mathcal{O}(n+t)$. Computing an inner product with such a circuit can be reduced to contracting a brickwall tensor network of depth $4t+1$.

Our study of the circuit representation of FGSs opens several promising research directions. For instance, one could ask which of our algorithms and results on gate count and depth optimality and algorithms can be generalized to free fermion systems defined on, e.g., a two-dimensional lattice. Given that classical simulability arises from the GYB and LR relations, it would be interesting to study further gate groups that satisfy these relations. A physically relevant example are bosonic Gaussian systems where, as we studied in Appendix~\ref{app:bosonic_gaussian_systems}, the GYB relation is not always exactly satisfied, but may hold approximately. Understanding the implications of such approximate structures for circuit complexity and classical simulation remains an open challenge.

Finally, it would be interesting to relate our work to other quantum resources considered in the literature, such as nonstabilizerness and non-Gaussianity. On the one hand, a recent paper has studied the decomposition of matchgates into braiding gates doped with $T$ gates~\cite{CaBr26}. This opens questions about the representability and approximability of FGSs with braiding circuits with finite number of $T$ gates. On the other hand, the introduction of a generalized RSF formalism for $t$-doped systems opens new approaches to the study of fermionic non-Gaussian resources. This includes the development of quantitative measures of non-Gaussianity~\cite{GoMa05, GoMa06, LuTi24, LyBu24, CoSm25, CuSt25, LeMe25, SiSt26}, as well as potential applications to state preparation protocols and simulation complexity. Exploring these connections may provide a new perspective on circuit representations, resource quantification, and computational power.

\begin{acknowledgments}
We are grateful to Sheng-Hsuan Lin, Fabian Pichler, Poetri Tarabunga, Esther Cruz-Rico, Daniel Molpeceres, Shachar Fraenkel, Beatriz Dias, Georgios Stylaris, Caterina Zerba, Bernhard Jobst, Luca Tagliacozzo, Daniel Malz, Alexander Hahn, Lorenzo Leone, and Germán Sierra for useful comments and insightful discussions. M.L. and B.K. acknowledge funding from the BMW endowment fund and the Horizon Europe programmes HORIZON-CL4-2022-QUANTUM-02-SGA via the project 101113690 (PASQuanS2.1) and HORIZON-CL4-2021-DIGITAL-EMERGING-02-10 under grant agreement No. 101080085 (QCFD). R.M-Y. and F.P. acknowledges support from the Munich Center for Quantum Science and Technology (MCQST), supported by the DFG under Germany’s Excellence Strategy Grant No. EXC–2111–390814868, and FOR 5522 (project-id 499180199). This research is part of the Munich Quantum Valley, which is supported by the Bavarian state government with funds from the Hightech Agenda Bayern Plus. A.G-S. acknowledges support from UK Research and Innovation (UKRI) under the UK government’s Horizon Europe funding guarantee [grant number EP/Y036069/1], and the Leverhulme Trust through Research Project Grant [RPG-2024-112].

\textbf{Data and materials availability:} Raw data and simulation codes are available on Zenodo upon reasonable request~\cite{zenodo}.
\end{acknowledgments}

\appendix

\newcommand\multiplicity[2]{\ensuremath{\operatorname{multiplicity}(#1,#2)}}

\section{Generalized Euler decomposition of orthogonal matrices}\label{appendix:euler_decomp_orth}

In this appendix, we review some well-known Euler decompositions of orthogonal matrices. First, we have to set a definition for denoting non-commuting products.
\begin{definition}[Non-commuting product notation] With $a\leq b$ 
and a sequence of operators $A_a,\ldots,A_b$, we make the conventions 
\[\prod_{i=a}^b A_i = A_b A_{b-1}\ldots A_{a+1} A_a,\]
and
\[\prod_{i=b}^a A_i = A_a A_{a+1}\ldots A_{b-1} A_b.\]
Furthermore, $\prod_{i=a}^a A_i = A_a$.
\end{definition}
The rationale of using this convention is that if a circuit is denoted as $\prod_{i=1}^K U_i$, then the gate $U_1$ with the lowest index acts first and so on.

The Euler decomposition is a central tool in the study of (free) fermionic systems and matchgate circuits~\cite{JoMi08,JiSu18,KiMc18,DaDe19,OsDa22}. Due to its importance and conceptual similarity to our algorithms, we include a particular construction here.

\begin{lemma}[Generalized triangle Euler decomposition (see e.g. Refs.~\cite{JiSu18,KiMc18,DaDe19,OsDa22,RaRu69,HoRa72})]
Given a special orthogonal matrix $R$ of size $N\times N$, there exist angles $\{\alpha_{k,j}\}$ s.t.
\begin{equation}
	R = \prod_{k=1}^{N-1} \prod_{j=1}^{N - k} E_{j, j+1}(\alpha_{k,j}). \label{eq:lemma_generalized_euler_1}
\end{equation}
Likewise, there exist angles $\{\beta_{k,j}\}$ s.t.
\begin{equation}
	R = \prod_{k=1}^{N-1} \prod_{j=1}^{N-k} E_{N-j, N-j+1}(\beta_{k,j}). \label{eq:lemma_generalized_euler_2}
\end{equation}
In case $R$ is orthogonal, but not special orthogonal, one of the Givens rotation must be replaced by a reflection.
\end{lemma}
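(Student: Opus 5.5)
We prove Eq.~\eqref{eq:lemma_generalized_euler_1} by induction on $N$, eliminating one column of $R$ at a time with a diagonal of Givens rotations, exactly as in the classical Euler (or QR-by-Givens) construction. For $N=1$ there is nothing to show, since a special orthogonal $1\times 1$ matrix equals $1$. For the induction step, let $v = R e_N$ be the last column of $R$. We choose angles so that the rotations $E_{1,2},E_{2,3},\ldots,E_{N-1,N}$, applied from the left in this order, successively rotate the weight of $v$ downwards. Concretely, $E_{1,2}(\theta_1)$ is chosen (via the analogue of \texttt{elim\_second}, with the roles of the two entries exchanged) to zero the first entry and move its norm into entry $2$. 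Then $E_{2,3}(\theta_2)$ zeros entry $2$ and moves the accumulated norm into entry $3$, and so on. After $N-1$ steps the vector is $\pm e_N$, using $|v|=1$. The last rotation angle has a free choice of $\pi$ shift, so we can fix the sign to $+e_N$.

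Setting $Q = E_{N-1,N}(\theta_{N-1})\cdots E_{1,2}(\theta_1)$, the matrix $QR$ is orthogonal with last column $e_N$. Orthogonality then forces its last row to be $e_N^\transpose$ as well, so $QR = R'\oplus 1$ with $R'\in SO(N-1)$, since $\det Q=1$. Applying the induction hypothesis to $R'$, viewed as acting on the first $N-1$ basis vectors, gives a product of diagonals $\prod_{j=1}^{N-k} E_{j,j+1}$ with $k$ ranging over $N-1$ values for size $N-1$. The remaining task is the index bookkeeping: we rewrite $R = Q^{\transpose}(R'\oplus 1)$, note that $Q^\transpose = \prod E_{j,j+1}(-\theta_j)$ is a full diagonal of length $N-1$, and check that this diagonal together with those of $R'$ reproduces the ordering convention in Eq.~\eqref{eq:lemma_generalized_euler_1}. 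This bookkeeping is the step most likely to cause trouble. If eliminating the last column places the full diagonal on the wrong side of the product, we will instead eliminate the last row by acting from the right, $R Q'$. That version puts the long diagonal, $k=1$, acting first, with the shorter diagonals $k\ge 2$ on indices $1,\ldots,N-k+1$ after it, matching the non-commuting product convention of the Definition.

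Eq.~\eqref{eq:lemma_generalized_euler_2} follows from the same argument with the index order reversed. Equivalently, we conjugate by the reversal permutation $P$, $i\mapsto N+1-i$: applying Eq.~\eqref{eq:lemma_generalized_euler_1} to $PRP$ (special orthogonal up to a sign, which we handle by $P\to -P$ where needed, or simply by redoing the elimination from the top) maps $E_{j,j+1}(\alpha)$ to $E_{N-j,N-j+1}(-\alpha)$. For $R$ with $\det R=-1$, we apply the result to $R\,\mathrm{diag}(1,\ldots,1,-1)$ and absorb the reflection in place of one Givens rotation. Alternatively, in the elimination step the final sign cannot be fixed by a rotation, and a reflection replaces the last Givens rotation there.
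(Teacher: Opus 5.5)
Your proof is correct and is essentially the paper's argument: eliminate one column with a diagonal of Givens rotations, use orthogonality to split off a $1\times 1$ block, recurse, and transpose. Your reversal-permutation derivation of Eq.~\eqref{eq:lemma_generalized_euler_2} and the $R\,\mathrm{diag}(1,\ldots,1,-1)$ trick for $\det R=-1$ are also fine, and since $\det(PRP)=\det R$ no sign adjustment of $P$ is ever needed.

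Do commit to your fallback, though. Left-eliminating the last column gives $R=Q^{\transpose}(R'\oplus 1)$, with the full diagonal acting last and in reversed internal order, so it does not match Eq.~\eqref{eq:lemma_generalized_euler_1}. Eliminating the last row from the right (equivalently, applying your column step to $R^{\transpose}$) gives $R=(R''\oplus 1)\prod_{j=1}^{N-1}E_{j,j+1}(\theta_j)$. After the shift $k\to k+1$ in the induction hypothesis for $R''$, this is exactly Eq.~\eqref{eq:lemma_generalized_euler_1}.
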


\begin{proof}
Given a vector $(a,b)^\transpose$, when choosing an angle $\gamma$ with
\[
	\sin\gamma = -\frac{b}{\sqrt{a^2+b^2}}, \quad 
	\cos\gamma = \frac{a}{\sqrt{a^2+b^2}},
\]
one gets
\[
	E(\gamma) \begin{pmatrix} a \\ b\end{pmatrix} =
	\begin{pmatrix} \cos\gamma & -\sin\gamma \\ \sin\gamma & \cos\gamma \end{pmatrix} \begin{pmatrix} a \\ b\end{pmatrix} =
	\begin{pmatrix} \sqrt{a^2+b^2} \\ 0 \end{pmatrix}.
\]
Hence, one can easily define a sequence that rotates the first column of $R$ into the first standard basis vector (i.e., by rotating first the last two entries, then the second-to-last ones and so on). The resulting matrix is still orthogonal and hence must be of the form $\id_1 \oplus R^{(N-1)}$ 
with an $N-1\times N-1$ orthogonal matrix $R^{(N-1)}$. This argument can be repeated another $N-2$ times until one obtains a $2\times 2$ orthogonal matrix $R^{(2)}$ with $\det R = \det R^{(2)}$. The transpose of the sequence obtained this way gives Eq.~\eqref{eq:lemma_generalized_euler_1}. Equation~\eqref{eq:lemma_generalized_euler_2} can be obtained similarly, by first rotating the last column to the last standard basis vector, then the second-to-last column to the second-to-last basis vector and so on.
\end{proof}

From the Euler decomposition on three qubits, one immediately gets a proof of the GYB identity for matchgates up to phases (see also~\cite{MoLa25}). On $n$ qubits, one finds that an arbitrary matchgate unitaries can be decomposed into an MGC in a triangle layout of depth $2n -3$ (e.g.~\cite{OsDa22}). The same can also be shown starting from an arbitrary MGC and repeatedly using the GYB relation~\cite{CaKo22,KoCa22}. To obtain a slightly better circuit depth, one can instead decompose the matchgate unitary in a brickwall layout. Apart from direct computation, the following lemma can be shown by starting from a triangle decomposition and then repeatedly using the GYB relation to obtain a different layout~\cite{CaKo22}.

\begin{lemma}[Brickwall decomposition of MGC~\cite{OsDa22,CaKo22}]
An arbitrary matchgate unitary acting on $n$ qubits can be realized by an MGC in a brickwall layout with a circuit depth of~$n$.
\end{lemma}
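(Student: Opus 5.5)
The plan is to pass to the $2n\times 2n$ orthogonal-matrix picture via Eq.~\eqref{eq:mg_conj_majo}, and then build the brickwall MGC by transforming the triangle Euler decomposition with the GYB relation. An arbitrary matchgate unitary $U$ on $n$ qubits is determined, up to a global phase, by a special orthogonal $R$ of size $2n\times 2n$. The generalized triangle Euler decomposition (the preceding lemma, Eq.~\eqref{eq:lemma_generalized_euler_1}) writes $R$ as a product of Givens rotations $E_{j,j+1}$. Each rotation $E_{2q-1,2q}$ is a single-qubit $Z$-rotation on qubit $q$, and each $E_{2q,2q+1}$ is an $X\!\otimes\!X$-type rotation on qubits $q,q+1$. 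Grouping these into nearest-neighbour matchgates gives an MGC in a triangle layout consisting of diagonals $D_{n-1},\ldots,D_1$, with $D_k$ acting on qubit pairs $(1,2),\ldots,(n-k,n-k+1)$. This layout has $n(n-1)/2$ gates, which is exactly the gate count of an $n$-layer brickwall. The global phase is fixed at the end by comparing a single matrix element.

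Next, I would convert the triangle into a brickwall by induction on $n$, using only the GYB relation~\eqref{eq:yb}. Suppose an arbitrary MGC on $n-1$ qubits can be brought into brickwall form of depth $n-1$. Then the circuit on $n$ qubits can be written as a brickwall $B_{n-1}$ on qubits $1,\ldots,n-1$ followed by one extra diagonal $D=U_{n-1}\cdots U_1$, which sweeps from pair $(1,2)$ to pair $(n-1,n)$. The task is to push the gates of $D$ into the brickwall so that the gates on the last qubit fill the missing slots of the last brick column. Concretely, the last gate of the diagonal, together with two adjacent gates of the brickwall, forms the pattern $(\id\otimes U_{2})(U_{3}\otimes\id)$ with a third gate. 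Applying GYB to this pattern moves one gate one layer back and one qubit to the right. Repeating this "Yang–Baxter sweep" along the staircase, as in Ref.~\cite{CaKo22}, turns the diagonal-plus-brickwall into a brickwall on $n$ qubits with one additional layer. Parity bookkeeping on whether $n$ is even or odd decides which sublattice the new layer occupies. Counting layers then gives depth $n$: the induction adds exactly one layer per added qubit, and the base case $n=2$ is a single matchgate, hence depth $2$ after padding with an identity or possibly $1$, which I would handle by direct inspection.

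The main obstacle is step two: making the GYB sweep precise, i.e. showing that each application of \eqref{eq:yb} turns the local three-gate pattern into one that is again compatible with the brickwall, and that no gates are left over. I would first check this on $n=3$ and $n=4$ explicitly. There, a triangle with $3$ (resp.\ $6$) gates must become brickwalls of depth $3$ (resp.\ $4$), and the gate counts match because a brickwall of depth $n$ on $n$ qubits has $\lfloor n/2\rfloor\lceil n/2\rceil + \lfloor (n-1)/2\rfloor\lceil (n-1)/2\rceil$ gates, which equals $n(n-1)/2$. The index pattern from these small cases would then be generalized. As a fallback, the proof can avoid the sweep entirely. In that version, I would directly perform an Euler-type elimination on $R$ using Givens rotations ordered in brickwall fashion: alternate eliminations on even and odd pairs, as in Ref.~\cite{OsDa22}, zeroing entries in parallel with a parity-alternating schedule. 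I would then verify that $n$ rounds suffice by tracking that each round's rotations on disjoint index pairs commute.
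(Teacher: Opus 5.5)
Your overall route (Euler triangle, then GYB moves as in \cite{CaKo22}, with a direct elimination as fallback) is the one the paper points to, but the induction step as you describe it cannot work. Call position $p$ the qubit pair $(p,p+1)$. A GYB move~\eqref{eq:yb} replaces gates at positions $p,p+1,p$ by gates at $p+1,p,p+1$, so it transfers exactly one gate between neighbouring positions.

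In $B_{n-1}D$, every position $p\le n-2$ carries one gate more than in $B_{n-1}$, i.e.\ $\lfloor\frac{n+1}{2}\rfloor$ or $\lceil\frac{n+1}{2}\rceil$ gates, and position $n-1$ carries a single gate. The depth-$n$ brickwall instead has $\lfloor\frac{n}{2}\rfloor$ or $\lceil\frac{n}{2}\rceil$ gates at every position. Summing the excess, about $p/2$ gates must cross the boundary between positions $p$ and $p+1$, so a GYB-only induction step needs of order $n^2/4$ moves.

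A single Yang--Baxter sweep along the staircase applies GYB once per position. It therefore moves at most one gate across each boundary and can succeed only for $n\le 5$. Already for $n=6$, whichever parity you choose for $B_5$, two gates must move from position $4$ to position $5$. Checking $n=3,4$ and extrapolating the index pattern will thus not produce a proof. What is missing is an organisation into repeated sweeps, together with an argument that it ends exactly in the brickwall layout. This is the triangle-to-square conversion the paper attributes to \cite{CaKo22}, with $\Theta(n^3)$ moves in total.

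There are three smaller corrections. First, the diagonal has to act first, $U=B_{n-1}D$, as the longest diagonal does in the triangle from Eq.~\eqref{eq:lemma_generalized_euler_1}. If $D$ is applied after $B_{n-1}$, the gates $U_1,\dots,U_{n-2}$ are absorbed into $B_{n-1}$ and the resulting form is not universal.

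Second, with the correct order, $B_{n-1}$ never touches qubit $n$. So a GYB pattern containing $U_{n-1}$ must consist of $U_{n-2}$, $U_{n-1}$ and a brickwall gate on $(n-2,n-1)$, i.e.\ two diagonal gates and one brickwall gate, not the configuration you describe.

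Third, your fallback does not give depth $n$ automatically. An alternating brickwall of Givens rotations on the $2n$ Majorana indices regroups naively into $n+1$ matchgate layers, because the first and last half-layers of $X\!\otimes\!X$ rotations cannot be paired. Reaching depth $n$ requires eliminations that use the full $4\times 4$ rotation carried by each matchgate.
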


\section{Other systems obeying GYB and LR relations}
\label{app:other_GYB_LR}
Our main focus are matchgate circuits, or equivalently, free fermionic circuits in one dimension. Crucial tools to derive our results are 
the generalized Yang-Baxter and the Left-Right relations (Eqs.~\eqref{eq:yb} and~\eqref{eq:lr}). One might now wonder whether there are other gate sets or physical systems in which such relations hold. In this appendix, we give a positive answer to this question. First, in Appendix~\ref{app:trivial_additional_gyb_lr}, we list two other gate sets on qubits that can be shown to satisfy the GYB and LR relations. Second, in Appendix~\ref{app:bosonic_gaussian_systems}, we turn our focus towards bosonic Gaussian unitaries and states. We show that the GYB relation holds for any bosonic Gaussian unitary unless a single particular equation involving the unitaries is satisfied. On other hand, if that is the case, we give an example of three bosonic Gaussian unitaries arranged in one layout of Eq.~\eqref{eq:yb} that cannot be decomposed into the other layout. We then also investigate the LR relation and comment on the potential applicability of the symmetric Euler decomposition algorithm for bosonic Gaussian states.

\subsection{Trivial examples} \label{app:trivial_additional_gyb_lr}

We present two trivial examples of gate sets and states for which a GYB and an LR relation hold. While this enables simulation of these circuits with the inner product algorithm, it is well known that these examples can be simulated with other means. However, their existence shows that there are non-matchgate (by a local basis change) solutions to the GYB and LR equations.

\newcommand{\swapgate}{\ensuremath{\operatorname{S}}}

For the first example, we consider the group of gates generated by the nearest-neighbor swap gate and arbitrary single-qubit gates. The swap itself satisfies the Yang-Baxter equation. 
Moreover, single qubit gates can be rearranged with swap gates by exchanging their qubit labels when moved past a swap. 

As a result, this group satisfies the GYB relation. When acting on product states, an LR identity follows similarly.

For the second example, consider gates that are only generated by a Hamiltonian with terms of the form $Z_j$ and $Z_jZ_{j+1}$. Circuits composed out of such gates are a subset of the so-called IQP (instantaneous quantum polytime) circuits~\cite{ShBr09}. However, as they remain composed of two-local gates acting on nearest neighbors in a single spacial dimension, the known complexity theoretic results (see Ref.~\cite{BrJo10}) do not apply. Clearly, any two of such gates commute irrespective of which qubits they act on. They therefore trivially satisfy a GYB relation and the LR property.

\subsection{Bosonic Gaussian systems} \label{app:bosonic_gaussian_systems}

In the previous section, we gave two examples of gate sets on qubits that satisfy the GYB and LR relations. Here, we investigate bosonic systems in which the local Hilbert spaces are infinite-dimensional. Below, we briefly review the basics of bosonic Gaussian unitaries and states (for a more complete introduction, see, e.g., Ref.~\cite{Se17qcv}). We then investigate decompositions for the GYB relation in Observation~\ref{obs:symplectic_yb}, and for the LR relation in Observation~\ref{obs:bosonic_lr}. The techniques for finding such decompositions rely on Lemma~\ref{lemma:symplectic_eliminate_second} and are very similar to the fermionic case.

Consider a set of $n$ bosonic modes with creation (annihilation) operators $a_i^\dagger$ ($a_i$), satisfying the canonical commutation relations $[a_j, a_j^\dagger] = \delta_{i,j}\id$ and $[a_i, a_j] = [a_i^\dagger, a_j^\dagger] = 0$. Introduce the quadrature operators 
\[
R_{2j-1} = \frac{a_j + a_j^\dagger}{\sqrt2}, \quad\text{and}\quad R_{2j} = -\ii\frac{a_j - a_j^\dagger}{\sqrt2},
\]
which take the role of the Majorana operators in the bosonic case. Bosonic Gaussian unitaries $U$ are generated by Hamiltonians that are at most quadratic in the components of $R$, and bosonic Gaussian states are thermal states of such Hamiltonians (together with limit points). Any bosonic Gaussian unitary $U$ satisfies 
\[U^\dagger R_k U = \sum_{l=1}^{2n} S_{kl} R_l + d_k \id,\]
where the matrix $S$ is an element of the \emph{symplectic group}, meaning that $S J_{2n} S^\transpose = J_{2n}$ with $J_{2n} = \bigoplus_{k=1}^n \covblock$, and $d$ is a real vector. In fact, $U$ is completely specified by $S$ and $d$. Likewise, any bosonic Gaussian state $\rho$ is completely specified by its first moments
\[m_i = \Tr(R_i \rho),\]
and its covariance matrix
\[\Gamma_{ij} = \Tr(\{R_i - m_i, R_j - m_j\} \rho), \]
which is a symmetric matrix. Here, $\{A,B\} = AB+BA$ denotes the anticommutator. The CM $\Gamma$ of any bosonic Gaussian state $\rho$ can be decomposed into
\[ \Gamma = S \, \left(\bigoplus_{i=1}^n \nu_i \id_2\right) \, S^\transpose, \]
where $S$ is symplectic and $\nu_i \geq 1$. In case $\rho$ is a pure bosonic Gaussian state, $\nu_i = 1$ for all $i$, which implies that $\Gamma$ is symplectic. Conversely, it is known that any symmetric, symplectic and positive semidefinite matrix $\Gamma$ is the CM of a pure bosonic Gaussian state.

As in the fermionic case, to investigate the GYB and LR relations for bosonic Gaussian states and unitaries, we turn our attention towards the corresponding covariance and symplectic matrices. For simplicity, we consider only displacement-free unitaries, i.e., those with $d_i=0$, and states with vanishing first moments.

We start considering a bosonic Gaussian unitary acting on three modes and investigate a decomposition into three two-mode bosonic unitaries in either layout appearing in the GYB equation. When cast in terms of the corresponding symplectic matrix, we get the following result.

\begin{observation}\label{obs:symplectic_yb}
    Consider a symplectic matrix
    \[
    S = \begin{pmatrix}
        A & D & G \\
        B & E & H \\
        C & F & I
    \end{pmatrix},
\]
where all block are $2\times 2$ matrices. Let furthermore $A$, $B$, and $C$ be invertible. In case $\det B \neq -\det C$, there exist three $4\times 4$ symplectic matrices $Q_1, Q_2$ and $Q_3$ with 
\begin{equation} \label{eq:symplectic_yb_to_show}
    (\id_2 \oplus  Q_3) ( Q_2 \oplus \id_2)(\id_2 \oplus Q_1) S = \id_6. 
\end{equation}
On the other hand, in case $\det B = -\det C$, the decomposition in Eq.~\eqref{eq:symplectic_yb_to_show} cannot exist.
\end{observation}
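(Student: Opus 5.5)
The plan is to mimic the Euler-type elimination used for orthogonal matrices in Appendix~\ref{appendix:euler_decomp_orth}, now in the symplectic group. I will clear the first block column of $S$ in two steps: first with $\id_2\oplus Q_1$ acting on modes $2,3$, then with $Q_2\oplus\id_2$ acting on modes $1,2$. Symplecticity should then force the remaining matrix into the form $\id_2\oplus S'$, and I set $Q_3=S'^{-1}$.

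The key invariant is the identity $X^\transpose J_2 X=(\det X)\,J_2$, valid for every $2\times2$ matrix $X$. For a $4\times 2$ block column $M=(B;C)$ this gives
\[
M^\transpose J_4 M=(\det B+\det C)\,J_2 .
\]
This quantity is preserved under left multiplication by any $4\times4$ symplectic matrix.

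\emph{Step 1 (choose $Q_1$).} I want $Q_1 (B;C)=(X;0)$. Comparing the invariant before and after forces $\det X=\det B+\det C$. Assume $\det B\neq-\det C$, and pick any $X$ with this determinant. The two columns of $(B;C)$ and the two columns of $(X;0)$ then each span a symplectic (nondegenerate) plane with the same form value $\det B+\det C\neq 0$. By symplectic Witt extension, i.e.\ completing each pair to a symplectic basis, there is a symplectic $Q_1$ mapping one pair to the other. I expect this to be exactly the content of Lemma~\ref{lemma:symplectic_eliminate_second}, and I would invoke it there. Note that $X$ is invertible by construction.

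\emph{Step 2 (choose $Q_2$).} After Step 1 the first block column is $(A;X;0)$. The $(1,1)$ block of $S^\transpose J_6 S=J_6$ gives $\det A+\det B+\det C=1$, hence $\det A+\det X=1$. The same transitivity argument therefore yields a symplectic $Q_2$ with $Q_2(A;X)=(\id_2;0)$. Call the resulting matrix $T$; its first block column is $(\id_2;0;0)$.

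\emph{Step 3 (choose $Q_3$).} Consider the symplectic condition $T^\transpose J_6 T=J_6$ and pair the first block column of $T$ with its $j$-th block column. This gives $J_2\,T_{1j}=0$ for $j=2,3$, so the first block row of $T$ vanishes off the diagonal block. Hence $T=\id_2\oplus S'$ with $S'$ symplectic, and $Q_3:=S'^{-1}$ gives Eq.~\eqref{eq:symplectic_yb_to_show}. The invertibility assumptions on $A,B,C$ are used only to guarantee that the relevant $4\times 2$ blocks have rank two.

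\emph{Converse.} Suppose a decomposition of the form Eq.~\eqref{eq:symplectic_yb_to_show} exists. The factor $Q_2\oplus\id_2$ does not touch mode $3$, and $Q_3$ is invertible. For the final first block column to be $(\id_2;0;0)$, the third block of the first block column of $(\id_2\oplus Q_1)S$ must therefore already vanish. So $Q_1(B;C)=(X;0)$ for some $X$, and by the invariant $\det X=\det B+\det C=0$. But $B$ is invertible, so $(B;C)$ has rank $2$, and hence so does $(X;0)$. This forces $X$ to be invertible, a contradiction.

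The main obstacle I anticipate is Step 1 and Step 2's use of transitivity: justifying rigorously that a symplectic map sending one given column pair to another exists whenever the symplectic forms agree and are nonzero. That is where the condition $\det B\neq-\det C$ enters.
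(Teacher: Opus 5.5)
Your proof is correct and follows the same route as the paper. You clear the bottom block with $Q_1$, which is exactly the content of Lemma~\ref{lemma:symplectic_eliminate_second}, then clear the middle block with $Q_2$ using $\det A+\det B+\det C=1$. Symplecticity then kills the rest of the first block row, so $Q_3$ is an inverse. For the converse, you note that the bottom-left block of $(\id_2\oplus Q_1)S$ must already vanish.

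The differences are only in how the steps are justified:
\begin{itemize}
\item In the elimination step, you use transitivity of the symplectic group on pairs with equal nonzero symplectic product, where the paper writes down the explicit matrix in Eq.~\eqref{eq:solution_eliminate_second_bosons}. Your version also shows that the forward direction needs only $\det B+\det C\neq 0$, not invertibility of $A$, $B$, $C$.
\item In the converse, you use the column invariant together with a rank argument, where the paper argues through the lower block row of $Q$.
\end{itemize}
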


The requirement for the individual blocks to be invertible is for the sake of simplicity. On the level of Gaussian unitaries, this proves a decomposition of an arbitrary unitary acting on three modes
into the right hand side of the GYB Eq.~\eqref{eq:yb} up to a phase. The other decomposition can be computed in a similar way under similar constraints.

To prove Observation~\ref{obs:symplectic_yb}, the following two observations are helpful: First, in case a matrix $M$ is symplectic, then $M^\transpose$ is also symplectic. Second, for any $2\times 2$ matrix $A$, $A \covblock A^\transpose = \det A \, \covblock$. We also need the following lemma.

\begin{lemma}\label{lemma:symplectic_eliminate_second}
    Consider two invertible $2\times 2$ matrices $B$ and $C$. In case $\det B \neq -\det C$, there exists a $4\times 4$ symplectic matrix $Q$ and another invertible $2\times 2$ matrix $B'$ s.t.
    \begin{equation}\label{eq:symplectic_eliminate_second}
        Q \begin{pmatrix} B \\C \end{pmatrix} = \begin{pmatrix} B' \\0 \end{pmatrix}.
    \end{equation}
    If, on the other hand, $\det B = -\det C$, then no symplectic $Q$ satisfying Eq.~\eqref{eq:symplectic_eliminate_second} exists.
\end{lemma}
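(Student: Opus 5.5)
Write $Q=\begin{pmatrix} P & R \\ T & W\end{pmatrix}$ with $2\times 2$ blocks. My plan is to reduce the existence of $Q$ to a single condition on its lower block row, $(T\ W)$. Condition~\eqref{eq:symplectic_eliminate_second} says exactly that $TB + WC = 0$. Since $C$ is invertible, this gives $W = -TBC^{-1}$.

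The lower block row of a symplectic $Q$ must satisfy $TJ_2T^\transpose + WJ_2W^\transpose = J_2$. Using the identity $AJ_2A^\transpose = \det A\, J_2$ stated above, this becomes $\det T + \det W = 1$. Inserting $W=-TBC^{-1}$ gives $\det W = \det T\,\det B/\det C$, hence
\[
\det T\,\bigl(\det C + \det B\bigr)/\det C = 1 .
\]

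\textbf{Existence.} If $\det B \neq -\det C$, I choose $T$ to be any $2\times 2$ matrix with $\det T = \det C/(\det B + \det C)$, for instance a diagonal one, and then set $W=-TBC^{-1}$. The row $(T\ W)$ then has "symplectic norm" $1$.

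It remains to complete $(T\ W)$ to a full symplectic matrix. I would do this with the symplectic Gram--Schmidt procedure: the two rows of $(T\ W)$ span a $2$-dimensional symplectic subspace of $\mathbb{R}^4$ on which the form equals $J_2$, and its symplectic complement supplies the upper rows $(P\ R)$. Alternatively, one can write the completion explicitly, e.g.\ $R=0$ and $P=\covblock\,T^{-\transpose}\covblock^{\transpose}$ up to a correction when $T$ is invertible. This is the only step that is not fully mechanical.

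Once $Q$ is symplectic, $B' = PB+RC$ is invertible because $Q\binom{B}{C}$ has full rank $2$: the matrix $Q$ is invertible and $\binom{B}{C}$ has rank $2$.

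\textbf{Non-existence.} If $\det B = -\det C$, the displayed equation reads $0=1$, which is a contradiction. So no symplectic $Q$ can exist. This relies only on the invertibility of $C$, which lets us solve $TB+WC=0$ for $W$.
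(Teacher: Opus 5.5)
Your proof is correct. The non-existence half is the paper's argument: $TB+WC=0$ gives $W=-TM$ with $M=BC^{-1}$, the lower-right block of the symplectic condition becomes $\det T+\det W=1$, and $\det M=-1$ gives a contradiction.

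The existence half takes a different route. The paper never analyzes the lower block row. It simply writes down
\[
Q=\frac{1}{\sqrt{|1+\det M|}}\,(\id_2\otimes Z)^s\begin{pmatrix}-\covblock M^\transpose\covblock & \id_2\\ \id_2 & -M\end{pmatrix},
\]
with $s=1$ exactly when $\det M<-1$, and leaves symplecticity as a direct check. The anti-symplectic factor $\id_2\otimes Z$ fixes the sign of $QJ_4Q^\transpose$. In your notation this is $T=|1+\det M|^{-1/2}Z^s$, $W=-TM$, $P=-T\covblock M^\transpose\covblock$, $R=T$.

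Your derivation shows instead that $\det T=\det C/(\det B+\det C)=1/(1+\det M)$ is forced for \emph{every} admissible $Q$. This buys three things:
\begin{itemize}
\item both halves of the lemma come out of a single equation;
\item the full solution set is parametrized: any $T$ with that determinant, $W=-TBC^{-1}$, and any symplectic completion;
\item every solution satisfies $\|Q\|^2\geq|\det T|=1/|1+\det M|$. This strengthens the paper's remark that its particular solution diverges (infinite squeezing) as $\det M\to-1$.
\end{itemize}
The paper's version is a closed formula that can be used directly in the bosonic Euler-type decomposition. It also makes $B'$ explicit: since $-\covblock M^\transpose\covblock=\det M\,M^{-1}$, one gets $B'=\pm|1+\det M|^{1/2}Z^sC$, so invertibility is immediate.

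One correction to your aside: a completion with $R=0$ is impossible. Every admissible $T$ is invertible, since its determinant is $1/(1+\det M)$. With $R=0$, the off-diagonal block of $QJ_4Q^\transpose$ would be $P\covblock T^\transpose$, which forces $P=0$ and contradicts $P\covblock P^\transpose=\covblock$. Any explicit completion needs $R\neq 0$, as in the formula above. Your primary argument, taking the symplectic complement of the span of the rows of $(T\ W)$ and normalizing a basis of it, is sound and suffices.
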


\begin{proof}
Suppose $\det B \neq -\det C$. Let $M = BC^{-1}$ and $s = 1$ if $\det M < -1$, or $s=0$ if $\det M > -1$. A straightforward check shows that the matrix 
\begin{equation}
    Q =
    \frac{1}{\sqrt{\vert 1 + \det M\vert}} (\id_2 \otimes Z)^s \begin{pmatrix}
        -\covblock M^\transpose \covblock & \id_2 \\ \id_2 & - M
    \end{pmatrix}
    \label{eq:solution_eliminate_second_bosons}
\end{equation}
is symplectic and satisfies Eq.~\eqref{eq:symplectic_eliminate_second}. Let now $\det B = -\det C$. We assume that the desired symplectic matrix
\[ Q = \begin{pmatrix} E & F \\ G & H \end{pmatrix}\]
with $2\times 2$ blocks $E,F,G$ and $H$ exists, and derive a contradiction. From Eq.~\eqref{eq:symplectic_eliminate_second}, we get $GB + HC = 0$, or equivalently, $H = - GM$ with $M = B C^{-1}$ as above. Since $Q$ is symplectic, $G \covblock G^\transpose + H \covblock H^\transpose = \covblock$. With the observation that $G \covblock G^\transpose = \det G \,\covblock$ and likewise for $H$, this is equivalent to $\det G + \det H = 1$. The contradiction is obtained with $H = -GM$ and $\det M = -1$.
\end{proof}

We can now state the proof of Observation~\ref{obs:symplectic_yb}.

\begin{proof}[Proof of Observation~\ref{obs:symplectic_yb}]
From Lemma~\ref{lemma:symplectic_eliminate_second}, we get the existence of $Q_1$ with 
\[
    (\id_2 \oplus Q_1) S = \begin{pmatrix}
        A & D & G \\
        B' & E' & H' \\
        0 & F' & I'
    \end{pmatrix}
\]
for some $2\times2$ matrices $B', E', F', H'$ and $I'$. Since the matrix obtained this way is symplectic, we have $\det A + \det B' = 1$, and hence $\det A \neq -\det B'$. One can hence construct another $4\times 4$ symplectic matrix $Q_2$ with
\[
S_2 = (Q_2 \oplus \id_2) (\id_2 \oplus Q_1) S = \begin{pmatrix}
    \id & D' & G' \\
    0 & E'' & H'' \\
    0 & F' & I'
\end{pmatrix}.
\]
Note that the top left bock is $\id$, which can be achieved by a suitable $Q_2$. We now show that $D' = G' = 0$, as is the case with orthogonal matrices of this form. From examining the equations obtained by $S_2 J_{6} S_2^\transpose = J_6$,  we find that the block 
$\begin{pmatrix} E'' & H'' \\ F' & I'\end{pmatrix}$ is itself symplectic and hence has an inverse. We furthermore get the equation
\[
    \begin{pmatrix}
        D' & G'
    \end{pmatrix} J_4 \begin{pmatrix} E'' & H'' \\ F' & I'\end{pmatrix} = 0,
\]
and can infer that $D'=G'=0$. Thus, we can introduce $Q_3$ s.t. $S_2 = \id_2 \oplus Q_3^{-1}$. This proves the decomposition in Eq.~\eqref{eq:symplectic_yb_to_show}.

We now show that in case $\det B = -\det C$, there cannot exist such a decomposition. Note that for any decomposition $(\id_2 \oplus  Q_3) ( Q_2 \oplus \id_2)(\id_2 \oplus Q_1) S = \id_6$, the bottom left block of $(\id_2 \oplus  Q_1)S$ must vanish. Indeed, suppose this were not the case. An application of $ Q_2 \oplus \id_2$ leaves the block untouched, and a following application of $\id_2 \oplus Q_3$ would preserve the rank of the bottom left $4\times 2$ block. Thus, $(\id_2 \oplus  Q_3) (\tilde Q_2 \oplus \id_2)(\id_2 \oplus \tilde Q_1) S \neq \id_6$. The non-existence of the decomposition hence follows from the second statement in Lemma~\ref{lemma:symplectic_eliminate_second}.
\end{proof}

Note that the solution we provide in Eq.~\eqref{eq:solution_eliminate_second_bosons} diverges when $\det M = \det B/\det C \to -1$. Physically, this means that one unitary in the GYB decomposition would require infinite squeezing in case $\det B = -\det C$ (any bosonic Gaussian unitary can be decomposed into two passive and a squeezing part). This leaves the question of whether there exists a symplectic $S$ as in Observation~\ref{obs:symplectic_yb} with $\det B = -\det C$. An explicit example of such a matrix is given by
\[
    S = \begin{pmatrix}
        \id & \frac{\beta}{\sqrt{\beta^2 + 1}} Z & \frac{\beta}{\sqrt{\beta^2 + 1}} \id \\
        \beta Z & \sqrt{\beta^2 + 1} \id & 0 \\
        \beta \id & \frac{\beta^2}{\sqrt{\beta^2 + 1}} Z & \frac{-1}{\sqrt{\beta^2 + 1}} \id
    \end{pmatrix},
\]
where $\beta$ is a non-zero real number. Therefore, this matrix cannot be decomposed in the layout of Observation~\ref{obs:symplectic_yb}. However, with the symplectic matrices
\[
    Q_1 = \begin{pmatrix} 0 & \id \\ \id & 0\end{pmatrix}, \quad 
    Q_2 = \frac{1}{\sqrt{1+\beta^2}} \begin{pmatrix} \id & \beta\id \\ \beta\id & -\id \end{pmatrix},
\]
and
\[
    Q_3 = \begin{pmatrix} \beta Z & \sqrt{1+\beta^2} \id \\ \sqrt{1+\beta^2}\id & \beta Z \end{pmatrix},
\]
we get a decomposition
\[
    (Q_3 \oplus \id_2) (\id_2 \oplus Q_2) (Q_1 \oplus \id_2) S =\id_6
\]
into the other layout for the GYB relation. 

In the next observation, we comment on the LR relation for bosonic Gaussian states.

\begin{observation}\label{obs:bosonic_lr}
Let
\[ \Gamma = \begin{pmatrix}
       A & B^\transpose & C^\transpose \\
        B & D &E^\transpose\\
        C & E & F
    \end{pmatrix}
\]
composed of $2\times$ 2 matrices be the CM of a pure bosonic Gaussian state on three modes, and let $B$ and $C$ be invertible. If $\det B \neq -\det C$, there exist two symplectic $4\times4$ matrices $Q_1$ and $Q_2$ s.t.
\[
    (Q_2\oplus\id_2)(\id_2 \oplus Q_1)\Gamma (\id_2 \oplus Q_1^\transpose)(Q_2^\transpose\oplus\id_2) = \id_6.
\]
\end{observation}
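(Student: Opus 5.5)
We mirror the proof of Observation~\ref{obs:symplectic_yb}, now acting from both sides. The plan is to first choose $Q_1$ so that, after congruence, the block in position $(3,1)$ vanishes. Then $Q_2$ is chosen to bring the top-left block to $\id_2$. Finally, we show that the resulting pure-state CM forces the first mode to decouple and the remaining two modes to be in the vacuum.

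\textbf{Step 1 (choice of $Q_1$).} The first block column of $\Gamma$ is $(A,B,C)^\transpose$. Since $B$ and $C$ are invertible with $\det B\neq-\det C$, Lemma~\ref{lemma:symplectic_eliminate_second} gives a symplectic $Q_1$ with $Q_1(B;C)=(B';0)$, where $B'$ is invertible. Right multiplication by $(\id_2\oplus Q_1^\transpose)$ acts only on block columns $2,3$, so it leaves the first block column untouched. Hence
\[
\Gamma' = (\id_2 \oplus Q_1)\Gamma(\id_2\oplus Q_1^\transpose)
\]
is symmetric and symplectic, with first block column $(A,B',0)^\transpose$ and, by symmetry, first block row $(A,B'^\transpose,0)$.

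\textbf{Step 2 (choice of $Q_2$).} We now work on modes $1,2$. The $4\times4$ block $\begin{pmatrix}A & B'^\transpose\\ B' & D'\end{pmatrix}$ is not symplectic on its own, so we cannot simply invert it. Instead we use the symplectic condition on the first block column,
\[
\Gamma' J_6 \Gamma'^\transpose = J_6 \quad\Longrightarrow\quad A\covblock A^\transpose + B'^\transpose\covblock B' = \covblock ,
\]
which gives $\det A+\det B'=1$. In particular $\det A\neq -\det B'$, because that would require $1=0$. Both $A$ and $B'$ are invertible: $A$ is a positive-definite diagonal block of a positive CM, and $B'$ is invertible by Lemma~\ref{lemma:symplectic_eliminate_second}. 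Applying Lemma~\ref{lemma:symplectic_eliminate_second} to the column $(A;B')$ therefore gives a symplectic $\tilde Q$ with $\tilde Q(A;B')=(A';0)$. Then $\Gamma''=(\tilde Q\oplus\id_2)\Gamma'(\tilde Q^\transpose\oplus\id_2)$ has first block column $(A'',0,0)^\transpose$ with $A''=A'\tilde Q_{11}^\transpose$, where $\tilde Q_{11}$ is the top-left block of $\tilde Q$. This requires checking that the right action does not spoil the zeros. The $(2,1)$ block becomes $(\tilde Q\Gamma')_{2\cdot}\tilde Q^\transpose_{\cdot 1}$, and by symmetry of $\Gamma''$ together with the vanishing of $(\tilde Q\Gamma')_{21}$ I expect this to reduce to zero; I would verify it via the direct-sum argument of the next step.

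\textbf{Step 3 (decoupling and final normalisation).} We use that $\Gamma''$ is symmetric, symplectic and positive with vanishing off-diagonal blocks in the first row and column. Then $\Gamma''=A''\oplus\Gamma_{23}$, with $A''$ a $2\times2$ pure CM (symmetric, $\det=1$, positive) and $\Gamma_{23}$ a $4\times4$ pure CM. A single-mode symplectic $P$ brings $A''$ to $\id_2$, and a two-mode symplectic $P'$ brings $\Gamma_{23}$ to $\id_4$, by the Williamson decomposition quoted above with all $\nu_i=1$. Since $P'$ acts on modes $2,3$, it would be applied after $Q_2$ as a third matrix, which does not match the stated two-matrix form. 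The fix is to postpone the step-2 symmetrisation: choose $\tilde Q$ in Step 2 so that the resulting first column is exactly $(\id_2,0,0)^\transpose$ (absorbing $P$ into $Q_2$), and choose $Q_1$ in Step 1 so that modes $2,3$ end up in the vacuum. For the latter, the freedom in $Q_1$ is composition with any symplectic $K$ satisfying $K\cdot(B';0)=(B';0)$, i.e.\ $K=\begin{pmatrix}\id&*\\0&*\end{pmatrix}$, composed with symplectic transformations on mode $3$ alone. I would redo Step 1 using this freedom, so that after Step 2 the $(2,3)$ block carries exactly the vacuum.

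\textbf{Main obstacle.} The hardest part is this bookkeeping in Step 3: ensuring that two two-mode symplectics suffice, rather than three, by exploiting the residual freedom in $Q_1$. Concretely, I would parametrise $Q_1\to(\id_2\oplus S_3)\,Q_1$ and show that the mode-$(2,3)$ block can be normalised, counting parameters and using the fact that pure CMs on modes $2,3$ with prescribed correlations to mode $1$ form an orbit of this residual group.
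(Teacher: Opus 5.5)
Your Step 1 is correct and coincides with the paper's first step. However, you miss its decisive consequence, and the rest of your plan rests on a false premise. Write $E'$ for the $(3,2)$ block of $\Gamma'$. The $(3,1)$ block of $\Gamma' J_6 \Gamma'^\transpose = J_6$ reads $E'\covblock B' = 0$, because the other two terms contain the vanishing blocks $(\Gamma')_{31}$ and $(\Gamma')_{13}$. Since $B'$ is invertible, this forces $E'=0$. Hence $\Gamma' = \Gamma'_{12}\oplus F'$ with $\Gamma'_{12}=\begin{pmatrix} A & B'^\transpose\\ B' & D'\end{pmatrix}$, and both summands are symmetric, positive and symplectic. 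They are therefore pure CMs on modes $1,2$ and on mode $3$: after $Q_1$, mode $3$ has already decoupled.

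So your claim in Step 2 that the $4\times4$ block ``is not symplectic on its own'' is false, and its symplecticity is exactly what finishes the proof, as in the paper. Write $F'=PP^\transpose$ and $\Gamma'_{12}=SS^\transpose$ with $P,S$ symplectic (Williamson with all $\nu_i=1$). Replace $Q_1$ by $(\id_2\oplus P^{-1})Q_1$; this only left-multiplies the third block row, so the zero $(3,1)$ block survives. Then set $Q_2=S^{-1}$. No third matrix on modes $2,3$ is needed. The ``main obstacle'' and the residual-freedom/parameter-counting fix in your Step 3 are artifacts of having missed $E'=0$, and that fix is not an argument as it stands.

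Independently, Step 2 is wrong as written. The elimination lemma is a one-sided statement. In Step 1 it survives the congruence only because $\id_2\oplus Q_1^\transpose$ leaves the first block column untouched. In Step 2, $\tilde Q$ acts on mode $1$ itself, so the right factor $\tilde Q^\transpose\oplus\id_2$ mixes the first two block columns and destroys the zero. In fact it always does so here. If $\tilde Q\Gamma'_{12}$ has first block column $(A';0)$, symplecticity forces $\tilde Q\Gamma'_{12}=A'\oplus Y$. Then $\tilde Q\Gamma'_{12}\tilde Q^\transpose=(A'\oplus Y)(\Gamma'_{12})^{-1}(A'\oplus Y)^\transpose$, whose off-diagonal block $-A'\covblock B'^\transpose\covblock Y^\transpose$ is invertible, hence nonzero. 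Concretely, take the two-mode squeezed CM $T=\begin{pmatrix} c\,\id_2 & sZ\\ sZ & c\,\id_2\end{pmatrix}$ with $c^2-s^2=1$ and $s\neq 0$. The choice $\tilde Q=T^{-1}$ eliminates the first block column but gives $\tilde Q T\tilde Q^\transpose=T^{-1}$, which is not block diagonal. So the block-diagonal $\Gamma''$ you assume at the start of Step 3 is never obtained this way. The right tool for the congruence on modes $1,2$ is the factorization $\Gamma'_{12}=SS^\transpose$, not the elimination lemma.
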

Note that $\id_6$ is the CM of the vacuum state. The requirement of invertibility is again for the sake of simplicity. Up to a phase, from the observation we get a decomposition into one of the two gate layouts necessary for the LR relation. The other decomposition can be obtained analogously and again under similar constraints.

\begin{proof}
Following the arguments from above, there exists a $4\times 4$ symplectic matrix $Q_1$ s.t. the lower left and upper right blocks of $\Gamma_1 = (\id_2\oplus Q_1) S (\id_2\oplus Q_1^\transpose)$ vanish. Denote now
\begin{equation} \label{eq:bgs_cm_close_to_factorization}
    \Gamma_1 = \begin{pmatrix}
        A & B'^\transpose & 0 \\
        B' & D' & E'^\transpose\\
        0 & E' & F'
    \end{pmatrix}.
\end{equation}
We will show that $E' = 0$. From $\Gamma_1J_6\Gamma_1^\transpose = J_6$, we get $E' \covblock B' = 0$. Since, $\begin{pmatrix} B'^\transpose & 0\end{pmatrix} =\begin{pmatrix} B^\transpose & C^\transpose\end{pmatrix} Q_1^\transpose$, we have that $B'$ is invertible, and hence $E' = 0$. Furthermore, $F'$ and $\begin{pmatrix}A & B'^\transpose \\ B' & D'\end{pmatrix}$ are symplectic, and from the positivity of $\Gamma_1$, we get that they correspond to the CMs of pure bosonic Gaussian states on a single and two modes respectively. Those can be each decomposed into Gaussian unitaries on a single mode (which can be appended to $Q_1$), and one on two modes respectively, acting on the vacuum state with a CM of $\id$. 
\end{proof}

Finally, note that the arguments surrounding Eq.~\eqref{eq:bgs_cm_close_to_factorization} can be repeated in case $F'$ corresponds to $n-2$ modes instead of a single one: We get that the corresponding state factorizes between the first two and remaining $n-2$ modes in case $B'$ is invertible. Such a CM arises when applying the symmetric Euler decomposition algorithm (specifically, the two column version we discuss in Appendix~\ref{app:sed_two_columns_version}) to the CM of an arbitrary pure bosonic Gaussian state. In principle, this method can therefore be used to construct bosonic RSF circuits. In order for the algorithm to function as intended, it is however necessary that in each step, the requirements for the first statement in Lemma~\ref{lemma:symplectic_eliminate_second} are fulfilled.

To summarize, we have investigated the GYB and LR relations for other gate sets. Next to two trivial examples on qubits, we have shown that bosonic Gaussian systems potentially satisfy such equations. In full generality, however, this cannot be true as there are explicit counterexamples. One may be able to circumvent the related restriction by, e.g., solving instead an approximate GYB relation. For applications of this in algorithms such as the inner product algorithm, one then needs to bound the error repeatedly imposed by such a procedure. A way of possibly doing this is by bounding the errors imposed on the states rather than on the unitaries (see, e.g., Ref.~\cite{VaGa24}).

\onecolumngrid
\section{Alternative and enhanced Symmetric Euler Decomposition Algorithms} \label{app:alternative_enhanced_sed}

In the main text, we presented Algorithm~\ref{alg:symmetric_euler_decomposition}, which takes as an input the CM of a pure FGS, and outputs sequences of Givens rotations, from which the CM can be constructed. In order to obtain a matchgate circuit in RSF, one has to combine these rotations in a specific way. Here, we present two alternative algorithms which directly output circuits in RSF. The first one is a simple modification of Algorithm~\ref{alg:symmetric_euler_decomposition}. It simultaneously eliminates two columns in the CM at once instead of a single one. The idea here is to find $4\times 4$ rotation matrices that eliminate the last two components of two given vectors of size $4$. With this procedure, two qubits can be disentangled with a single diagonal. In the generic case, the circuits produced by this algorithm will satisfy the conditions of Theorem~\ref{thm:RSF_optimal} and hence have minimal matchgate count for preparing the state. There are, however, some special cases in which some gates outputted by the algorithm have only a single nonzero nonlocal parameter, i.e., the circuits are not necessarily optimal. One particular case here is when certain blocks in the first two columns of the CM do not have two linearly independent vectors (see also Appendix~\ref{app:proof_rsf_circuits_optimal}) and hence some degrees of freedom in the matchgate are not used. To deal with this issue, we introduce here another algorithm that progressively selects larger blocks in the CM until the block contains two linearly independent vectors. Using this third algorithm, we furthermore can prove that MGCs have asymptotically minimal gate count for generating FGSs.

We use, here and in the following, the notation $[a:b] = \{ a, a+1,\ldots, b-1,b\}$. As in the main text, given a matrix $M$, we denote by $M_{ \{a_1,\ldots,a_M\}, \{b_1,\ldots,b_N\}}$ the matrix obtained when selecting rows and columns with indices $\{a_1,\ldots,a_M\}$, and $\{b_1,\ldots,b_N\}$. By $0_n$, we denote the zero column vector of length $n$. Given two vectors $a = (a_1,\ldots, a_m)^\transpose$ and $b = (b_1,\ldots, b_n)^\transpose$, we denote their concatenation by $(\concat{a,b}) = (a_1, \ldots a_m, b_1, \ldots, b_n)^\transpose$. Finally, with $\multiplicity{\lambda}{M}$, we denote the multiplicity of the eigenvalue $\lambda$ of the matrix $M$.

\subsection{Two column version} \label{app:sed_two_columns_version}

We start by presenting a simple modification to Algorithm~\ref{alg:symmetric_euler_decomposition}. Before doing so, we define two subroutines. The first one, called \texttt{elim\_last\_two}, takes as an input two real vectors $x= \begin{pmatrix}x_1&x_2&x_3&x_4\end{pmatrix}^\transpose$ and y= $\begin{pmatrix}y_1&y_2&y_3&y_4\end{pmatrix}^\transpose$, and outputs a rotation matrix that eliminates the last two components of $x$ and $y$. This can be implemented, for instance, by first determining $R_1$ s.t. $R_1 x = \begin{pmatrix}\vert x\vert &0&0&0\end{pmatrix}^\transpose$, and then determining a rotation $R_2 = \id_1 \oplus R_2'$ which eliminates the last two components of $R_1 y$. Note that if $x$ and $y$ are linearly dependent, then $R_2$ can be any $3\times 3$ rotation matrix. The second subroutine is called \texttt{to\_williamson}, takes as an input a CM and outputs a rotation $R$ that transforms the CM to its Williamson normal form. For CMs of pure states, this rotation can be computed using Algorithm~\ref{alg:symmetric_euler_decomposition} (the general cases can be addressed with a real Schur decomposition~\cite{SuTa20}). We can now present the modification of Algorithm~\ref{alg:symmetric_euler_decomposition} in which two columns are eliminated simultaneously.

\begin{algorithm}[H]
\caption{Symmetric Euler decomposition, two column version}
\label{alg:twocol_symmetric_euler_decomposition}
\KwIn{
    A matrix real orthogonal antisymmetric matrix $\Gamma$.
}
\KwOut{
    A sequence of rotations, corresponding to matchgates, and a CM of a computational basis state obtained from applying the rotations to $\Gamma$.
}

$q \gets 1$, $\Gamma^{(1)} \gets \Gamma$

\While{$q \leq n$}{

  \eIf{$|\Gamma^{(q)}_{2q-1,2q}|=1$ \label{algline:tsed:check_nonentangled} }{
  
    $\Gamma^{(q+1)} \gets \Gamma^{(q)}$ \tcp{Qubit $q$ not entangled, can be skipped.}
    
    $q\gets q+1$
    
    }{

    $k \gets $ smallest row index s.t. $\Gamma^{(q)}_{[2k+1:2n],[2q-1, 2q]} = 0$, or $n$ if no block vanishes \label{algline:tsed:assign_l}
    
    $\Gamma^{(q,k)} \gets \Gamma^{(q)}$
    
    \While{$k \geq q + 2$}{

        $\tilde R^{(q,k)} \gets$ \texttt{elim\_last\_two($\Gamma^{(q,k)}_{[2k-3:2k],[2q-1:2q]}$)} \label{algline:tsed:rotate_zero}
        
        $R^{(q,k)} \gets \id_{2k-4} \oplus \tilde R_i \oplus \id_{2n-2k}$
    
        $\Gamma^{(q,k-1)} \gets R^{(q,k)} \Gamma^{(q,k)} R^{(q,k)\transpose}$ \label{algline:tsed:add_gate1}

        $k \gets k-1$,
         
        }

    $\tilde R^{(q,k)} \gets$ \texttt{to\_williamson($\Gamma^{(q,k)}_{[2q-1: 2q+2],[2q-1:2q+2]}$)} \label{algline:tsed:to_williamson}

    $R^{(q,k)} \gets \id_{2q - 2} \oplus \tilde R_i \oplus \id_{2n-2q - 2}$ \label{algline:tsed:add_gate2}
    
    $\Gamma^{(q+2)} \gets R^{(q,k)} \Gamma^{(q,k)} R^{(q,k)\transpose}$\label{algline:tsed:last_update_gamma}
    
    $q \gets q + 2$ \label{algline:tsed:increase_k} \tcp{Two qubits have been disentangled with a single diagonal.}
  }
}

\Return{} sequence of rotations $(R^{(i, j)})$, $\Gamma^{(q)}$

\end{algorithm}

In the following, we briefly justify why this algorithm works. First, note that the output circuit is indeed in RSF, since the gates produced in lines~\ref{algline:tsed:add_gate1} and~\ref{algline:tsed:add_gate2} sequential values of $k$ are in a diagonal pattern. Moreover, due to line~\ref{algline:tsed:increase_k}, all of these diagonals act on distinct qubits. Next, we verify that the output CM of the algorithm corresponds to a computational basis state. To this end, we consider the CM $\Gamma^{(q,k)}$ in line~\ref{algline:tsed:last_update_gamma} before the update. We show this inductively. Suppose that the first $q-1$ corresponding qubits are already in a computational basis state. We can then write
\[
    \Gamma^{(q,k)} = \bigoplus_{i=1}^{q-1} \lambda_i \covblock \, \oplus \, \begin{pmatrix}
        \Gamma_{\partyA\partyA} & -\Gamma_{\partyB\partyA}^\transpose & 0 \\
        \Gamma_{\partyB\partyA} & \Gamma_{\partyB\partyB} & -\Gamma_{\partyB\partyC}^\transpose \\
        0 & \Gamma_{\partyB\partyC} & \Gamma_{\partyC\partyC} \\
    \end{pmatrix},
\]
with $\vert\lambda_i\vert = 1$, and the blocks $\Gamma_{\partyA\partyA}$, $\Gamma_{\partyB\partyB}$ corresponding each to a single qubit and $\Gamma_{\partyC\partyC}$ to the remaining $n-q-1$ qubits. We show that $\Gamma_{\partyB\partyC} = 0$, i.e., that the state corresponding to $\Gamma$ factorizes over the first $q+1$ and the remaining qubits. Since $\Gamma_{\partyA\partyA}$ is a $2\times 2$ antisymmetric matrix, and due to the check in step~\ref{algline:tsed:check_nonentangled}, we have $\Gamma_{\partyA\partyA}\Gamma_{\partyA\partyA}^\transpose = \alpha^2\id$ for some $0 \leq \alpha < 1$. From $\Gamma^{(q,k)}\Gamma^{(q,k)\transpose} = \ID$, we get $\Gamma_{\partyA\partyA}\Gamma_{\partyA\partyA}^\transpose + \Gamma_{\partyB\partyA}^\transpose\Gamma_{\partyB\partyA} = \ID$ and $\Gamma_{\partyB\partyA}^\transpose\Gamma_{\partyB\partyC}^\transpose = 0$. Thus, we have that $\Gamma_{\partyB\partyA}^\transpose\Gamma_{\partyB\partyA} = (1-\alpha^2)\ID$. That is, $\Gamma_{\partyB\partyA}$ has full rank, and $\Gamma_{\partyB\partyC} = 0$. The update in line~\ref{algline:tsed:last_update_gamma} disentangles the state on qubits $q$ and $q+1$. Since no gate have acted on the first $q-1$ qubits, the first $q+1$ qubits are now in a computational basis state. It hence follows that the output CM of Algorithm~\ref{alg:twocol_symmetric_euler_decomposition} indeed corresponds to a computational basis state.

\subsection{Enhanced Symmetric Euler decomposition}
\label{app:enhaced_decomposition}

So far, we have presented two simple algorithms for obtaining RSF matchgate circuits to generate any pure FGS when given the CM as an input. Algorithm~\ref{alg:symmetric_euler_decomposition} eliminates single columns in the CM using Givens rotations, whereas  Algorithm~\ref{alg:twocol_symmetric_euler_decomposition} eliminates two columns at once by finding instead a sequence of $4 \times 4$ rotations. In case the $4\times 2$ block in step~\ref{algline:tsed:rotate_zero} of Algorithm~\ref{alg:twocol_symmetric_euler_decomposition} contains two linearly dependent columns, the nonlocal parameters of the matchgate corresponding to $\tilde R^{(q,k)}$ are not uniquely determined. Here, we introduce another algorithm that utilizes the nonlocal degrees of freedom of each matchgate more effectively. The core idea of eliminating entries in the CM remains the same, whereas the main modification is concerned with selecting blocks which always contain two linearly independent vectors. Doing this has a significant consequence: As we show below, each gate placed by this algorithm maximally reduces the Schmidt rank. The number of gates is therefore upper bounded by \[K = \sum_{i=1}^{n-1} \lsr_i( \ket{\psi}),\]
where $\lsr_i( \ket{\psi})$ denotes the logarithm of the Schmidt rank in the bipartition $1,\ldots,i\vert i+1,\ldots,n$. Hence, the MGC produced by this algorithm has asymptotically optimal gate count among all composed of nearest neighbor gates, as the best such circuit needs at least $K/2$ gates. In the following, we present the algorithm and later prove the upper bound on the number of gates.

To improve readability, we separate the enhanced symmetric Euler decomposition (\texttt{esed}) algorithm into two subroutines. The first, \texttt{esed\_inner}, takes as an input the CM of an FGS $\ket{\psi}$ that does not factorize in any bipartition, and outputs a diagonal of matchgates that disentangles at least the first two qubits. The second, \texttt{esed\_outer}, identifies regions of consecutive qubits in which the state does not factorize, and uses \texttt{esed\_inner} to disentangle them. After each application of \texttt{esed\_inner}, \texttt{esed\_outer} is used again with the resulting CM of the qubits that were not disentangled. As we see below, this recursion will always terminate since the entanglement in each block is always reduced by \texttt{esed\_inner}. We again make use of the subroutines \texttt{elim\_last\_two} and \texttt{to\_williamson} introduced in the previous subsection. Below, we state now the first subroutine.

\begin{algorithm}[H]
\caption{\texttt{esed\_outer}: Enhanced Symmetric Euler decomposition main routine}
\label{alg:outer_enhanced_symmetric_euler_decomposition}
\KwIn{
    A CM $\Gamma$ of a pure FGS on $n$ qubits.
}
\KwOut{
    A sequence of rotations, corresponding to matchgates, that maps $\Gamma$ to the CM of a computational basis state.
}

\uIf{$n=1$}{
\Return nothing
}

\uElseIf{$n=2$}{
\Return \texttt{to\_williamson($\Gamma$)}
}

\Else{
$k \gets$ the smallest $k'>0$ s.t. $\Gamma_{[1:2k'], [1:2k']}$ is the CM of a pure FGS. 

sequence $(\tilde R_i) \gets$ \texttt{esed\_inner($\Gamma_{[1:2k], [1:2k]}$)}.

sequence $(R_i) \gets (\tilde R_i \oplus \id_{2(n-k)})$

$R \gets \prod_i R_i$, $\tilde \Gamma \gets R \Gamma R^\transpose$.

sequence $(\tilde S_i) \gets$ \texttt{esed\_outer($\tilde \Gamma_{[5 : 2k],[5 : 2k]}$)} \label{algline:esed_outer_recursion1}

sequence $(S_i) \gets (\id_4 \oplus \tilde S_i \oplus \id_{2(n-k)})$

\If{$k<n$}{

sequence $(\tilde T_i) \gets $ \texttt{esed\_outer($\Gamma_{[2k+1:2n], [2k+1:2n]}$)} \label{algline:esed_outer_recursion2}

sequence $(T_i) \gets (\id_{2k} \oplus \tilde T_i)$

}

\Return the concatenation of the rotation sequences $(R_i)$, $(S_i),$ and $(T_i)$.
}

\end{algorithm}

Next, we state the algorithm \texttt{esed\_inner} for disentangling two qubits in a way that most efficiently uses the non-local degrees of freedom in each gate.

\begin{algorithm}[H]
\caption{\texttt{esed\_inner}: Enhanced Symmetric Euler decomposition subroutine}
\label{alg:inner_enhanced_symmetric_euler_decomposition}
\KwIn{
    A CM $\Gamma$ of a pure FGS on $n$ qubits, $\Gamma \neq \Gamma_1 \oplus \Gamma_2$ for any CMs $\Gamma_1, \Gamma_2$.
}
\KwOut{
    A sequence of $n-1$ rotations $R_i$ s.t. $\prod_i R_i \Gamma (\prod_i R_i)^\transpose = \lambda_1 \covblock \oplus \lambda_2 \covblock \oplus \Gamma'$, where $|\lambda_1 | = |\lambda_2| = 1$, and $\Gamma'$ is the CM of another pure FGS on $n-2$ qubits.
}

\If{$n = 1$}{\Return{} nothing}

\If{$n > 2$}{
$\Gamma^{(0)} \gets \Gamma$ 

\For{$i \gets 1$ \KwTo $n-2$}{

    $k_i \gets$ smallest possible $k'$ s.t. $\Gamma^{(i-1)}_{[2n - 2i - 1: 2n], [1:k']}$ has two linearly independent columns \label{algline:iesed:def_ki}

    $\tilde \alpha^{(i)}, \tilde\beta^{(i)} \gets$ two linearly independent columns of $\Gamma^{(i-1)}_{[2n - 2i - 1:2n], [1:k_i]}$ \label{algline:iesed:selection_large_vectors}

    $\alpha^{(i)} \gets \tilde \alpha^{(i)}_{[1:4]}$, $\beta^{(i)} \gets \tilde \beta^{(i)}_{[1:4]}$
    \label{algline:iesed:vector_truncation}

    $\tilde R_i \gets$ \texttt{elim\_last\_two($\begin{pmatrix} \alpha^{(i)} & \beta^{(i)} \end{pmatrix}$)}
    \label{algline:iesed:ised:def_small_Ri}

    $R_i \gets \id_{2n - 2i -2} \oplus \tilde R_i \oplus \id_{2i - 2}$ \label{algline:iesed:def_Ri}

    $\Gamma^{(i)} \gets R_i \Gamma^{(i-1)} R_i^\transpose$ \label{algline:iesed:update_gamma}
}}

$\tilde R_{n-1} \gets$ \texttt{to\_williamson($\Gamma^{(n-2)}_{[1:4],[1:4]}$)} \label{algline:iesed_last_R_computation}

$R_{n-1} \gets \tilde R_{n-1} \oplus \id_{2n-4}$

\Return the sequence of rotations $(R_1, \ldots, R_{n-1})$

\end{algorithm}

Note that, although the algorithm \texttt{esed\_outer} (Algorithm~\ref{alg:outer_enhanced_symmetric_euler_decomposition}) is recursive, it still terminates after a finite number of rounds. This is because the recursions in lines~\ref{algline:esed_outer_recursion1} and~\ref{algline:esed_outer_recursion2} of Algorithm~\ref{alg:outer_enhanced_symmetric_euler_decomposition} are applied to CMs of states on a strictly smaller number of qubits. For the rest of this appendix, we will prove that statements on the Schmidt rank reduction are true. This also shows that the state after this procedure is indeed a product state. We start by stating some supplementary results. The following observation is a direct consequence of the result in Ref.~\cite{BoRe04}. 

\begin{observation}\label{observation:lsr_pure_fgs}
    Consider a pure FGS $\ket{\psi}$ shared between parties $\partyA$ and $\partyB$, comprising qubits $1,\ldots,k$ and $k+1,\ldots,n$, with covariance matrix
    \[ \Gamma = \begin{bmatrix}\Gamma_{\partyA\partyA} & -\Gamma_{\partyB\partyA}^\transpose \\ \Gamma_{\partyB\partyA} & \Gamma_{\partyB\partyB}\end{bmatrix}. \]
    For the logarithm of the Schmidt rank $\lsr(\ket\psi, \partyA,\partyB)$ in the bipartition $\partyA|\partyB$, it holds
    \[
    \lsr(\ket{\psi}, \partyA, \partyB) = \frac{1}{2} \operatorname{rank} (\Gamma_{\partyB\partyA}) = |\partyA| - \multiplicity{\ii}{\Gamma_{\partyA\partyA}} = |\partyB| - \multiplicity{\ii}{\Gamma_{\partyB\partyB}}.
    \]
\end{observation}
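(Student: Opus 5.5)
We reduce everything to the Botero--Reznik normal form and then read off the three quantities in that form. By the result of Ref.~\cite{BoRe04}, there are MGCs $U_\partyA$ and $U_\partyB$ on $\partyA$ and $\partyB$ such that $U_\partyA\otimes U_\partyB\ket\psi$ is a product of $n_\partyA$ computational basis qubits in $\partyA$, $n_\partyB$ computational basis qubits in $\partyB$, and $n_\text{e}$ entangled pairs $\cos\chi_i\ket{00}-\sin\chi_i\ket{11}$ with $\chi_i\notin\frac{\pi}{2}\mathbb{Z}$. Local unitaries do not change the Schmidt rank, so $\lsr(\ket\psi,\partyA,\partyB)=n_\text{e}$, since each such pair contributes Schmidt rank exactly $2$. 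It therefore suffices to show that each of the three expressions in the statement also equals $n_\text{e}$, and that each is invariant under the local MGCs.

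For invariance, the local MGCs act on the CM by $R=R_\partyA\oplus R_\partyB$ with $R_\partyA,R_\partyB$ orthogonal. This gives $\Gamma_{\partyB\partyA}\mapsto R_\partyB\Gamma_{\partyB\partyA}R_\partyA^\transpose$, which preserves the rank. It also gives $\Gamma_{\partyA\partyA}\mapsto R_\partyA\Gamma_{\partyA\partyA}R_\partyA^\transpose$, which is an orthogonal similarity and so preserves eigenvalue multiplicities, and likewise for $\Gamma_{\partyB\partyB}$.

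Next we compute in the normal form, where the CM is a direct sum over pairs and single qubits.
\begin{itemize}
\item \emph{Unentangled qubit.} Its $2\times2$ block is $\pm\covblock$, with eigenvalues $\pm\ii$. So it contributes one eigenvalue $\ii$ to the reduced CM of its party and nothing to $\Gamma_{\partyB\partyA}$.
\item \emph{Entangled pair.} A direct computation gives reduced blocks $\cos(2\chi_i)\covblock$ on each side, whose eigenvalues $\pm\ii\cos 2\chi_i$ are not equal to $\ii$. The cross block is $\pm\sin(2\chi_i)$ times a $2\times2$ signed permutation, which has rank $2$.
\end{itemize}
Summing these contributions, $\operatorname{rank}\Gamma_{\partyB\partyA}=2n_\text{e}$. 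For the multiplicity on $\partyA$, we have $\multiplicity{\ii}{\Gamma_{\partyA\partyA}}=n_\partyA$, and hence $|\partyA|-\multiplicity{\ii}{\Gamma_{\partyA\partyA}}=n_\text{e}$ because $|\partyA|=n_\partyA+n_\text{e}$. The same argument gives the statement for $\partyB$.

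The main point needing care is the pair computation, i.e., checking that the off-diagonal block really has full rank $2$ and that the eigenvalues of the reduced blocks avoid $\ii$ exactly when $\chi_i\notin\frac{\pi}{2}\mathbb{Z}$. This uses $\Gamma_{kl}=\frac{\ii}{2}\Tr(\rho[\majo{k},\majo{l}])$ on a two-qubit state, for example $\langle X_1X_2\rangle=-\sin 2\chi$ and $\langle Y_1Y_2\rangle=\sin2\chi$, which produce the two nonzero cross entries. For the multiplicity count, if $\Gamma_{\partyA\partyA}$ is not normal one should note that it is antisymmetric and real, hence normal, so algebraic and geometric multiplicities coincide. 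Alternatively, the rank statement follows directly from purity: $\Gamma\Gamma^\transpose=\id$ gives $\Gamma_{\partyB\partyA}^\transpose\Gamma_{\partyB\partyA}=\id+\Gamma_{\partyA\partyA}^2$. Its rank is the number of eigenvalues of $\Gamma_{\partyA\partyA}$ that are not $\pm\ii$, namely $2|\partyA|-2\,\multiplicity{\ii}{\Gamma_{\partyA\partyA}}$, using that eigenvalues of a real matrix come in conjugate pairs. This gives the second equality without the normal form.
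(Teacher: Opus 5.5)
Your proposal is correct and follows essentially the same route as the paper: use Botero--Reznik to bring $\Gamma$ into normal form via local orthogonal transformations $R_\partyA\oplus R_\partyB$, observe that $\operatorname{rank}(\Gamma_{\partyB\partyA})$ and the multiplicity of the eigenvalue $\ii$ in the diagonal blocks are invariant under these transformations, and read off $2n_{\text{e}}$, $n_\partyA$ and $n_\partyB$ from the normal form. Your additional identity $\Gamma_{\partyB\partyA}^\transpose\Gamma_{\partyB\partyA}=\id+\Gamma_{\partyA\partyA}^2$ (and its analogue for $\partyB$) is a shortcut the paper does not use. It gives the rank--multiplicity equalities directly from purity, so Botero--Reznik is then needed only to identify these quantities with the Schmidt rank.
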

This implies that for pure states on two qubits, the rank of the offdiagonal $2\times 2$ block of the corresponding CM is hence either $0$ or $2$ for a product or entangled state respectively. We will frequently use this fact later.

\begin{proof}
    As stated in Ref.~\cite{BoRe04}, there exist orthogonal matrices $R_\partyA$ and $R_\partyB$, s.t. 
    \[
        (R_\partyA \oplus R_\partyB) \, \Gamma (R_\partyA \oplus R_\partyB)^\transpose = 
        \Big( \bigoplus_{i=1}^{n_\partyA} a_i \covblock \Big)
        \oplus 
            \begin{pmatrix}
                \lambda_1 \covblock & 0  & \cdots & 0 & 0 & \cdots &0 & \mu_1 X \\
                0 & \lambda_2 \covblock & \cdots & 0 & 0 & \cdots & \mu_2 X & 0 \\
                \vdots & \vdots  &  & \vdots & \vdots &  & \vdots & \vdots \\
                0 & 0  & \cdots &  \lambda_{n_{\partyA\partyB}} \covblock & \mu_{n_{\partyA\partyB}} X& \cdots & 0 & 0 \\
                0 & 0  & \cdots &  -\mu_{n_{\partyA\partyB}} X & \lambda_{n_{\partyA\partyB}} \covblock & \cdots & 0 & 0 \\
                \vdots & \vdots  &  & \vdots & \vdots &  & \vdots & \vdots \\
                0 & -\mu_2 X & \cdots & 0 & 0 & \cdots & \lambda_2 \covblock & 0 \\
                -\mu_1 X & 0  & \cdots & 0 & 0 & \cdots &0 & \lambda_1 \covblock
            \end{pmatrix}
        \oplus \Big( \bigoplus_{i=1}^{n_\partyB} b_i \covblock \Big),
    \]
    with $| a_i| = |b_i| = 1$, $|\lambda_i| < 1$, and $\lambda_i^2 + \mu_i^2 = 1$. That is, there are $n_{\partyA\partyB}$ entangled pairs shared between $\partyA$ and $\partyB$, and the number of qubits held by $\partyA$ is given by $|\partyA| = n_\partyA + n_{\partyA\partyB}$, as well as $|\partyB| = n_\partyB + n_{\partyA\partyB}$. The rank of the transformed offdiagonal block is clearly $\rank(R_\partyB \Gamma_{\partyA\partyB} R_\partyA^\transpose )=2 n_{\partyA\partyB}$, and therefore $\rank(\Gamma_{\partyA\partyB}) = 2n_{\partyA\partyB}$. Likewise the multiplicities $n_\partyA = \vert A \vert - n_{\partyA\partyB}$ and $n_\partyB = \vert B \vert - n_{\partyA\partyB}$ of the eigenvalue $\ii$ of the diagonal blocks do not change with the orthogonal transformation. The statements hence follow from $\lsr(\ket{\psi}, \partyA, \partyB) = n_{\partyA\partyB}$.
\end{proof}

The next lemma deals with some properties of the eigenvectors of CMs. It captures the physical intuition that an isolated mode in a subsystem (which is characterized by the existence of a MGC $U$ s.t. $U\rho U^\dagger = \ketbra{0} \otimes \rho'$) is also an isolated mode in the larger system (see also, e.g., Ref.~\cite{FiWh15}).

\begin{lemma}\label{lemma:unity_eigenvectors_of_antisym_submatrix}
Consider a not necessarily pure FGS shared among parties $\partyA$ and $\partyB$ with covariance matrix 
\[ \Gamma = \begin{bmatrix}\Gamma_{\partyA\partyA} & -\Gamma_{\partyB\partyA}^\transpose \\ \Gamma_{\partyB\partyA} & \Gamma_{\partyB\partyB} \end{bmatrix}.\]
In case $v$ is an eigenvector of $\Gamma_{\partyA\partyA}$ to eigenvalue $\ii$, then $(\concat{v, 0_{2 \vert\partyB\vert}})$ is an eigenvector of $\Gamma$ to eigenvalue $\ii$. 
\end{lemma}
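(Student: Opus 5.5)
The plan is to use the constraint $\Gamma\Gamma^\transpose \leq \id$, which holds for the covariance matrix of any (possibly mixed) FGS, and read off from it that the off-diagonal block annihilates $v$. Since $\Gamma$ is real and antisymmetric, $\Gamma\Gamma^\transpose = -\Gamma^2 = \Gamma^\dagger\Gamma$. Hence the matrix inequality is the statement $\|\Gamma w\| \leq \|w\|$ for every complex vector $w$, i.e.\ the operator norm of $\Gamma$ is at most one.

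Set $w = (\concat{v, 0_{2\vert\partyB\vert}})$. By the block structure,
\[
\Gamma w = \begin{pmatrix} \Gamma_{\partyA\partyA} v \\ \Gamma_{\partyB\partyA} v \end{pmatrix} = \begin{pmatrix} \ii v \\ \Gamma_{\partyB\partyA} v \end{pmatrix}.
\]
Taking squared norms gives $\|\Gamma w\|^2 = \|v\|^2 + \|\Gamma_{\partyB\partyA} v\|^2$. On the other hand, the norm bound gives $\|\Gamma w\|^2 \leq \|w\|^2 = \|v\|^2$. Together these force $\|\Gamma_{\partyB\partyA} v\|^2 \leq 0$, so $\Gamma_{\partyB\partyA} v = 0$. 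Therefore $\Gamma w = (\concat{\ii v, 0}) = \ii w$, which is exactly the claim.

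The only step needing care is the translation from the Loewner inequality $\Gamma\Gamma^\transpose\leq\id$ to the norm bound on complex vectors. For real $\Gamma$ we have $\Gamma^\transpose = \Gamma^\dagger$, so $\id - \Gamma^\dagger\Gamma = \id - \Gamma^\transpose\Gamma = \id + \Gamma^2 = \id - \Gamma\Gamma^\transpose \geq 0$ as a Hermitian matrix. Evaluating this on the complex vector $w$ yields $\langle w, (\id - \Gamma^\dagger\Gamma) w\rangle \geq 0$, i.e.\ $\|\Gamma w\|^2 \leq \|w\|^2$. Note that purity is never used, which matches the lemma being stated for not necessarily pure FGS.

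Beyond this, everything is a one-line computation. I do not expect a genuine obstacle; at most one should remark that $v$ may be complex, which the Hermitian formulation above already handles.
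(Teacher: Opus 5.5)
Your proof is correct. It rests on the same fact as the paper's proof, namely $\Gamma\Gamma^\transpose \leq \id$, but it applies that fact more directly.

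The paper first brings $\Gamma_{\partyA\partyA}$ to Williamson normal form with an orthogonal $R$. It then writes $\Gamma_{\partyB\partyA}R^\transpose = (E\ F)$, where $E$ collects the columns belonging to the $k$ unit Williamson eigenvalues. The top-left $2k\times 2k$ block of $\tilde\Gamma\tilde\Gamma^\transpose$, with $\tilde\Gamma = (R\oplus\id)\Gamma(R\oplus\id)^\transpose$, equals $\id_{2k} + E^\transpose E$, and the inequality forces $E=0$. The paper then builds an explicit eigenbasis $v_j$ from the normal form.

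You skip the diagonalization and the explicit eigenbasis. You evaluate the contraction property $\|\Gamma w\|\leq\|w\|$ on the single vector $w=(v,0)$. The only extra remark you need is that a real symmetric positive semidefinite matrix is also positive semidefinite as a Hermitian form on complex vectors, and you justify this correctly.

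What your route buys:
\begin{itemize}
\item It is shorter and more general. The same computation gives $\Gamma_{\partyB\partyA}v=0$ for an eigenvalue of any unit modulus, and in fact for any $v$ with $\|\Gamma_{\partyA\partyA}v\|=\|v\|$.
\item It avoids normal-form bookkeeping. For a block $\lambda_i=1$, the paper's explicit vectors built from $(1,\ii)^\transpose$ are eigenvectors of $\covblock$ to $-\ii$, not $+\ii$. This slip is harmless there only because $E=0$ annihilates both.
\end{itemize}

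What the paper's route buys: it proves at once that $\Gamma_{\partyB\partyA}$ vanishes on the whole real subspace spanned by the fully polarized modes of $\partyA$. Equivalently, these modes are block-decoupled from $\partyB$ in the rotated CM. That is exactly the ``isolated mode'' picture the paper states just before the lemma.
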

Due to this lemma, we clearly have that  $\multiplicity{\ii}{ \Gamma_{\partyA\partyA}} \leq \multiplicity{\ii}{\Gamma}$. An analogous statement holds for eigenvectors to eigenvalue $\ii$ of $\Gamma_{\partyB\partyB}$.

\begin{proof}
Since $\Gamma_{\partyA\partyA}$ is itself the CM of the reduced state on party $\partyA$ (i.e., an antisymmetric matrix with $\Gamma_{\partyA\partyA} \Gamma_{\partyA\partyA}^\transpose \leq \ID$), it can be brought to the Williamson normal form
\[R \Gamma_{\partyA\partyA} R^\transpose = \bigoplus_{i=1}^{\vert\partyA\vert} \lambda_i \covblock = (\ID_k \otimes \covblock) \oplus D \] with an orthogonal matrix $R$, $\lambda_i \geq \lambda_j$ for $i<j$ and $\lambda_i = 1$ for $i \leq k$, where $k = \operatorname{multiplicity}(\ii, \Gamma_{\partyA\partyA})$. The matrix $D$ is of size $2(\vert\partyA\vert - k) \times 2(\vert\partyA\vert - k)$. A basis of the eigenspace to eigenvalue $\ii$ of $\Gamma_{\partyA\partyA}$ can be chosen as
\[ v_j =  R^\transpose \, (\concat{0_{2(j - 1)}, \begin{pmatrix}1 & \ii\end{pmatrix}^\transpose, 0_{2(k - j)}, 0_{2(\vert\partyA\vert - k)}} )\] for $j=1,\ldots,k$. To prove the claim, we need to show that $(\concat{v_j,0_{2|\partyB|}})$ is an eigenvector of $\Gamma$, and hence that $\Gamma_{\partyB\partyA} v_j = 0$ for each $j$. Consider now the matrix $\Gamma_{\partyB\partyA}R^\transpose$ and split up the first $2k$ and the remaining $2(\vert\partyA\vert -k)$ columns into two matrices $E$ and $F$, i.e.
\[ \Gamma_{\partyB\partyA} R^\transpose = \begin{pmatrix} E & F \end{pmatrix}.\]
With this definition, one gets
\[
\Gamma_{\partyB\partyA} v_j 
    = \Gamma_{\partyB\partyA} R^\transpose R v_j = \begin{pmatrix} E & F \end{pmatrix} 
\begin{pmatrix} \concat{0_{2(j - 1)}, (1 \,\,\, \ii)^\transpose , 0_{2(k - j)}} \\ 0_{2(\vert\partyA\vert - k)} \end{pmatrix}
= E \, (\concat{0_{2(j - 1)}, (1 \,\,\, \ii)^\transpose , 0_{2(k - j)}}). 
\]
To complete the proof, we will show that $E=0$: Define a new CM $\tilde \Gamma$, obtained by applying the rotation $R\oplus\id$ to $\Gamma$. That is,
\[ \tilde \Gamma = (R\oplus\id) \Gamma  (R\oplus\id)^\transpose =
	 \begin{pmatrix}
        \ID_k\otimes\covblock & 0 & -E^\transpose \\ 
        0 & D & -F\transpose \\ 
        E & F & \Gamma_{\partyB\partyB}
    \end{pmatrix}, \]
and the top left $2k\times 2k$ block of $\tilde \Gamma \tilde\Gamma^\transpose$ is given by $E^\transpose E + \ID_{2k}$. Since $\tilde \Gamma$ must satisfy $\tilde \Gamma \tilde \Gamma^\transpose \leq \ID$, it follows that $E=0$.
\end{proof}

Let us now verify that Algorithms~\ref{alg:outer_enhanced_symmetric_euler_decomposition} and~\ref{alg:inner_enhanced_symmetric_euler_decomposition} function as intended. We will start with the latter, whose goal it is  to find a diagonal of gates that disentangle two qubits from a given FGS. Note that this goal can in principle can be achieved with a single application of the main loop in Algorithm~\ref{alg:twocol_symmetric_euler_decomposition}. The difference here is, as stated before, that each gate is chosen to maximally reduce the Schmidt rank. Let us elaborate on this point. As can be easily seen, all states satisfy the property
\begin{equation}
    \vert \lsr_{i-1}(\ket{\psi})  - \lsr_{i}(\ket{\psi}) \vert \leq 1, \label{eq:appendix_esed_entanglement_monogamy}
\end{equation}
that is, the Schmidt rank in a bipartition can differ by at most a factor of two when considering another bipartition that differs by a single qubit. For convenience, we set $\lsr_0(\ket{\psi}) = \lsr_n(\ket{\psi}) = 0$ such that Eq.~\eqref{eq:appendix_esed_entanglement_monogamy} also holds on the boundary of the chain. We show for each gate $U$ that, when it is applied to qubits $i$ and $i+1$, the Schmidt rank $\lsr_i(U\ket{\psi})$ attains the minimal value w.r.t. the Schmidt ranks in the bipartitions $1,\ldots, i-1|i,\ldots,n$ and $1,\ldots, i+1|i+2,\ldots,n$. This statement is contained in the following proposition.

\begin{proposition} \label{proposition:iesed_entanglement_reduction}
    Let $\ket{\psi}$ be a FGS on $n$ qubits, which cannot be written as a product state in any bipartition over consecutive qubits. Denote by $(U_i)_{i=1}^{n-1}$ the diagonal of $n-1$ matchgates corresponding to the rotations $(R_i)$ obtained when applying Algorithm~\ref{alg:inner_enhanced_symmetric_euler_decomposition} on $\ket{\psi}$, where $U_i$ acts on qubits $n-i$ and $n-i+1$. The intermediate states $\ket{\psi_i} = \prod_{j\leq i} U_j \ket{\psi}$ satisfy
    \begin{equation} \label{eq:prop_iesed_entanglement_reduction}
    \lsr_{n-i}(\ket{\psi_i}) = \max\{\lsr_{n-i-1}({\ket{\psi_i}), \lsr_{n-i+1}}(\ket{\psi_i})\} - 1.
    \end{equation} For the final state $\ket{\psi_{n-1}}$, we have
    \begin{equation}
        \label{eq:prop_iesed_sum_schmidt_rank_reduction}
        \sum_{i=1}^{n-1} \lsr_i(\ket{\psi_{n-1}}) = \sum_{i=1}^{n-1} \lsr_i(\ket{\psi}) - (n-1),
    \end{equation}
    and
    \begin{equation}\label{eq:prop_iesed_final_state}
        \ket{\psi_{n-1}} = \ket{\cbs_1} \ket{\cbs_2}\ket{\psi'} 
    \end{equation}
    where $\cbs_1,\cbs_2\in\{0,1\}$, and $\ket{\psi'}$ is a pure FGS on $n-2$ qubits.
\end{proposition}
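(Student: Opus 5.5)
The plan is an induction along the diagonal, with the Schmidt rank of each relevant cut read off from the rank of an off-diagonal block of the covariance matrix via Observation~\ref{observation:lsr_pure_fgs}. At step $i$ the gate $U_i$ acts on qubits $n-i$ and $n-i+1$, i.e.\ on Majorana indices $[2n-2i-1:2n-2i+2]$, so it changes only the cut $n-i$. All other $\lsr_j$ are unchanged, since local unitaries on one side of a cut preserve that cut's Schmidt rank. It therefore suffices to compute $\lsr_{n-i}(\ket{\psi_i}) = \tfrac12\rank$ of the block with rows $[2n-2i+1:2n]$ and columns $[1:2n-2i]$ of $\Gamma^{(i)}$.

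The first step is an inductive invariant. After step $i-1$, the rows $[2n-2i+3:2n]$ (qubits $n-i+2,\ldots,n$) of $\Gamma^{(i-1)}$ restricted to columns $[1:k]$ for small $k$ should carry only limited information. Concretely, I would show that after $U_{i-1}$ eliminated the last two components of the two independent columns $\alpha^{(i-1)},\beta^{(i-1)}$, the block $\Gamma^{(i-1)}_{[2n-2i+1:2n],[1:2n-2i]}$ has column space equal to the span of its first few columns. This uses orthogonality of $\Gamma^{(i-1)}$ and the minimal choice of $k_{i-1}$. In particular, the row block $[2n-2i+3:2n]$ has rank exactly one less than the cut-$(n-i+1)$ off-diagonal rank. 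Here Lemma~\ref{lemma:unity_eigenvectors_of_antisym_submatrix} is used to rule out hidden eigenvalue-$\ii$ directions that would make the rank drop further or the columns vanish. The hypothesis that $\ket{\psi}$ factorizes across no cut of consecutive qubits guarantees that two linearly independent columns exist in line~\ref{algline:iesed:def_ki}, so $k_i$ is well defined.

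The second step is the key rank count. Write $r_j = 2\lsr_j$ for the current ranks. By Eq.~\eqref{eq:appendix_esed_entanglement_monogamy}, $r_{n-i}\in\{r_{\max}-2,\; r_{\max}\}$ after any gate, where $r_{\max} = 2\max\{\lsr_{n-i-1},\lsr_{n-i+1}\}$. I then show that the chosen $\tilde R_i$, which kills two rank directions in rows $[2n-2i-1:2n-2i+2]$ while leaving rows below untouched, achieves the lower value $r_{\max}-2$. The argument: the rows of cut $n-i$ after the rotation consist of two surviving rows from the rotated $4$-block plus the rows of qubits $n-i+2,\ldots$. Since $\alpha^{(i)},\beta^{(i)}$ are the first independent columns, every column of the rotated block restricted to $[1:2n-2i]$ lies in the span of columns already accounted for by the neighboring cut. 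The rank therefore drops by exactly two relative to the larger neighbor. This yields Eq.~\eqref{eq:prop_iesed_entanglement_reduction}, and I expect it to be the main obstacle: one must carefully show that the truncation to the first four entries in line~\ref{algline:iesed:vector_truncation} loses nothing, i.e.\ that the lower entries lie in a subspace already controlled by the invariant.

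For the final claims, Eq.~\eqref{eq:prop_iesed_final_state} follows as in the justification of Algorithm~\ref{alg:twocol_symmetric_euler_decomposition}. After $n-2$ steps, columns $1,2$ (and, by orthogonality, $3,4$) of $\Gamma^{(n-2)}$ are supported on the first four rows. The $4\times4$ top block is then a pure-state CM, and \texttt{to\_williamson} maps it to $\lambda_1\covblock\oplus\lambda_2\covblock$. For Eq.~\eqref{eq:prop_iesed_sum_schmidt_rank_reduction}, each of the $n-1$ gates changes only its own cut, and each does so by exactly $-1$. This holds because, before applying $U_i$, the cut $n-i$ must have had $\lsr$ equal to the maximum of its neighbors. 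I would check this using the previous step's reduction and the non-factorization hypothesis (which forces neighboring ranks to be at least one). The last gate reduces $\lsr_1$ to $0$. Summing gives $-(n-1)$.
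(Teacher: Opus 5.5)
\textbf{There is a genuine gap, in two places.}

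\textbf{1. The argument for Eq.~\eqref{eq:prop_iesed_sum_schmidt_rank_reduction} rests on a false claim.} It is not true that every gate lowers its own cut by exactly one. Nor is it true that, before $U_i$, cut $n-i$ carries the maximum of its neighbours' $\lsr$.

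A counterexample: take $n=4$ and an FGS with one entangled pair on qubits $(1,3)$ and one on $(2,4)$. This arises from two adjacent pairs followed by a fermionic swap on qubits $2,3$. It factorizes across no cut, and $(\lsr_1,\lsr_2,\lsr_3)=(1,2,1)$. By Eq.~\eqref{eq:prop_iesed_entanglement_reduction} itself:
$U_1$ leaves $\lsr_3=\lsr_2-1=1$ unchanged; $U_2$ lowers $\lsr_2$ from $2$ to $0$; $U_3$ lowers $\lsr_1$ from $1$ to $0$. So the per-gate reductions are $0,2,1$.

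The identity only holds in aggregate. The first $i$ gates never touch cut $n-i-1$, so the reduction at step $i\le n-2$ is $\lsr_{n-i}(\ket\psi)-\lsr_{n-i-1}(\ket\psi)+1$. These terms telescope to $n-2+\lsr_{n-1}(\ket\psi)-\lsr_1(\ket\psi)=n-2$. This works precisely because non-factorization forces $\lsr_1(\ket\psi)=\lsr_{n-1}(\ket\psi)=1$. The final gate then supplies the last unit.

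For this you need the sharper statement $\lsr_{n-i}(\ket{\psi_i})=\lsr_{n-i-1}(\ket{\psi_i})-1$. That is, the maximum must be attained at the \emph{left} neighbour, which your sketch never pins down.

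\textbf{2. The central step, Eq.~\eqref{eq:prop_iesed_entanglement_reduction}, is asserted rather than proved.} Also, the obstacle you single out is not the real one. The truncation in line~\ref{algline:iesed:vector_truncation} loses nothing, simply because earlier rounds have already zeroed rows $[2n-2i+3:2n]$ on columns $[1:k_{i-1}]\supseteq[1:k_i]$. Two ingredients are actually missing.

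\emph{First, the bound $k_i\le 2n-2i-2$.} This says the two chosen columns belong to $\partyA=[1:n-i-1]$. This is where non-factorization is really used: otherwise $\lsr_{n-i-1}(\ket\psi)=0$. By contrast, the mere existence of $k_i$ is trivial, since $2i+2$ orthonormal rows have rank at least $2$. Without the bound, the right action of $R_i$ can alter the chosen columns, and the structure collapses.

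\emph{Second, a mechanism showing that the $\partyB=\{n-i\}$ columns add no rank to the $\partyC=[n-i+1:n]$ rows.} After $U_i$, the two chosen columns vanish on $\partyC$ and have invertible $\partyB$-part $G=\begin{pmatrix}\gamma&\delta\end{pmatrix}$. Let $P$ be the $\partyC$-rows on the remaining $\partyA$-columns. Then the $\partyB\partyC\times\partyA$ block has rank $2+\operatorname{rank}P$.

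Now let $Q$ be the $\partyC\times\partyB$ block, and let $x,y$ be the remaining-$\partyA$ parts of the chosen columns. The $\partyC$-rows are orthogonal to the two rows indexed by the chosen columns, which gives $QG=-P\begin{pmatrix}x&y\end{pmatrix}$. Hence the $\partyC\times\partyA\partyB$ block also has rank $\operatorname{rank}P$, and Observation~\ref{observation:lsr_pure_fgs} yields $\lsr_{n-i}=\lsr_{n-i-1}-1$.

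The paper reaches the same inequality differently. It exhibits two additional eigenvectors $w_1,w_2$ of $\Gamma_{\partyA\partyB}$ to eigenvalue $\ii$, beyond those inherited from $\Gamma_\partyA$ via Lemma~\ref{lemma:unity_eigenvectors_of_antisym_submatrix}.

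Your treatment of Eq.~\eqref{eq:prop_iesed_final_state}, via the orthogonality argument used for Algorithm~\ref{alg:twocol_symmetric_euler_decomposition} with $\lsr_1=1$, is fine.
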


\begin{proof}
Since the input state $\ket{\psi}$ is promised to not factorize over any bipartition of consecutive qubits, we have
\begin{equation}
\lsr_i(\ket{\psi}) \geq 1 \text{ for } 1\leq i \leq n-1, \label{eq:proof_iesed_initial_state_entangled}
\end{equation}
as well as
\begin{equation}
\lsr_1(\ket{\psi}) = \lsr_{n-1}(\ket{\psi}) = 1.\label{eq:proof_iesed_initial_state_boundaries}
\end{equation}
Each gate $U_i$ acts only on qubits $n-i$ and $n-i+1$, and hence the CMs $\Gamma^{(i)}$ and $\Gamma^{(i+1)}$ of the intermediate states $\ket{\psi_i}$ and $\ket{\psi_{i+1}}$ differ in at most in the four rows and columns with indices $2(n-i)-1,\ldots, 2(n-i)+2$. Moreover, the Schmidt ranks of $\ket{\psi_i}$ and $\ket{\psi_{i+1}}$ coincide in all bipartitions $1,\ldots k \vert k+1, \ldots n$, except for $k = n-i$.

From now on, assume that $n\geq 3$ since the cases $n=1$ and $n=2$ follow trivially. Our aim is to show the equation
\begin{equation} \label{eq:proof_iesed_schmidt_rank_reduction_indices}
    \lsr_{n-i}(\ket{\psi_i}) = \lsr_{n-i-1}(\ket{\psi_i}) - 1
\end{equation}
for the intermediate state $\ket{\psi_i}$ for any index $i$ with $1\leq i\leq n-2$ (we comment below on the case $i=n-1$). As we show in the following, all statements in Proposition~\ref{proposition:iesed_entanglement_reduction} can be derived from Eq.~\eqref{eq:proof_iesed_schmidt_rank_reduction_indices}. First, Eq.~\eqref{eq:prop_iesed_entanglement_reduction} follows since we have that $\lsr_{n-i+1}(\ket{\psi_i}) \leq \lsr_{n-i}(\ket{\psi_i})+ 1 = \lsr_{n-i-1}(\ket{\psi_i})$ due to Eqs.~\eqref{eq:appendix_esed_entanglement_monogamy} and~\eqref{eq:proof_iesed_schmidt_rank_reduction_indices}. Proving Eq.~\eqref{eq:prop_iesed_sum_schmidt_rank_reduction}, can be done as follows: Before applying the gate $U_i$, the Schmidt rank difference $\lsr_{n-i}(\ket{\psi_{i-1}}) - \lsr_{n-i-1}(\ket{\psi_{i-1}})$ can take one of the three values $-1,0,$ and $+1$. In the first case, after applying $U_i$ on qubits $n-i$ and $n-i+1$ the Schmidt rank $\lsr_{n-i}(\ket{\psi_{i}})$ in the corresponding bipartition coincides with the Schmidt rank $\lsr_{n-i}(\ket{\psi_{i-1}})$ before applying the gate. In the second case, it is reduced by $1$ compared to $\lsr_{n-i}(\ket{\psi_{i-1}})$ and in the third case it is reduced by $2$. The first and third cases, however, have to occur equally often as otherwise, Eq.~\eqref{eq:proof_iesed_initial_state_boundaries} is violated. For each $i$ with $1\leq i \leq n-2$, one gate is applied. That is, we have
\[
      \sum_{i=1}^{n-1} \lsr_i(\ket{\psi_{n-2}}) = \sum_{i=1}^{n-1} \lsr_i(\ket{\psi}) - (n-2).
\]
To see how Eq.~\eqref{eq:prop_iesed_entanglement_reduction} follows for the last state $\ket{\psi_{n-1}}$, note that in step $i=n-2$ the Schmidt rank between the first two and the remaining qubits is zero. That is, $\ket{\psi_{n-2}}$ is a product state in this bipartition, and the first two qubits can be transformed to a computational basis state with a single gate (see line~\ref{algline:iesed_last_R_computation}). This also shows Eq.~\eqref{eq:prop_iesed_final_state}.

We now show Eq.~\eqref{eq:proof_iesed_schmidt_rank_reduction_indices}. The idea is to construct some linearly independent eigenvectors to eigenvalue $\ii$ of the reduced CM corresponding to qubits $1,\ldots,n-i$. To be more precise, we construct $m+2$ such eigenvectors, where $m$ is the multiplicity of the eigenvalue $\ii$ of the reduced CM corresponding to qubits $1,\ldots,n-i-1$. Equation~\eqref{eq:proof_iesed_schmidt_rank_reduction_indices} then follows from Observation~\ref{observation:lsr_pure_fgs} and Eq.~\eqref{eq:appendix_esed_entanglement_monogamy}. We present the details below.

Pick a particular $i$ with $1\leq i \leq n-2$. As in Algorithm~\ref{alg:inner_enhanced_symmetric_euler_decomposition} (lines~\ref{algline:iesed:def_ki}, \ref{algline:iesed:selection_large_vectors} and~\ref{algline:iesed:vector_truncation}), denote by $k_i$ the smallest integer such that the block $\Gamma^{(i-1)}_{[2n-2i-1:2n],[1:k_i]}$ has rank~$2$,
and by $\tilde \alpha^{(i)}$ and $\tilde \beta^{(i)}$ two linearly independent vectors in first four rows $\Gamma^{(i-1)}_{[2n-2i-1:2n-2i+2],[1:k_i]}$. Note that such $k_i$ with $2 \leq k_i \leq 2n$ always exists. The rotation matrix 
\[R_i = \id_{2n-2i-2} \oplus \tilde R_i \oplus \id_{2i-2}\]
corresponding to $U_i$ obtained in lines~\ref{algline:iesed:ised:def_small_Ri} and~\ref{algline:iesed:def_Ri} satisfies 
$ \tilde R_i \tilde \alpha^{(i)} = (\concat{\gamma, 0_2})$ and $ \tilde R_i \tilde \beta^{(i)} = (\concat{\delta, 0_2})$ for some linearly independent vectors $\gamma$ and $\delta$.

Our first aim is to show the bound $k_i \leq 2n-2i-2$. This can be done by contradiction. Suppose that $k_i > 2n-2i-2$, implying that $\operatorname{rank}(\Gamma^{(i-1)}_{[2n - 2i - 1:2n],[1:2n-2i-2]}) < 2$. Due to Observation~\ref{observation:lsr_pure_fgs}, this rank cannot be an odd number, and therefore must be zero. As can be seen in lines~\ref{algline:iesed:def_Ri} and~\ref{algline:iesed:update_gamma} for preceding values of $i$, the block $\Gamma^{(i-1)}_{[2n - 2i - 1:2n],[1:2n-2i-2]}$ is related to the same block in the CM $\Gamma$ of the initial state by a sequence of rotations (those correspond to the matchgates acting on qubits $n, n-1,\ldots, n-i$), and thus one would obtain 
\[
 \lsr_{n-i-1}(\ket{\psi}) = \frac12 \operatorname{rank}(\Gamma_{[2n - 2i - 1:2n],[1:2n-2i-2]}) = 0,
\]
which contradicts Eq.~\eqref{eq:proof_iesed_initial_state_entangled}. With this upper bound, we get that the block with indices $2n-2i-1, \ldots,2n$ and $1,\ldots,k_i$ of the CM $\Gamma^{(i)} = R_i \Gamma^{(i-1)}R_i^\transpose$ is given by 
\[
    \Gamma^{(i)}_{[2n-2i-1:2n],[1:k_i]} = (\tilde R_i \oplus \id_{2i-2}) \Gamma^{(i-1)}_{[2n-2i-1:2n],[1:k_i]}.
\]
Furthermore, we clearly have $k_i \leq k_{i-1}$ (for $i>1$), since for all but the first two rows of $\Gamma^{(i-1)}_{[2n-2i-1: 2n],[1:k_{i-1}]}$, we have 
\[\Gamma^{(i-1)}_{[2n-2i+1: 2n],[1:k_{i-1}]} = (\tilde R_{i-1} \oplus \id_{2i-4}) \Gamma^{(i-2)}_{[2n-2i+1: 2n],[1:k_{i-1}]},\]
and then appending the two rows $\Gamma^{(i-1)}_{\{2n-2i-1,2n-2i\}, [1:k_{i-1}]}$ can only increase the rank. Due to the action of all previous rounds including the current round $i$, we have that
\[
\Gamma^{(i)}_{[2n-2i+1:2n],[1:k_i]} = 0, \quad\text{ and }\quad \Gamma^{(i)}_{\{2n-2i-1, 2n-2i\},[1:k_i]} = \begin{pmatrix} \gamma_1 & \delta_1 \\ \gamma_2 & \delta_2 \end{pmatrix} M 
\]
for some $2\times k_i$ matrix $M$.

Introduce now three parties $\partyA$, $\partyB$ and $\partyC$, comprising qubits $[1:n-i-1]$, $\{n-i\}$ and $[n-i+1,n]$ respectively. Equation~\eqref{eq:proof_iesed_schmidt_rank_reduction_indices} then can be expressed as 
\begin{equation} \label{eq:proof_iesed_schmidt_rank_reduction}
    \lsr(\ket{\psi_i}, \partyA\partyB, \partyC) = \lsr(\ket{\psi_i}, \partyA, \partyB\partyC) - 1.
\end{equation}
Apart from a permutation on the rows and columns corresponding to party~$\partyA$ \footnote{The aim of this permutation is to have the two linearly independent vectors in the first two columns. Any permutation acting only on party $\partyA$ cannot change the entanglement w.r.t. to the bipartitions of interest.}, one can write
\[
    \Gamma^{(i)} = \begin{pmatrix}
        0 & -\eta & -x^\transpose & -\gamma^\transpose & 0^\transpose_{2i} \\
        \eta &  0 & -y^\transpose & -\delta^\transpose & 0^\transpose_{2i} \\
        x & y & \tilde \Gamma_\partyA & -S^\transpose & -T^\transpose  \\
        \gamma & \delta & S &  \Gamma_\partyB &  -U^\transpose \\
        0_{2i} & 0_{2i}  & T & U &  \Gamma_\partyC
    \end{pmatrix},
\]
with a number $\eta$, suitable vectors $x$ and $y$, matrices $S$, $T$ and $U$. The reduced CMs corresponding to the parties are 
\[
    \Gamma_\partyA = \begin{pmatrix}
        0 & -\eta & -x^\transpose \\
        \eta &  0 & -y^\transpose \\
        x & y & \tilde \Gamma_\partyA   \\
    \end{pmatrix},
\]
$\Gamma_\partyB$ and $\Gamma_\partyC$. Note that the case $i=n-2$ can only be consolidated with this equation when considering $x$, $y$, $\tilde \Gamma_\partyA$, $S$, and $T$ to be of size zero. In the following equations, this case can be treated by disregarding occurences of these symbols. Denote now by $\Gamma_{\partyA\partyB}$ the reduced CM of the joint system $\partyA\partyB$. Due to Observation~\ref{observation:lsr_pure_fgs}, we have
\[
    \lsr(\ket{\psi_i}, \partyA\partyB, \partyC) = n - i - \multiplicity{\ii}{\Gamma_{\partyA\partyB}}
\] and, when introducing $m = \multiplicity{\ii}{\Gamma_{\partyA}}$,
\[
    \lsr(\ket{\psi_i}, \partyA, \partyB\partyC) = n - i - 1 - m.
\]
Choose now a basis $(v_j)_{j=1}^m$ of the eigenspace to eigenvalue $\ii$ of $\Gamma_\partyA$. Due to Lemma~\ref{lemma:unity_eigenvectors_of_antisym_submatrix}, the vectors $(\concat{v_j,0_2})_{j=1}^m$ are eigenvectors to eigenvalue $\ii$ of $\Gamma_{\partyA\partyB}$. Next, we construct two additional such eigenvectors. We have $\Gamma^{(i)\transpose}\Gamma^{(i)}= \id_{2n}$ since $\Gamma^{(i)}$ is the CM of a pure FGS. From this, we get the equations
\begin{align*}
    \eta^2 + x^\transpose x + \gamma^\transpose \gamma = \eta^2 + y^\transpose y +\delta^\transpose \delta &= 1, \\
    y^\transpose x + \delta^\transpose \gamma &= 0, \\
    -\eta y - \tilde\Gamma_\partyA x + S^\transpose \gamma = \eta x - \tilde\Gamma_\partyA y + S^\transpose \delta &= 0_{2(n-i-2)}, \\
    \eta \delta + S x + \Gamma_\partyB \gamma  = \eta \gamma - S y - \Gamma_\partyB \delta  &= 0_2.
\end{align*}
Thus, we get
\[
    \Gamma_{\partyA\partyB} \begin{pmatrix} 0 \\ \eta \\ x \\ \gamma \end{pmatrix} = \begin{pmatrix} -1 \\ 0 \\ 0_{2(n-i-2)} \\ 0_2 \end{pmatrix}
    \quad\text{and}\quad
    \Gamma_{\partyA\partyB} \begin{pmatrix} \eta \\ 0 \\ -y \\ -\delta \end{pmatrix} = \begin{pmatrix} 0 \\ 1 \\ 0_{2(n-i-2)} \\ 0_2 \end{pmatrix}
\]
Now define
\[
    w_1 = \begin{pmatrix} \ii\\ 0\\ 0_{2(n-i-2)} \\0_2 \end{pmatrix} + \begin{pmatrix} 0 \\ \eta \\ x \\ \gamma \end{pmatrix}
    \quad\text{and}\quad
    w_2 = \begin{pmatrix} 0 \\ -\ii \\ 0_{2(n-i-2)} \\ 0_2 \end{pmatrix} + \begin{pmatrix} \eta \\ 0 \\ -y \\ -\delta \end{pmatrix},
\]
which are clearly eigenvectors to eigenvalue $\ii$ of $\Gamma_{\partyA\partyB}$. Since $\gamma$ and $\delta$ are linearly independent, the same is true for $w_1$ and $w_2$, as well as either of them and any of the eigenvector $(\concat{v_j,0_2})$. We conclude that $\multiplicity{\ii}{\Gamma_{\partyA\partyB}} \geq m +2$, and
\[
    \lsr(\ket{\psi_i}, \partyA\partyB, \partyC) \leq \lsr(\ket{\psi_i}, \partyA, \partyB\partyC) - 1. 
\]
Together with Eq.~\eqref{eq:appendix_esed_entanglement_monogamy}, Eq.~\eqref{eq:proof_iesed_schmidt_rank_reduction} follows. 
\end{proof}

To close this appendix, we now state the result about the scaling of the number matchgates with the quantity $K= \sum_{i=1}^{n-1} \lsr_i( \ket{\psi})$. As we explain in the main text, at least $K/2$ gates of an arbitrary nearest neighbor gate set are necessary to generate the state. Thus, with matchgates, one obtains optimal scaling with this number.

\begin{corollary}
    When applying Algorithm~\ref{alg:outer_enhanced_symmetric_euler_decomposition} to a pure FGS $\ket{\psi}$ one obtains at most $K$ gates $(U_i)_{i=1}^K$. It holds that $\prod_i U_i \ket{\psi} = \ket{\cbs}$ for some computational basis state $\ket{\cbs}$. 
\end{corollary}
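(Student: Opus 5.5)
We argue by strong induction on the number of qubits $n$, tracking the potential $K(\ket{\psi}) = \sum_{i=1}^{n-1}\lsr_i(\ket{\psi})$. The claim to prove is: Algorithm~\ref{alg:outer_enhanced_symmetric_euler_decomposition} outputs at most $K(\ket\psi)$ gates, and these gates map $\ket\psi$ to a computational basis state. For $n=1$ nothing is returned and $K=0$. For $n=2$ the routine \texttt{to\_williamson} yields a single gate. If the state is entangled, $K=1$ and the bound holds. If it is a product, $K=0$; here we must either note that \texttt{to\_williamson} returns a product of local $Z$-rotations, which we do not count as a two-qubit gate, or add a trivial check that omits the gate. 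Correctness for $n=2$ follows because the Williamson normal form of a pure two-qubit CM is $\pm\covblock\oplus\pm\covblock$.

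For $n\geq 3$, let $k$ be the smallest index such that the leading block $\Gamma_{[1:2k],[1:2k]}$ is pure. Then the state factorizes as $\ket{\psi}=\ket{\phi}\ket{\chi}$, with $\ket\phi$ on qubits $1,\ldots,k$ and $\ket\chi$ on qubits $k+1,\ldots,n$. By minimality of $k$, $\ket\phi$ does not factorize across any cut $1..j\,|\,j+1..k$. Such a split would make $\Gamma_{[1:2j],[1:2j]}$ pure: the reduced CM on $1..j$ of a product state has that property, which follows from Observation~\ref{observation:lsr_pure_fgs} since the off-diagonal block has rank $0$. Consequently $K(\ket\psi)=K(\ket\phi)+K(\ket\chi)$, because $\lsr_k(\ket{\psi})=0$ and every other cut lies entirely inside one factor. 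This splitting is also why the recursion on $\Gamma_{[2k+1:2n],[2k+1:2n]}$ is legitimate and contributes at most $K(\ket\chi)$ gates by the induction hypothesis.

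It remains to bound the $\ket\phi$ part. If $k\le 2$, the case is handled as above: $k=1$ gives zero gates, and $k=2$ gives one gate with $K(\ket\phi)=1$. If $k\geq3$, Proposition~\ref{proposition:iesed_entanglement_reduction} applies to $\ket\phi$. It shows that \texttt{esed\_inner} uses $k-1$ gates and produces $\ket{\cbs_1}\ket{\cbs_2}\ket{\phi'}$. Its statement, Eq.~\eqref{eq:prop_iesed_sum_schmidt_rank_reduction}, gives $K(\ket{\cbs_1\cbs_2\phi'})=K(\ket\phi)-(k-1)$. The left-hand side equals $K(\ket{\phi'})$, since the first two qubits are in a product state and so contribute $\lsr_1=\lsr_2=0$, while the remaining cuts coincide with the cuts of $\ket{\phi'}$. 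The recursion on $\tilde\Gamma_{[5:2k],[5:2k]}$, which is the CM of $\ket{\phi'}$, has $k-2<n$ qubits. By induction it uses at most $K(\ket{\phi'})$ gates and maps $\ket{\phi'}$ to a basis state. The total count for $\ket\phi$ is therefore at most $(k-1)+K(\ket\phi)-(k-1)=K(\ket\phi)$.

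The case $k=n$ (no factorization at all) needs no separate argument. The only subtle points I expect are the bookkeeping ones. First, the recursive calls act on disjoint qubit blocks, so the embedded rotations commute past the identities. Second, the extended cut $\lsr_0=\lsr_n=0$ convention is consistent. Third, the $n=2$ product-state degenerate case must be checked, which is the main place where the count could exceed $K$. Assembling the three contributions gives at most $K(\ket\phi)+K(\ket\chi)=K(\ket\psi)$ gates. Correctness of the output state follows from the same induction.
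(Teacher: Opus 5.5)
Your proof is correct and follows the paper's argument. You split off the first non-factorizing block, invoke Proposition~\ref{proposition:iesed_entanglement_reduction} to trade $k-1$ gates for a reduction of $K$ by exactly $k-1$, and recurse on the rest. Your strong induction only makes explicit the bookkeeping the paper leaves implicit: additivity of $K$ across the factorization, the $k\le 2$ cases, and omitting the local gate in the two-qubit product case, which the paper covers with its remark that gates acting as products of local gates can be left out.
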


\begin{proof}
Each application of Algorithm~\ref{alg:outer_enhanced_symmetric_euler_decomposition} first checks whether the state $\ket\psi$ is a product $\ket\psi=\ket{\psi_1}\ket{\psi_2}$ on two regions with $n_1$ and $n_2$ qubits (possibly with $n_2 = 0$), where $\ket{\psi_1}$ does not factorize further. Algorithm~\ref{alg:outer_enhanced_symmetric_euler_decomposition} is then applied again to $\ket{\psi_2}$. We hence focus on $\ket{\psi_1}$. In the nontrivial case $n_1 \geq 2$, the sum over logarithms of Schmidt ranks of this state is at least $n_1 - 1$. As we have shown in Proposition~\ref{proposition:iesed_entanglement_reduction}, Algorithm~\ref{alg:inner_enhanced_symmetric_euler_decomposition} produces $n_1-1$ gates which reduce this sum by $n_1-1$ and disentangle the first two qubits from the remaining ones. The latter ones are then again used as the input to Algorithm~\ref{alg:outer_enhanced_symmetric_euler_decomposition}, and the argument can be repeated as many times until the initial state is fully disentangled. We remark that under some conditions, a gate outputted by an application of \texttt{elim\_last\_two} is a product of local gates. In this case, the gate can be left out, and hence $K$ is an upper bound on the number of gates.
\end{proof}

\section{RSF circuits are optimal for representing arbitrary pure FGS} \label{app:proof_rsf_circuits_optimal}

In this appendix, we give a proof of Theorem~\ref{thm:RSF_optimal}. To recall, the theorem states that under some mild conditions on the gates comprising a MGC in RSF, any state that is generated with that circuit cannot be generated with any other MGC that has fewer gates. As explained in the main text, the proof relies on comparing two circuits in RSF~\cite{MoLa25}. Doing so is sufficient, since with the absorption algorithm, any other MGC can be transformed into RSF. One then assumes that one RSF circuit has minimal gate count, and the gates in the other satisfy the conditions of Theorem~\ref{thm:RSF_optimal}. By then comparing diagonals of both RSF circuits, one must conclude that pairs of diagonals in both circuits coincide up to local gates. If this were not the case, a contradiction is obtained.

To compare two diagonals, we need two key ingredients. For the first one, we derive some simple conditions under which the nonlocal part of a matchgate is fully specified (Lemma~\ref{lemma:characterization_very_entangling} and Observation~\ref{lemma:orthogonal_matrix_determined}). The conditions are based on the CMs of the states before and after applying a gate. We hence also need to study the CMs of states generated by a particular RSF circuit.
For our purpose, it is sufficient to investigate the first two columns of a CM and their relation to the first diagonal of the RSF circuit (Lemma~\ref{lemma:RSF_state_covariance_matrix}). In Lemma~\ref{lemma:diagonals_coincide}, we combine these techniques and show that the first diagonals of the circuit of interest  and the optimal one have to coincide up to local gates. The proof of Theorem~\ref{thm:RSF_optimal} then follows as a corollary.

Throughout this appendix, subscripts on a gate denote the qubits on the gate acts on. A circuit $U$ in RSF with $m$ diagonals, parameters $((k_1,l_1),\ldots,(k_m,l_m))$ and gates $U^{(ij)}$ can hence be written as 
\[
	U = \prod_{i=m}^1 D_i,\quad\quad D_i = \prod_{j=1}^{l_i} U^{(ij)}_{k_i + j -1, k_i+j},
\]
where $D_i$ denotes the $i$-th diagonal. In the following, we denote by $\hat e_k$ the $k$-th standard basis vector. Given a CM $\Gamma$, we use the short notation $\Gamma_{(k,l)} = \Gamma_{[2k-1,2k],[2l-1,2l]}$ to denote the block corresponding to qubits $l$ and $k$ (i.e., the block with row and column indices $\{2k-1,2k\},\{2l-1,2l\}$ respectively). Node that $\Gamma_{(k,l)} = 0$ if and only if the reduced state on qubits $k$ and $l$ is a tensor product.

Before presenting the lemmas and their proofs, we recall the precise conditions imposed on the gates of the circuit considered in Theorem~\ref{thm:RSF_optimal}. For the first condition, it is required that each gate in the first layer produces an entangled state when applied to the input state. The second condition demands that the nonlocal parameters as defined in Eq.~\eqref{eq:parametrization_of_mg_local_nonlocal} of each further gate are not integer multiples of $\pi/2$. The following lemma restates this condition in terms of the orthogonal matrix associated with the gate. Specifically, we show that the offdiagonal blocks of that matrix have maximal rank if and only if the condition holds.

\begin{lemma} \label{lemma:characterization_very_entangling}
    Consider a matchgate
    \begin{equation} \label{eq:decomposition_mg_local_nonlocal_appendix}
        U = \exp(\ii \phi_1 Z)\!\otimes\!\exp(\ii \phi_2 Z) \, \exp(\ii\alpha X\!\otimes\!X + \ii \beta Y\!\otimes\! Y) \, \exp(\ii \phi_3 Z)\!\otimes\!\exp(\ii \phi_4 Z),
    \end{equation}
    and the corresponding orthogonal matrix $R$. Then, $\vert \det R_{[1:2],[3:4]} \vert =\vert \det R_{[3:4],[1:2]} \vert = \vert \sin(2\alpha)\sin(2\beta)\vert$. Hence, the blocks $R_{[1:2],[3:4]}$ and $R_{[3:4],[1:2]}$ have full rank if and only if both $\alpha$ and $\beta$ are not integer multiples of $\pi/2$.
\end{lemma}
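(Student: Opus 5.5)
The plan is to reduce to the nonlocal part. Each local factor $\exp(\ii\phi Z)$ on qubit 1 or qubit 2 corresponds, via Eq.~\eqref{eq:mg_conj_majo}, to a rotation $E(\cdot)$ acting only on the $2\times2$ block of Majorana indices $\{1,2\}$ or $\{3,4\}$. (Conjugating $X_1$ and $Y_1$ by $\exp(\ii\phi Z_1)$ rotates them into each other; for qubit 2 the string $Z_1$ is common to $\majo3,\majo4$.) Hence
\[
R = (O_1\oplus O_2)\, R_{\mathrm{nl}}\, (O_3\oplus O_4),
\]
with $2\times2$ rotations $O_k$ and $R_{\mathrm{nl}}$ the orthogonal matrix of $\exp(\ii\alpha X\!\otimes\!X+\ii\beta Y\!\otimes\!Y)$. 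The off-diagonal blocks then transform as $R_{[1:2],[3:4]} = O_1 (R_{\mathrm{nl}})_{[1:2],[3:4]} O_4$, and similarly for the other block. Since $\det O_k=1$, the absolute values of the determinants of the off-diagonal blocks equal those of $R_{\mathrm{nl}}$. It therefore suffices to compute $R_{\mathrm{nl}}$.

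To compute $R_{\mathrm{nl}}$, I write the generator in Majoranas. We have $X_1X_2 = X_1 Z_1 X_2\cdot$(sign), which gives $X_1X_2 = -\ii\,\majo2\majo3$ up to sign, since $\majo2\majo3 = Y_1 Z_1 X_2 = \ii X_1X_2$. Likewise $Y_1Y_2 \propto \majo1\majo4$, because $\majo1\majo4 = X_1Z_1Y_2 = -\ii Y_1Y_2$. The two terms $\majo2\majo3$ and $\majo1\majo4$ commute, since they share no Majorana index. The unitary therefore factorizes into two commuting quadratic exponentials. Conjugation by $\exp(\alpha\majo2\majo3)$ (up to factor conventions) rotates the pair $(\majo2,\majo3)$ by angle $2\alpha$ and fixes $\majo1,\majo4$. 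Similarly, the $\beta$ term rotates $(\majo1,\majo4)$ by $2\beta$. Consequently,
\[
(R_{\mathrm{nl}})_{[1:2],[3:4]} = \begin{pmatrix} 0 & \pm\sin 2\beta \\ \pm\sin 2\alpha & 0\end{pmatrix},
\]
whose determinant has absolute value $|\sin2\alpha\sin2\beta|$. The same holds for the transposed block.

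The final equivalence is immediate: the determinant is nonzero iff neither $\sin 2\alpha$ nor $\sin2\beta$ vanishes, i.e.\ iff neither angle is an integer multiple of $\pi/2$.

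The only delicate step is bookkeeping signs and the factor of $2$ in the rotation angle. Signs do not matter because only absolute values enter. For the factor of $2$, I will check directly that $e^{-\ii\alpha X_1X_2}X_1e^{\ii\alpha X_1X_2}=X_1$ while $e^{-\ii\alpha X_1X_2}Y_1e^{\ii\alpha X_1X_2}=\cos2\alpha\, Y_1 + \sin 2\alpha\,(\cdots)$, with $(\cdots)\propto Z_1X_2 = \majo3$ up to sign. I will also verify the claim about the local $Z$ rotations, in particular that $Z_1$ acts on $\majo3,\majo4$ trivially up to the overall string.
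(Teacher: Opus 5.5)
Your proposal is correct and follows essentially the same route as the paper. Both write $R=(R_1\oplus R_2)\,R'\,(R_3\oplus R_4)$, with the phase gates acting only within the single-qubit $2\times 2$ blocks, so the absolute determinants of the off-diagonal blocks reduce to those of the nonlocal rotation $R'$, which equal $|\sin(2\alpha)\sin(2\beta)|$. The only difference is that you derive $R'$ explicitly from the commuting bilinears $\gamma_2\gamma_3$ and $\gamma_1\gamma_4$, whereas the paper just states the matrix; your signs and factor of $2$ agree with it.
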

\begin{proof}
    The orthogonal matrix $R'$ corresponding to a gate $\exp(\ii\alpha X\!\otimes\!X + \ii \beta Y\!\otimes\! Y)$ is given by
    \[
        R' = \begin{pmatrix}
            \cos(2\beta) & 0 & 0 & -\sin(2\beta) \\
            0 & \cos(2\alpha) & \sin(2\alpha) & 0 \\
            0 & -\sin(2\alpha) & \cos(2\alpha) & 0 \\
            \sin(2\beta) & 0 & 0 & \cos(2\beta)
        \end{pmatrix},
    \] with $\vert \det R'_{[1:2],[3:4]} \vert = \vert \det R'_{[3:4],[1:2]} \vert = \vert \sin(2\alpha)\sin(2\beta)\vert$. The rotation corresponding to any phase gate $\exp(\ii \phi_j Z_k)$ on qubit $k$ only acts nontrivially on the $2k-1$ and $2k$-th basis vector, and therefore cannot change the determinant of the block $R'_{[1:2],[3:4]}$. The converse direction is shown by constructing the matchgate corresponding to a given $R$ and decomposing it into the form given in Eq.~\eqref{eq:decomposition_mg_local_nonlocal_appendix}. With this, one gets the decomposition $R=(R_1 \oplus R_2) \, R' \, (R_3 \oplus R_4)$, with $R'$ as above, and $R_i$ corresponding to the phase gates. This completes the proof.
\end{proof}

Next, we study to which extent a matchgate is already specified given some knowledge of how it acts on a state, or to be more precise, of how the corresponding rotation matrix $R$ acts on the CM. A simple, but central observation is the following: Suppose $Rx$ and $Ry$ are the image of $x$ and $y$ under a $4\times 4$ rotation $R$. Then, any other rotation that maps $x\mapsto Rx$ and $y\mapsto Ry$ is of the form $R'R$, where $R'$ is a rotation in the orthogonal complement of the subspace spanned by $Rx$ and $Ry$. Here, we are interested in a weaker version where, rather than knowing $Rx$ and $Ry$, we only know that the images are contained in the subspace spanned by $\hat e_1$ and $\hat e_2$. Any other rotation that does the same is of the form $(S\oplus T) R$, where $S$ and $T$ are $2\times 2$ rotation (or reflection) matrices. Physically, this means that the nonlocal part of the corresponding matchgate is fully specified, and that the open degrees of freedom correspond to local phase and Pauli-$X$ gates. We will later reference this fact often and hence put it in following observation.

\begin{observation} \label{lemma:orthogonal_matrix_determined}
A $4\times 4$ orthogonal matrix $R$ is a direct sum $R=S\oplus T$ of $2\times 2$ matrices $S$ and $T$ iff the image of two linearly independent vectors in the subspace spanned by $\hat e_1$ and $\hat e_2$ remains in that subspace.
\end{observation}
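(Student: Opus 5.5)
The plan is to write $R$ in $2\times 2$ blocks,
\[
R = \begin{pmatrix} S & B \\ C & T \end{pmatrix},
\]
and to prove both directions directly from orthogonality, $R^\transpose R = R R^\transpose = \id_4$.

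The forward direction is immediate. If $R = S\oplus T$, then $R$ maps every vector $(\concat{a, 0_2})$ to $(\concat{Sa, 0_2})$. So any vector in $\operatorname{span}\{\hat e_1,\hat e_2\}$, in particular any two linearly independent ones, stays in that subspace.

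For the converse, let $x, y$ be linearly independent vectors in $\operatorname{span}\{\hat e_1,\hat e_2\}$ whose images $Rx, Ry$ also lie in that span. Write $x = (\concat{a, 0_2})$ and $y = (\concat{b, 0_2})$ with $a,b\in\mathbb{R}^2$ linearly independent.

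First, $Rx = (\concat{Sa, Ca})$ lies in the span exactly when $Ca = 0$, and likewise $Rx, Ry$ both lie in it exactly when $Ca = Cb = 0$. Since $a$ and $b$ span $\mathbb{R}^2$, this gives $C = 0$.

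Next, the top-left block of $R^\transpose R = \id_4$ reads $S^\transpose S + C^\transpose C = \id_2$, so $S^\transpose S = \id_2$ and $S$ is orthogonal.

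Then the bottom-left block of $RR^\transpose = \id_4$ reads $CS^\transpose + TB^\transpose = 0$. Rather than use this, it is cleaner to take the top-left block of $RR^\transpose$: it gives $SS^\transpose + BB^\transpose = \id_2$. Since $S$ is orthogonal, $SS^\transpose = \id_2$, hence $BB^\transpose = 0$ and so $B = 0$.

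Thus $R = S\oplus T$, and $T$ is orthogonal because $R$ is. If $R$ is a rotation, $\det S\det T = 1$, so $S$ and $T$ are either both rotations or both reflections. This matches the remark preceding the observation that the open degrees of freedom are local phase gates together with $X$-type reflections.

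There is no real obstacle here. The only point needing care is the step from $C=0$ to $B=0$: it must use the norm-preservation identity $SS^\transpose + BB^\transpose = \id_2$, not merely the column relation used to get $C = 0$. Linear independence of $x$ and $y$ is essential, since with a single vector only one column of $C$ would be forced to vanish.
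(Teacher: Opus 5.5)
Your proof is correct. The paper gives no separate proof of this observation, only the preceding remark that an orthogonal map keeping $\operatorname{span}\{\hat e_1,\hat e_2\}$ invariant must be block diagonal. Your block computation is the matrix form of that remark: $C=0$ follows from the two independent columns, and $B=0$ follows from $SS^\transpose + BB^\transpose = \id_2$, which amounts to invariance of the orthogonal complement.
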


Our next aim is now to relate the properties of MGC in RSF to their action on the CM. In particular, we want to emphasize the relation between the first diagonal of the circuit, and the first two columns of the CM. Consider, as an example, the state
\begin{equation*}
    \ket\psi = \vcenter{\hbox{
    \begin{tikzpicture}[scale = 0.8]
        \begin{scope}[rotate=90]
            \helpdrawCircuitLines{0}{7}{0}{1.75}
            \circuitInitXAt{0.25}
            \helpdrawRSFdiagonal{0}{5}
            \helpdrawMGColorToFaint
            \helpdrawRSFdiagonal{3}{4}
            \helpdrawRSFdiagonal{5}{2}
            \helpdrawMGColorReset
        \end{scope}
    \end{tikzpicture}}}
\end{equation*}
generated by an RSF circuit with three diagonals acting on $\ket{0^{n}}$. The state $\ket{\psi'}$ obtained when acting only with the two non-highlighted diagonals is clearly a product in some suitable bipartitions. In particular, since the first qubit is not correlated with any other qubit, the first two columns of $\Gamma'$ are zero (except for the first two rows thereof). When now applying the gates in the highlighted diagonal of length $l=5$, qubit $1$ can at most be correlated with qubit $l+1$. In other words, the blocks in the CM corresponding to qubits $1$ and $k$ with $k>l+1$ are zero. We can infer more structure if the gates in the circuit satisfy the conditions of Theorem~\ref{thm:RSF_optimal}. This is summarized in the following lemma.

\begin{lemma} \label{lemma:RSF_state_covariance_matrix}
Consider an MGC $U$ in RSF with parameters $((1,l), (k_2, l_2), \ldots, (k_m,l_m))$, and a computational basis state $\ket{\cbs}$. For the CM $\Gamma$ of $U\ket{\cbs}$, all $2\times 2$ blocks $\Gamma_{(k,1)}$ describing correlations between qubits $1$ and $k$, with $k > l+1$ are zero. 
If furthermore the first diagonal of $U$ satisfies the requirements of Theorem~\ref{thm:RSF_optimal}, the block $\Gamma_{(l+1,1)}$ has maximal rank. 
\end{lemma}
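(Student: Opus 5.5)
Write $U = D_1 V$ with $V = D_2\cdots D_m$. The plan is to track how the first two columns of the covariance matrix are transported by the rotation corresponding to $D_1$.

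\textbf{Structure of $V\ket{\cbs}$.} Since $k_2 \geq k_1+2 = 3$, the circuit $V$ does not act on qubits $1$ and $2$. Hence $\ket{\phi}=V\ket{\cbs}$ is a product of $\ket{\cbs_1}\ket{\cbs_2}$ with a state on qubits $3,\ldots,n$. Its CM $\Gamma'$ therefore has $\Gamma'_{[1:4],[1:4]} = \lambda_1 \covblock\oplus\lambda_2\covblock$ with $|\lambda_i|=1$, and its first four columns vanish outside rows $1$ to $4$.

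\textbf{Transport through $D_1$.} Write $R = R_l\cdots R_1$, where $R_j$ acts on rows and columns $[2j-1:2j+2]$ (qubits $j,j+1$), so that $\Gamma = R\Gamma' R^\transpose$. The first two columns of $\Gamma$ are $R\,\Gamma'_{[1:2n],[1:2]}$ multiplied from the right by the $2\times2$ block $R_{[1:2],[1:2]}^\transpose$. This holds because $\Gamma'_{[1:2n],[1:2n]}R^\transpose$ restricted to columns $1,2$ only sees row $1,2$ of $R$, and only its columns $1$ to $4$ matter, since $\Gamma'$ is supported there in those rows. More cleanly, consider the vectors $x=\Gamma' \hat e_1$ and $y=\Gamma'\hat e_2$, and note that $\Gamma(R\hat e_1)$ etc. is what matters. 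I would instead argue directly: the columns of $\Gamma$ corresponding to qubit $1$ are $\Gamma_{[1:2n],[1:2]} = R\,\Gamma' R^\transpose_{[1:2n],[1:2]}$. Row $1,2$ of $R$ is only touched by $R_1$, so $R^\transpose_{[1:2n],[1:2]}$ is supported on rows $1$ to $4$. Then $\Gamma'$ maps this into span$(\hat e_1,\ldots,\hat e_4)$.

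Applying $R_1,\ldots,R_l$ successively, the support grows by one qubit per gate: after $R_j$ the vectors live in rows $[1:2j+2]$. This gives support in rows $[1:2l+2]$, i.e. $\Gamma_{(k,1)}=0$ for $k>l+1$. That is the first claim.

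\textbf{Rank of $\Gamma_{(l+1,1)}$.} For the second claim, track the $2\times2$ "leading block" explicitly. Let $W_j$ be the $2\times 2$ block of the two transported vectors in rows $[2j+1:2j+2]$ after applying $R_1,\ldots,R_j$. Initially, before any gate, the two columns are $\Gamma' R^\transpose_{[1:2n],[1:2]}$. After $R_1$, the rows of qubit $2$ contain a block whose rank I claim is $2$. This is where the first-layer condition enters. The state $U_{11}\ket{\cbs_1\cbs_2}$ is entangled, so by Observation~\ref{observation:lsr_pure_fgs} its two-qubit CM has an off-diagonal block of rank $2$. This block is exactly $(R_1 (\lambda_1\covblock\oplus\lambda_2\covblock) R_1^\transpose)_{(2,1)}$, and the later gates $R_2,\ldots$ do not touch rows and columns $1,2$. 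Hence $W_1$ has rank $2$ and $W_1 \propto$ the qubit-$2$/qubit-$1$ block.

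Inductively, $R_j$ for $j\geq2$ maps rows $[2j-1:2j+2]$. Its input in rows $2j+1,2j+2$ is zero, so the new bottom block is $W_j = (R_j)_{[3:4],[1:2]} W_{j-1}$. By Lemma~\ref{lemma:characterization_very_entangling} and the condition that the nonlocal parameters are not multiples of $\pi/2$, $(R_j)_{[3:4],[1:2]}$ is invertible. Thus $W_l$ has rank $2$, and $\Gamma_{(l+1,1)} = W_l$ (the right multiplication by nothing further, since rows $1,2$ are final after $R_1$).

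\textbf{Main obstacle.} The main obstacle is bookkeeping the right multiplication correctly. One must make sure that rows and columns $1,2$ are fixed after the first gate, so that $\Gamma_{[1:2n],[1:2]}$ is exactly the transported pair of columns. I would handle this by writing $\Gamma=R_l\cdots R_2\,\Gamma^{(1)}R_2^\transpose\cdots R_l^\transpose$ with $\Gamma^{(1)}=R_1\Gamma'R_1^\transpose$, and noting that $R_j$ for $j\ge2$ acts as the identity on columns $1,2$.
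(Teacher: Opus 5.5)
Your proposal is correct and follows the paper's proof. The paper treats the first claim with the same light-cone remark. It proves the second by induction on the diagonal length: it peels off the last gate $G$ and uses $\Gamma'_{(l+1,1)}=0$ to get $\Gamma_{(l+1,1)}=R'\Gamma'_{(l,1)}$, where $R'$ is the off-diagonal block made invertible by Lemma~\ref{lemma:characterization_very_entangling}, with the base case $l=1$ coming from Observation~\ref{observation:lsr_pure_fgs}. This is exactly your recursion for $W_j$, unrolled forward.

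You should discard your first, abandoned formulation (right multiplication by $R^\transpose_{[1:2],[1:2]}$). It is wrong because $R^\transpose_{[1:2n],[1:2]}$ has nonzero entries in rows $1$ to $4$, not only rows $1,2$. The version you settle on, $\Gamma_{[1:2n],[1:2]}=R_l\cdots R_2\,(R_1\Gamma' R_1^\transpose)_{[1:2n],[1:2]}$, is the correct bookkeeping.
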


\begin{proof}
As the first diagonal is the only one that correlates the first qubit with any other, the first statement is clear: If the diagonal has a length of $l$, any qubit with index $k>l+1$ cannot share correlations with the first qubit (see the example above). For the second part, we need to trace the action of each gate in the first diagonal on the CM. We do this inductively over the length of the first diagonal. First we consider the case $l=1$. Here, the state $\ket\psi$ is a product $\ket\psi = \ket{\psi_\partyA}\ket{\psi_{\partyB}}$ between the first two and remaining $n-2$ qubits, and its CM $\Gamma$ is a direct sum $\Gamma = \Gamma_\partyA\oplus\Gamma_\partyB$. Due to the conditions on the gates, $\ket{\psi_\partyA}$ is entangled, and the offdiagonal $2\times2$ block of $\Gamma_\partyA$ has maximal rank \cite{BoRe04} (see also Observation~\ref{observation:lsr_pure_fgs}). Now we consider the case of a diagonal with length $l$. To this end, suppose the statement is true for all RSF circuits in which the first diagonal has length $l-1$. Showing the statement also for length $l$ is straightforward to see as follows: Decompose the circuit $U$ into $U = G U'$, where $G$ is the last gate in the first diagonal of $U$ (in the example given above, the last gate in the highlighted diagonal), and $U'$ consists of all remaining gates. Note that $U'$ is likewise in RSF, and its first diagonal has length $l-1$. Denote by $R$ the orthogonal matrix corresponding to $G$, with $\Gamma'$ the CM of the state before applying $G$, and with $\Gamma = R \Gamma'R^\transpose$ the CM of $GU'\ket{\cbs}$. We know that the block $\Gamma'_{(l,1)}$ has maximal rank, and that $\Gamma'_{(l+1,1)} = 0$. Therefore, $\Gamma_{(l+1,1)} = R' \Gamma'_{(l,1)}$, where $R'$ is the corresponding offdiagonal block of $R$. Since $G$ has nontrivial nonlocal parameters, $\Gamma_{(l+1,1)}$ also has maximal rank.
\end{proof}

We use these results now to show the subsequent lemma, which will be the main tool to prove Theorem~\ref{thm:RSF_optimal}. As explained above, the idea will be to compare the first diagonals of two circuits in RSF that generate the same state $\ket\psi$. One of the circuits has optimal gate count to generate $\ket\psi$, i.e., no other circuit acting on any possibly different computational basis state generates $\ket\psi$ with fewer gates (recall that we allow arbitrary computational basis states as initial states). The other circuit satisfies the conditions of Theorem~\ref{thm:RSF_optimal}. From the latter, we can infer some properties of the first two columns of the CM corresponding to the generated state. As the optimal circuit has produce these same properties, the first diagonals have to coincide.

\renewcommand{\cbs}{c}

\begin{lemma}\label{lemma:diagonals_coincide}
    Consider an MGC $U$ in RSF $((k_1,l_1),\ldots,(k_m,l_m))$, s.t. its first diagonal satisfies the conditions of Theorem~\ref{thm:RSF_optimal}, and an FGS $\ket\psi = U \ket{b}$, where $\ket{b} = \ket{b_1\ldots b_n}$ is a computational basis state. Consider furthermore another MGC $V$ in RSF $((h_1,r_1),\ldots,(h_p,r_p))$ s.t. $\ket\psi = V \ket{\cbs}$ for possibly another computational basis state $\ket\cbs = \ket{\cbs_1\ldots\cbs_n}$ and suppose that $V$ has optimal gate count for generating $\ket\psi$. Then, $h_1 = k_1$ and $r_1 = l_1$. Furthermore, if $l_1>1$, the first diagonals apart from the first gate of $U$ and $V$ coincide up to local gates $T^{(j)}$, i.e., 
    \begin{equation}\label{eq:diagonal_coincide}
	\prod_{j=2}^{l_1} V_{k_1 +j-1, k_1+j}^{(1j)} = \prod_{j=2}^{l_1} U_{k_1 +j-1, k_1+j}^{(1j)} \prod_{j=1}^{l_1}T_{k_1+j}^{(j)}.
\end{equation}
    Each gate $T^{(j)}$ is of the form $\exp(\ii\phi_j Z) P$ with $P\in\{\id_2, X\}$.
\end{lemma}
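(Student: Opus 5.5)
Our approach is to read off the first diagonal of each circuit from the first columns of the CM $\Gamma$ of $\ket\psi$. In RSF, no gate other than the first diagonal touches qubits $1,\ldots,k_1$ of $U$ (respectively $h_1$ of $V$). Qubits to the left of $k_1$ therefore sit in product computational basis states. This gives $\Gamma_{(k,q)}=0$ for $q<k_1$, $k\neq q$, and the analogous statement for $h_1$.

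\textbf{Step 1: $h_1=k_1$.} By the first-layer condition of Theorem~\ref{thm:RSF_optimal}, qubit $k_1$ is entangled with qubit $k_1+1$ after the first gate. Lemma~\ref{lemma:RSF_state_covariance_matrix}, applied to the subsystem starting at qubit $k_1$ (all qubits left of $k_1$ are in product states, so we may restrict to them), shows that $\Gamma_{(k_1+l_1,k_1)}$ has full rank. Hence qubit $k_1$ is entangled with the rest, which forces $h_1\le k_1$. Conversely, qubit $h_1$ must be entangled in $V\ket\cbs$. Otherwise the first gate of $V$ (which acts on $\ket{\cbs_{h_1}\cbs_{h_1+1}}$) generates no entanglement, is a phase times bit flips, and could be removed, contradicting optimality. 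This gives $k_1\le h_1$.

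\textbf{Step 2: $r_1=l_1$.} Lemma~\ref{lemma:RSF_state_covariance_matrix} applied to $U$ gives $\Gamma_{(k,k_1)}=0$ for $k>k_1+l_1$ and full rank at $k=k_1+l_1$. Applied to $V$ (first statement only), it gives $\Gamma_{(k,k_1)}=0$ for $k>k_1+r_1$, so $r_1\ge l_1$. If $r_1>l_1$, we would produce a cheaper circuit and contradict optimality. I would try the following. Let $\Gamma'$ be the CM after undoing the first $l_1$ gates of $V$'s first diagonal, i.e.\ act with their inverses. Because the remaining gates of $V$ are laid out in RSF to the right, the two columns of qubit $k_1$ after these $l_1$ gates are supported on blocks up to $k_1+l_1$. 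Comparing with $\Gamma$, whose column blocks vanish beyond $k_1+l_1$, the later gates $V^{(1j)}$, $j>l_1$, must map a vector pair supported on qubit $k_1+j-1$ into the same qubit. By Observation~\ref{lemma:orthogonal_matrix_determined}, such a gate is then local (a direct sum) and can be commuted or absorbed into the remaining diagonals without increasing the gate count, indeed decreasing it. This contradicts optimality.

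\textbf{Step 3: gate-by-gate matching, Eq.~\eqref{eq:diagonal_coincide}.} We argue inductively from the last gate $j=l_1$ backwards. Undo the last gate of $U$ and of $V$. Both resulting CMs must have vanishing block $(k_1+l_1,k_1)$, while the block $(k_1+l_1-1,k_1)$ has full rank; the latter follows from Lemma~\ref{lemma:RSF_state_covariance_matrix} with length $l_1-1$. Since the rest of each circuit does not touch qubit $k_1$ after these gates, both $U^{(1l_1)\,-1}$ and $V^{(1l_1)\,-1}$ map the two linearly independent column vectors of block $(\cdot,k_1)$, restricted to the qubit pair $(k_1+l_1-1,k_1+l_1)$, into the span of the first qubit's two coordinates. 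Observation~\ref{lemma:orthogonal_matrix_determined} then gives $R_V = R_U(S\oplus T)$, i.e.\ $V^{(1l_1)}=U^{(1l_1)}(T\otimes T')$ with $T,T'$ of the form $\exp(\ii\phi Z)P$, $P\in\{\id_2,X\}$. The local factor on the right qubit commutes past the remaining first-diagonal gates. The factor on the left qubit is pushed into the previous gate, and we iterate down to $j=2$. Gate $j=1$ is excluded, since there full rank is not guaranteed and only the entangling condition is available.

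The main obstacle is Step 2. There one must show rigorously that surplus gates of $V$ beyond length $l_1$ are removable, and keep track of how the other diagonals of $V$, which act on the same qubits, interact with the argument.
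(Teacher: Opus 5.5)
Your proposal is correct and follows the paper's proof essentially step by step. Step 1 uses the same removal of a non-entangling first gate of $V$ and the same correlation argument to get $h_1=k_1$. Step 2 uses the same combination of Lemma~\ref{lemma:RSF_state_covariance_matrix} and Observation~\ref{lemma:orthogonal_matrix_determined} to rule out $r_1>l_1$. Step 3 is the same backward induction for Eq.~\eqref{eq:diagonal_coincide}, with the left local factor pushed into the preceding gate.

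Your ``main obstacle'' in Step~2 goes away once two imprecisions are fixed:
\begin{itemize}
\item The state to compare with $\Gamma$ is obtained by \emph{applying}, not undoing, the first $l_1$ gates of $V$'s first diagonal after its other diagonals. Lemma~\ref{lemma:RSF_state_covariance_matrix} controls its first two columns regardless of those other diagonals.
\item Locality can be concluded only for the single gate $V^{(1,l_1+1)}$. Its input pair is linearly independent because $\Gamma_{(k_1+l_1,k_1)}$ has full rank. For $j>l_1+1$ the pair is zero, so the Observation gives nothing there. One local gate already yields the contradiction with optimality.
\end{itemize}
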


\begin{proof} We proceed in three steps. First, we show $h_1 = k_1$, i.e., that the locations of the first diagonals coincide. Second, we show also that the lengths $l_1$ and $r_1$ are equal. In the third step, we then prove Eq.~\eqref{eq:diagonal_coincide}. \par\medskip

\textbf{Step 1.} We show $k_1 = h_1$ by deriving contradictions from the alternatives $h_1 < k_1$ and $h_1 > k_1$. Suppose that $h_1 < k_1$. W.l.o.g., we set $h_1= 1$ as this corresponds just to ignoring the first qubits on which neither $U$ nor $V$ act. Since then $U$ does not act on the first qubit, the state $\ket\psi = U\ket{b} = \ket{b_1}\ket\phi$ is a product. Only a single gate in $V$ acts on the first qubit, specifically the first gate in the first diagonal $V^{(11)}$. Therefore, $V^{(11)}\ket{\cbs_1\cbs_2} = \alpha \ket{b_1}\ket{\cbs'}$, where $\ket{\cbs'}$ is a computational basis state due to parity preservation and $\alpha$ is a phase. By replacing the first two bits of $\ket{\cbs}$ with $\ket{b_1}\ket{\cbs'}$ and acting with the circuit obtained by omitting $V^{(11)}$ from $V$, one generates $\ket{\psi}$ with one gate less. This is a contradiction to $V$ having the lowest possible matchgate count, and we have $h_1\geq k_1$.

	Suppose now that $h_1 > k_1$. From Lemma~\ref{lemma:RSF_state_covariance_matrix}, we know that the block in the CM corresponding to qubits $k_1$ and $k_1+l_1$ is non-vanishing, i.e., that these qubits share correlations. On the other hand, since $V$ never acts on qubit $k_1$, it cannot produce these correlations. Hence $V\ket{\cbs} \neq \ket\psi$ and we conclude that $h_1 = k_1$. W.l.o.g., set from now on $k_1=h_1 = 1$.

\par\medskip
\textbf{Step 2.} Here, we show $r_1=l_1$ by contradiction. Assume first $r_1 < l_1$. Due to Lemma~\ref{lemma:RSF_state_covariance_matrix}, we know that the block of the CM corresponding to qubits $1$ and $l_1+1$ is non-vanishing, i.e., that the reduced state of $\ket\psi$ on qubits $1$ and $l_1 + 1$ is not a tensor product. On the other hand, the state $V\ket{\cbs}$ cannot have correlations between these qubits since $r_1 < l_1$ (i.e., the reduced state of $V\ket{\cbs}$ on qubits $1$ and $l_1+1$ can only be a tensor product). We deduce that $V\ket{\cbs}$ cannot be $\ket\psi$ in this case, and hence we need $r_1 \geq l_1$.

Assume now $r_1 > l_1$. The idea is to show that $V^{(1\,l_1+1)}$, the $(l_1+1)$-th gate in the first diagonal of $V$, must be a product of two local gates. To this end, we will use of Observation~\ref{lemma:orthogonal_matrix_determined}. Decompose the first diagonal of $V$ into $V_3V_2V_1$ (we illustrate this below) s.t. \[V_2 = V^{(1\,l_1+1)}_{l_1+1,l_1+2}\] is the gate of interest, and 
\[V_1 = \prod_{j=1}^{l_1} V_{j,j+1}^{(1j)}\quad\text{and}\quad V_3= \begin{cases}\prod_{j=l_1+2}^{r_1} V_{j,j+1}^{(1j)}, &\text{ if } r_1 \geq l_1+2, \\
	\ID, &\text{ if } r_1 = l_1+1
	\end{cases} \]
contain the preceding and possible subsequent gates. Denote by $V_\text{L}$ the circuit obtained from $V$ when omitting the first diagonal. We can hence write
\[
    \begin{tikzpicture}[scale = 0.8]
        \node at (-1, 0.825) {$\ket \psi = $};
        \node at (4.25, 0.825) {$ = $};
        \node at (8.7, 0.825) {,};
        \node[anchor = south] at (1.75,1.75) {$U\ket{b}$};
        \node[anchor = south] at (6.75,1.75) {$V_3V_2V_1V_\text{L}\ket{\cbs}$};
        \begin{scope}[xshift=0cm, rotate=90]
            \helpdrawCircuitLines{0}{7}{0}{1.75}
            \circuitInitXAt{0.25}
            \helpdrawMGColorToFaint
            \helpdrawRSFdiagonal{2}{5}
            \helpdrawRSFdiagonal{5}{2}
            \helpdrawMGColorReset
            \helpdrawRSFdiagonal{0}{3}
        \end{scope}
        \begin{scope}[xshift=5cm, rotate=90]
            \helpdrawCircuitLines{0}{7}{0}{1.75}
            \circuitInitXAt{0.25}
            \helpdrawMGColorToFaint
            \helpdrawRSFdiagonal{3}{2}
            \helpdrawRSFdiagonal{5}{2}
            \helpdrawMGColorReset
            \helpdrawRSFdiagonal{0}{3}
            \circuitAdvanceXBy{\flatgateadvance}
            \circuitAdvanceXBy{\flatgateadvance}
            \circuitAdvanceXBy{\flatgateadvance}
            \circuitMultiGate[\flatgatesize][othergatecolorbg][othergatecolorfg][]{}{-3}{-4}{}
            \circuitAdvanceXBy{\flatgateadvance}
            \circuitMultiGate[\flatgatesize][white][black][]{}{-4}{-5}{}
        \end{scope}
    \end{tikzpicture}
\]
where the light blue, dark blue, red and framed gate(s) on the right hand side correspond to $V_\text{L}, V_1,V_2$ and $V_3$ respectively. Since the first diagonal of $U$ (left side of the illustration) satisfies the conditions of Theorem~\ref{thm:RSF_optimal}, we know due to Lemma~\ref{lemma:RSF_state_covariance_matrix} that the first two columns $C$ of the CM of $\ket\psi$ have the form
\[
C^\transpose = \begin{pmatrix}
	\alpha_1 & \ldots & \alpha_{2l_1} & x_1 & x_2 & 0 & \ldots& 0\\
	\beta_1 & \ldots & \beta_{2l_1} & y_1 & y_2 & 0 & \ldots &0 
	\end{pmatrix},
\]
where $\alpha_i, \beta_i$ are some numbers, and $x=(x_1,x_2)$ and $y = (y_1,y_2)$ are linearly independent. In particular, all blocks corresponding to qubits $1$ and $k\geq l_1+2$ vanish. These are exactly the qubits on which $V_3$ acts nontrivially. Hence, while $V_3^\dagger\ket\psi$ is not necessarily equal to $\ket\psi$, the first two columns of the CM of $V_3^\dagger\ket\psi$ coincide with those of $\ket\psi$. That is, we have $C = R^\transpose_3 C$ with $R_3$ being the orthogonal matrix corresponding to $V_3$. On the other hand, consider the state $V_1V_\text{L}\ket{\cbs}$. The circuit $V_1V_\text{L}$ is clearly also in RSF and its first diagonal has length $l_1$. Again, from Lemma~\ref{lemma:RSF_state_covariance_matrix} it follows that the first two columns $D$ of the CM of $V_1V_\text{L}\ket{\cbs}$ are 
\[
	D^\transpose = \begin{pmatrix}
		\gamma_1 & \ldots & \gamma_{2l_1} & \gamma_{2l_1+1} & \gamma_{2l_1+2} & 0 & \ldots & 0\\
		\delta_1 & \ldots & \delta_{2l_1} & \delta_{2l_1+1} & \delta_{2l_1+2} & 0 & \ldots & 0
	\end{pmatrix},
\]
with some numbers $\gamma_i,\delta_i$, and vanishing entries for rows corresponding to qubits $k\geq l_1+2$. Denote now by $R_2 = \id_{2l_1}\oplus R'_2 \oplus \id_{2n-2l_1-4}$ the orthogonal matrix corresponding to $V_2$ (recall that $V_2$ acts nontrivially only on qubits $l_1+1$ and $l_1+2$). From $V_3^\dagger \ket\psi = V_2V_1V_\text{L}\ket{\cbs}$, we have that $C = R_2 D$ and hence
\[ R_2'^\transpose	\begin{pmatrix}
		x_1 & y_1 \\
		x_2 & y_2 \\
		0 & 0\\0&0
	\end{pmatrix}
	=
	\begin{pmatrix}
		\gamma_{2l_1+1} & \delta_{2l_1+1} \\
		\gamma_{2l_1+2} & \delta_{2l_1+2} \\
		0 & 0\\0&0
	\end{pmatrix}.
\]
It follows from Observation~\ref{lemma:orthogonal_matrix_determined} that $R'_2$ is a direct sum of two $2\times 2$ orthogonal matrices, and that therefore $V_2$ is a tensor product of local (combination of phase and Pauli-$X$ gates). As we explain next, these gates can be absorbed into the remaining circuit $V_1V_\text{L}$, and the input state $\ket{\cbs}$. For phase gates, this is clear since multiplication with any matchgate yields another matchgate, while applying a phase gate to a computational basis state contributes only a global phase. Similarly, Pauli-$X$ gates can also be absorbed: For any matchgate $G$, the gates $(X \otimes \id_2) G (X \otimes \id_2)$ and $(\id_2 \otimes X) G (\id_2 \otimes X)$ are again matchgates. Consequently, by adding $\id_2=X^2$ to the circuit we can update the matchgates in the circuit until a Pauli-$X$ gate acts directly on the input state, where it simply flip the corresponding bit. Note that this procedure does not change the number of gates or the layout of the circuit. With this, we could generate $\ket\psi$ with a different circuit, acting possibly on a different computational basis state, with one fewer gate. This leads to a contradiction, and therefore $(k_1, l_1)=(h_1,r_1)$.

\par\medskip
\textbf{Step 3.} From the previous steps, we know that the locations and lengths of the first diagonals of $U$ and $V$ coincide. We now want to prove Eq.~\eqref{eq:diagonal_coincide}, i.e., that the diagonals coincide up to local gates. Our proof will be inductive. Let us first illustrate the main idea graphically for the first induction step. To this end, we write 
\[
    \begin{tikzpicture}[scale = 0.8]
        \node at (-1, 0.825) {$\ket \psi = $};
        \node at (4.25, 0.825) {$ = $};
        \node at (8.7, 0.825) {,};
        \node[anchor = south] at (1.75,1.75) {$U_\text{F} U_\text{R} \ket{b}$};
        \node[anchor = south] at (6.75,1.75) {$V_\text{F} V_\text{R}\ket{\cbs}$};
        \begin{scope}[xshift=0cm, rotate=90]
            \helpdrawCircuitLines{0}{7}{0}{1.75}
            \circuitInitXAt{0.25}
            \helpdrawMGColorToFaint
            \helpdrawRSFdiagonal{2}{5}
            \helpdrawRSFdiagonal{5}{2}
            \helpdrawMGColorReset
            \helpdrawRSFdiagonal{0}{3}
            \circuitAdvanceXBy{\flatgateadvance}
            \circuitAdvanceXBy{\flatgateadvance}
            \circuitAdvanceXBy{\flatgateadvance}
            \circuitMultiGate[\flatgatesize][white][black][]{}{-3}{-4}{}
        \end{scope}
        \begin{scope}[xshift=5cm, rotate=90]
            \helpdrawCircuitLines{0}{7}{0}{1.75}
            \circuitInitXAt{0.25}
            \helpdrawMGColorToFaint
            \helpdrawRSFdiagonal{3}{2}
            \helpdrawRSFdiagonal{5}{2}
            \helpdrawMGColorReset
            \helpdrawRSFdiagonal{0}{3}
            \circuitAdvanceXBy{\flatgateadvance}
            \circuitAdvanceXBy{\flatgateadvance}
            \circuitAdvanceXBy{\flatgateadvance}
            \circuitMultiGate[\flatgatesize][othergatecolorbg][othergatecolorfg][]{}{-3}{-4}{}
        \end{scope}
    \end{tikzpicture}
\]
where we decompose $U = U_\text{F} U_\text{R}$ and $V = V_\text{F} V_\text{R}$ into the last gates of the first diagonal (the framed and red gate), and all remaining gates, respectively.
Apply now the inverse of $U_\text{F}$ to $\ket\psi$. One obtains
\[
    \begin{tikzpicture}[scale = 0.8]
        \node at (-1, 0.825) {$U^\dagger_\text{F}\ket \psi = $};
        \node at (4.25, 0.825) {$ = $};
        \node at (9.25, 0.825) {$ = $};
        \node at (13.7, 0.825) {,};
        \node[anchor = south] at (1.75,1.75) {$U_\text{R} \ket{b}$};
        \node[anchor = south] at (6.75,1.75) {$U^\dagger_\text{F} V_\text{F} V_\text{R} \ket{\cbs}$};
        \node[anchor = south] at (11.75,1.75) {$(S \otimes T) V_\text{R} \ket{\cbs}$};
        \begin{scope}[xshift=0cm, rotate=90]
            \helpdrawCircuitLines{0}{7}{0}{1.75}
            \circuitInitXAt{0.25}
            \helpdrawMGColorToFaint
            \helpdrawRSFdiagonal{2}{5}
            \helpdrawRSFdiagonal{5}{2}
            \helpdrawMGColorReset
            \helpdrawRSFdiagonal{0}{3}
        \end{scope}
        \begin{scope}[xshift=5cm, rotate=90]
            \helpdrawCircuitLines{0}{7}{0}{1.75}
            \circuitInitXAt{0.25}
            \helpdrawMGColorToFaint
            \helpdrawRSFdiagonal{3}{2}
            \helpdrawRSFdiagonal{5}{2}
            \helpdrawMGColorReset
            \helpdrawRSFdiagonal{0}{4}
            \circuitAdvanceXBy{\flatgateadvance}
            \circuitAdvanceXBy{\flatgateadvance}
            \circuitAdvanceXBy{\flatgateadvance}
            \circuitMultiGate[\flatgatesize][othergatecolorbg][othergatecolorfg][]{}{-3}{-4}{}
            \circuitAdvanceXBy{\flatgateadvance}
            \circuitMultiGate[\flatgatesize][white][black][]{}{-3}{-4}{}
        \end{scope}
        \begin{scope}[xshift=10cm, rotate=90]
            \def\circuitGateHeight{\circuitlinespacing*0.5}
            \helpdrawCircuitLines{0}{7}{0}{1.75}
            \circuitInitXAt{0.25}
            \helpdrawMGColorToFaint
            \helpdrawRSFdiagonal{3}{2}
            \helpdrawRSFdiagonal{5}{2}
            \helpdrawMGColorReset
            \helpdrawRSFdiagonal{0}{3}
            \circuitAdvanceXBy{\flatgateadvance}
            \circuitAdvanceXBy{\flatgateadvance}
            \circuitAdvanceXBy{\flatgateadvance}
        \circuitSingleGate[\flatgatesize][othergatecolorbg][othergatecolorfg]{}{-3}{}
        \circuitSingleGate[\flatgatesize][othergatecolorbg][othergatecolorfg]{}{-4}{}
        \end{scope}
    \end{tikzpicture}
\]
with two local gates $S$ and $T$, where the last equality follows from a similar argument as above. In particular, since qubits $1$ and $l_1+1$ of $U_\text{F}^\dagger\ket{\psi}$ do not share correlations but the block in its CM corresponding to qubits $1$ and $l_1$ has maximal rank, the gate $U^\dagger_\text{F} V_\text{F}$ cannot be entangling. The gate $T$ corresponds to the last local gate in Eq.~\eqref{eq:diagonal_coincide}. One now absorbs $S$ in the next gate in the first diagonal of $V_\text{R}$, and repeats the same procedure. In the following, we explain this in detail.

Let us define circuits
\[U_\text{L} = \prod_{i=m}^2\prod_{j=1}^{l_i} U^{(ij)}_{k_i + j -1, k_i+j} \quad\text{and}\quad V_\text{L} = \prod_{i=p}^2\prod_{j=1}^{r_i} V_{h_i+j-1,h_i+j}^{(1j)} \]
obtained from $U$ and $V$ by omitting the first diagonals. We furthermore define for $s=2,\ldots, l_1+1$ the states
\[
	\ket{\psi^s} = \prod_{j=1}^{s-1} U_{j,j+1}^{(1j)} U_\text{L} \ket{b} \quad\text{ and }\quad
	\ket{\phi^s} = \prod_{j=1}^{s-1} V_{j,j+1}^{(1j)} V_\text{L} \ket{\cbs}
\]
obtained by applying $U_\text{L}$ ($V_\text{L}$), and the first $s$ gates of the respective first diagonals. We will demonstrate the existence of single qubit (combination of phase and Pauli-$X$) gates $S^{(s)}$ and $T^{(s)}$ s.t.
\begin{equation} \label{eq:diagonal_coincide_proof_recursion_step_2}
\ket{\psi^s} = S^{(s)}_{s} H^{(s)} \ket{\phi^{s}}
\quad\text{where}\quad
	H^{(s)} = \begin{cases}
		\prod_{t=s+1}^{l_1+1} T_{t}^{(t-1)}, &\text{if } s\leq l_1,\\
		\ID, &\text{if } s = l_1+1, 
	\end{cases}
\end{equation}
for $s=l_1+1,\ldots,2$. As a byproduct, we will obtain 
\begin{equation} \label{eq:diagonal_coincide_proof_recursion_step}
	(\ID_2 \otimes S^{(s+1)}) V^{(1s)} = U^{(1s)} (S^{(s)} \otimes T^{(s)}),
\end{equation}
i.e., some equations from which Eq.~\eqref{eq:diagonal_coincide} follows immediately. To start the induction, set $S^{(l_1+1)} = \id$. With this, Eq.~\eqref{eq:diagonal_coincide_proof_recursion_step_2} holds trivially for $s=l_1+1$. Assume now that Eq.~\eqref{eq:diagonal_coincide_proof_recursion_step_2} holds for a given $s+1$. From the definitions of $\ket{\psi^s}$ and $\ket{\phi^s}$, one can infer the recurrences
\[
\ket{\psi^{s}} = U_{s,s+1}^{(1s) \dagger} \ket{\psi^{s+1}} \quad\text{and}\quad
\ket{\phi^{s}} = V_{s,s+1}^{(1s) \dagger} \ket{\phi^{s+1}},
\]
and with Eq.~\eqref{eq:diagonal_coincide_proof_recursion_step_2} for $s+1$, one gets 
\[
	\ket{\phi^{s}}
    = \big(S^{(s+1)}_{s+1} V_{s,s+1}^{(1s)} \big)^\dagger H^{(s+1)\dagger} \ket{\psi^{s+1}} 
    = G^{(s)}_{s,s+1}H^{(s+1)\dagger}\ket{\psi^{s}}, 
\]
where we have set $G^{(s)} = [(\ID_2 \otimes S^{(s+1)}) V^{(1s)}]^\dagger U^{(1s)}$ (note also that $H^{(s+1)}$ and $U_{s,s+1}^{(1s)}$ act on disjoint qubits). We aim to show now that $G^{(s)}$ is a tensor product (see the illustration above). Since $\ket{\psi^s}$ and $\ket{\phi^s}$ are generated by MGCs in RSF, we get from Lemma~\ref{lemma:RSF_state_covariance_matrix} again that the first two columns $C$ and $D$ of their CMs are given by
\[
	C_s^\transpose = \begin{pmatrix}
		\alpha_1 & \ldots & \alpha_{2s-2} & x_1 & x_2 & 0 & \ldots & 0\\
		\beta_1 & \ldots & \beta_{2s-2} & y_1 & y_2 & 0 & \ldots & 0
	\end{pmatrix},
    \quad\text{and}\quad
    D_{s}^\transpose = \begin{pmatrix}
		\gamma_1 & \ldots & \gamma_{2s-2} & \gamma_{2s-1} & \gamma_{2s} & 0 & \ldots & 0\\
		\delta_1 & \ldots & \delta_{2s-2} & \delta_{2s-1} & \delta_{2s} & 0 & \ldots & 0
	\end{pmatrix},
\]
with some suitable numbers, respectively. The vanishing blocks correspond to the correlations of qubit $1$ with qubits $s+1,\ldots,n$. Note that $H^{(s+1)}$ acts nontrivially only on those qubits. Therefore, we get $D_s = R_s C_s$, with $R_s$ being the orthogonal matrix corresponding to $G^{(s)}_{s,s+1}$. Furthermore, $(x_1,y_1)$ and $(y_1,y_2)$ are linearly independent (Lemma~\ref{lemma:RSF_state_covariance_matrix}). It follows from Observation~\ref{lemma:orthogonal_matrix_determined} that $G^{(s)} = (S^{(s)} \otimes T^{(s)})^\dagger$ for some $S^{(s)}$ and $ T^{(s)}$. This proves the induction step and shows Eq.~\eqref{eq:diagonal_coincide_proof_recursion_step} for the given $s$.
\end{proof}

We now have established all the tools to prove Theorem~\ref{thm:RSF_optimal}.

\begin{proof}[Proof of Theorem~\ref{thm:RSF_optimal}]
Let $\ket{\psi} = U\ket{b}$, where 
\[ 
	U = D_1 \ldots D_m
\]
is the circuit in RSF $((k_1, l_1),\ldots, (k_m,l_m))$ with diagonals $D_i$ satisfying the conditions in Theorem~\ref{thm:RSF_optimal}, and $\ket {b} = \ket{b_1\ldots b_n}$ a computational basis state. Let $V$ be an MGC generating $\ket{\psi}$, that has minimal matchgate count for generating $\ket\psi$ when acting on any computational basis state. Due to the absorption algorithm~\cite{MoLa25} (Algorithm~\ref{alg:absorption}), we can assume w.l.o.g. $V$ to be in RSF $((h_1,r_1),\ldots,(h_p,r_p))$. That is, 
\[
    V=E_1 \ldots E_p
\]
where $E_i$ are the diagonals of $V$. We furthermore decompose the first diagonals $D_1= D_\text{R}G$ ($E_1 = E_\text{R} H$) into the first gate $D_\text{F}$ ($E_\text{F}$) and remaining gates $D_\text{R}$ ($E_\text{R}$) respectively.

To prove Theorem~\ref{thm:RSF_optimal}, we show that $U$ and $V$ have the same number of gates (and therefore, $U$ also has minimal matchgate count). To this end, we show that $U$ and $V$ have the same number of diagonals $p=m$, and for all $i=1,\ldots,m$, that the positions $h_i = k_i$ and lengths $r_i = l_i$ coincide. From this, the claim follows immediately.

First, we use Lemma~\ref{lemma:diagonals_coincide} and obtain that $h_1 = k_1$, $r_1 = l_1$ and that $E_\text{R} = D_\text{R} T$, where $T$ is a tensor product of local gates. We have seen in step~2 of the proof of Lemma~\ref{lemma:diagonals_coincide} how $T$ can be absorbed into the RSF circuit $E_2\ldots E_p$ and $\ket\cbs$ without changing the number of gates. That is, $T E_\text{F} E_2\ldots E_p \ket\cbs = E_F' E_2'\ldots E_p' \ket{\cbs'}$, with suitable $E'_i$, $E_\text{F}'$ and $\ket{\cbs'}$. Putting everything together, we have
\begin{equation} \label{eq:thmrsfoptimalproof1}
    \begin{tikzpicture}[baseline = 0.7cm,scale = 0.8]
        \node at (-1, 0.825) {$\ket \psi = $};
        \node at (4.25, 0.825) {$ = $};
        \node at (9.25, 0.825) {$ = $};
        \node at (14.25, 0.825) {$ = $};
        \node at (18.7, 0.825) {,};
        \node[anchor = south] at (1.75,1.75) {$D_\text{R}D_\text{F} D_2 \ldots D_m \ket{b}$};
        \node[anchor = south] at (6.75,1.75) {$E_\text{R}E_\text{F} E_2 \ldots E_p \ket{\cbs}$};
        \node[anchor = south] at (11.75,1.75) {$D_\text{R} TE_\text{F} E_2 \ldots E_p \ket{\cbs}$};
        \node[anchor = south] at (16.75,1.75) {$D_\text{R} E'_\text{F} E'_2 \ldots E'_p \ket{\cbs'}$};
        \begin{scope}[xshift=0cm, rotate=90]
            \helpdrawCircuitLines{0}{7}{0}{1.75}
            \circuitInitXAt{0.25}
            \helpdrawMGColorToFaint
            \helpdrawRSFdiagonal{2}{5}
            \helpdrawRSFdiagonal{5}{2}
            \helpdrawMGColorReset
            \helpdrawRSFdiagonal{0}{1}
            \circuitAdvanceXBy{\flatgateadvance}
            \circuitMultiGate[\flatgatesize][white][black][]{}{-1}{-2}{}
            \circuitAdvanceXBy{\flatgateadvance}
            \circuitMultiGate[\flatgatesize][white][black][]{}{-2}{-3}{}
            \circuitAdvanceXBy{\flatgateadvance}
            \circuitMultiGate[\flatgatesize][white][black][]{}{-3}{-4}{}
        \end{scope}
        \begin{scope}[xshift=5cm, rotate=90]
            \helpdrawCircuitLines{0}{7}{0}{1.75}
            \circuitInitXAt{0.25}
            \helpdrawMGColorToFaint
            \helpdrawRSFdiagonal{3}{2}
            \helpdrawRSFdiagonal{5}{2}
            \helpdrawMGColorReset
            \helpdrawRSFdiagonal{0}{1}
            \circuitAdvanceXBy{\flatgateadvance}
            \circuitMultiGate[\flatgatesize][othergatecolorbg][othergatecolorfg][]{}{-1}{-2}{}
            \circuitAdvanceXBy{\flatgateadvance}
            \circuitMultiGate[\flatgatesize][othergatecolorbg][othergatecolorfg][]{}{-2}{-3}{}
            \circuitAdvanceXBy{\flatgateadvance}
            \circuitMultiGate[\flatgatesize][othergatecolorbg][othergatecolorfg][]{}{-3}{-4}{}
        \end{scope}
        \begin{scope}[xshift=10cm, rotate=90]
            \def\circuitGateHeight{\circuitlinespacing*0.5}
            \helpdrawCircuitLines{0}{7}{0}{1.75}
            \circuitInitXAt{0.25}
            \helpdrawMGColorToFaint
            \helpdrawRSFdiagonal{3}{2}
            \helpdrawRSFdiagonal{5}{2}
            \helpdrawMGColorReset
            \helpdrawRSFdiagonal{0}{1}
            \circuitAdvanceXBy{\flatgateadvance}
            \circuitSingleGate[\flatgatesize][othergatecolorbg][othergatecolorfg]{}{-1}{}
            \circuitSingleGate[\flatgatesize][othergatecolorbg][othergatecolorfg]{}{-2}{}
            \circuitAdvanceXBy{\flatgateadvance}
            \circuitMultiGate[\flatgatesize][white][black][]{}{-1}{-2}{}
            \circuitSingleGate[\flatgatesize][othergatecolorbg][othergatecolorfg]{}{-3}{}
            \circuitAdvanceXBy{\flatgateadvance}
            \circuitMultiGate[\flatgatesize][white][black][]{}{-2}{-3}{}
            \circuitSingleGate[\flatgatesize][othergatecolorbg][othergatecolorfg]{}{-4}{}
            \circuitAdvanceXBy{\flatgateadvance}
            \circuitMultiGate[\flatgatesize][white][black][]{}{-3}{-4}{}
        \end{scope}
        \begin{scope}[xshift=15cm, rotate=90]
            \helpdrawCircuitLines{0}{7}{0}{1.75}
            \circuitInitXAt{0.25}
            \helpdrawMGColorToFaint
            \helpdrawRSFdiagonal{3}{2}
            \helpdrawRSFdiagonal{5}{2}
            \helpdrawMGColorReset
            \helpdrawRSFdiagonal{0}{1}
            \circuitAdvanceXBy{\flatgateadvance}
            \circuitMultiGate[\flatgatesize][white][black][]{}{-1}{-2}{}
            \circuitAdvanceXBy{\flatgateadvance}
            \circuitMultiGate[\flatgatesize][white][black][]{}{-2}{-3}{}
            \circuitAdvanceXBy{\flatgateadvance}
            \circuitMultiGate[\flatgatesize][white][black][]{}{-3}{-4}{}
        \end{scope}
    \end{tikzpicture}
\end{equation}
where the framed and red gates constitute $D_\text{R}$ and $E_\text{R}$ respectively.

It remains to show is that $T$ cannot be absorbed into $E_\text{F}E_2\ldots E_p \ket\cbs$ such that $TE_\text{F}E_2\ldots E_p \ket\cbs = V'\ket{\cbs'}$, where $V'$ is another MGC with fewer gates and $\ket{\cbs'}$ another computational basis state. This can be done by contradiction. Suppose $T E_\text{F}E_2\ldots E_p \ket\cbs = V'\ket{\cbs'}$, where $V'$ has fewer gates than $E_2 \ldots E_p$. Then, 
\[
    \ket\psi = E_\text{R} E_\text{F} E_2 \ldots E_p \ket{\cbs} = E_\text{R} T^\dagger T E_\text{F} E_2 \ldots E_p \ket\cbs = D_\text{R} V'\ket{\cbs'}.
\]
The circuit $D_\text{R} V'$ would generate $\ket\psi$ with fewer gates than $V$ and we have obtained the contradiction. 

We can now multiply with $D_\text{R}^\dagger$ in Eq.~\eqref{eq:thmrsfoptimalproof1}, and disregard $D_\text{F}$ and $E_\text{F}'$ since they act on disjoint qubits compared to all the other gates. This way, the problem is reduced to a similar instance with one fewer diagonal, since $D_2\ldots D_m$ satisfies the condition of Theorem~\ref{thm:RSF_optimal}, and $E_2'\ldots E_p'$ has minimal gate count. Applying the same argument hence another $m-1$ times completes the proof.
\end{proof}

\renewcommand{\cbs}{b}

\twocolumngrid

\section{Proof of Th.~\ref{thm:banded_cm_shallow_mgc}} \label{app:proof_shallow_banded}

In Theorem~\ref{thm:banded_cm_shallow_mgc}, we state that if an FGS has a $\bdns$-banded CM, then it can be generated by an MGC of depth $\mathcal{O}(\bdns)$. Likewise, the converse direction holds. The proofs are straightforward, but cumbersome since several cases need to be analyzed. We present this analysis here.

\begin{proof}[Proof of Theorem~\ref{thm:banded_cm_shallow_mgc}]
Consider a CM of a pure FGS which is $\bdns$-banded. To show the existence of a circuit of depth $\mathcal{O}(\bdns)$, we follow the action of Algorithm~\ref{alg:symmetric_euler_decomposition}. Note that it is sufficient to show that the CMs in all the intermediate steps are also $\bdns$-banded. Indeed, suppose that this is the case. Eliminating entries in a given column can be done with at most $\bdns-1$ Givens rotations. Hence, disentangling a single qubit from the others requires a diagonal of length at most $\lceil \frac{\bdns-1}{2} \rceil$. To obtain a circuit in RSF, two such diagonals starting on neighboring qubits are combined in a single diagonal of length at most $\max\{\lceil \frac{\bdns-1}{2} \rceil+1, \lceil \frac{\bdns-1}{2} \rceil\} = \lceil \frac{\bdns+1}{2} \rceil$. Since the diagonals obtained this way start on non-overlapping qubits, the output circuit depth can be bounded by $\lceil \frac{\bdns+1}{2} \rceil$.

What is left to do is to show that all the intermediate CMs, denoted by $\Gamma^{(q,k)}$ as in Algorithm~\ref{alg:symmetric_euler_decomposition}, are still $\bdns$-banded. For simplicity, we illustrate the case $q=1$, cases for the remaining values of $q$ can be treated analogously. We also omit writing the index `$q$' in the following, so we write $\Gamma^{(k)}_{ij}$, where the subindices indicate the row and column of the matrix. First, note that, while $\bdns-1$ Euler rotations are sufficient to disentangle the first qubit, fewer are necessary in case there are more zeros in the column. Let $R^{(k)}$ be the rotations as defined in the algorithm and, likewise, $r$ be the maximum row index for which the first column of the covariance matrix is nonzero.

Consider $\Gamma^{(k)}_{ij}$ for some $i,j,k$ with $|i-j|>b$. W.l.o.g. $i>j$, since the covariance matrix is antisymmetric, and $i>\bdns+1$ (since otherwise $|i-j|\leq\bdns$). From this conditions, we have $i>k,r$. Now, we shall consider the different values that $j$ may take and prove that for each of them the matrix element is zero.

\begin{enumerate}
    \item $j\leq k-1$: if $k=r$ then $\Gamma_{ij}^{(r)}=0$ by definition of $r$. Instead if $k<r$, recall how $\Gamma^{(k)}$ is defined by the algorithm in terms of $\Gamma^{(k+1)}$, $\Gamma_{ij}^{(k)}=\sum_{\mu\nu}R_{i\mu}^{(k+1)}\Gamma_{\mu\nu}^{(k+1)}R_{j\nu}^{(k+1)}=\Gamma_{ij}^{(k+1)}$, where we use that for the given $i,j$, $R_{i\mu}^{(k+1)}=\delta_{i\mu}$ and $R_{j\nu}^{(k+1)}=\delta_{j\nu}$, since $R^{(k+1)}$ is only nontrivial in $k,k+1$. Therefore, we can repeat this argument until reaching $r$, at which point we know that the matrix element is 0: $\Gamma_{ij}^{(k)}=\Gamma_{ij}^{(k+1)}=\dots=\Gamma_{ij}^{(r)}=0$.
    \item $j=k$: Suppose that $\Gamma_{ik}^{(k)}\neq 0$, and let us find a contradiction. We have that $\Gamma_{i,k-1}^{(k-1)}=\sum_{\mu\nu} R_{i,\mu}^{(k)}\Gamma_{\mu\nu}^{(k)}R_{k-1,\nu}^{(k)}$, where only the terms with $\nu=k,k-1$ are nonzero. This yields two terms, but since $\Gamma_{i,k-1}^{(k)}R_{k-1,k-1}^{(k)}$ is zero by case 1, we get $\Gamma_{i,k-1}^{(k-1)}=\Gamma_{i,k}^{(k)}R_{k-1,k}^{(k)}$. Similarly, we can prove $\Gamma_{i,k-2}^{(k-2)}=\Gamma_{i,k-1}^{(k-1)}R_{k-2,k-1}^{(k-1)}$. Iterating this argument we get to $\Gamma_{i,2}^{(2)}=R_{2,3}^{(3)}\dots R_{k-1,k}^{(k)}\Gamma_{i,k}^{(k)}\neq 0$. However, we know that $\Gamma_{i,2}^{(2)}=0$ by the way the CM is modified by the algorithm. Therefore, since all rotations are nonzero by construction, $\Gamma_{i,k}^{(k)}=0$.
    \item $j=k+1$: We have that $\Gamma_{i,k+1}^{(k)}=\sum_{\mu\nu} R_{i,\mu}^{(k+1)}\Gamma_{\mu\nu}^{(k+1)}R_{k+1,\nu}^{(k+1)}=\Gamma_{i,k}^{(k+1)}R_{k+1,k}^{(k+1)}+\Gamma_{i,k+1}^{(k+1)}R_{k+1,k+1}^{(k+1)}$, where the first term is 0 due to case 1 and the second term is 0 due to case 2. Therefore, $\Gamma_{i,k+1}^{(k)}=0$.
    \item $j>k+1$: Similarly as for case 1, we have $\Gamma_{ij}^{(k)}=\Gamma_{ij}^{(k+1)}=\dots =\Gamma_{ij}^{(r)}=0$.
\end{enumerate}

The converse direction can easily be seen with an argument using a lightcone. Suppose $\ket{\psi} = U \ket{\cbs}$ is generated with an MGC $U$ of depth $\depth$. We show that the corresponding CM $\Gamma$ is $4\depth+5$-banded. Consider the entry $\Gamma_{kl} = \ii/2 \bra{\cbs} U^\dagger [c_{k}, c_l] U \ket{\cbs}$, with $\vert k-l\vert > 4\depth +5$. When conjugating a Majorana operator $c_k$ with a circuit of depth~$\depth$, one can only obtain a sum of $2\depth+2$ neighboring Majorana operators. The entry $\Gamma_{kl}$ can hence be written as a sum over contributions of the form $\bra{\cbs} [c_i, c_j]\ket{\cbs}$, with $|i-k| \leq 2\depth+2$ and $|j-l| \leq 2\depth+2$. For any pair of indices $(i,j)$ appearing in the sum it holds that $4\depth+5  < |k - l| \leq |k-i|  + |i-j| + |j-l| \leq |i-j| + 4\depth + 4$. Thus, $|i-j| > 1$, and any of the $c_i c_j$ cannot be proportional to any Pauli-$Z$ operator. Those are, however, the only product of two Majorana operators which give non-zero expectation value for computational basis states, and hence $\Gamma_{kl} = 0$.
\end{proof}

\section{Cutting algorithm with degenerate Williamson eigenvalues} \label{app:cutting_alg_degenerate_exact}

In the main text, we have described an algorithm that takes as an input a $\bdns$-banded covariance matrix $\Gamma$ of a pure FGS $\ket{\psi}$, and outputs a circuit of depth $\mathcal{O}(\bdns)$ that generates $\ket{\psi}$. In one crucial step of the algorithm, the qubits are divided into three parties of consecutive qubits $\partyA$, $\partyB$ and $\partyC$. So far, we have assumed that the Williamson eigenvalues $\lambda^{\partyB}_i$ of $\Gamma_{\partyB\partyB}$ with $|\lambda^{\partyB}_i| < 1$ are non-degenerate. With this assumption. One can then find a rotation acting only on the reduced CM $\Gamma_{\partyB\partyB}$ of $\partyB$, that brings $\Gamma_{\partyB\partyB}$ to Williamson normal form, and sorts the Williamson eigenvalues such that the first $l$, and the last $r$ eigenvalues correspond to entangled pairs shared between $\partyB$ and $\partyA$, and $\partyB$ and $\partyC$ respectively. Here, we aim to modify the algorithm to work without any assumptions on the Williamson eigenvalues. Note that it is of course possible to add an additional few layers of random gates first (which then need to be added to the output circuit as well) in order to break the degeneracy of the eigenvalues. However, one can also address the problem of degenerate eigenvalues directly, as we will explain here.

We consider the CM
\[
\Gamma = \begin{pmatrix} 
		\Gamma_{\partyA\partyA} & -\Gamma_{\partyB\partyA}^\transpose & 0 \\
		\Gamma_{\partyB\partyA} & \Gamma_{\partyB\partyB} & -\Gamma_{\partyC\partyB}^\transpose \\
		0 &  \Gamma_{\partyC\partyB} & \Gamma_{\partyC\partyC}
	\end{pmatrix},
\]
where $\Gamma_{kk}$ is the reduced CM corresponding to party~$k\in\{\partyA,\partyB,\partyC\}$. Since $\Gamma$ is $\bdns$-banded, by assumption $\partyB$ must hold more than $\bdns +2$ qubits, and hence the size of $\Gamma_{\partyB\partyB}$ is at least $2\bdns+2$. For this reason, the lower left and upper right blocks of $\Gamma$ vanish. Denote by $\lambda_i^{k}$ the Williamson eigenvalues of each block.

Determining a MGC $U_\partyB$ acting only on $\partyB$ such that
\[
    \id \otimes U_\partyB \otimes \id \ket{\psi} = \ket{\psi_{\partyA\partyB}} \ket{\psi_{\partyB\partyC}}
\]
can be done as follows. The first step is to determine rotations $R_k$ that take $\Gamma_{kk}$ to Williamson normal form~\footnote{Note that computing $R_\partyA$ and $R_\partyC$ is a convenient step for the following, but to disentangle $\ket{\psi}$ as above it is not necessary to apply any operations on $\partyA$ and $\partyC$.} (note that those are not unique), and define a CM $\tilde \Gamma = (R_\partyA \oplus R_\partyB \oplus R_\partyC) \, \Gamma \, (R_\partyA \oplus R_\partyB \oplus R_\partyC)^\transpose$. The second step is to determine an additional rotation to apply on $\partyB$ which resolves the degeneracies and sorts the qubits depending on whether they are entangled to $\partyA$ or $\partyC$. This can be achieved as follows: Consider the matrix $M$ obtained by removing all vanishing rows from the matrix $\begin{pmatrix} \tilde \Gamma_{\partyB\partyA} & - \tilde \Gamma^\transpose_{\partyC\partyB} \end{pmatrix}$, where $\tilde\Gamma_{ij}$ label the blocks of $\tilde \Gamma$ corresponding to parties $i,j \in \{\partyA, \partyB, \partyC\}$ (note that the empty rows correspond to non-entangled qubits). Then, compute a QR decomposition of $M$, $QD = M$, where $Q$ is orthogonal and $D$ is upper triangular. The desired transformation is given by $Q^\transpose$ acting on the non-vanishing rows. We demonstrate this in the following.

The main idea is find another rotation $\tilde Q$ that separates and sorts the qubits 
and then argue that this $\tilde Q$ also performs a QR decomposition $M = \tilde Q \tilde D$, where $\tilde D$ is diagonal and has full rank. The statement then follows since QR decomposition of invertible matrices are unique apart from multiplication with a diagonal matrix with entries $\pm1$. Note that the procedure we will describe is an alternative to the algorithm based on computing the QR decomposition. We find numerically that an algorithm based on this procedure performs better in some cases when applied to approximately banded covariance matrices. Our approach is based on the constructions given in Ref.~\cite{BoRe04}, adapted to the three-party case as considered here.

Denote by $m_i^k$ the degeneracy of the Williamson eigenvalue $\lambda_i^k$. Consider an eigenvalue $\lambda_i^\partyB$ with $|\lambda_i^\partyB|<1$ of $\Gamma_{\partyB\partyB}$. Then, by Observation~\ref{observation:threepartite_bore}, $\lambda_i^\partyB$ is also an eigenvalue of $\Gamma_{\partyA\partyA}$ and/or $\Gamma_{\partyC\partyC}$, with the sum of both multiplicities being $m_i^\partyA + m_i^\partyC = m^\partyB_i$. The interesting case is when both $m_i^\partyA \neq 0$ and  $ m_i^\partyC\neq0$. Consider the matrix $M_{\lambda_i^\partyB}$ obtained when selecting rows and columns of $\tilde \Gamma$ in which this eigenvalue appears in the blocks on the diagonal, i.e.,
\[
    M_{\lambda_i^\partyB} = \begin{pmatrix} 
		\lambda^\partyB_i J_{2m_i^\partyA} & -G_\partyA^T & 0 \\
		G_\partyA & \lambda^\partyB_i J_{2m_i^\partyB} & G_\partyC \\
		0 & -G_\partyC^T & \lambda^\partyB_i J_{2m_i^\partyC}
	\end{pmatrix}
\]
with $J_{2m} = \bigoplus_{k=1}^m \covblock$. Combine the blocks $G_\partyA$ and $G_\partyC$ into a matrix $G = \begin{pmatrix}G_\partyA & G_\partyC \end{pmatrix}$. We now investigate some properties of $G$. Note that $M_{\lambda_i^\partyB}$ is the CM of some pure FGS. From $M_{\lambda_i^\partyB}M_{\lambda_i^\partyB}^\transpose = \id$, we get 
\begin{align*}
    G_\partyA^\transpose G_\partyA &= (1-(\lambda^\partyB_i)^2) \ID_{2m^\partyA_i}, \\
    G_\partyC^\transpose G_\partyC &= (1-(\lambda^\partyB_i)^2) \ID_{2m^\partyC_i}, \\
    G_\partyA^\transpose G_\partyC &= 0, \\
    \lambda^\partyB_i (G_\partyA J_{2m^\partyA_i} + J_{2m^\partyB_i} G_\partyA) &= 0, \\
    \lambda^\partyB_i (G_\partyC J_{2m^\partyC_i} + J_{2m^\partyB_i} G_\partyC) &= 0.
\end{align*}
Hence $G^\transpose G = (1-(\lambda^\partyB_i)^2) (\ID_{2m^\partyA_i} \oplus \ID_{2m^\partyC_i}) = (1-(\lambda^\partyB_i)^2) \ID_{2m^\partyB_i}$. In case $\lambda^\partyB_i = 0$, $G$ is an orthogonal transformation that, when applied on $\partyB$, separates state into pairs shared between parties $\partyA$, $\partyB$, and parties $\partyB$, $\partyC$. Suppose now that $\lambda^\partyB_i \neq 0$. Multiplying the fourth equation from the left with $G_C^\transpose$ or $G_\partyA^\transpose$, one gets $G_\partyC^\transpose J_{2m^\partyB_i} G_\partyA = 0$ or 
$G_\partyA^\transpose J_{2m^\partyB_i} G_\partyA = - (1-(\lambda^\partyB_i)^2) J_{2m^\partyA_i}$
respectively. Similar identities can be obtained from the fifth equation, and one obtains 
\[
    G^\transpose J_{2m^\partyB_i} G = - (1-(\lambda_i^\partyB)^2) J_{2m^\partyB_i}.
\]
Hence, the desired transformation is given by $\tilde G=(1-(\lambda_i^\partyB)^2)^{-1} G$.

To summarize, in order to find the MGC acting on party $\partyB$ that disentangles $\ket{\psi}$ somewhere within $\partyB$, the following steps are necessary. First, diagonalize all the reduced CMs, and compute the transformed CM of the whole state. Next, for each of the $m$ unique Williamson eigenvalue, select the offdiagonal blocks corresponding to the eigenvalue and compute the transformation $\tilde G_j$ as explained above. Finally, compute a permutation matrix $S$ that swaps pairs entangled to party $\partyA$ ($\partyC$) to the first (last) positions. The MGC acting on party $\partyB$ is the the one corresponding to $S\, (\tilde G_1 \oplus \ldots \oplus \tilde G_m) \, R_\partyB$.

\section{Variations of the cutting algorithm in the approximately banded case} \label{app:cutting_alg_degenerate_approx}

In the main text, we described several modifications to the entanglement cutting algorithm that allow it to be applied to states with approximately banded CMs. There are many possible approaches one could take. Our aim, however, is to ensure that any modification preserves the property that the algorithm outputs the exact circuit whenever the CM is exactly banded. The modifications introduced so far include the incorporation of a block-size parameter $s$, which by construction determines the circuit depth, and a procedure for determining whether a given pair is shared between the block of interest and its preceding or succeeding qubits. Here, we describe two additional modifications that become particularly important when the Williamson eigenvalues of a block are degenerate. They build on tools described in Appendix~\ref{app:cutting_alg_degenerate_exact}.

As before, let us consider a CM $\Gamma$ on three parties $\partyA$, $\partyB$, and $\partyC$. The first step of applying the rotations that diagonalize the reduced CMs of each party remains the same as before. The rotated CM becomes
\[
    \tilde \Gamma = \begin{pmatrix} 
		\bigoplus_i \lambda_i^\partyA J_2 & -\tilde\Gamma_{\partyB\partyA}^\transpose & -\mathcal{E}^\transpose \\
		\tilde\Gamma_{\partyB\partyA} & \bigoplus_i \lambda_i^\partyB J_2& -\tilde\Gamma_{\partyC\partyB}^\transpose \\
		\mathcal{E} &  \tilde\Gamma_{\partyC\partyB} & \bigoplus_i \lambda_i^\partyC J_2
	\end{pmatrix},
\]
where $\lambda_{i}^{k}$ are the Williamson eigenvalues corresponding to party $k\in\{\partyA, \partyB,\partyC\}$, and $\mathcal{E}$ is a matrix that vanishes in the exactly banded case. For our numerical studies, we consider a precision $\varepsilon_\lambda$ for the Williamson eigenvalues, i.e., we assume $\lambda=1$ if $\vert1-\lambda \vert < \varepsilon_\lambda$. Such eigenvalues correspond to qubits that are almost in a computational basis state. In the following, we describe two algorithms for finding an additional rotation to apply to $\partyB$ that approximately disentangles the state.

In some cases we find it to be sufficient to compute a QR decomposition of the the matrix $\begin{pmatrix} \tilde\Gamma_{\partyB\partyA} & -\tilde\Gamma_{\partyC\partyB}^\transpose \end{pmatrix} = QD$, where $Q$ is a rotation and $D$ is upper diagonal with non-negative coefficients on the diagonal. In the previous appendix, we have explained why this works in the exact case, where $D$ becomes diagonal. In order to determine the exact position of the splitting in which the state is disentangled after applying $Q^\transpose$, an additional step is to determine how many entangled pairs are shared with parties $\partyA$ and $\partyC$. This can be done as follows: First, sort the eigenvalues of $\partyB$ and, using the cutoff parameter $\varepsilon_\lambda$, determine the number $m$ of non-trivial eigenvalues of $\partyB$ (i.e., those that do not correspond to computational basis states). Second, sort the combined eigenvalues of $\partyA$ and $\partyC$, while keeping track of which party an eigenvalue belongs to. Third, pick the smallest $m$ of those eigenvalues. Those eigenvalues correspond to the pairs that contribute the most to the entanglement. Among those eigenvalues, count how many of them correspond to $\partyA$ and $\partyC$ respectively.

Another algorithm to find the disentangling rotation is based on a variation of the method described in Appendix~\ref{app:cutting_alg_degenerate_exact}. The idea is to first determine the blocks corresponding to each possibly degenerate Williamson eigenvalue, then find a rotation that separates the pair in this block, and finally apply a permutation to disentangle the state. Note that, in the non-exact case, it is not guaranteed that an eigenvalues belonging to $\partyB$ correspond exactly to those of $\partyA$ or $\partyC$. Hence, eigenvalues would need to be matched based on how close they are. However, such a procedure carries the risk of leaving unpaired eigenvalues that contribute a lot to entanglement. Instead, a variation of the counting procedure described above can be used to perform the matching. To this end, the first three steps are the same as above, and the matching is then given by pairing up the sorted list of eigenvalues of $\partyB$ with the combined sorted list of $\partyA$ and $\partyC$. A second cutoff parameter $\varepsilon_\text{deg}$ is used to decide which consecutive eigenvalues are considered to be degenerate. This can either be done using the list of values of $\partyB$ or the combined list of $\partyA$ and $\partyC$. In our studies, we get slightly better results by doing the latter. Once all the blocks have been determined, one performs a QR decomposition of each block. Finally, as in Appendix~\ref{app:cutting_alg_degenerate_exact}, a permutation is used to map pairs shared with $\partyA$ to one end of $\partyB$, and those shared with $\partyC$ to the other end.

\section{Further numerical examples of the approximate cutting algorithm} \label{app:further_numerical_examples}

In the main text, we applied the approximate cutting algorithm to the Ising model in the disordered phase, where correlations between qubits decay exponentially with the distance. One might, hence, ask whether the cutting algorithm still could perform well if correlations decay much slower, e.g., algebraically. To this end, we apply here the algorithm to the ground state of two additional models. First, we study a long-range Kitaev model, whose ground state exhibits algebraically decaying correlations with a tuneable exponent~\cite{VoLe14,VoLe16}. In this particular model, our results show that the required circuit depths depend on the decay exponent, with a scaling better than $\mathcal{O}(n)$ in some cases. For the second example, we show how our algorithm can be extended to study higher-dimensional systems. We consider the ground state of a Hamiltonian in two dimensions by mapping it to a one-dimensional chain through relabeling of modes. Our results indicate that it is not necessary to have exponentially decaying correlations in the CM in order for the approximate cutting algorithm to output circuits with a better depth scaling than standard methods.

\begin{figure*}[ht!]
    \centering
    \includegraphics[width=\textwidth]{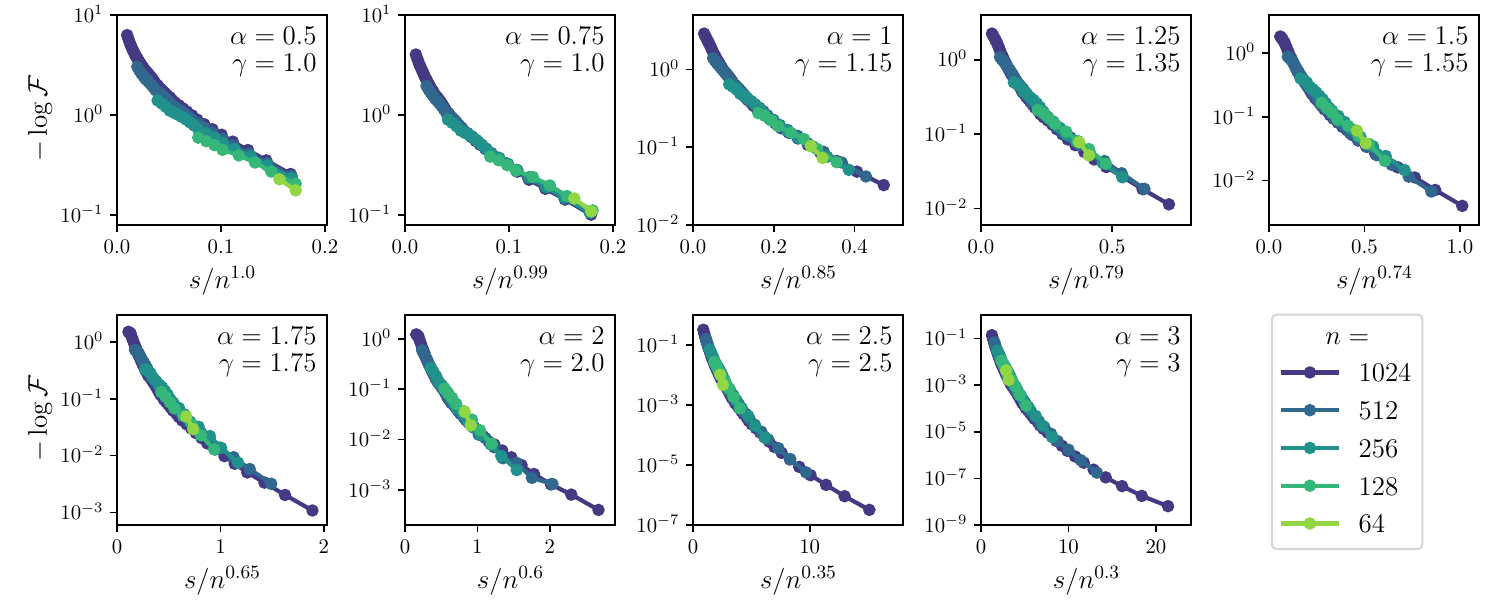}
    \caption{
    Logarithm of the fidelity of the reconstructed ground state for the long range Kitaev model in a regime where correlations decay algebraically (with an exponent of $\gamma$), as a function of block size $s$ rescaled by some function of the system size $n$. Data is presented for various system sizes and hopping decay exponents $\alpha$, at a fixed magnetic field strength of $g=2$.
    }
    \label{fig:algebraic_decay_1}
\end{figure*}

We start considering the Hamiltonian of the long-range Kitaev model~\cite{VoLe14,VoLe16}, described by
\[
H_\mathrm{LRK} = 
-\sum_{i<j} \frac{1}{\vert i - j\vert^\alpha}\majo{2i}\majo{2j+1} - g\sum_i \majo{2i}\majo{2i+1}. 
\]
Note that in the limit $\alpha\to\infty$, one obtains the Ising model. For $g=2$, the ground-state correlations decay algebraically for different values of $\alpha$, with an exponent $\gamma$ determined by $\alpha$. We estimate these exponents from the first two rows of the ground state CM via the relation
\[\sqrt{\sum_{i,j=1}^{2}\Gamma_{i,2x+j}^2 }\sim x^{-\gamma}.\]

In Fig.~\ref{fig:algebraic_decay_1}, we show the fidelity of the reconstructed state for various values of $\alpha$ (and the corresponding $\gamma$) and system sizes, plotted as a function of the block size~$s$. The latter is rescaled by some power of system size~$n^\nu$, in order to obtain a data collapse for different system sizes. This collapse implies that, to reach a given target fidelity for system size $n$, the block size must scale as $n^\nu$. The exponent $\nu$ is estimated by choosing the value that yields the best data collapse.  The data show that for $\gamma\lesssim1$ our methods produce circuits with depth $\mathcal{O}(n)$ (as occurs at the critical point of the Ising model). By contrast, for $\gamma>1$ we get circuits whose depth scales with an exponent $\nu<1$, thereby achieving better scaling than the standard approach. This demonstrates that entanglement-cutting algorithms can be effective even for systems with algebraically decaying correlations.

\newcommand\subA{\mathrm{A}}
\newcommand\subB{\mathrm{B}}

\begin{figure}
    \centering
    \includegraphics[width=\linewidth]{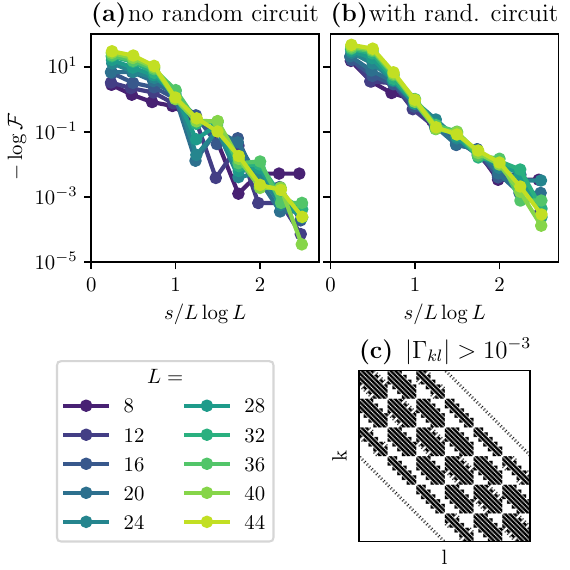}
    \caption{(a) Logarithm of the fidelity of the reconstructed ground state of $H_\text{2d}$ for $t=2$, $\mu=1$, and $\Delta=4$, prepared on $n=2L^2$ qubits, shown as a function of the block size $s$ rescaled by $L\log L$. Data are presented for various system sizes $L$. (b) Same as (a), but with a random MGC of depth $L\log L$ prepended. (c) Illustration of the tile structure of the ground-state CM $\Gamma$ (shown for $L=6$). Dark points indicate matrix elements $|\Gamma_{kl}|$ larger than a fixed cutoff.}
    \label{fig:2d_model}
\end{figure}

For the second example, we study a two-dimensional model of spin-$\frac12$ fermions defined on an $L\times L$ lattice, which arises in mean field approximations in the BCS theory of superconductivity~\cite{BCS57,HiMa89}. We label the two Majorana operators that correspond to the site at position $\vec x$ with spin $\sigma \in\{\uparrow, \downarrow\}$ by $\majo{\vec x, \sigma,\subA}$ and $\majo{\vec x, \sigma,\subB}$. The Hamiltonian reads 
\begin{align*}
H_\text{2d} = &- t \sum_{ \langle \vec x, \vec y \rangle, \sigma} \majo{\vec x,\sigma,\subA} \majo{\vec y,\sigma,\subB} 
 -  \mu \sum_{\vec x, \sigma} \majo{\vec x, \sigma, \subA} \majo{\vec x, \sigma, \subB} \\
&+ \Delta \sum_{\vec x} \Big(
 \majo{\vec x, \uparrow, \subA} \majo{\vec x, \downarrow, \subA} 
- \majo{\vec x, \uparrow, \subB} \majo{\vec x, \downarrow, \subB} \\
&\quad\quad\quad\quad- \majo{\vec x, \uparrow, \subA} \majo{\vec x, \downarrow, \subB}
+ \majo{\vec x, \downarrow, \subB} \majo{\vec x, \uparrow, \subA} \Big),
\end{align*}
where the first sum is carried out over all bonds in the lattice. For the parameters $t=2$, $\mu=1$, and $\Delta=4$, the ground state of $H_\text{2d}$ exhibits correlations that decay exponentially with the distance $|\vec x - \vec y|$ between lattice sites $\vec x$ and $\vec y$.  We now relabel lattice sites using the index $i(\vec x, \sigma) = 4 x_1 + 2 x_2 + \sigma$ (interpreting the spins $\uparrow$ and $\downarrow$ as $0$ and $1$ respectively). This mapping yields a one-dimensional chain of length $n=2L^2$, in which modes that are separated by a distance $\mathcal{O}(L)$ can still have non-vanishing correlations. Therefore, when applying our algorithm one would expects the circuit depth required to reproduce the state to scale at least with the linear system size $L$.

In Fig.~\ref{fig:2d_model}a, we again show the fidelity $\mathcal{F}$ of the reconstructed state as a function of block size $s$ and system size $L$. Here, the block size is rescaled by $L\log{L}$. One observes that the quantity $\log(-\log\mathcal{F}) \approx \log(1-\mathcal{F})$ decays approximately linearly with $s/L\log{L}$. However, one also finds some strong irregularities for several values of $s$ and $L$. This behavior can be explained to some extent due to a certain (approximate) tile structure (see Fig.~\ref{fig:2d_model}c) that the CMs of such ground states have. For some values of $s$ and $L$, the blocks determined by the cutting algorithm align more favorably with this structure, which gives a much lower fidelity.

To avoid this behavior, we apply the algorithm to states of the form $U_\text{r} \ket{\psi_\text{gs}}$, where $\ket{\psi_\text{gs}}$ is the ground state of $H_\text{2d}$, and $U_\text{r}$ is a random brickwall MGC of depth $L\log{L}$ (each gate is independently sampled from the matchgate Haar measure). This has the effect of smearing out the tile structure to some degree. In Fig.~\ref{fig:2d_model}b, we show the fidelity averaged over 10 realizations for each $L$ (we find that the fluctuations are negligible on the scale of the plot). Although the overall fidelity becomes slightly worse, the irregularities are suppressed at least when $s\gtrsim L\log{L}$. The ground states $\ket{\psi_\text{gs}}$ can then, of course, be approximated by applying $U_\text{r}^\dagger$ to the reconstructed states. To obtain a roughly equal fidelity for different $L$, the depth of the reconstructed circuit needs to scale with $L\log{L}$, improving the naïve approach which scales as $L^2$.

\bibliography{references}

\begin{thebibliography}{101}%
\makeatletter
\providecommand \@ifxundefined [1]{%
 \@ifx{#1\undefined}
}%
\providecommand \@ifnum [1]{%
 \ifnum #1\expandafter \@firstoftwo
 \else \expandafter \@secondoftwo
 \fi
}%
\providecommand \@ifx [1]{%
 \ifx #1\expandafter \@firstoftwo
 \else \expandafter \@secondoftwo
 \fi
}%
\providecommand \natexlab [1]{#1}%
\providecommand \enquote  [1]{``#1''}%
\providecommand \bibnamefont  [1]{#1}%
\providecommand \bibfnamefont [1]{#1}%
\providecommand \citenamefont [1]{#1}%
\providecommand \href@noop [0]{\@secondoftwo}%
\providecommand \href [0]{\begingroup \@sanitize@url \@href}%
\providecommand \@href[1]{\@@startlink{#1}\@@href}%
\providecommand \@@href[1]{\endgroup#1\@@endlink}%
\providecommand \@sanitize@url [0]{\catcode `\\12\catcode `\$12\catcode `\&12\catcode `\#12\catcode `\^12\catcode `\_12\catcode `\%12\relax}%
\providecommand \@@startlink[1]{}%
\providecommand \@@endlink[0]{}%
\providecommand \url  [0]{\begingroup\@sanitize@url \@url }%
\providecommand \@url [1]{\endgroup\@href {#1}{\urlprefix }}%
\providecommand \urlprefix  [0]{URL }%
\providecommand \Eprint [0]{\href }%
\providecommand \doibase [0]{https://doi.org/}%
\providecommand \selectlanguage [0]{\@gobble}%
\providecommand \bibinfo  [0]{\@secondoftwo}%
\providecommand \bibfield  [0]{\@secondoftwo}%
\providecommand \translation [1]{[#1]}%
\providecommand \BibitemOpen [0]{}%
\providecommand \bibitemStop [0]{}%
\providecommand \bibitemNoStop [0]{.\EOS\space}%
\providecommand \EOS [0]{\spacefactor3000\relax}%
\providecommand \BibitemShut  [1]{\csname bibitem#1\endcsname}%
\let\auto@bib@innerbib\@empty
\bibitem [{\citenamefont {McArdle}\ \emph {et~al.}(2020)\citenamefont {McArdle}, \citenamefont {Endo}, \citenamefont {Aspuru-Guzik}, \citenamefont {Benjamin},\ and\ \citenamefont {Yuan}}]{McEn20}%
  \BibitemOpen
  \bibfield  {author} {\bibinfo {author} {\bibfnamefont {S.}~\bibnamefont {McArdle}}, \bibinfo {author} {\bibfnamefont {S.}~\bibnamefont {Endo}}, \bibinfo {author} {\bibfnamefont {A.}~\bibnamefont {Aspuru-Guzik}}, \bibinfo {author} {\bibfnamefont {S.~C.}\ \bibnamefont {Benjamin}},\ and\ \bibinfo {author} {\bibfnamefont {X.}~\bibnamefont {Yuan}},\ }\bibfield  {title} {\bibinfo {title} {Quantum computational chemistry},\ }\bibfield  {journal} {\bibinfo  {journal} {Reviews of Modern Physics}\ }\textbf {\bibinfo {volume} {92}},\ \href {https://doi.org/10.1103/revmodphys.92.015003} {10.1103/revmodphys.92.015003} (\bibinfo {year} {2020})\BibitemShut {NoStop}%
\bibitem [{\citenamefont {Echenique}\ and\ \citenamefont {Alonso}(2007)}]{EcAl07}%
  \BibitemOpen
  \bibfield  {author} {\bibinfo {author} {\bibfnamefont {P.}~\bibnamefont {Echenique}}\ and\ \bibinfo {author} {\bibfnamefont {J.~L.}\ \bibnamefont {Alonso}},\ }\bibfield  {title} {\bibinfo {title} {A mathematical and computational review of hartree–fock scf methods in quantum chemistry},\ }\href {https://doi.org/10.1080/00268970701757875} {\bibfield  {journal} {\bibinfo  {journal} {Molecular Physics}\ }\textbf {\bibinfo {volume} {105}},\ \bibinfo {pages} {3057} (\bibinfo {year} {2007})},\ \Eprint {https://arxiv.org/abs/https://doi.org/10.1080/00268970701757875} {https://doi.org/10.1080/00268970701757875} \BibitemShut {NoStop}%
\bibitem [{\citenamefont {Bardeen}\ \emph {et~al.}(1957)\citenamefont {Bardeen}, \citenamefont {Cooper},\ and\ \citenamefont {Schrieffer}}]{BCS57}%
  \BibitemOpen
  \bibfield  {author} {\bibinfo {author} {\bibfnamefont {J.}~\bibnamefont {Bardeen}}, \bibinfo {author} {\bibfnamefont {L.~N.}\ \bibnamefont {Cooper}},\ and\ \bibinfo {author} {\bibfnamefont {J.~R.}\ \bibnamefont {Schrieffer}},\ }\bibfield  {title} {\bibinfo {title} {Theory of superconductivity},\ }\href {https://doi.org/10.1103/PhysRev.108.1175} {\bibfield  {journal} {\bibinfo  {journal} {Phys. Rev.}\ }\textbf {\bibinfo {volume} {108}},\ \bibinfo {pages} {1175} (\bibinfo {year} {1957})}\BibitemShut {NoStop}%
\bibitem [{\citenamefont {Valiant}(2001)}]{Va01}%
  \BibitemOpen
  \bibfield  {author} {\bibinfo {author} {\bibfnamefont {L.~G.}\ \bibnamefont {Valiant}},\ }\bibfield  {title} {\bibinfo {title} {Quantum computers that can be simulated classically in polynomial time},\ }in\ \href {https://doi.org/10.1145/380752.380785} {\emph {\bibinfo {booktitle} {Proceedings of the Thirty-Third Annual ACM Symposium on Theory of Computing}}},\ \bibinfo {series and number} {STOC '01}\ (\bibinfo  {publisher} {Association for Computing Machinery},\ \bibinfo {address} {New York, NY, USA},\ \bibinfo {year} {2001})\ p.\ \bibinfo {pages} {114–123}\BibitemShut {NoStop}%
\bibitem [{\citenamefont {Terhal}\ and\ \citenamefont {DiVincenzo}(2002)}]{TeDi02}%
  \BibitemOpen
  \bibfield  {author} {\bibinfo {author} {\bibfnamefont {B.~M.}\ \bibnamefont {Terhal}}\ and\ \bibinfo {author} {\bibfnamefont {D.~P.}\ \bibnamefont {DiVincenzo}},\ }\bibfield  {title} {\bibinfo {title} {Classical simulation of noninteracting-fermion quantum circuits},\ }\href {https://doi.org/10.1103/PhysRevA.65.032325} {\bibfield  {journal} {\bibinfo  {journal} {Phys. Rev. A}\ }\textbf {\bibinfo {volume} {65}},\ \bibinfo {pages} {032325} (\bibinfo {year} {2002})}\BibitemShut {NoStop}%
\bibitem [{\citenamefont {Knill}(2001)}]{Kn01}%
  \BibitemOpen
  \bibfield  {author} {\bibinfo {author} {\bibfnamefont {E.}~\bibnamefont {Knill}},\ }\href {https://arxiv.org/abs/quant-ph/0108033} {\bibinfo {title} {Fermionic linear optics and matchgates}} (\bibinfo {year} {2001}),\ \Eprint {https://arxiv.org/abs/quant-ph/0108033} {arXiv:quant-ph/0108033 [quant-ph]} \BibitemShut {NoStop}%
\bibitem [{\citenamefont {Jozsa}\ and\ \citenamefont {Miyake}(2008)}]{JoMi08}%
  \BibitemOpen
  \bibfield  {author} {\bibinfo {author} {\bibfnamefont {R.}~\bibnamefont {Jozsa}}\ and\ \bibinfo {author} {\bibfnamefont {A.}~\bibnamefont {Miyake}},\ }\bibfield  {title} {\bibinfo {title} {Matchgates and classical simulation of quantum circuits},\ }\href {https://doi.org/10.1098/rspa.2008.0189} {\bibfield  {journal} {\bibinfo  {journal} {Proceedings of the Royal Society A: Mathematical, Physical and Engineering Sciences}\ }\textbf {\bibinfo {volume} {464}},\ \bibinfo {pages} {3089–3106} (\bibinfo {year} {2008})}\BibitemShut {NoStop}%
\bibitem [{\citenamefont {Bravyi}(2005)}]{Br05Grassmann}%
  \BibitemOpen
  \bibfield  {author} {\bibinfo {author} {\bibfnamefont {S.}~\bibnamefont {Bravyi}},\ }\bibfield  {title} {\bibinfo {title} {Lagrangian representation for fermionic linear optics},\ }\href {https://doi.org/10.26421/QIC5.3-3} {\bibfield  {journal} {\bibinfo  {journal} {Quantum Information and Computation}\ }\textbf {\bibinfo {volume} {5}},\ \bibinfo {pages} {216 } (\bibinfo {year} {2005})}\BibitemShut {NoStop}%
\bibitem [{\citenamefont {Bravyi}\ and\ \citenamefont {König}(2012)}]{BrKo12}%
  \BibitemOpen
  \bibfield  {author} {\bibinfo {author} {\bibfnamefont {S.}~\bibnamefont {Bravyi}}\ and\ \bibinfo {author} {\bibfnamefont {R.}~\bibnamefont {König}},\ }\bibfield  {title} {\bibinfo {title} {Disorder-assisted error correction in {Majorana} chains},\ }\href {https://doi.org/10.1007/s00220-012-1606-9} {\bibfield  {journal} {\bibinfo  {journal} {Communications in Mathematical Physics}\ }\textbf {\bibinfo {volume} {316}},\ \bibinfo {pages} {641–692} (\bibinfo {year} {2012})}\BibitemShut {NoStop}%
\bibitem [{\citenamefont {Gluza}\ \emph {et~al.}(2018)\citenamefont {Gluza}, \citenamefont {Kliesch}, \citenamefont {Eisert},\ and\ \citenamefont {Aolita}}]{GlKl18}%
  \BibitemOpen
  \bibfield  {author} {\bibinfo {author} {\bibfnamefont {M.}~\bibnamefont {Gluza}}, \bibinfo {author} {\bibfnamefont {M.}~\bibnamefont {Kliesch}}, \bibinfo {author} {\bibfnamefont {J.}~\bibnamefont {Eisert}},\ and\ \bibinfo {author} {\bibfnamefont {L.}~\bibnamefont {Aolita}},\ }\bibfield  {title} {\bibinfo {title} {Fidelity witnesses for fermionic quantum simulations},\ }\href {https://doi.org/10.1103/PhysRevLett.120.190501} {\bibfield  {journal} {\bibinfo  {journal} {Phys. Rev. Lett.}\ }\textbf {\bibinfo {volume} {120}},\ \bibinfo {pages} {190501} (\bibinfo {year} {2018})}\BibitemShut {NoStop}%
\bibitem [{\citenamefont {O'Gorman}(2022)}]{Br22}%
  \BibitemOpen
  \bibfield  {author} {\bibinfo {author} {\bibfnamefont {B.}~\bibnamefont {O'Gorman}},\ }\href {https://arxiv.org/abs/2207.14787} {\bibinfo {title} {Fermionic tomography and learning}} (\bibinfo {year} {2022}),\ \Eprint {https://arxiv.org/abs/2207.14787} {arXiv:2207.14787 [quant-ph]} \BibitemShut {NoStop}%
\bibitem [{\citenamefont {Aaronson}\ and\ \citenamefont {Grewal}(2023)}]{AaGe23}%
  \BibitemOpen
  \bibfield  {author} {\bibinfo {author} {\bibfnamefont {S.}~\bibnamefont {Aaronson}}\ and\ \bibinfo {author} {\bibfnamefont {S.}~\bibnamefont {Grewal}},\ }\href {https://arxiv.org/abs/2102.10458} {\bibinfo {title} {Efficient tomography of non-interacting fermion states}} (\bibinfo {year} {2023}),\ \Eprint {https://arxiv.org/abs/2102.10458} {arXiv:2102.10458 [quant-ph]} \BibitemShut {NoStop}%
\bibitem [{\citenamefont {Mele}\ and\ \citenamefont {Herasymenko}(2025)}]{MeHe25}%
  \BibitemOpen
  \bibfield  {author} {\bibinfo {author} {\bibfnamefont {A.~A.}\ \bibnamefont {Mele}}\ and\ \bibinfo {author} {\bibfnamefont {Y.}~\bibnamefont {Herasymenko}},\ }\bibfield  {title} {\bibinfo {title} {Efficient learning of quantum states prepared with few fermionic non-gaussian gates},\ }\href {https://doi.org/10.1103/PRXQuantum.6.010319} {\bibfield  {journal} {\bibinfo  {journal} {PRX Quantum}\ }\textbf {\bibinfo {volume} {6}},\ \bibinfo {pages} {010319} (\bibinfo {year} {2025})}\BibitemShut {NoStop}%
\bibitem [{\citenamefont {Bittel}\ \emph {et~al.}(2025)\citenamefont {Bittel}, \citenamefont {Mele}, \citenamefont {Eisert},\ and\ \citenamefont {Leone}}]{LeMe25}%
  \BibitemOpen
  \bibfield  {author} {\bibinfo {author} {\bibfnamefont {L.}~\bibnamefont {Bittel}}, \bibinfo {author} {\bibfnamefont {A.~A.}\ \bibnamefont {Mele}}, \bibinfo {author} {\bibfnamefont {J.}~\bibnamefont {Eisert}},\ and\ \bibinfo {author} {\bibfnamefont {L.}~\bibnamefont {Leone}},\ }\bibfield  {title} {\bibinfo {title} {Optimal trace-distance bounds for free-fermionic states: Testing and improved tomography},\ }\href {https://doi.org/10.1103/pzx6-nkfb} {\bibfield  {journal} {\bibinfo  {journal} {PRX Quantum}\ }\textbf {\bibinfo {volume} {6}},\ \bibinfo {pages} {030341} (\bibinfo {year} {2025})}\BibitemShut {NoStop}%
\bibitem [{\citenamefont {Carrasco}\ \emph {et~al.}(2024)\citenamefont {Carrasco}, \citenamefont {Langer}, \citenamefont {Neven},\ and\ \citenamefont {Kraus}}]{CaLa25}%
  \BibitemOpen
  \bibfield  {author} {\bibinfo {author} {\bibfnamefont {J.}~\bibnamefont {Carrasco}}, \bibinfo {author} {\bibfnamefont {M.}~\bibnamefont {Langer}}, \bibinfo {author} {\bibfnamefont {A.}~\bibnamefont {Neven}},\ and\ \bibinfo {author} {\bibfnamefont {B.}~\bibnamefont {Kraus}},\ }\bibfield  {title} {\bibinfo {title} {Gaining confidence on the correct realization of arbitrary quantum computations},\ }\bibfield  {journal} {\bibinfo  {journal} {Physical Review Research}\ }\textbf {\bibinfo {volume} {6}},\ \href {https://doi.org/10.1103/physrevresearch.6.l032074} {10.1103/physrevresearch.6.l032074} (\bibinfo {year} {2024})\BibitemShut {NoStop}%
\bibitem [{\citenamefont {Brod}\ and\ \citenamefont {Galv\~ao}(2011)}]{BrGa11}%
  \BibitemOpen
  \bibfield  {author} {\bibinfo {author} {\bibfnamefont {D.~J.}\ \bibnamefont {Brod}}\ and\ \bibinfo {author} {\bibfnamefont {E.~F.}\ \bibnamefont {Galv\~ao}},\ }\bibfield  {title} {\bibinfo {title} {Extending matchgates into universal quantum computation},\ }\href {https://doi.org/10.1103/PhysRevA.84.022310} {\bibfield  {journal} {\bibinfo  {journal} {Phys. Rev. A}\ }\textbf {\bibinfo {volume} {84}},\ \bibinfo {pages} {022310} (\bibinfo {year} {2011})}\BibitemShut {NoStop}%
\bibitem [{\citenamefont {Bravyi}(2006)}]{bravyi}%
  \BibitemOpen
  \bibfield  {author} {\bibinfo {author} {\bibfnamefont {S.}~\bibnamefont {Bravyi}},\ }\bibfield  {title} {\bibinfo {title} {Universal quantum computation with the $\ensuremath{\nu}=5/2$ fractional quantum hall state},\ }\href {https://doi.org/10.1103/PhysRevA.73.042313} {\bibfield  {journal} {\bibinfo  {journal} {Phys. Rev. A}\ }\textbf {\bibinfo {volume} {73}},\ \bibinfo {pages} {042313} (\bibinfo {year} {2006})}\BibitemShut {NoStop}%
\bibitem [{\citenamefont {Hebenstreit}\ \emph {et~al.}(2019)\citenamefont {Hebenstreit}, \citenamefont {Jozsa}, \citenamefont {Kraus}, \citenamefont {Strelchuk},\ and\ \citenamefont {Yoganathan}}]{HeJo19}%
  \BibitemOpen
  \bibfield  {author} {\bibinfo {author} {\bibfnamefont {M.}~\bibnamefont {Hebenstreit}}, \bibinfo {author} {\bibfnamefont {R.}~\bibnamefont {Jozsa}}, \bibinfo {author} {\bibfnamefont {B.}~\bibnamefont {Kraus}}, \bibinfo {author} {\bibfnamefont {S.}~\bibnamefont {Strelchuk}},\ and\ \bibinfo {author} {\bibfnamefont {M.}~\bibnamefont {Yoganathan}},\ }\bibfield  {title} {\bibinfo {title} {All pure fermionic non-gaussian states are magic states for matchgate computations},\ }\href {https://doi.org/10.1103/PhysRevLett.123.080503} {\bibfield  {journal} {\bibinfo  {journal} {Phys. Rev. Lett.}\ }\textbf {\bibinfo {volume} {123}},\ \bibinfo {pages} {080503} (\bibinfo {year} {2019})}\BibitemShut {NoStop}%
\bibitem [{\citenamefont {Wick}(1950)}]{Wi50}%
  \BibitemOpen
  \bibfield  {author} {\bibinfo {author} {\bibfnamefont {G.~C.}\ \bibnamefont {Wick}},\ }\bibfield  {title} {\bibinfo {title} {The evaluation of the collision matrix},\ }\href {https://doi.org/10.1103/PhysRev.80.268} {\bibfield  {journal} {\bibinfo  {journal} {Phys. Rev.}\ }\textbf {\bibinfo {volume} {80}},\ \bibinfo {pages} {268} (\bibinfo {year} {1950})}\BibitemShut {NoStop}%
\bibitem [{\citenamefont {Bach}\ \emph {et~al.}(1994)\citenamefont {Bach}, \citenamefont {Lieb},\ and\ \citenamefont {Solovej}}]{BaLi94}%
  \BibitemOpen
  \bibfield  {author} {\bibinfo {author} {\bibfnamefont {V.}~\bibnamefont {Bach}}, \bibinfo {author} {\bibfnamefont {E.~H.}\ \bibnamefont {Lieb}},\ and\ \bibinfo {author} {\bibfnamefont {J.~P.}\ \bibnamefont {Solovej}},\ }\bibfield  {title} {\bibinfo {title} {Generalized hartree-fock theory and the hubbard model},\ }\href {https://doi.org/10.1007/bf02188656} {\bibfield  {journal} {\bibinfo  {journal} {Journal of Statistical Physics}\ }\textbf {\bibinfo {volume} {76}},\ \bibinfo {pages} {3–89} (\bibinfo {year} {1994})}\BibitemShut {NoStop}%
\bibitem [{\citenamefont {Peschel}\ and\ \citenamefont {Chung}(1999)}]{PeCh99}%
  \BibitemOpen
  \bibfield  {author} {\bibinfo {author} {\bibfnamefont {I.}~\bibnamefont {Peschel}}\ and\ \bibinfo {author} {\bibfnamefont {M.-C.}\ \bibnamefont {Chung}},\ }\bibfield  {title} {\bibinfo {title} {Density matrices for a chain of oscillators},\ }\href {https://doi.org/10.1088/0305-4470/32/48/305} {\bibfield  {journal} {\bibinfo  {journal} {Journal of Physics A: Mathematical and General}\ }\textbf {\bibinfo {volume} {32}},\ \bibinfo {pages} {8419–8428} (\bibinfo {year} {1999})}\BibitemShut {NoStop}%
\bibitem [{\citenamefont {Peschel}(2003)}]{Pe03}%
  \BibitemOpen
  \bibfield  {author} {\bibinfo {author} {\bibfnamefont {I.}~\bibnamefont {Peschel}},\ }\bibfield  {title} {\bibinfo {title} {Calculation of reduced density matrices from correlation functions},\ }\href {https://doi.org/10.1088/0305-4470/36/14/101} {\bibfield  {journal} {\bibinfo  {journal} {Journal of Physics A: Mathematical and General}\ }\textbf {\bibinfo {volume} {36}},\ \bibinfo {pages} {L205–L208} (\bibinfo {year} {2003})}\BibitemShut {NoStop}%
\bibitem [{\citenamefont {Botero}\ and\ \citenamefont {Reznik}(2004)}]{BoRe04}%
  \BibitemOpen
  \bibfield  {author} {\bibinfo {author} {\bibfnamefont {A.}~\bibnamefont {Botero}}\ and\ \bibinfo {author} {\bibfnamefont {B.}~\bibnamefont {Reznik}},\ }\bibfield  {title} {\bibinfo {title} {Bcs-like modewise entanglement of fermion gaussian states},\ }\href {https://doi.org/https://doi.org/10.1016/j.physleta.2004.08.037} {\bibfield  {journal} {\bibinfo  {journal} {Physics Letters A}\ }\textbf {\bibinfo {volume} {331}},\ \bibinfo {pages} {39} (\bibinfo {year} {2004})}\BibitemShut {NoStop}%
\bibitem [{\citenamefont {L\"owdin}(1955)}]{Loe55}%
  \BibitemOpen
  \bibfield  {author} {\bibinfo {author} {\bibfnamefont {P.-O.}\ \bibnamefont {L\"owdin}},\ }\bibfield  {title} {\bibinfo {title} {Quantum theory of many-particle systems. i. physical interpretations by means of density matrices, natural spin-orbitals, and convergence problems in the method of configurational interaction},\ }\href {https://doi.org/10.1103/PhysRev.97.1474} {\bibfield  {journal} {\bibinfo  {journal} {Phys. Rev.}\ }\textbf {\bibinfo {volume} {97}},\ \bibinfo {pages} {1474} (\bibinfo {year} {1955})}\BibitemShut {NoStop}%
\bibitem [{\citenamefont {Bravyi}\ and\ \citenamefont {Gosset}(2017)}]{BrGo17}%
  \BibitemOpen
  \bibfield  {author} {\bibinfo {author} {\bibfnamefont {S.}~\bibnamefont {Bravyi}}\ and\ \bibinfo {author} {\bibfnamefont {D.}~\bibnamefont {Gosset}},\ }\bibfield  {title} {\bibinfo {title} {Complexity of quantum impurity problems},\ }\href {https://doi.org/10.1007/s00220-017-2976-9} {\bibfield  {journal} {\bibinfo  {journal} {Communications in Mathematical Physics}\ }\textbf {\bibinfo {volume} {356}},\ \bibinfo {pages} {451–500} (\bibinfo {year} {2017})}\BibitemShut {NoStop}%
\bibitem [{\citenamefont {Hebenstreit}\ \emph {et~al.}(2020)\citenamefont {Hebenstreit}, \citenamefont {Jozsa}, \citenamefont {Kraus},\ and\ \citenamefont {Strelchuk}}]{HeJo20}%
  \BibitemOpen
  \bibfield  {author} {\bibinfo {author} {\bibfnamefont {M.}~\bibnamefont {Hebenstreit}}, \bibinfo {author} {\bibfnamefont {R.}~\bibnamefont {Jozsa}}, \bibinfo {author} {\bibfnamefont {B.}~\bibnamefont {Kraus}},\ and\ \bibinfo {author} {\bibfnamefont {S.}~\bibnamefont {Strelchuk}},\ }\bibfield  {title} {\bibinfo {title} {Computational power of matchgates with supplementary resources},\ }\href {https://doi.org/10.1103/PhysRevA.102.052604} {\bibfield  {journal} {\bibinfo  {journal} {Phys. Rev. A}\ }\textbf {\bibinfo {volume} {102}},\ \bibinfo {pages} {052604} (\bibinfo {year} {2020})}\BibitemShut {NoStop}%
\bibitem [{\citenamefont {Hakkaku}\ \emph {et~al.}(2022)\citenamefont {Hakkaku}, \citenamefont {Tashima}, \citenamefont {Mitarai}, \citenamefont {Mizukami},\ and\ \citenamefont {Fujii}}]{HaTa22}%
  \BibitemOpen
  \bibfield  {author} {\bibinfo {author} {\bibfnamefont {S.}~\bibnamefont {Hakkaku}}, \bibinfo {author} {\bibfnamefont {Y.}~\bibnamefont {Tashima}}, \bibinfo {author} {\bibfnamefont {K.}~\bibnamefont {Mitarai}}, \bibinfo {author} {\bibfnamefont {W.}~\bibnamefont {Mizukami}},\ and\ \bibinfo {author} {\bibfnamefont {K.}~\bibnamefont {Fujii}},\ }\bibfield  {title} {\bibinfo {title} {Quantifying fermionic nonlinearity of quantum circuits},\ }\href {https://doi.org/10.1103/PhysRevResearch.4.043100} {\bibfield  {journal} {\bibinfo  {journal} {Phys. Rev. Res.}\ }\textbf {\bibinfo {volume} {4}},\ \bibinfo {pages} {043100} (\bibinfo {year} {2022})}\BibitemShut {NoStop}%
\bibitem [{\citenamefont {Mocherla}\ \emph {et~al.}(2024)\citenamefont {Mocherla}, \citenamefont {Lao},\ and\ \citenamefont {Browne}}]{MoLa24}%
  \BibitemOpen
  \bibfield  {author} {\bibinfo {author} {\bibfnamefont {A.}~\bibnamefont {Mocherla}}, \bibinfo {author} {\bibfnamefont {L.}~\bibnamefont {Lao}},\ and\ \bibinfo {author} {\bibfnamefont {D.~E.}\ \bibnamefont {Browne}},\ }\href {https://arxiv.org/abs/2302.02654} {\bibinfo {title} {Extending matchgate simulation methods to universal quantum circuits}} (\bibinfo {year} {2024}),\ \Eprint {https://arxiv.org/abs/2302.02654} {arXiv:2302.02654 [quant-ph]} \BibitemShut {NoStop}%
\bibitem [{\citenamefont {Miller}\ \emph {et~al.}(2025)\citenamefont {Miller}, \citenamefont {Favre}, \citenamefont {Holmes}, \citenamefont {Salehi}, \citenamefont {Chakraborty}, \citenamefont {Nyk{\"a}nen}, \citenamefont {Zimbor{\'a}s}, \citenamefont {Glos},\ and\ \citenamefont {Garc{\'i}a-P{\'e}rez}}]{MiFa25}%
  \BibitemOpen
  \bibfield  {author} {\bibinfo {author} {\bibfnamefont {A.}~\bibnamefont {Miller}}, \bibinfo {author} {\bibfnamefont {J.}~\bibnamefont {Favre}}, \bibinfo {author} {\bibfnamefont {Z.}~\bibnamefont {Holmes}}, \bibinfo {author} {\bibfnamefont {{\"O}.}~\bibnamefont {Salehi}}, \bibinfo {author} {\bibfnamefont {R.}~\bibnamefont {Chakraborty}}, \bibinfo {author} {\bibfnamefont {A.}~\bibnamefont {Nyk{\"a}nen}}, \bibinfo {author} {\bibfnamefont {Z.}~\bibnamefont {Zimbor{\'a}s}}, \bibinfo {author} {\bibfnamefont {A.}~\bibnamefont {Glos}},\ and\ \bibinfo {author} {\bibfnamefont {G.}~\bibnamefont {Garc{\'i}a-P{\'e}rez}},\ }\href {https://arxiv.org/abs/2503.18939} {\bibinfo {title} {Simulation of fermionic circuits using majorana propagation}} (\bibinfo {year} {2025}),\ \Eprint {https://arxiv.org/abs/2503.18939} {arXiv:2503.18939 [quant-ph]} \BibitemShut {NoStop}%
\bibitem [{\citenamefont {Dias}\ and\ \citenamefont {Koenig}(2024)}]{DiKo24}%
  \BibitemOpen
  \bibfield  {author} {\bibinfo {author} {\bibfnamefont {B.}~\bibnamefont {Dias}}\ and\ \bibinfo {author} {\bibfnamefont {R.}~\bibnamefont {Koenig}},\ }\bibfield  {title} {\bibinfo {title} {Classical simulation of non-{G}aussian fermionic circuits},\ }\href {https://doi.org/10.22331/q-2024-05-21-1350} {\bibfield  {journal} {\bibinfo  {journal} {{Quantum}}\ }\textbf {\bibinfo {volume} {8}},\ \bibinfo {pages} {1350} (\bibinfo {year} {2024})}\BibitemShut {NoStop}%
\bibitem [{\citenamefont {Reardon-Smith}\ \emph {et~al.}(2024)\citenamefont {Reardon-Smith}, \citenamefont {Oszmaniec},\ and\ \citenamefont {Korzekwa}}]{ReOs24}%
  \BibitemOpen
  \bibfield  {author} {\bibinfo {author} {\bibfnamefont {O.}~\bibnamefont {Reardon-Smith}}, \bibinfo {author} {\bibfnamefont {M.}~\bibnamefont {Oszmaniec}},\ and\ \bibinfo {author} {\bibfnamefont {K.}~\bibnamefont {Korzekwa}},\ }\bibfield  {title} {\bibinfo {title} {Improved simulation of quantum circuits dominated by free fermionic operations},\ }\href {https://doi.org/10.22331/q-2024-12-04-1549} {\bibfield  {journal} {\bibinfo  {journal} {{Quantum}}\ }\textbf {\bibinfo {volume} {8}},\ \bibinfo {pages} {1549} (\bibinfo {year} {2024})}\BibitemShut {NoStop}%
\bibitem [{\citenamefont {Guaita}\ \emph {et~al.}(2024)\citenamefont {Guaita}, \citenamefont {Hackl},\ and\ \citenamefont {Quella}}]{GuHa24}%
  \BibitemOpen
  \bibfield  {author} {\bibinfo {author} {\bibfnamefont {T.}~\bibnamefont {Guaita}}, \bibinfo {author} {\bibfnamefont {L.}~\bibnamefont {Hackl}},\ and\ \bibinfo {author} {\bibfnamefont {T.}~\bibnamefont {Quella}},\ }\href {https://arxiv.org/abs/2409.11628} {\bibinfo {title} {Representation theory of gaussian unitary transformations for bosonic and fermionic systems}} (\bibinfo {year} {2024}),\ \Eprint {https://arxiv.org/abs/2409.11628} {arXiv:2409.11628 [quant-ph]} \BibitemShut {NoStop}%
\bibitem [{\citenamefont {Jiang}\ \emph {et~al.}(2018)\citenamefont {Jiang}, \citenamefont {Sung}, \citenamefont {Kechedzhi}, \citenamefont {Smelyanskiy},\ and\ \citenamefont {Boixo}}]{JiSu18}%
  \BibitemOpen
  \bibfield  {author} {\bibinfo {author} {\bibfnamefont {Z.}~\bibnamefont {Jiang}}, \bibinfo {author} {\bibfnamefont {K.~J.}\ \bibnamefont {Sung}}, \bibinfo {author} {\bibfnamefont {K.}~\bibnamefont {Kechedzhi}}, \bibinfo {author} {\bibfnamefont {V.~N.}\ \bibnamefont {Smelyanskiy}},\ and\ \bibinfo {author} {\bibfnamefont {S.}~\bibnamefont {Boixo}},\ }\bibfield  {title} {\bibinfo {title} {Quantum algorithms to simulate many-body physics of correlated fermions},\ }\href {https://doi.org/10.1103/PhysRevApplied.9.044036} {\bibfield  {journal} {\bibinfo  {journal} {Phys. Rev. Appl.}\ }\textbf {\bibinfo {volume} {9}},\ \bibinfo {pages} {044036} (\bibinfo {year} {2018})}\BibitemShut {NoStop}%
\bibitem [{\citenamefont {Kivlichan}\ \emph {et~al.}(2018)\citenamefont {Kivlichan}, \citenamefont {McClean}, \citenamefont {Wiebe}, \citenamefont {Gidney}, \citenamefont {Aspuru-Guzik}, \citenamefont {Chan},\ and\ \citenamefont {Babbush}}]{KiMc18}%
  \BibitemOpen
  \bibfield  {author} {\bibinfo {author} {\bibfnamefont {I.~D.}\ \bibnamefont {Kivlichan}}, \bibinfo {author} {\bibfnamefont {J.}~\bibnamefont {McClean}}, \bibinfo {author} {\bibfnamefont {N.}~\bibnamefont {Wiebe}}, \bibinfo {author} {\bibfnamefont {C.}~\bibnamefont {Gidney}}, \bibinfo {author} {\bibfnamefont {A.}~\bibnamefont {Aspuru-Guzik}}, \bibinfo {author} {\bibfnamefont {G.~K.-L.}\ \bibnamefont {Chan}},\ and\ \bibinfo {author} {\bibfnamefont {R.}~\bibnamefont {Babbush}},\ }\bibfield  {title} {\bibinfo {title} {Quantum simulation of electronic structure with linear depth and connectivity},\ }\href {https://doi.org/10.1103/PhysRevLett.120.110501} {\bibfield  {journal} {\bibinfo  {journal} {Phys. Rev. Lett.}\ }\textbf {\bibinfo {volume} {120}},\ \bibinfo {pages} {110501} (\bibinfo {year} {2018})}\BibitemShut {NoStop}%
\bibitem [{\citenamefont {Dallaire-Demers}\ \emph {et~al.}(2019)\citenamefont {Dallaire-Demers}, \citenamefont {Romero}, \citenamefont {Veis}, \citenamefont {Sim},\ and\ \citenamefont {Aspuru-Guzik}}]{DaDe19}%
  \BibitemOpen
  \bibfield  {author} {\bibinfo {author} {\bibfnamefont {P.-L.}\ \bibnamefont {Dallaire-Demers}}, \bibinfo {author} {\bibfnamefont {J.}~\bibnamefont {Romero}}, \bibinfo {author} {\bibfnamefont {L.}~\bibnamefont {Veis}}, \bibinfo {author} {\bibfnamefont {S.}~\bibnamefont {Sim}},\ and\ \bibinfo {author} {\bibfnamefont {A.}~\bibnamefont {Aspuru-Guzik}},\ }\bibfield  {title} {\bibinfo {title} {Low-depth circuit ansatz for preparing correlated fermionic states on a quantum computer},\ }\href {https://doi.org/10.1088/2058-9565/ab3951} {\bibfield  {journal} {\bibinfo  {journal} {Quantum Science and Technology}\ }\textbf {\bibinfo {volume} {4}},\ \bibinfo {pages} {045005} (\bibinfo {year} {2019})}\BibitemShut {NoStop}%
\bibitem [{\citenamefont {Oszmaniec}\ \emph {et~al.}(2022)\citenamefont {Oszmaniec}, \citenamefont {Dangniam}, \citenamefont {Morales},\ and\ \citenamefont {Zimbor\'as}}]{OsDa22}%
  \BibitemOpen
  \bibfield  {author} {\bibinfo {author} {\bibfnamefont {M.}~\bibnamefont {Oszmaniec}}, \bibinfo {author} {\bibfnamefont {N.}~\bibnamefont {Dangniam}}, \bibinfo {author} {\bibfnamefont {M.~E.}\ \bibnamefont {Morales}},\ and\ \bibinfo {author} {\bibfnamefont {Z.}~\bibnamefont {Zimbor\'as}},\ }\bibfield  {title} {\bibinfo {title} {Fermion sampling: A robust quantum computational advantage scheme using fermionic linear optics and magic input states},\ }\href {https://doi.org/10.1103/PRXQuantum.3.020328} {\bibfield  {journal} {\bibinfo  {journal} {PRX Quantum}\ }\textbf {\bibinfo {volume} {3}},\ \bibinfo {pages} {020328} (\bibinfo {year} {2022})}\BibitemShut {NoStop}%
\bibitem [{\citenamefont {Verstraete}\ \emph {et~al.}(2009)\citenamefont {Verstraete}, \citenamefont {Cirac},\ and\ \citenamefont {Latorre}}]{VeCi09}%
  \BibitemOpen
  \bibfield  {author} {\bibinfo {author} {\bibfnamefont {F.}~\bibnamefont {Verstraete}}, \bibinfo {author} {\bibfnamefont {J.~I.}\ \bibnamefont {Cirac}},\ and\ \bibinfo {author} {\bibfnamefont {J.~I.}\ \bibnamefont {Latorre}},\ }\bibfield  {title} {\bibinfo {title} {Quantum circuits for strongly correlated quantum systems},\ }\href {https://doi.org/10.1103/PhysRevA.79.032316} {\bibfield  {journal} {\bibinfo  {journal} {Phys. Rev. A}\ }\textbf {\bibinfo {volume} {79}},\ \bibinfo {pages} {032316} (\bibinfo {year} {2009})}\BibitemShut {NoStop}%
\bibitem [{\citenamefont {Fishman}\ and\ \citenamefont {White}(2015)}]{FiWh15}%
  \BibitemOpen
  \bibfield  {author} {\bibinfo {author} {\bibfnamefont {M.~T.}\ \bibnamefont {Fishman}}\ and\ \bibinfo {author} {\bibfnamefont {S.~R.}\ \bibnamefont {White}},\ }\bibfield  {title} {\bibinfo {title} {Compression of correlation matrices and an efficient method for forming matrix product states of fermionic gaussian states},\ }\href {https://doi.org/10.1103/PhysRevB.92.075132} {\bibfield  {journal} {\bibinfo  {journal} {Phys. Rev. B}\ }\textbf {\bibinfo {volume} {92}},\ \bibinfo {pages} {075132} (\bibinfo {year} {2015})}\BibitemShut {NoStop}%
\bibitem [{\citenamefont {Thoenniss}\ \emph {et~al.}(2023)\citenamefont {Thoenniss}, \citenamefont {Lerose},\ and\ \citenamefont {Abanin}}]{ThLe23}%
  \BibitemOpen
  \bibfield  {author} {\bibinfo {author} {\bibfnamefont {J.}~\bibnamefont {Thoenniss}}, \bibinfo {author} {\bibfnamefont {A.}~\bibnamefont {Lerose}},\ and\ \bibinfo {author} {\bibfnamefont {D.~A.}\ \bibnamefont {Abanin}},\ }\bibfield  {title} {\bibinfo {title} {Nonequilibrium quantum impurity problems via matrix-product states in the temporal domain},\ }\href {https://doi.org/10.1103/PhysRevB.107.195101} {\bibfield  {journal} {\bibinfo  {journal} {Phys. Rev. B}\ }\textbf {\bibinfo {volume} {107}},\ \bibinfo {pages} {195101} (\bibinfo {year} {2023})}\BibitemShut {NoStop}%
\bibitem [{\citenamefont {Wu}\ \emph {et~al.}(2025)\citenamefont {Wu}, \citenamefont {Kloss}, \citenamefont {Krinitsin}, \citenamefont {Fishman}, \citenamefont {Pixley},\ and\ \citenamefont {Stoudenmire}}]{WuKl25}%
  \BibitemOpen
  \bibfield  {author} {\bibinfo {author} {\bibfnamefont {A.-K.}\ \bibnamefont {Wu}}, \bibinfo {author} {\bibfnamefont {B.}~\bibnamefont {Kloss}}, \bibinfo {author} {\bibfnamefont {W.}~\bibnamefont {Krinitsin}}, \bibinfo {author} {\bibfnamefont {M.~T.}\ \bibnamefont {Fishman}}, \bibinfo {author} {\bibfnamefont {J.~H.}\ \bibnamefont {Pixley}},\ and\ \bibinfo {author} {\bibfnamefont {E.~M.}\ \bibnamefont {Stoudenmire}},\ }\bibfield  {title} {\bibinfo {title} {Disentangling interacting systems with fermionic gaussian circuits: Application to quantum impurity models},\ }\href {https://doi.org/10.1103/PhysRevB.111.035119} {\bibfield  {journal} {\bibinfo  {journal} {Phys. Rev. B}\ }\textbf {\bibinfo {volume} {111}},\ \bibinfo {pages} {035119} (\bibinfo {year} {2025})}\BibitemShut {NoStop}%
\bibitem [{\citenamefont {Wong}\ and\ \citenamefont {Potter}(2025)}]{WoPo25}%
  \BibitemOpen
  \bibfield  {author} {\bibinfo {author} {\bibfnamefont {S.~L.}\ \bibnamefont {Wong}}\ and\ \bibinfo {author} {\bibfnamefont {A.~C.}\ \bibnamefont {Potter}},\ }\href {https://arxiv.org/abs/2506.04200} {\bibinfo {title} {Entanglement renormalization circuits for $2d$ gaussian fermion states}} (\bibinfo {year} {2025}),\ \Eprint {https://arxiv.org/abs/2506.04200} {arXiv:2506.04200 [quant-ph]} \BibitemShut {NoStop}%
\bibitem [{\citenamefont {Malz}\ \emph {et~al.}(2024)\citenamefont {Malz}, \citenamefont {Styliaris}, \citenamefont {Wei},\ and\ \citenamefont {Cirac}}]{MaSt24}%
  \BibitemOpen
  \bibfield  {author} {\bibinfo {author} {\bibfnamefont {D.}~\bibnamefont {Malz}}, \bibinfo {author} {\bibfnamefont {G.}~\bibnamefont {Styliaris}}, \bibinfo {author} {\bibfnamefont {Z.-Y.}\ \bibnamefont {Wei}},\ and\ \bibinfo {author} {\bibfnamefont {J.~I.}\ \bibnamefont {Cirac}},\ }\bibfield  {title} {\bibinfo {title} {Preparation of matrix product states with log-depth quantum circuits},\ }\href {https://doi.org/10.1103/PhysRevLett.132.040404} {\bibfield  {journal} {\bibinfo  {journal} {Phys. Rev. Lett.}\ }\textbf {\bibinfo {volume} {132}},\ \bibinfo {pages} {040404} (\bibinfo {year} {2024})}\BibitemShut {NoStop}%
\bibitem [{\citenamefont {Morral-Yepes}\ \emph {et~al.}(2025)\citenamefont {Morral-Yepes}, \citenamefont {Langer}, \citenamefont {Gammon-Smith}, \citenamefont {Kraus},\ and\ \citenamefont {Pollmann}}]{MoLa25}%
  \BibitemOpen
  \bibfield  {author} {\bibinfo {author} {\bibfnamefont {R.}~\bibnamefont {Morral-Yepes}}, \bibinfo {author} {\bibfnamefont {M.}~\bibnamefont {Langer}}, \bibinfo {author} {\bibfnamefont {A.}~\bibnamefont {Gammon-Smith}}, \bibinfo {author} {\bibfnamefont {B.}~\bibnamefont {Kraus}},\ and\ \bibinfo {author} {\bibfnamefont {F.}~\bibnamefont {Pollmann}},\ }\href {https://arxiv.org/abs/2507.05055} {\bibinfo {title} {Disentangling strategies and entanglement transitions in unitary circuit games with matchgates}} (\bibinfo {year} {2025}),\ \Eprint {https://arxiv.org/abs/2507.05055} {arXiv:2507.05055 [quant-ph]} \BibitemShut {NoStop}%
\bibitem [{\citenamefont {Morral-Yepes}\ \emph {et~al.}(2024)\citenamefont {Morral-Yepes}, \citenamefont {Smith}, \citenamefont {Sondhi},\ and\ \citenamefont {Pollmann}}]{MoSm24}%
  \BibitemOpen
  \bibfield  {author} {\bibinfo {author} {\bibfnamefont {R.}~\bibnamefont {Morral-Yepes}}, \bibinfo {author} {\bibfnamefont {A.}~\bibnamefont {Smith}}, \bibinfo {author} {\bibfnamefont {S.}~\bibnamefont {Sondhi}},\ and\ \bibinfo {author} {\bibfnamefont {F.}~\bibnamefont {Pollmann}},\ }\bibfield  {title} {\bibinfo {title} {Entanglement transitions in unitary circuit games},\ }\bibfield  {journal} {\bibinfo  {journal} {PRX Quantum}\ }\textbf {\bibinfo {volume} {5}},\ \href {https://doi.org/10.1103/prxquantum.5.010309} {10.1103/prxquantum.5.010309} (\bibinfo {year} {2024})\BibitemShut {NoStop}%
\bibitem [{\citenamefont {Camps}\ \emph {et~al.}(2022)\citenamefont {Camps}, \citenamefont {Kökcü}, \citenamefont {Bassman~Oftelie}, \citenamefont {de~Jong}, \citenamefont {Kemper},\ and\ \citenamefont {Van~Beeumen}}]{CaKo22}%
  \BibitemOpen
  \bibfield  {author} {\bibinfo {author} {\bibfnamefont {D.}~\bibnamefont {Camps}}, \bibinfo {author} {\bibfnamefont {E.}~\bibnamefont {Kökcü}}, \bibinfo {author} {\bibfnamefont {L.}~\bibnamefont {Bassman~Oftelie}}, \bibinfo {author} {\bibfnamefont {W.~A.}\ \bibnamefont {de~Jong}}, \bibinfo {author} {\bibfnamefont {A.~F.}\ \bibnamefont {Kemper}},\ and\ \bibinfo {author} {\bibfnamefont {R.}~\bibnamefont {Van~Beeumen}},\ }\bibfield  {title} {\bibinfo {title} {An algebraic quantum circuit compression algorithm for hamiltonian simulation},\ }\href {https://doi.org/10.1137/21m1439298} {\bibfield  {journal} {\bibinfo  {journal} {SIAM Journal on Matrix Analysis and Applications}\ }\textbf {\bibinfo {volume} {43}},\ \bibinfo {pages} {1084–1108} (\bibinfo {year} {2022})}\BibitemShut {NoStop}%
\bibitem [{\citenamefont {Kökcü}\ \emph {et~al.}(2022)\citenamefont {Kökcü}, \citenamefont {Camps}, \citenamefont {Bassman~Oftelie}, \citenamefont {Freericks}, \citenamefont {de~Jong}, \citenamefont {Van~Beeumen},\ and\ \citenamefont {Kemper}}]{KoCa22}%
  \BibitemOpen
  \bibfield  {author} {\bibinfo {author} {\bibfnamefont {E.}~\bibnamefont {Kökcü}}, \bibinfo {author} {\bibfnamefont {D.}~\bibnamefont {Camps}}, \bibinfo {author} {\bibfnamefont {L.}~\bibnamefont {Bassman~Oftelie}}, \bibinfo {author} {\bibfnamefont {J.~K.}\ \bibnamefont {Freericks}}, \bibinfo {author} {\bibfnamefont {W.~A.}\ \bibnamefont {de~Jong}}, \bibinfo {author} {\bibfnamefont {R.}~\bibnamefont {Van~Beeumen}},\ and\ \bibinfo {author} {\bibfnamefont {A.~F.}\ \bibnamefont {Kemper}},\ }\bibfield  {title} {\bibinfo {title} {Algebraic compression of quantum circuits for hamiltonian evolution},\ }\bibfield  {journal} {\bibinfo  {journal} {Physical Review A}\ }\textbf {\bibinfo {volume} {105}},\ \href {https://doi.org/10.1103/physreva.105.032420} {10.1103/physreva.105.032420} (\bibinfo {year} {2022})\BibitemShut {NoStop}%
\bibitem [{\citenamefont {Aaronson}(2016)}]{Aa16}%
  \BibitemOpen
  \bibfield  {author} {\bibinfo {author} {\bibfnamefont {S.}~\bibnamefont {Aaronson}},\ }\href {https://arxiv.org/abs/1607.05256} {\bibinfo {title} {The complexity of quantum states and transformations: From quantum money to black holes}} (\bibinfo {year} {2016}),\ \Eprint {https://arxiv.org/abs/1607.05256} {arXiv:1607.05256 [quant-ph]} \BibitemShut {NoStop}%
\bibitem [{\citenamefont {Hackl}\ and\ \citenamefont {Myers}(2018)}]{HaMy18}%
  \BibitemOpen
  \bibfield  {author} {\bibinfo {author} {\bibfnamefont {L.}~\bibnamefont {Hackl}}\ and\ \bibinfo {author} {\bibfnamefont {R.~C.}\ \bibnamefont {Myers}},\ }\bibfield  {title} {\bibinfo {title} {Circuit complexity for free fermions},\ }\bibfield  {journal} {\bibinfo  {journal} {Journal of High Energy Physics}\ }\textbf {\bibinfo {volume} {2018}},\ \href {https://doi.org/10.1007/jhep07(2018)139} {10.1007/jhep07(2018)139} (\bibinfo {year} {2018})\BibitemShut {NoStop}%
\bibitem [{\citenamefont {Braccia}\ \emph {et~al.}(2025)\citenamefont {Braccia}, \citenamefont {Diaz}, \citenamefont {Larocca}, \citenamefont {Cerezo},\ and\ \citenamefont {García-Martín}}]{BrDi25}%
  \BibitemOpen
  \bibfield  {author} {\bibinfo {author} {\bibfnamefont {P.}~\bibnamefont {Braccia}}, \bibinfo {author} {\bibfnamefont {N.~L.}\ \bibnamefont {Diaz}}, \bibinfo {author} {\bibfnamefont {M.}~\bibnamefont {Larocca}}, \bibinfo {author} {\bibfnamefont {M.}~\bibnamefont {Cerezo}},\ and\ \bibinfo {author} {\bibfnamefont {D.}~\bibnamefont {García-Martín}},\ }\href {https://arxiv.org/abs/2505.24212} {\bibinfo {title} {Optimal haar random fermionic linear optics circuits}} (\bibinfo {year} {2025}),\ \Eprint {https://arxiv.org/abs/2505.24212} {arXiv:2505.24212 [quant-ph]} \BibitemShut {NoStop}%
\bibitem [{Note1()}]{Note1}%
  \BibitemOpen
  \bibinfo {note} {To see this, consider the application of a gate $U$ to qubits $i$ and $i+1$ of some state $\ket \varphi $. The difference $\protect \operatorname {LSR}_i(U \ket {\phi }) - \protect \operatorname {LSR}_i(\ket \phi )$ can be at most two, since $|\protect \operatorname {LSR}_{i}(\ket \chi ) - \protect \operatorname {LSR}_{i\pm 1}(\ket \chi )| \leq 1$ due to subadditivity for both $\ket \chi = \ket \varphi $ and $\ket \chi =U \ket {\varphi }$. Hence, even if all gates in a circuit generating $\ket {\psi }$ increase the Schmidt rank maximally, this circuit needs to contain at least $K/2$ gates.}\BibitemShut {Stop}%
\bibitem [{\citenamefont {Cudby}\ and\ \citenamefont {Strelchuk}(2025)}]{CuSt25}%
  \BibitemOpen
  \bibfield  {author} {\bibinfo {author} {\bibfnamefont {J.}~\bibnamefont {Cudby}}\ and\ \bibinfo {author} {\bibfnamefont {S.}~\bibnamefont {Strelchuk}},\ }\href {https://arxiv.org/abs/2307.12654} {\bibinfo {title} {Gaussian decomposition of magic states for matchgate computations}} (\bibinfo {year} {2025}),\ \Eprint {https://arxiv.org/abs/2307.12654} {arXiv:2307.12654 [quant-ph]} \BibitemShut {NoStop}%
\bibitem [{\citenamefont {Jozsa}\ and\ \citenamefont {Van Den~Nest}(2014)}]{JoVa14}%
  \BibitemOpen
  \bibfield  {author} {\bibinfo {author} {\bibfnamefont {R.}~\bibnamefont {Jozsa}}\ and\ \bibinfo {author} {\bibfnamefont {M.}~\bibnamefont {Van Den~Nest}},\ }\bibfield  {title} {\bibinfo {title} {Classical simulation complexity of extended clifford circuits},\ }\href@noop {} {\bibfield  {journal} {\bibinfo  {journal} {Quantum Info. Comput.}\ }\textbf {\bibinfo {volume} {14}},\ \bibinfo {pages} {633–648} (\bibinfo {year} {2014})}\BibitemShut {NoStop}%
\bibitem [{\citenamefont {Bravyi}\ \emph {et~al.}(2022)\citenamefont {Bravyi}, \citenamefont {Gosset},\ and\ \citenamefont {Liu}}]{BrGo22}%
  \BibitemOpen
  \bibfield  {author} {\bibinfo {author} {\bibfnamefont {S.}~\bibnamefont {Bravyi}}, \bibinfo {author} {\bibfnamefont {D.}~\bibnamefont {Gosset}},\ and\ \bibinfo {author} {\bibfnamefont {Y.}~\bibnamefont {Liu}},\ }\bibfield  {title} {\bibinfo {title} {How to simulate quantum measurement without computing marginals},\ }\href {https://doi.org/10.1103/PhysRevLett.128.220503} {\bibfield  {journal} {\bibinfo  {journal} {Phys. Rev. Lett.}\ }\textbf {\bibinfo {volume} {128}},\ \bibinfo {pages} {220503} (\bibinfo {year} {2022})}\BibitemShut {NoStop}%
\bibitem [{Note2()}]{Note2}%
  \BibitemOpen
  \bibinfo {note} {This approach requires that the local Hilbert space dimension is finite.}\BibitemShut {Stop}%
\bibitem [{\citenamefont {Soeda}\ \emph {et~al.}(2014)\citenamefont {Soeda}, \citenamefont {Akibue},\ and\ \citenamefont {Murao}}]{SoAk14}%
  \BibitemOpen
  \bibfield  {author} {\bibinfo {author} {\bibfnamefont {A.}~\bibnamefont {Soeda}}, \bibinfo {author} {\bibfnamefont {S.}~\bibnamefont {Akibue}},\ and\ \bibinfo {author} {\bibfnamefont {M.}~\bibnamefont {Murao}},\ }\bibfield  {title} {\bibinfo {title} {Two-party locc convertibility of quadpartite states and kraus–cirac number of two-qubit unitaries},\ }\href {https://doi.org/10.1088/1751-8113/47/42/424036} {\bibfield  {journal} {\bibinfo  {journal} {J. Phys. A: Math. Theor.}\ }\textbf {\bibinfo {volume} {47}},\ \bibinfo {pages} {424036} (\bibinfo {year} {2014})}\BibitemShut {NoStop}%
\bibitem [{\citenamefont {Jordan}\ and\ \citenamefont {Wigner}(1928)}]{JoWi28}%
  \BibitemOpen
  \bibfield  {author} {\bibinfo {author} {\bibfnamefont {P.}~\bibnamefont {Jordan}}\ and\ \bibinfo {author} {\bibfnamefont {E.}~\bibnamefont {Wigner}},\ }\bibfield  {title} {\bibinfo {title} {{\"U}ber das paulische {\"a}quivalenzverbot},\ }\href {https://doi.org/10.1007/BF01331938} {\bibfield  {journal} {\bibinfo  {journal} {Zeitschrift f{\"u}r Physik}\ }\textbf {\bibinfo {volume} {47}},\ \bibinfo {pages} {631} (\bibinfo {year} {1928})}\BibitemShut {NoStop}%
\bibitem [{\citenamefont {Raffenetti}\ and\ \citenamefont {Ruedenberg}(1969)}]{RaRu69}%
  \BibitemOpen
  \bibfield  {author} {\bibinfo {author} {\bibfnamefont {R.~C.}\ \bibnamefont {Raffenetti}}\ and\ \bibinfo {author} {\bibfnamefont {K.}~\bibnamefont {Ruedenberg}},\ }\bibfield  {title} {\bibinfo {title} {Parametrization of an orthogonal matrix in terms of generalized eulerian angles},\ }\href {https://doi.org/10.1002/qua.560040725} {\bibfield  {journal} {\bibinfo  {journal} {International Journal of Quantum Chemistry}\ }\textbf {\bibinfo {volume} {4}},\ \bibinfo {pages} {625} (\bibinfo {year} {1969})}\BibitemShut {NoStop}%
\bibitem [{\citenamefont {Hoffman}\ \emph {et~al.}(1972)\citenamefont {Hoffman}, \citenamefont {Raffenetti},\ and\ \citenamefont {Ruedenberg}}]{HoRa72}%
  \BibitemOpen
  \bibfield  {author} {\bibinfo {author} {\bibfnamefont {D.~K.}\ \bibnamefont {Hoffman}}, \bibinfo {author} {\bibfnamefont {R.~C.}\ \bibnamefont {Raffenetti}},\ and\ \bibinfo {author} {\bibfnamefont {K.}~\bibnamefont {Ruedenberg}},\ }\bibfield  {title} {\bibinfo {title} {Generalization of euler angles to n‐dimensional orthogonal matrices},\ }\href {https://doi.org/10.1063/1.1666011} {\bibfield  {journal} {\bibinfo  {journal} {Journal of Mathematical Physics}\ }\textbf {\bibinfo {volume} {13}},\ \bibinfo {pages} {528} (\bibinfo {year} {1972})}\BibitemShut {NoStop}%
\bibitem [{\citenamefont {Surace}\ and\ \citenamefont {Tagliacozzo}(2022)}]{SuTa20}%
  \BibitemOpen
  \bibfield  {author} {\bibinfo {author} {\bibfnamefont {J.}~\bibnamefont {Surace}}\ and\ \bibinfo {author} {\bibfnamefont {L.}~\bibnamefont {Tagliacozzo}},\ }\bibfield  {title} {\bibinfo {title} {{Fermionic Gaussian states: an introduction to numerical approaches}},\ }\href {https://doi.org/10.21468/SciPostPhysLectNotes.54} {\bibfield  {journal} {\bibinfo  {journal} {SciPost Phys. Lect. Notes}\ ,\ \bibinfo {pages} {54}} (\bibinfo {year} {2022})}\BibitemShut {NoStop}%
\bibitem [{\citenamefont {Yang}(1967)}]{Ya67}%
  \BibitemOpen
  \bibfield  {author} {\bibinfo {author} {\bibfnamefont {C.~N.}\ \bibnamefont {Yang}},\ }\bibfield  {title} {\bibinfo {title} {Some exact results for the many-body problem in one dimension with repulsive delta-function interaction},\ }\href {https://doi.org/10.1103/PhysRevLett.19.1312} {\bibfield  {journal} {\bibinfo  {journal} {Phys. Rev. Lett.}\ }\textbf {\bibinfo {volume} {19}},\ \bibinfo {pages} {1312} (\bibinfo {year} {1967})}\BibitemShut {NoStop}%
\bibitem [{\citenamefont {Baxter}(1972)}]{Ba72}%
  \BibitemOpen
  \bibfield  {author} {\bibinfo {author} {\bibfnamefont {R.~J.}\ \bibnamefont {Baxter}},\ }\bibfield  {title} {\bibinfo {title} {Partition function of the eight-vertex lattice model},\ }\href {https://doi.org/https://doi.org/10.1016/0003-4916(72)90335-1} {\bibfield  {journal} {\bibinfo  {journal} {Annals of Physics}\ }\textbf {\bibinfo {volume} {70}},\ \bibinfo {pages} {193} (\bibinfo {year} {1972})}\BibitemShut {NoStop}%
\bibitem [{\citenamefont {Smirnov}(1992)}]{Sm92}%
  \BibitemOpen
  \bibfield  {author} {\bibinfo {author} {\bibfnamefont {F.~A.}\ \bibnamefont {Smirnov}},\ }\href@noop {} {\emph {\bibinfo {title} {Form Factors in Completely Integrable Models of Quantum Field Theory}}},\ \bibinfo {series} {Advanced Series in Mathematical Physics}, Vol.~\bibinfo {volume} {14}\ (\bibinfo  {publisher} {World Scientific},\ \bibinfo {address} {Singapore},\ \bibinfo {year} {1992})\BibitemShut {NoStop}%
\bibitem [{\citenamefont {Korepin}\ \emph {et~al.}(1993)\citenamefont {Korepin}, \citenamefont {Bogoliubov},\ and\ \citenamefont {Izergin}}]{KoBo93}%
  \BibitemOpen
  \bibfield  {author} {\bibinfo {author} {\bibfnamefont {V.~E.}\ \bibnamefont {Korepin}}, \bibinfo {author} {\bibfnamefont {N.~M.}\ \bibnamefont {Bogoliubov}},\ and\ \bibinfo {author} {\bibfnamefont {A.~G.}\ \bibnamefont {Izergin}},\ }\href@noop {} {\emph {\bibinfo {title} {Quantum Inverse Scattering Method and Correlation Functions}}},\ Cambridge Monographs on Mathematical Physics\ (\bibinfo  {publisher} {Cambridge University Press},\ \bibinfo {year} {1993})\BibitemShut {NoStop}%
\bibitem [{\citenamefont {Rothe}(1800)}]{origin_of_telephone}%
  \BibitemOpen
  \bibfield  {author} {\bibinfo {author} {\bibfnamefont {H.~A.}\ \bibnamefont {Rothe}},\ }\bibfield  {title} {\bibinfo {title} {Über permutationen, in beziehung auf die stellen ihrer elemente. anwendung der daraus abgeleiteten sätze auf das eliminationsproblem},\ }\href@noop {} {\bibfield  {journal} {\bibinfo  {journal} {Sammlung combinatorisch-analytischer Abhandlungen / herausgegeben von Carl Friedrich Hindenburg}\ }\textbf {\bibinfo {volume} {2}},\ \bibinfo {pages} {282} (\bibinfo {year} {1800})}\BibitemShut {NoStop}%
\bibitem [{Note3()}]{Note3}%
  \BibitemOpen
  \bibinfo {note} {Since the covariance matrix does not contain phase information, the circuit prepares $\ket {\psi }$ up to a global phase. The phase can be fixed by, e.g., knowing the inner product with a reference state~\cite {DiKo24,ReOs24}.}\BibitemShut {Stop}%
\bibitem [{Note4()}]{Note4}%
  \BibitemOpen
  \bibinfo {note} {Due to parity preservation, if a matchgate does not generate entanglement when acting on a computational basis state, it simply applies a global phase and possibly flips the bits.}\BibitemShut {Stop}%
\bibitem [{Note5()}]{Note5}%
  \BibitemOpen
  \bibinfo {note} {One can generate any computational basis state from any other one by acting with $\protect \mathcal {O}(n)$ matchgates of the form $G(X,X) = X\otimes X$ arranged in two layers (possibly using an auxiliary qubit for preserving parity). Such circuits, however, do not satisfy the conditions of the theorem.}\BibitemShut {Stop}%
\bibitem [{Note6()}]{Note6}%
  \BibitemOpen
  \bibinfo {note} {Note that a representation of (unnormalized) fermionic Gaussian states via a Gaussian operator $\exp (\DOTSB \sum@ \slimits@ _{kl} c_k^\dagger h_{kl} c_l^\dagger )$, where $c_k^\dagger = (\gamma _{2k-1}-i\gamma _{2k})/2$ are the creation operators, acting on the vacuum state $\ket {0\protect \ldots 0}$ allows to compute amplitudes in the computational basis~\cite {BeSo17} or other product bases~\cite {BaRe24,RaMi25}. One needs to take care, however, to preserve the global phase when transforming any other representation of FGSs to the above one.}\BibitemShut {Stop}%
\bibitem [{\citenamefont {Fannes}\ \emph {et~al.}(1992)\citenamefont {Fannes}, \citenamefont {Nachtergaele},\ and\ \citenamefont {Werner}}]{FaNa92}%
  \BibitemOpen
  \bibfield  {author} {\bibinfo {author} {\bibfnamefont {M.}~\bibnamefont {Fannes}}, \bibinfo {author} {\bibfnamefont {B.}~\bibnamefont {Nachtergaele}},\ and\ \bibinfo {author} {\bibfnamefont {R.~F.}\ \bibnamefont {Werner}},\ }\bibfield  {title} {\bibinfo {title} {Finitely correlated states on quantum spin chains},\ }\href {https://doi.org/10.1007/BF02099178} {\bibfield  {journal} {\bibinfo  {journal} {Communications in Mathematical Physics}\ }\textbf {\bibinfo {volume} {144}},\ \bibinfo {pages} {443} (\bibinfo {year} {1992})}\BibitemShut {NoStop}%
\bibitem [{\citenamefont {\"Ostlund}\ and\ \citenamefont {Rommer}(1995)}]{OsSt95}%
  \BibitemOpen
  \bibfield  {author} {\bibinfo {author} {\bibfnamefont {S.}~\bibnamefont {\"Ostlund}}\ and\ \bibinfo {author} {\bibfnamefont {S.}~\bibnamefont {Rommer}},\ }\bibfield  {title} {\bibinfo {title} {Thermodynamic limit of density matrix renormalization},\ }\href {https://doi.org/10.1103/PhysRevLett.75.3537} {\bibfield  {journal} {\bibinfo  {journal} {Phys. Rev. Lett.}\ }\textbf {\bibinfo {volume} {75}},\ \bibinfo {pages} {3537} (\bibinfo {year} {1995})}\BibitemShut {NoStop}%
\bibitem [{\citenamefont {Rommer}\ and\ \citenamefont {\"Ostlund}(1997)}]{RoOs97}%
  \BibitemOpen
  \bibfield  {author} {\bibinfo {author} {\bibfnamefont {S.}~\bibnamefont {Rommer}}\ and\ \bibinfo {author} {\bibfnamefont {S.}~\bibnamefont {\"Ostlund}},\ }\bibfield  {title} {\bibinfo {title} {Class of ansatz wave functions for one-dimensional spin systems and their relation to the density matrix renormalization group},\ }\href {https://doi.org/10.1103/PhysRevB.55.2164} {\bibfield  {journal} {\bibinfo  {journal} {Phys. Rev. B}\ }\textbf {\bibinfo {volume} {55}},\ \bibinfo {pages} {2164} (\bibinfo {year} {1997})}\BibitemShut {NoStop}%
\bibitem [{\citenamefont {Hauschild}\ and\ \citenamefont {Pollmann}(2018)}]{HaPo18}%
  \BibitemOpen
  \bibfield  {author} {\bibinfo {author} {\bibfnamefont {J.}~\bibnamefont {Hauschild}}\ and\ \bibinfo {author} {\bibfnamefont {F.}~\bibnamefont {Pollmann}},\ }\bibfield  {title} {\bibinfo {title} {Efficient numerical simulations with tensor networks: Tensor network python (tenpy)},\ }\bibfield  {journal} {\bibinfo  {journal} {SciPost Physics Lecture Notes}\ }\href {https://doi.org/10.21468/scipostphyslectnotes.5} {10.21468/scipostphyslectnotes.5} (\bibinfo {year} {2018})\BibitemShut {NoStop}%
\bibitem [{Note7()}]{Note7}%
  \BibitemOpen
  \bibinfo {note} {More generally, we can consider unitaries generated by an exponential of $\kappa = 4$ Majorana operators. Further generalization of our constructions to arbitary even $\kappa = \protect \mathcal {O}(1)$ is likewise possible.}\BibitemShut {Stop}%
\bibitem [{\citenamefont {Vidal}(2004)}]{Vi04}%
  \BibitemOpen
  \bibfield  {author} {\bibinfo {author} {\bibfnamefont {G.}~\bibnamefont {Vidal}},\ }\bibfield  {title} {\bibinfo {title} {Efficient simulation of one-dimensional quantum many-body systems},\ }\href {https://doi.org/10.1103/PhysRevLett.93.040502} {\bibfield  {journal} {\bibinfo  {journal} {Phys. Rev. Lett.}\ }\textbf {\bibinfo {volume} {93}},\ \bibinfo {pages} {040502} (\bibinfo {year} {2004})}\BibitemShut {NoStop}%
\bibitem [{Note8()}]{Note8}%
  \BibitemOpen
  \bibinfo {note} {This effect can be observed, for instance, in states generated by applying a few layers of random matchgates to a product state.}\BibitemShut {Stop}%
\bibitem [{Note9()}]{Note9}%
  \BibitemOpen
  \bibinfo {note} {The backward lightcone for a splitting $1,\protect \ldots ,k\vert k+1,\protect \ldots ,n$ can be defined by two conditions: (i) The last gate acting on qubits $k$ and $k+1$ is in the lightcone, and (ii) whenever for two two-qubit gates $G$, $G'$, one of the structures $G G'$, $(G\otimes \protect \mathds {1}_2)(\protect \mathds {1}_2 \otimes G')$, or $(\protect \mathds {1}_2\otimes G)(G' \otimes \protect \mathds {1}_2)$ is present in the circuit, if $G$ is in the lightcone, then $G'$ is also in the lightcone.}\BibitemShut {Stop}%
\bibitem [{\citenamefont {Horn}\ and\ \citenamefont {Johnson}(2013)}]{HoJo13}%
  \BibitemOpen
  \bibfield  {author} {\bibinfo {author} {\bibfnamefont {R.~A.}\ \bibnamefont {Horn}}\ and\ \bibinfo {author} {\bibfnamefont {C.~R.}\ \bibnamefont {Johnson}},\ }\href@noop {} {\emph {\bibinfo {title} {Matrix Analysis}}},\ \bibinfo {edition} {2nd}\ ed.\ (\bibinfo  {publisher} {Cambridge University Press},\ \bibinfo {address} {Cambridge; New York},\ \bibinfo {year} {2013})\BibitemShut {NoStop}%
\bibitem [{Note10()}]{Note10}%
  \BibitemOpen
  \bibinfo {note} {Here, there is generally no well defined notion of entangled pairs anymore.}\BibitemShut {Stop}%
\bibitem [{\citenamefont {Pfeuty}(1970)}]{Pf70}%
  \BibitemOpen
  \bibfield  {author} {\bibinfo {author} {\bibfnamefont {P.}~\bibnamefont {Pfeuty}},\ }\bibfield  {title} {\bibinfo {title} {The one-dimensional ising model with a transverse field},\ }\href {https://doi.org/https://doi.org/10.1016/0003-4916(70)90270-8} {\bibfield  {journal} {\bibinfo  {journal} {Annals of Physics}\ }\textbf {\bibinfo {volume} {57}},\ \bibinfo {pages} {79} (\bibinfo {year} {1970})}\BibitemShut {NoStop}%
\bibitem [{Note11()}]{Note11}%
  \BibitemOpen
  \bibinfo {note} {For $g<1$, the CM formalism yields as a ground state an FGS with long-range correlations, for which our algorithm is not well-suited.}\BibitemShut {Stop}%
\bibitem [{\citenamefont {Jobst}\ \emph {et~al.}(2022)\citenamefont {Jobst}, \citenamefont {Smith},\ and\ \citenamefont {Pollmann}}]{JoSm22}%
  \BibitemOpen
  \bibfield  {author} {\bibinfo {author} {\bibfnamefont {B.}~\bibnamefont {Jobst}}, \bibinfo {author} {\bibfnamefont {A.}~\bibnamefont {Smith}},\ and\ \bibinfo {author} {\bibfnamefont {F.}~\bibnamefont {Pollmann}},\ }\bibfield  {title} {\bibinfo {title} {Finite-depth scaling of infinite quantum circuits for quantum critical points},\ }\href {https://doi.org/10.1103/PhysRevResearch.4.033118} {\bibfield  {journal} {\bibinfo  {journal} {Phys. Rev. Res.}\ }\textbf {\bibinfo {volume} {4}},\ \bibinfo {pages} {033118} (\bibinfo {year} {2022})}\BibitemShut {NoStop}%
\bibitem [{\citenamefont {Vodola}\ \emph {et~al.}(2014)\citenamefont {Vodola}, \citenamefont {Lepori}, \citenamefont {Ercolessi}, \citenamefont {Gorshkov},\ and\ \citenamefont {Pupillo}}]{VoLe14}%
  \BibitemOpen
  \bibfield  {author} {\bibinfo {author} {\bibfnamefont {D.}~\bibnamefont {Vodola}}, \bibinfo {author} {\bibfnamefont {L.}~\bibnamefont {Lepori}}, \bibinfo {author} {\bibfnamefont {E.}~\bibnamefont {Ercolessi}}, \bibinfo {author} {\bibfnamefont {A.~V.}\ \bibnamefont {Gorshkov}},\ and\ \bibinfo {author} {\bibfnamefont {G.}~\bibnamefont {Pupillo}},\ }\bibfield  {title} {\bibinfo {title} {Kitaev chains with long-range pairing},\ }\href {https://doi.org/10.1103/PhysRevLett.113.156402} {\bibfield  {journal} {\bibinfo  {journal} {Phys. Rev. Lett.}\ }\textbf {\bibinfo {volume} {113}},\ \bibinfo {pages} {156402} (\bibinfo {year} {2014})}\BibitemShut {NoStop}%
\bibitem [{\citenamefont {Vodola}\ \emph {et~al.}(2016)\citenamefont {Vodola}, \citenamefont {Lepori}, \citenamefont {Ercolessi},\ and\ \citenamefont {Pupillo}}]{VoLe16}%
  \BibitemOpen
  \bibfield  {author} {\bibinfo {author} {\bibfnamefont {D.}~\bibnamefont {Vodola}}, \bibinfo {author} {\bibfnamefont {L.}~\bibnamefont {Lepori}}, \bibinfo {author} {\bibfnamefont {E.}~\bibnamefont {Ercolessi}},\ and\ \bibinfo {author} {\bibfnamefont {G.}~\bibnamefont {Pupillo}},\ }\bibfield  {title} {\bibinfo {title} {Long-range ising and kitaev models: phases, correlations and edge modes},\ }\href {https://doi.org/10.1088/1367-2630/18/1/015001} {\bibfield  {journal} {\bibinfo  {journal} {New Journal of Physics}\ }\textbf {\bibinfo {volume} {18}},\ \bibinfo {pages} {015001} (\bibinfo {year} {2016})}\BibitemShut {NoStop}%
\bibitem [{\citenamefont {Casas}\ \emph {et~al.}(2026)\citenamefont {Casas}, \citenamefont {Braccia}, \citenamefont {Élie Gouzien}, \citenamefont {Cerezo},\ and\ \citenamefont {García-Martín}}]{CaBr26}%
  \BibitemOpen
  \bibfield  {author} {\bibinfo {author} {\bibfnamefont {B.}~\bibnamefont {Casas}}, \bibinfo {author} {\bibfnamefont {P.}~\bibnamefont {Braccia}}, \bibinfo {author} {\bibnamefont {Élie Gouzien}}, \bibinfo {author} {\bibfnamefont {M.}~\bibnamefont {Cerezo}},\ and\ \bibinfo {author} {\bibfnamefont {D.}~\bibnamefont {García-Martín}},\ }\href {https://arxiv.org/abs/2602.05425} {\bibinfo {title} {Matchgate synthesis via clifford matchgates and $t$ gates}} (\bibinfo {year} {2026}),\ \Eprint {https://arxiv.org/abs/2602.05425} {arXiv:2602.05425 [quant-ph]} \BibitemShut {NoStop}%
\bibitem [{\citenamefont {Gottlieb}\ and\ \citenamefont {Mauser}(2005)}]{GoMa05}%
  \BibitemOpen
  \bibfield  {author} {\bibinfo {author} {\bibfnamefont {A.~D.}\ \bibnamefont {Gottlieb}}\ and\ \bibinfo {author} {\bibfnamefont {N.~J.}\ \bibnamefont {Mauser}},\ }\bibfield  {title} {\bibinfo {title} {New measure of electron correlation},\ }\href {https://doi.org/10.1103/PhysRevLett.95.123003} {\bibfield  {journal} {\bibinfo  {journal} {Phys. Rev. Lett.}\ }\textbf {\bibinfo {volume} {95}},\ \bibinfo {pages} {123003} (\bibinfo {year} {2005})}\BibitemShut {NoStop}%
\bibitem [{\citenamefont {Gottlieb}\ and\ \citenamefont {Mauser}(2006)}]{GoMa06}%
  \BibitemOpen
  \bibfield  {author} {\bibinfo {author} {\bibfnamefont {A.~D.}\ \bibnamefont {Gottlieb}}\ and\ \bibinfo {author} {\bibfnamefont {N.~J.}\ \bibnamefont {Mauser}},\ }\href {https://arxiv.org/abs/quant-ph/0608171} {\bibinfo {title} {Properties of nonfreeness: an entropy measure of electron correlation}} (\bibinfo {year} {2006}),\ \Eprint {https://arxiv.org/abs/quant-ph/0608171} {arXiv:quant-ph/0608171 [quant-ph]} \BibitemShut {NoStop}%
\bibitem [{\citenamefont {Lumia}\ \emph {et~al.}(2024)\citenamefont {Lumia}, \citenamefont {Tirrito}, \citenamefont {Fazio},\ and\ \citenamefont {Collura}}]{LuTi24}%
  \BibitemOpen
  \bibfield  {author} {\bibinfo {author} {\bibfnamefont {L.}~\bibnamefont {Lumia}}, \bibinfo {author} {\bibfnamefont {E.}~\bibnamefont {Tirrito}}, \bibinfo {author} {\bibfnamefont {R.}~\bibnamefont {Fazio}},\ and\ \bibinfo {author} {\bibfnamefont {M.}~\bibnamefont {Collura}},\ }\bibfield  {title} {\bibinfo {title} {Measurement-induced transitions beyond gaussianity: A single particle description},\ }\href {https://doi.org/10.1103/PhysRevResearch.6.023176} {\bibfield  {journal} {\bibinfo  {journal} {Phys. Rev. Res.}\ }\textbf {\bibinfo {volume} {6}},\ \bibinfo {pages} {023176} (\bibinfo {year} {2024})}\BibitemShut {NoStop}%
\bibitem [{\citenamefont {Lyu}\ and\ \citenamefont {Bu}(2024)}]{LyBu24}%
  \BibitemOpen
  \bibfield  {author} {\bibinfo {author} {\bibfnamefont {X.}~\bibnamefont {Lyu}}\ and\ \bibinfo {author} {\bibfnamefont {K.}~\bibnamefont {Bu}},\ }\href {https://arxiv.org/abs/2409.08180} {\bibinfo {title} {Fermionic gaussian testing and non-gaussian measures via convolution}} (\bibinfo {year} {2024}),\ \Eprint {https://arxiv.org/abs/2409.08180} {arXiv:2409.08180 [quant-ph]} \BibitemShut {NoStop}%
\bibitem [{\citenamefont {Coffman}\ \emph {et~al.}(2025)\citenamefont {Coffman}, \citenamefont {Smith},\ and\ \citenamefont {Gao}}]{CoSm25}%
  \BibitemOpen
  \bibfield  {author} {\bibinfo {author} {\bibfnamefont {L.}~\bibnamefont {Coffman}}, \bibinfo {author} {\bibfnamefont {G.}~\bibnamefont {Smith}},\ and\ \bibinfo {author} {\bibfnamefont {X.}~\bibnamefont {Gao}},\ }\href {https://arxiv.org/abs/2501.06179} {\bibinfo {title} {Measuring non-gaussian magic in fermions: Convolution, entropy, and the violation of wick's theorem and the matchgate identity}} (\bibinfo {year} {2025}),\ \Eprint {https://arxiv.org/abs/2501.06179} {arXiv:2501.06179 [quant-ph]} \BibitemShut {NoStop}%
\bibitem [{\citenamefont {Sierant}\ \emph {et~al.}(2026)\citenamefont {Sierant}, \citenamefont {Stornati},\ and\ \citenamefont {Turkeshi}}]{SiSt26}%
  \BibitemOpen
  \bibfield  {author} {\bibinfo {author} {\bibfnamefont {P.}~\bibnamefont {Sierant}}, \bibinfo {author} {\bibfnamefont {P.}~\bibnamefont {Stornati}},\ and\ \bibinfo {author} {\bibfnamefont {X.}~\bibnamefont {Turkeshi}},\ }\bibfield  {title} {\bibinfo {title} {Fermionic magic resources of quantum many-body systems},\ }\href {https://doi.org/10.1103/3yx4-1j27} {\bibfield  {journal} {\bibinfo  {journal} {PRX Quantum}\ }\textbf {\bibinfo {volume} {7}},\ \bibinfo {pages} {010302} (\bibinfo {year} {2026})}\BibitemShut {NoStop}%
\bibitem [{\citenamefont {Langer}\ \emph {et~al.}(2026)\citenamefont {Langer}, \citenamefont {Morral-Yepes}, \citenamefont {Gammon-Smith}, \citenamefont {Pollmann},\ and\ \citenamefont {Kraus}}]{zenodo}%
  \BibitemOpen
  \bibfield  {author} {\bibinfo {author} {\bibfnamefont {M.}~\bibnamefont {Langer}}, \bibinfo {author} {\bibfnamefont {R.}~\bibnamefont {Morral-Yepes}}, \bibinfo {author} {\bibfnamefont {A.}~\bibnamefont {Gammon-Smith}}, \bibinfo {author} {\bibfnamefont {F.}~\bibnamefont {Pollmann}},\ and\ \bibinfo {author} {\bibfnamefont {B.}~\bibnamefont {Kraus}},\ }\href {https://doi.org/10.5281/zenodo.18875164} {\bibinfo {title} {Matchgate circuit representation of fermionic gaussian states: optimal preparation, approximation, and classical simulation}} (\bibinfo {year} {2026})\BibitemShut {NoStop}%
\bibitem [{\citenamefont {Shepherd}\ and\ \citenamefont {Bremner}(2009)}]{ShBr09}%
  \BibitemOpen
  \bibfield  {author} {\bibinfo {author} {\bibfnamefont {D.}~\bibnamefont {Shepherd}}\ and\ \bibinfo {author} {\bibfnamefont {M.~J.}\ \bibnamefont {Bremner}},\ }\bibfield  {title} {\bibinfo {title} {Temporally unstructured quantum computation},\ }\href {https://doi.org/10.1098/rspa.2008.0443} {\bibfield  {journal} {\bibinfo  {journal} {Proceedings of the Royal Society A: Mathematical, Physical and Engineering Sciences}\ }\textbf {\bibinfo {volume} {465}},\ \bibinfo {pages} {1413} (\bibinfo {year} {2009})},\ \Eprint {https://arxiv.org/abs/https://royalsocietypublishing.org/rspa/article-pdf/465/2105/1413/753599/rspa.2008.0443.pdf} {https://royalsocietypublishing.org/rspa/article-pdf/465/2105/1413/753599/rspa.2008.0443.pdf} \BibitemShut {NoStop}%
\bibitem [{\citenamefont {Bremner}\ \emph {et~al.}(2010)\citenamefont {Bremner}, \citenamefont {Jozsa},\ and\ \citenamefont {Shepherd}}]{BrJo10}%
  \BibitemOpen
  \bibfield  {author} {\bibinfo {author} {\bibfnamefont {M.~J.}\ \bibnamefont {Bremner}}, \bibinfo {author} {\bibfnamefont {R.}~\bibnamefont {Jozsa}},\ and\ \bibinfo {author} {\bibfnamefont {D.~J.}\ \bibnamefont {Shepherd}},\ }\bibfield  {title} {\bibinfo {title} {Classical simulation of commuting quantum computations implies collapse of the polynomial hierarchy},\ }\href {https://doi.org/10.1098/rspa.2010.0301} {\bibfield  {journal} {\bibinfo  {journal} {Proceedings of the Royal Society A: Mathematical, Physical and Engineering Sciences}\ }\textbf {\bibinfo {volume} {467}},\ \bibinfo {pages} {459} (\bibinfo {year} {2010})},\ \Eprint {https://arxiv.org/abs/https://royalsocietypublishing.org/rspa/article-pdf/467/2126/459/789839/rspa.2010.0301.pdf} {https://royalsocietypublishing.org/rspa/article-pdf/467/2126/459/789839/rspa.2010.0301.pdf} \BibitemShut {NoStop}%
\bibitem [{\citenamefont {Serafini}(2017)}]{Se17qcv}%
  \BibitemOpen
  \bibfield  {author} {\bibinfo {author} {\bibfnamefont {A.}~\bibnamefont {Serafini}},\ }\href {https://doi.org/10.1201/9781315118727} {\emph {\bibinfo {title} {Quantum Continuous Variables: A Primer of Theoretical Methods}}}\ (\bibinfo  {publisher} {CRC Press},\ \bibinfo {year} {2017})\BibitemShut {NoStop}%
\bibitem [{\citenamefont {van Luijk}\ \emph {et~al.}(2024)\citenamefont {van Luijk}, \citenamefont {Galke}, \citenamefont {Hahn},\ and\ \citenamefont {Burgarth}}]{VaGa24}%
  \BibitemOpen
  \bibfield  {author} {\bibinfo {author} {\bibfnamefont {L.}~\bibnamefont {van Luijk}}, \bibinfo {author} {\bibfnamefont {N.}~\bibnamefont {Galke}}, \bibinfo {author} {\bibfnamefont {A.}~\bibnamefont {Hahn}},\ and\ \bibinfo {author} {\bibfnamefont {D.}~\bibnamefont {Burgarth}},\ }\bibfield  {title} {\bibinfo {title} {Error bounds for lie group representations in quantum mechanics},\ }\href {https://doi.org/10.1088/1751-8121/ad288b} {\bibfield  {journal} {\bibinfo  {journal} {Journal of Physics A: Mathematical and Theoretical}\ }\textbf {\bibinfo {volume} {57}},\ \bibinfo {pages} {105301} (\bibinfo {year} {2024})}\BibitemShut {NoStop}%
\bibitem [{Note12()}]{Note12}%
  \BibitemOpen
  \bibinfo {note} {The aim of this permutation is to have the two linearly independent vectors in the first two columns. Any permutation acting only on party $\protect \mathbf {A}$ cannot change the entanglement w.r.t. to the bipartitions of interest.}\BibitemShut {Stop}%
\bibitem [{Note13()}]{Note13}%
  \BibitemOpen
  \bibinfo {note} {Note that computing $R_\protect \mathbf {A}$ and $R_\protect \mathbf {C}$ is a convenient step for the following, but to disentangle $\ket {\psi }$ as above it is not necessary to apply any operations on $\protect \mathbf {A}$ and $\protect \mathbf {C}$.}\BibitemShut {Stop}%
\bibitem [{\citenamefont {Hirsch}\ and\ \citenamefont {Marsiglio}(1989)}]{HiMa89}%
  \BibitemOpen
  \bibfield  {author} {\bibinfo {author} {\bibfnamefont {J.~E.}\ \bibnamefont {Hirsch}}\ and\ \bibinfo {author} {\bibfnamefont {F.}~\bibnamefont {Marsiglio}},\ }\bibfield  {title} {\bibinfo {title} {Superconducting state in an oxygen hole metal},\ }\href {https://doi.org/10.1103/PhysRevB.39.11515} {\bibfield  {journal} {\bibinfo  {journal} {Phys. Rev. B}\ }\textbf {\bibinfo {volume} {39}},\ \bibinfo {pages} {11515} (\bibinfo {year} {1989})}\BibitemShut {NoStop}%
\bibitem [{\citenamefont {Becca}\ and\ \citenamefont {Sorella}(2017)}]{BeSo17}%
  \BibitemOpen
  \bibfield  {author} {\bibinfo {author} {\bibfnamefont {F.}~\bibnamefont {Becca}}\ and\ \bibinfo {author} {\bibfnamefont {S.}~\bibnamefont {Sorella}},\ }\href@noop {} {\emph {\bibinfo {title} {Quantum Monte Carlo Approaches for Correlated Systems}}}\ (\bibinfo  {publisher} {Cambridge University Press},\ \bibinfo {year} {2017})\BibitemShut {NoStop}%
\bibitem [{\citenamefont {Tarighi}\ \emph {et~al.}(2024)\citenamefont {Tarighi}, \citenamefont {Khasseh},\ and\ \citenamefont {Rajabpour}}]{BaRe24}%
  \BibitemOpen
  \bibfield  {author} {\bibinfo {author} {\bibfnamefont {B.}~\bibnamefont {Tarighi}}, \bibinfo {author} {\bibfnamefont {R.}~\bibnamefont {Khasseh}},\ and\ \bibinfo {author} {\bibfnamefont {M.~A.}\ \bibnamefont {Rajabpour}},\ }\bibfield  {title} {\bibinfo {title} {Efficient representation of gaussian fermionic pure states in noncomputational bases},\ }\href {https://doi.org/10.1103/PhysRevA.109.062214} {\bibfield  {journal} {\bibinfo  {journal} {Phys. Rev. A}\ }\textbf {\bibinfo {volume} {109}},\ \bibinfo {pages} {062214} (\bibinfo {year} {2024})}\BibitemShut {NoStop}%
\bibitem [{\citenamefont {Rajabpour}\ \emph {et~al.}(2025)\citenamefont {Rajabpour}, \citenamefont {Mirjafarlou},\ and\ \citenamefont {Khasseh}}]{RaMi25}%
  \BibitemOpen
  \bibfield  {author} {\bibinfo {author} {\bibfnamefont {M.~A.}\ \bibnamefont {Rajabpour}}, \bibinfo {author} {\bibfnamefont {M.~A.~S.}\ \bibnamefont {Mirjafarlou}},\ and\ \bibinfo {author} {\bibfnamefont {R.}~\bibnamefont {Khasseh}},\ }\bibfield  {title} {\bibinfo {title} {Explicit pfaffian formula for amplitudes of fermionic gaussian pure states in arbitrary pauli bases},\ }\href {https://doi.org/10.1103/PhysRevB.111.235102} {\bibfield  {journal} {\bibinfo  {journal} {Phys. Rev. B}\ }\textbf {\bibinfo {volume} {111}},\ \bibinfo {pages} {235102} (\bibinfo {year} {2025})}\BibitemShut {NoStop}%
\end{thebibliography}%

\end{document}